\definecolor{webgreen}{rgb}{0,.5,0}
\definecolor{webbrown}{rgb}{.6,0,0}
\definecolor{RoyalBlue}{cmyk}{1, 0.50, 0, 0}
\titlespacing*{\section}{0pt}{5.5ex plus 1ex minus .2ex}{4.3ex plus .2ex}
\numberwithin{equation}{subsection}
\titleformat{\section}{\centering\normalfont\large\scshape}{\thesection}{1em}{}
\titleformat{\subsection}{\normalfont\normalsize\bfseries}{\thesubsection}{1em}{}
\titleformat{\subsubsection}{\normalfont\normalsize\bfseries}{\thesubsubsection}{1em}{}
\titleformat{\paragraph}{\normalfont\normalsize\it}{\theparagraph}{1em}{}
\theoremstyle{plain}
\newtheorem{thm}{Theorem}[section]
\newtheorem{lem}[thm]{Lemma}
\newtheorem{prop}[thm]{Proposition}
\newtheorem{cor}{Corollary}
\newtheorem{RHP}{Riemann-Hilbert Problem}
\theoremstyle{definition}
\theoremstyle{remark}
\newtheorem*{rem}{Remark}
\newtheorem*{note}{Note}
\newcommand{\N}{\mathbb N}
\newcommand{\C}{\mathbb C}   
\newcommand{\R}{\mathbb R}
\newcommand{\Z}{\mathbb Z}
\DeclareMathOperator{\res}{res}
\DeclareMathOperator{\diag}{diag}
\newcommand{\bp}{\begin{pmatrix}} 
\newcommand{\ep}{\end{pmatrix}} 
\renewcommand{\O}{\mathcal{O}}
\renewcommand{\Re}{\operatorname{Re}}
\renewcommand{\Im}{\operatorname{Im}}
\begin{document}

\title[]{Connection formulae for the radial Toda equations I}

\author[Guest]{Martin A. Guest}
\address{(M.A. Guest) Department of Mathematics, Faculty of Science and Engineering, Waseda University, 3-4-1 Okubo, Shinjuku, Tokyo 169-8555 JAPAN}
\email{martin@waseda.jp}

\author[Its]{Alexander R. Its}
\address{(A.R. Its) Department of Mathematical Sciences, 
Indiana University Indianapolis, 402 N. Blackford St., Indianapolis, IN 46202 USA}
\email{aits@iu.edu}

\author[Kosmakov]{Maksim Kosmakov}
\address{(M. Kosmakov) Department of Mathematical Sciences, University of Cincinnati, P.O. Box 210025, Cincinnati, OH 45221, USA}
\email{kosmakmm@ucmail.uc.edu}

\author[Miyahara]{Kenta Miyahara$^\ast$}
\thanks{$^\ast$ Corresponding author.}
\address{(K. Miyahara) Department of Mathematical Sciences, 
Indiana University Indianapolis, 402 N. Blackford St., Indianapolis, IN 46202 USA}
\email{kemiya@iu.edu}

\author[Odoi]{Ryosuke Odoi}
\address{(R. Odoi) Department of Pure and Applied Mathematics, Faculty of Science and Engineering, Waseda University, 3-4-1 Okubo, Shinjuku, Tokyo 169-8555 JAPAN}
\email{ryosuke.odoi@moegi.waseda.jp}

\begin{abstract}
    This paper is the first in a forthcoming series of works where the authors study the global asymptotic behavior of the radial solutions of the 2D periodic Toda equation of type $A_n$. The principal issue is the connection formulae between the asymptotic parameters describing the behavior of the general solution at zero and infinity. To reach this goal we are using a fusion of the Iwasawa factorization in the loop group theory and the Riemann-Hilbert nonlinear steepest descent method of Deift and Zhou which is applicable to 2D Toda in view of its Lax integrability. A principal technical challenge is the extension of the nonlinear steepest descent analysis to Riemann-Hilbert problems of matrix rank greater than $2$. In this paper, we meet this challenge for the case $n=2$ (the rank $3$ case) and it already captures the principal features of the general $n$ case.\\
    \newline
    \textit{Keywords}: Toda equation; Painlev\'e III equation; isomonodromic deformation; Riemann-Hilbert problem; steepest-descent method
\end{abstract}

\date{\today}

\let\ds\displaystyle

\maketitle

\setcounter{tocdepth}{3}
\tableofcontents

\setlength{\parskip}{6pt}

\section{Introduction and Main Results} \label{intro}

This paper is the first in a forthcoming series of works where the authors study the global asymptotic behavior of the radial solutions of the 2D periodic Toda equation of type $A_n$, $n \in \N$ with $\epsilon$ sign:
\begin{equation}\label{ost}
2(w_i)_{t\bar{t}}=\epsilon(e^{2(w_{i+1}-w_{i})} - e^{2(w_{i}-w_{i-1})}), \quad \epsilon = \pm 1, \quad i \in \Z,
\end{equation}
where $w_i: \C \setminus \{0\} \to \R$ and subject to the conditions 
\[
\begin{cases}
\ \ w_i = w_{i+n+1} \quad (\text{periodicity}),\\
\ \  w_i=w_i(\vert t\vert) \quad (\text{radial condition}),\\
\ \  w_i +  w_{n-i}=0\quad (\text{anti-symmetry}).
\end{cases}
\]

Since Mikhailov-Olshanetsky-Perelomov \cite{Mik, MOP} proposed the two dimensional generalization of the classical Toda lattice equation on the affine root systems, the 2D Toda equations have been an important integrable system with many aspects.
Among various boundary conditions and underlying root systems, our focus is on the 2D periodic Toda equation of the affine root system $A_n$, which has been studied extensively from both mathematical and physical perspectives.
When $\epsilon = 1$ in \eqref{ost}, for example, it can be interpreted as the equation for primitive harmonic maps taking values in a compact flag manifold (see \cite{BurPed, BolPedWood}). Such maps are closely related to harmonic maps into symmetric spaces.
These, in turn, have numerous geometrical interpretations, such as surfaces in $\R^3$ of constant mean curvature (see \cite{Dorf}) or special Lagrangian cones in $\C^3$ (see \cite{McIntosh03, Joyce}). In contrast, when $\epsilon = -1$ in \eqref{ost}, it is an example of the tt* (topological—anti-topological fusion) equations introduced by Cecotti and Vafa \cite{CV1, CV2, CV3} to describe certain deformations of super-symmetric quantum field theories. In their context, imposing radial symmetry and anti-symmetry for the solution of \eqref{ost} is natural.
Due to its Toda structure, \eqref{ost} with $\epsilon = -1$ is known as the \textit{tt*-Toda equation}.

The project is a continuation of the works \cite{GuLi12, GuLi14, GIL1, GIL2, GIL3, GIL4} devoted to the case $\epsilon = -1$.
Of special importance are the {\it global solutions} of the tt*-Toda equation first predicted by Cecotti and Vafa, i.e., the solutions which are smooth for all $0<|t| < \infty$.
A comprehensive analysis of these solutions was performed in \cite{GuLi12} - \cite{GIL4}.
Earlier, a class of the solutions of the tt*-Toda equation was introduced and their behavior at $t=0$ and $t=\infty$ had been evaluated by Tracy and Widom in \cite{TW}. As it follows from the comparison of the results of \cite{TW} and \cite{GuLi12} - \cite{GIL4}, the Tracy-Widom solutions turned out to be exactly the Cecotti-Vafa global solutions of \eqref{ost}.

The ultimate goal of the current project is to study the general radial solution of \eqref{ost} with $\epsilon = -1$. The principal issue is the {\it connection formulae} between the asymptotic parameters describing the behavior of the solution at $t=0$ and $t=\infty$. In \cite{GuLi12} - \cite{GIL4} the problem was solved for the global solutions. The technique used was a fusion of the Iwasawa factorization in the loop group theory and the Riemann-Hilbert nonlinear steepest descent method of Deift and Zhou \cite{DZ} which is applicable to \eqref{ost} in view of its Lax-integrability (see more details in the next section).

The extension of the methods of \cite{GuLi12} - \cite{GIL4} to the general families of solutions is not trivial. The main problem is that the generic solution of  \eqref{ost} with $\epsilon =-1$ is singular in the neighborhood of both $t=0$ and $t=\infty$. This fact creates very serious technical challenges for the Riemann-Hilbert approach. Hence the idea is to try to obtain first the connection formulae for the general solution of \eqref{ost} with $\epsilon = 1$. The obvious reason is that for this sign of $\epsilon$, every solution is smooth for all $0<|t| < \infty$ (again, the details are given in the next section). 
 
In this paper, we shall consider the case $\epsilon = 1$ and  $n=2$. The last restriction is not of principal importance. Indeed, the associated Riemann-Hilbert problem, as we will see, is of matrix rank 3. That is, we are already beyond the usual and well-developed rank 2 Riemann-Hilbert setting, and the $n=2$ case actually captures the principal features of the general $n$ case. At the same time, we think that it makes sense to first present all the details assuming  $n=2$.  Indeed, the transition to $n >2$, which we will do in a subsequent paper of the series, will be then much easier to follow.  
This is the same strategy as the one used in \cite{GuLi12} - \cite{GIL4}: first to consider  $n=3$  \cite{GuLi12} - \cite{GIL3} and then extend the analysis to general $n$ \cite{GIL4}.

Introduce the variable,
$$
x := |t| = \left( t\overline{t} \,   \right)^{\frac{1}{2}}.
$$
Then, equation \eqref{ost} becomes
\begin{equation}\label{ost1}
(w_i)_{xx} + \frac{1}{x} (w_i)_x = \epsilon (2e^{2(w_{i+1}-w_{i})} -2 e^{2(w_{i}-w_{i-1})}), \quad i\in \Z.
\end{equation}
When $n=1$, the periodicity condition implies that there are only two functions involved: $w_0$ and $w_1$.
Moreover, the anti-symmetry yields $w_1 = - w_0$, and hence the system (\ref{ost1}) reduces to
a single equation on $w_0$,
$$
(w_0)_{xx} + \frac{1}{x} (w_0)_x = \epsilon (2e^{-4w_{0}} -2 e^{4w_{0}})
$$
or
\begin{equation}\label{ost2}
(w_0)_{xx} + \frac{1}{x} (w_0)_x = -4\epsilon \sinh (4w_0).
\end{equation}
This is a particular case of the third Painlev\'e equation whose connection problem for a special global solution, in the case  $\epsilon =-1$, was first solved by McCoy, Tracy, and Wu in \cite{MTW} and was used there to describe in detail the transition regime in the Ising model. A complete solution of the connection problem for the general solution of \eqref{ost2} was obtained in the series of works \cite{Novok2, Novok3, Kitaev2, Novok07boutrouxforSGP3} (see the monograph \cite{FIKN} for more on the history of the question).

In the case $n=2$, which is the subject of this paper, one has
$$
w_1 = 0, \quad w_2= -w_0,
$$
and hence the system \eqref{ost1} again becomes a single equation for $w_0$,
\begin{align}
    (w_0)_{xx} + \frac{1}{x} (w_0)_x = \epsilon (2e^{-2w_{0}} - 2e^{4w_{0}}). \label{radial 2D periodic toda with epsilon}
\end{align}
This equation is known as the radial version of the Bullough-Dodd equation or Tzitzeica equation (see \cite{LofMcI} and references therein), and it is yet another particular case of the Painlev\'e III equation. 
The connection formulae for this equation have been studied by A. Kitaev in \cite{Kitaev}. (At the end of this introduction, we will say more about Kitaev's results.)
When $\epsilon = 1$ in \eqref{radial 2D periodic toda with epsilon}, we obtain
\begin{equation}\label{negative tt*-Toda with x when n=2}
(w_0)_{xx} + \frac{1}{x} (w_0)_x = 2e^{-2w_{0}} -2 e^{4w_{0}},
\end{equation}
which we call the \textit{radial Toda equation} (with $n=2$) in our context.

Our main result is the following theorem:
\begin{thm}\label{main theorem} 
For every $\gamma \in (-1/2, 1)$ and every $\rho \in \R$, there exists
a unique real-valued, smooth for all $x>0$,  solution of equation (\ref{negative tt*-Toda with x when n=2}) 
such that
\begin{equation}\label{zero}
w_0(x) = \gamma \ln x + \rho + o(1), \quad x \rightarrow 0.
\end{equation}
The large $x$ behavior of this solution is described by the asymptotic formula,
\begin{equation}\label{infty}
w_0(x) = \frac{\sigma}{\sqrt{x}} \cos\left(2\sqrt{3}x  +\frac{2}{\sqrt{3}}\sigma^2\ln x + \psi \right) + O\left(\frac{1}{x}\right),
\quad x \rightarrow \infty.
\end{equation}

The {\it connection formulae}, i.e. the expression of $\sigma$ and $\psi$ in terms of
$\gamma$ and $\rho$, are given by the equations
\begin{align}
\sigma^2 &= \frac{\sqrt{3}}{2} X,\quad \sigma >0, \label{con1} \\
\psi &= \frac{2 \sigma^2}{\sqrt{3}} \ln (24 \sqrt{3}) +\frac{3\pi}{4} + \alpha - \arg \Gamma\left(\frac{2i}{\sqrt{3}}\sigma^2\right) \label{con2}
\end{align}
where
\begin{align}\label{X}
\begin{aligned}
    X &= \dfrac{1}{2\pi}\ln \Bigg[ \dfrac{1}{8 \cos \frac{\pi (1 - \gamma)}{3} \sin^2 \frac{\pi (1 - \gamma)}{3}} \Bigg( q^{\R} + \frac{1}{q^{\R}} \Bigg) + \dfrac{1}{4\sin^2\frac{\pi (1 - \gamma)}{3} } \Bigg]
\end{aligned}
\end{align}
and
\begin{align}\label{alpha}
\begin{aligned}
    \alpha &= \arg\Bigg[ \Big( q^{\R} + \frac{1}{q^{\R}} \Big) \cos\frac{2\pi (1 - \gamma)}{3} + 2\cos \frac{\pi (1 - \gamma)}{3}\\
    &\qquad \qquad \qquad \qquad\qquad \qquad + i \Big( q^{\R} - \frac{1}{q^{\R}} \Big) \sin \frac{2\pi (1 - \gamma)}{3} \Bigg]
\end{aligned}
\end{align}
with 
\begin{align*}
    q^{\R} = \frac{2 (\gamma - 1)^2 }{e^{-2\rho}} 3^{2 (\gamma - 1)} \frac{\Gamma \left( \frac{\gamma - 1}{3} \right) \Gamma \left( \frac{2 \gamma - 2}{3} \right) }{ \Gamma \left( \frac{2 - 2 \gamma }{3} \right) \Gamma \left( \frac{1 - \gamma}{3} \right) }.
\end{align*}

Moreover, every solution of \eqref{negative tt*-Toda with x when n=2} can be characterized by the asymptotic behavior at $x=0$
given by \eqref{zero} for some $\gamma \in (-1/2, 1)$ and $\rho \in \R$ (and, consequently,
behaving at infinity according to \eqref{infty} for some real $\sigma$ and $\psi$).
\end{thm}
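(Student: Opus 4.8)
The plan is to exploit the Lax-integrability of \eqref{negative tt*-Toda with x when n=2}: the radial Toda equation arises as the compatibility condition of a linear system whose $\lambda$-connection is rational, so that its solutions are in bijection with the monodromy data (Stokes multipliers, formal monodromy exponents, and connection matrix) of an associated $3\times3$ isomonodromic linear problem. Because $w_0$ is real, these data satisfy explicit reality and symmetry constraints, which cut the monodromy manifold down to a two-real-parameter family that will be matched to $(\gamma,\rho)$. First I would set up this Lax pair, record the Stokes structure at the irregular point $\lambda=\infty$ and the singularity at $\lambda=0$, and reformulate the inverse monodromy problem as a rank-$3$ (i.e. $3\times3$) Riemann-Hilbert problem with jumps on a fixed contour and a normalization at $\lambda=\infty$ that encodes $w_0$ and its derivative. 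The key structural input, that a global smooth solution on all of $(0,\infty)$ corresponds to admissible monodromy data and vice versa, is where the Iwasawa factorization of the loop-group-valued frame does its work.

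The next step is the small-$x$ analysis. As $x\to0$ the Lax matrices degenerate to a solvable model of Fuchsian/hypergeometric type whose monodromy can be computed in closed form; concretely, the loop-group Iwasawa factorization relates the holomorphic data at $\lambda=0$ to the Stokes data. Matching the logarithmic leading term $\gamma\ln x$ in \eqref{zero} against the formal exponents fixes the diagonal part, while the subleading constant $\rho$ enters through the connection coefficient. The outcome of this computation is precisely the ratio-of-Gamma-functions expression for $q^{\R}$: it is the invariantly defined Stokes/connection datum written as a function of $(\gamma,\rho)$. The admissible range $\gamma\in(-1/2,1)$ should emerge here as exactly the interval for which the resulting data are real-admissible and produce a solution nonsingular for all $0<x<\infty$.

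The hard part is the large-$x$ analysis, carried out by the Deift--Zhou nonlinear steepest descent method applied to the rank-$3$ problem; the step up from the classical rank-$2$ setting is the central technical novelty. Here I would introduce the scalar phase functions built from the eigenvalue differences of the leading Lax matrix (the linear term $2\sqrt3\,x$ in \eqref{infty} is such a difference, rescaled), locate the saddle points, and perform the contour deformations and $g$-function / jump factorizations that open up the oscillatory regions. At the saddles one builds local parametrices out of parabolic cylinder (Weber) functions; this is what simultaneously produces the slowly varying correction $\tfrac{2}{\sqrt3}\sigma^2\ln x$ to the phase, the amplitude law $\sigma^2=\tfrac{\sqrt3}{2}X$ of \eqref{con1}, and the term $\arg\Gamma\!\left(\tfrac{2i}{\sqrt3}\sigma^2\right)$ in \eqref{con2}, while the remaining constant $\tfrac{2\sigma^2}{\sqrt3}\ln(24\sqrt3)+\tfrac{3\pi}{4}$ comes from the precise value of the phase and the WKB prefactors at the saddle. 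I expect the bookkeeping of the $3\times3$ factorizations and the matching of the parabolic-cylinder parametrix to the global parametrix to be the main obstacle, since the signs, orderings, and overlaps of the three sheets must be tracked consistently throughout.

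Finally, I would assemble the connection formulae \eqref{con1}--\eqref{alpha} by eliminating the monodromy data: both the small-$x$ pair $(\gamma,\rho)$ and the large-$x$ pair $(\sigma,\psi)$ are expressed through the same Stokes data, so equating the two descriptions yields $X$ and $\alpha$ as the functions of $q^{\R}$ and $\gamma$ recorded in \eqref{X} and \eqref{alpha}. Existence and uniqueness of the solution for each $(\gamma,\rho)$ in the stated range follow from the unique solvability of the Riemann-Hilbert problem, via a vanishing-lemma argument using the reality symmetry, together with the explicit bijection $(\gamma,\rho)\leftrightarrow$ admissible monodromy data. The concluding completeness assertion, that every smooth solution of \eqref{negative tt*-Toda with x when n=2} is captured by some $(\gamma,\rho)\in(-1/2,1)\times\R$, is obtained by running the argument in reverse: any global smooth solution carries well-defined monodromy data subject to the reality constraints, these data lie in the image of the small-$x$ map, and hence determine admissible $(\gamma,\rho)$ and the corresponding $(\sigma,\psi)$ governing \eqref{infty}.
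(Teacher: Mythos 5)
Your proposal follows essentially the same route as the paper: direct monodromy analysis of the $3\times 3$ Lax pair reducing the data to two real parameters, loop-group Iwasawa factorization to compute the connection matrix (hence $q^{\R}$) in terms of $(\gamma,\rho)$ via explicit Gamma-function formulae, a rank-$3$ Deift--Zhou steepest descent with parabolic cylinder parametrices at the six saddle points for the large-$x$ asymptotics, and elimination of the shared monodromy data to obtain the connection formulae. The only real divergence is the solvability mechanism: the paper invokes no vanishing lemma, instead obtaining large-$x$ solvability from the small-norm theorem and extending it to all $x>0$ via the Painlev\'e-property smoothness of every solution (Proposition \ref{smoothness prop}), while the exclusion of $A^{\R}<0$ needed for the completeness claim is deferred to the sequel rather than derived from the reality symmetry as you suggest.
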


As already mentioned above, the 2D periodic Toda equation is Lax-integrable due to Mikhailov \cite{Mik}. 
In particular, this means that its radial version  \eqref{ost1} describes the isomonodromic deformations of a certain $(n+1)\times(n+1)$ system of linear differential equations.
In the case $n=2$,  which we study in this paper,  this is a $3\times 3$ linear system \eqref{Lax pair with x} presented in the next section. The monodromy data of this linear system constitute the first integrals of the nonlinear equation (\ref{negative tt*-Toda with x when n=2}). The connection problem is solved as soon as the asymptotic behaviors of $w_0(x)$ at $x = \infty$ and $x = 0$ are explicitly described in terms of these integrals.
We solve the first problem by obtaining the explicit formulae for the parameters $X$ and $\alpha$ in terms of the monodromy data of the linear system \eqref{Lax pair with x}, in section \ref{section 3}, using the extension of the Deift-Zhou steepest descent method to the associated  $3\times 3$ matrix Riemann-Hilbert problem. The second problem is the derivation of the explicit formulae for the parameters $\gamma$ and $\rho$ in terms of the {\it same} monodromy data that is done in section \ref{section 4} with the help of Iwasawa Factorization. 
Combining these yields the {\it direct} connection formulae between the asymptotic parameters $(\sigma, \psi)$ at infinity and the asymptotic parameters $(\gamma, \rho)$ at zero which are given in Theorem \ref{main theorem}. Section \ref{section 4} essentially borrows the technique already developed in \cite{GIL3}.  Section \ref{section 3} represents our main technical development - a higher-rank version of the Deift-Zhou nonlinear steepest descent method which we will use in the following paper of this series where the case of general $n$ will be studied.

The fact that we have accounted for all solutions of \eqref{negative tt*-Toda with x when n=2}, i.e. the proof of the last statement of Theorem \ref{main theorem}, follows from the above computations --- this is explained at the end of section \ref{section 4}.

The ``first half'' of Theorem \ref{main theorem}, i.e. the asymptotic formulae \eqref{infty}, \eqref{con1}, \eqref{con2} at $x =\infty$  with the parameters $X$ and $\alpha$ given in terms of the monodromy data of the linear system \eqref{Lax pair with x} had already been obtained in \cite{Kitaev}.
Also, in \cite{Kitaev} a complete connection formula for the one-parameter family of the global solution in the case $\epsilon = -1$ was presented. 
The important difference between our approach and the one by \cite{Kitaev} is that the latter is based on the WKB asymptotic solution of the associated direct monodromy problem. 
Unlike \cite{Kitaev}, our paper gives an alternative derivation and proof of the connection formulae based on the asymptotic solution of the inverse monodromy problem via the Deift-Zhou nonlinear steepest descent method. As already stressed, a key methodological point of our work is that in order to prove Theorem \ref{main theorem}, we need to develop an extension of the Deift-Zhou method in the case of a Riemann-Hilbert problem whose matrix rank is higher than 2.

In the smooth case, with $\epsilon =1$, a complete description of the connection formulae had been obtained in \cite{KitV1}. In fact, in this work a much broader class of the complex and generally singular solutions of (\ref{negative tt*-Toda with x when n=2}) is studied\footnote{The reader can find a detailed summary of the results of  \cite{Kitaev} and \cite{KitV1} in Appendices B, C, and D of the recent paper \cite{KV23}. 
In this paper, the authors have corrected some typos and small arithmetic mistakes contained in their earlier works, and they have made considerable efforts to check numerically and present in a more transparent and simplified way the asymptotic formulae of \cite{Kitaev} and \cite{KV23}.}.
The authors of \cite{KitV1} use an alternative, $2\times2$ Lax pair for (\ref{negative tt*-Toda with x when n=2}) and they apply again the WKB-based technique of solution of the associated direct monodromy problem. The $2\times2$ Lax pair used in \cite{KitV1} has no analog for the general $n$ case. Hence the methods of \cite{KitV1} can not be immediately extended to the radial Toda equation for general $n$, while our $3\times 3$ based approach is easily, at least in principle, generalized to the $n \times n$ case. 
It should also be mentioned that matching our Theorem \ref{main theorem} with the formulae of \cite{KitV1} is an outstanding issue since in \cite{KitV1} a different monodromy parametrization is used, and the authors do not extract from their formulae the exact equations directly relating the asymptotic parameters at $0$ and $\infty$. Surely, this, though necessarily somewhat cumbersome, would be possible to carry out if needed.

Our paper is, of course, not the first one where the nonlinear steepest descent method has been extended to the higher-rank Riemann-Hilbert setting. In particular, close to our Riemann-Hilbert problem  (but not coinciding with it), some $3 \times 3$ Riemann-Hilbert problems have been analyzed in the papers \cite{BS}, \cite{BLS}, \cite{CL}, and \cite{CLW}.


Acknowledgments: 
The authors would like to thank all the referees and editors for their careful reading and for their numerous comments and suggestions, which greatly improved the quality of this paper.
The first author was partially supported by JSPS grant 18H03668. 
The second author was partially supported by NSF grant DMS:1955265, by RSF grant No. 22-11-00070, and by a Visiting Wolfson Research Fellowship from the Royal Society. 
The fourth author was partially supported by the ITO Foundation for International Education Exchange. 
The authors thank Andrei Prokhorov for valuable comments concerning the asymptotic analysis of the $Y$-RH problem discussed in section \ref{section Y problem}.

Some of the results of this work concerning the large $x$ asymptotics
have already been presented in the Ph.D. thesis \cite{Kosmakov} of the third author. 
Some figures from that thesis are reproduced here, with the permission of the author.

\section{Direct Monodromy Problem}

In this section, we will show that every solution of the radial Toda equation \eqref{negative tt*-Toda with x when n=2} has the corresponding monodromy data. To this end, we give the Lax pair representation of \eqref{negative tt*-Toda with x when n=2} and then describe all the monodromic properties.

Before getting into this point, we shall prove that \eqref{negative tt*-Toda with x when n=2} is equivalent to a certain Painlev\'e III equation. Indeed, if we introduce the change of variables 
\begin{align*}
\begin{dcases}
   x = \frac{3}{4}s^{\frac{2}{3}}\\
   \Tilde{w} = s^{\frac{1}{3}}e^{-2w_0},
\end{dcases}
\end{align*}
then one can readily verify that \eqref{negative tt*-Toda with x when n=2} is transformed into
\begin{align}
    \Tilde{w}_{ss} = \frac{\Tilde{w}_s^2}{\Tilde{w}} - \frac{\Tilde{w}_s}{s} - \frac{\Tilde{w}^2}{s} + \frac{1}{\Tilde{w}}, \label{Painleve III}
\end{align}
which is a special case of the Painlev\'e III ($D_7$) equation, and vice versa.

\begin{prop} \label{smoothness prop}
Every solution $w_0(x)$ of \eqref{negative tt*-Toda with x when n=2} is smooth on the positive real line.
\end{prop}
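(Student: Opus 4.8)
The plan is to treat \eqref{negative tt*-Toda with x when n=2} as a genuine second-order ODE in $x$ on the open half-line $(0,\infty)$ and to show that no solution can develop a singularity at a finite $x_0>0$. Since the right-hand side is a real-analytic function of $w_0$ and the friction coefficient $1/x$ is smooth wherever $x>0$, boundedness of $(w_0,w_0')$ automatically upgrades to smoothness through the standard continuation theory for ODEs; so the whole statement reduces to an a priori bound. The decisive structural feature, and the reason the conclusion holds for $\epsilon=1$ but fails for $\epsilon=-1$, is that the nonlinearity is \emph{confining}.

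First I would introduce the mechanical energy
\begin{equation*}
E(x) = \tfrac{1}{2}(w_0')^2 + V(w_0), \qquad V(w) := e^{-2w} + \tfrac{1}{2}e^{4w},
\end{equation*}
where $V$ is the antiderivative normalized so that $V'(w) = -\bigl(2e^{-2w} - 2e^{4w}\bigr)$ matches, up to sign, the right-hand side of \eqref{negative tt*-Toda with x when n=2}. Differentiating $E$ along a solution and using the equation to eliminate $w_0''$, the conservative terms cancel and only the friction term survives:
\begin{equation*}
\frac{dE}{dx} = -\frac{1}{x}\,(w_0')^2 \le 0, \qquad x>0.
\end{equation*}
Thus $E$ is non-increasing in the forward direction. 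Crucially $V\ge 0$ with $V(w)\to+\infty$ as $w\to\pm\infty$, so a bound on $E$ simultaneously controls $|w_0|$ (via $e^{-2w_0}\le E$ and $e^{4w_0}\le 2E$) and $|w_0'|$ (via $(w_0')^2\le 2E$).

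Next I would convert this monotonicity into a two-sided bound on an arbitrary compact subinterval $[a,b]\subset(0,\infty)$. Forward monotonicity alone gives $E(x)\le E(a)$ for $x\ge a$. For the backward direction I would use $\bigl|dE/dx\bigr| = x^{-1}(w_0')^2 \le a^{-1}\cdot 2E$ on $[a,b]$ and apply Gr\"onwall's inequality to obtain $E(x)\le E(b)\,e^{(2/a)(b-x)}$ for $a\le x\le b$. In both directions $E$, and hence $(w_0,w_0')$, remains bounded on $[a,b]$ by a constant depending only on the endpoints and the data at one interior point. Finally I would run the maximal-interval argument: let $(x_-,x_+)\subseteq(0,\infty)$ be the maximal interval of existence of a given solution; if $x_+<\infty$ the forward bound keeps $(w_0,w_0')$ bounded up to $x_+$, and if $x_->0$ the Gr\"onwall bound with $a=x_-$ keeps them bounded down to $x_-$, each contradicting maximality. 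Hence $(x_-,x_+)=(0,\infty)$ and the solution is globally smooth.

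The one genuinely delicate point is the backward continuation toward the singular endpoint $x=0$, where the friction coefficient $1/x$ blows up and the energy is no longer monotone in the favorable direction; a priori one might fear that this singular friction could cooperate with a finite-$x$ blow-up of $w_0$. This is exactly what the Gr\"onwall step neutralizes: because smoothness is only asserted on $(0,\infty)$ and we never need to reach $x=0$, on each compact subinterval $1/x$ is merely a bounded smooth coefficient, and the differential inequality for $E$ forbids blow-up at any positive $x$. The behavior as $x\to 0^+$ itself is a separate matter, governed by the asymptotics \eqref{zero} and treated elsewhere in the paper.
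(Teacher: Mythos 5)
Your proof is correct, and it takes a genuinely different route from the paper's. The paper transforms \eqref{negative tt*-Toda with x when n=2} into the Painlev\'e III ($D_7$) equation \eqref{Painleve III} and invokes the Painlev\'e property: any singularity of $\Tilde{w}$ on $\R_{>0}$ must be a pole, so $w_0$ could only blow up like $a\ln(x-x_0)+b+\O(x-x_0)$, and this ansatz is then ruled out by direct substitution into the equation. You instead run a dissipative-energy continuation argument: with $E=\tfrac12(w_0')^2+e^{-2w_0}+\tfrac12 e^{4w_0}$ one indeed gets $E'=-x^{-1}(w_0')^2\le 0$, the potential is coercive in both directions precisely because $\epsilon=1$, and the forward monotonicity plus the backward Gr\"onwall bound $|E'|\le (2/a)E$ on $[a,b]$ keep $(w_0,w_0')$ bounded on every compact subinterval of the maximal existence interval, so the standard escape-to-the-boundary criterion forces the maximal interval to be all of $(0,\infty)$, with smoothness following by bootstrap since the right-hand side is real-analytic in $w_0$ and smooth in $x>0$. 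Your argument is more elementary and self-contained: it needs neither the change of variables nor the Painlev\'e property (a nontrivial input the paper cites implicitly), and it replaces the paper's somewhat terse ``one can show the ansatz \eqref{w0 possible form} is not compatible'' step with a quantitative a priori bound; it also makes structurally transparent why the conclusion holds for $\epsilon=1$ but fails for $\epsilon=-1$, where the potential flips sign and is unbounded below. What the paper's route buys in exchange is finer local information --- it classifies the only conceivable singularity type (logarithmic for $w_0$, poles for $\Tilde{w}$), which meshes with the isomonodromy/Painlev\'e framework used throughout and remains meaningful in the singular case $\epsilon=-1$, where your Lyapunov function gives nothing. One small caveat worth stating explicitly in your write-up: your argument, as is appropriate, only applies to real-valued solutions, since the energy inequality and the coercivity of $V$ use reality; the paper's Painlev\'e-based argument is likewise tailored to excluding real singular points.
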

\begin{proof}
    From the Painlev\'e property, a solution of the equation \eqref{Painleve III} could have only a pole on $\R_{>0}$, i.e., $\Tilde{w}(s)$ is of the form,
    \begin{align*}
        \Tilde{w}(s) = \frac{c}{(s - s_0)^n} + \O\left( \frac{1}{(s - s_0)^{n-1}} \right)
    \end{align*}
    where $s_0 \in \R_{>0}$. This implies that $w_0 (x)$ might have a singularity only of the form,
    \begin{align}
        w_0(x) = a \ln (x - x_0) + b + \O \left( x - x_0 \right) \label{w0 possible form}
    \end{align}
    where $x_0 \in \R_{>0}$, but one can show that the ansatz \eqref{w0 possible form} is not compatible with equation \eqref{negative tt*-Toda with x when n=2} for real $a$, $x_0$, and $b$.
\end{proof}


\subsection{Lax Pair}

The radial Toda equation \eqref{negative tt*-Toda with x when n=2} admits a Lax pair representation:
\begin{align}
\begin{dcases}
    \Psi_{\zeta} = \left(- \frac{1}{\zeta^2} W -\frac{x}{\zeta} w_x - x^2 W^{T} \right) \Psi\\
    \Psi_{x} = \left( - w_x - 2 x \zeta W^T \right) \Psi,
\end{dcases} \label{Lax pair with x}
\end{align}
where $\Psi := \Psi (x, \zeta)$ is a $3 \times 3$ matrix with $x \in \R_{>0}$ and $\zeta \in \C^* = \C \setminus \{0\}$, and
\begin{align}
    w = \diag (w_0(x), 0, -w_0(x)), \quad W = \begin{pmatrix}
        0 & e^{-w_0(x)} & 0\\
        0 & 0 & e^{-w_0(x)}\\
        e^{2w_0(x)} & 0 & 0
    \end{pmatrix}. \label{w and W}
\end{align}
The compatibility condition of \eqref{Lax pair with x} gives  \eqref{negative tt*-Toda with x when n=2}.

If we take any solution $w_0(x)$ of the radial Toda equation \eqref{negative tt*-Toda with x when n=2} and the corresponding solution $\Psi(x, \zeta)$ of \eqref{Lax pair with x} 
and put
\begin{align}
    \tilde{\Psi}(t,\bar{t}, \lambda) := \Psi(|t|, \lambda/t), \quad t \in \C, \quad \lambda \in \C^*,
\end{align}
then $\tilde{\Psi}$ will satisfy Mikhailov's Lax pair (see \cite{Mik}) for the 2D periodic Toda equation \eqref{ost} of type $A_2$ with $\epsilon = 1$:
\begin{align}
\begin{dcases}
    \Psi_{t} = \left( w_t + \frac{1}{\lambda} W \right) \Psi\\
    \Psi_{\bar{t}}= \left( -w_{\bar{t}} - \lambda W^{T} \right) \Psi
\end{dcases} \label{Lax pair}    
\end{align}
where $w_0(x)$ in \eqref{w and W} is replaced by $w_0(t, \bar{t}) := w_0(|t|)$ to obtain $w$ and $W$ in \eqref{Lax pair}. Then, the compatibility condition of \eqref{Lax pair} is equivalent to the PDE
\begin{align}
    2 (w_0)_{t \bar{t}} = e^{-2w_0} - e^{4 w_0}.
\end{align}
We shall say more in section \ref{section 4} about the relationship between these two Lax pairs.

\subsection{Monodromy Data} \label{monodromy data}

Following the general theory of ODEs with singular points (see for instance  Chapter 1 of \cite{FIKN}), we will obtain the monodromy data for the first equation of system  \eqref{Lax pair with x}.


\subsubsection{Formal Solutions} \label{Formal solution section}
We start with  the formal solutions of the equation 
\begin{align}
\Psi_{\zeta} = \left(- \frac{1}{\zeta^2} W -\frac{x}{\zeta} w_x - x^2 W^{T} \right) \Psi.\label{first eq}
\end{align}
This equation has two irregular singularities at  $\zeta = 0$ and at $\zeta = \infty$.

The leading term of \eqref{first eq} at $\zeta=0$ is $-W$, and it can be diagonalized as follows:
\begin{align}
-W = P_0 \, (-d_3) P_0^{-1}
\end{align}
where 
\begin{align} P_0 = \begin{pmatrix}
        e^{-w_0} & e^{-w_0} & e^{-w_0}\\
        1 & \omega & \omega^2\\
        e^{w_0} & \omega^2 e^{w_0} & \omega e^{w_0}
    \end{pmatrix}, \; d_3 = \diag (1, \omega, \omega^2), \quad \omega=e^{\frac{2\pi i}{3}}. \label{P_0 expression 1}
\end{align}
 Moreover, matrix $P_0$ has the following decomposition 
\begin{align}
    P_0 = \begin{pmatrix}
        e^{-w_0} & 0 & 0\\
        0 & 1 & 0\\
        0 & 0 & e^{w_0}
    \end{pmatrix}\begin{pmatrix}
        1 & 1 & 1\\
        1 & \omega & \omega^2\\
        1 & \omega^2 & \omega
    \end{pmatrix} = e^{-w}\Omega \label{P_0 expression 2},
\end{align}
where we defined
\begin{align*}
    e^{-w}=\begin{pmatrix}
        e^{-w_0} & 0 & 0\\
        0 & 1 & 0\\
        0 & 0 & e^{w_0}
    \end{pmatrix}, \quad \Omega = \begin{pmatrix}
        1 & 1 & 1\\
        1 & \omega & \omega^2\\
        1 & \omega^2 & \omega
    \end{pmatrix}.
\end{align*}
By Proposition 1.1 of \cite{FIKN}, one can obtain a unique formal solution of \eqref{first eq} at $\zeta = 0$ of the form
\begin{align}
    \Psi_{f}^{(0)}(\zeta) = P_0 (I + \O(\zeta)) e^{\frac{1}{\zeta} d_3}. \label{formal solution near 0}
\end{align}

Similarly, we can obtain the formal solution at $\zeta = \infty$. After the transformation $\zeta \mapsto \frac{1}{\zeta}$, equation \eqref{first eq} becomes
\begin{align*}
    \Psi_{\zeta} = \left(\frac{x^2}{\zeta^2} W^T +\frac{x}{\zeta} w_x + W \right) \Psi.
\end{align*}
The   leading term is $x^2 W^T$, and $W^T$ can be diagonalized as follows:  
\begin{equation}\begin{aligned}
    &W^{T}  = P_{\infty} d_3 P_{\infty}^{-1},\\
    &P_{\infty} =P_{0}^{T-1}= e^{w} \Omega^{-1}.\label{P_infty expression 1}
\end{aligned}\end{equation}
Again, by Proposition 1.1 of \cite{FIKN}, one can obtain a unique formal solution of \eqref{first eq} at $\zeta = \infty$ of the form
\begin{align}
    \Psi_{f}^{(\infty)}(\zeta) = P_{\infty} (I + \O(\zeta^{-1})) e^{-x^2 \zeta d_3}. \label{formal solution near infty}
\end{align}
 

\begin{rem}
Observe that $\omega^2 + \omega + 1 = 0$, thus
\begin{align}
    \Omega^2= 3C, \label{omega^2=3C}
\end{align}
where
\begin{align*}
    C = \begin{pmatrix}
        1 & 0 & 0\\
        0 & 0 & 1\\
        0 & 1 & 0
    \end{pmatrix}.
\end{align*}
Since  $C^2 = I$,  we have
\begin{align}
    \Omega^{-1} = \frac{1}{3} C \Omega = \frac{1}{3} \Omega C \label{omega^{-1}}=\frac13 \bar\Omega.
\end{align}
\end{rem}


\subsubsection{Stokes Sectors} \label{Stokes sectors}
Next, we describe the  {\it Stokes rays} and {\it Stokes sectors} of \eqref{first eq}.

We start with  $\zeta = 0$. 
Recall that Stokes sectors are defined so that the fundamental solution is uniquely determined there by the asymptotic solution $\Psi_f^{(0)}$. 
To begin with, suppose we have two fundamental solutions $\Psi^{(0)}$ and $\tilde{\Psi}^{(0)}$ with the same asymptotics at $\zeta=0$. Then their ratio $\Psi^{(0)}(\zeta)^{-1}\tilde{\Psi}^{(0)}(\zeta)$ is a constant matrix. To obtain the unique fundamental solution in some sector $\Omega_n^{(0)}$, we need to have 
\begin{align}
    \Psi^{(0)}(\zeta)^{-1}\tilde{\Psi}^{(0)}(\zeta)=I, \quad \zeta\in \Omega_n^{(0)}.
\end{align} 
Since $\Psi^{(0)}$ and $\tilde{\Psi}^{(0)}$ have the same asymptotics at $\zeta=0$, it holds that
\begin{align*}
    & \Psi^{(0)}(\zeta)^{-1}\tilde{\Psi}^{(0)}(\zeta)=  \lim_{\zeta\rightarrow 0}  e^{-\frac{1}{\zeta} d_3}(I + \O(\zeta)) e^{\frac{1}{\zeta} d_3}\\
    &= \lim_{\zeta\rightarrow 0}  \begin{pmatrix}
        1+* & * e^{-\frac{1}{\zeta} (1 - \omega)} & * e^{-\frac{1}{\zeta}(1 - \omega^2)}\\
        * e^{\frac{1}{\zeta} (1 - \omega)} & 1 + * & * e^{-\frac{1}{\zeta}(\omega - \omega^2)}\\
        * e^{\frac{1}{\zeta}(1 - \omega^2)} &  * e^{\frac{1}{\zeta}(\omega - \omega^2)} & 1 + *
    \end{pmatrix}.
\end{align*}
Thus, to get $\Psi^{(0)}(\zeta)^{-1}\tilde{\Psi}^{(0)}(\zeta)=I$, we need $\Omega_n^{(0)}$ to contain at least one Stokes ray which is defined by
\begin{align*}
    \Re{(\omega^{j-1}-\omega^{i-1})\zeta^{-1}}=0,
\end{align*}
for each pair $(i,j)$, $i<j$. More precisely, Stokes rays are
\begin{align*}
    l^{(1,2)}_n &= \left\{ \zeta \in \C \cup \{\infty\} \;\big |\; \arg \zeta = -\frac{2 \pi}{3}- n\pi \right\},\\
    l^{(1,3)}_n &= \left\{ \zeta \in \C \cup \{\infty\} \;\big |\; \arg \zeta = -\frac{\pi}{3} - n\pi \right\},\\
    l^{(2,3)}_n &= \left\{ \zeta \in \C \cup \{\infty\} \;\big |\; \arg \zeta = - n\pi \right\},
\end{align*}
for $n \in \Z$.

Now we define Stokes sectors by those which contain exactly one Stokes ray for each superscript $(i,j)$ with $i<j$, i.e., $\Omega_n^{(0)}$ contains $l^{(1,2)}_n$, $l^{(1, 3)}_n$, and $l^{(2,3)}_n$ for each $n \in \Z$. Thus, one can take the following Stokes sectors:
\begin{align*}
    \Omega^{(0)}_1 (\zeta) &= \left\{\zeta \in \C^* \: \big | \; -\frac{2 \pi}{3} < \arg \zeta < \frac{2 \pi}{3} \right\},\\
    \Omega^{(0)}_2 (\zeta) &= \left\{\zeta \in \C^* \: \big | \; -\frac{5 \pi}{3} < \arg \zeta < -\frac{\pi}{3} \right\},\\
    \Omega^{(0)}_3 (\zeta) &= \left\{\zeta \in \C^* \: \big | \; -\frac{8 \pi}{3} < \arg \zeta < -\frac{4\pi}{3} \right\}.
\end{align*}
\begin{figure}[htbp]
    \centering
    \includegraphics[width=3.5cm]{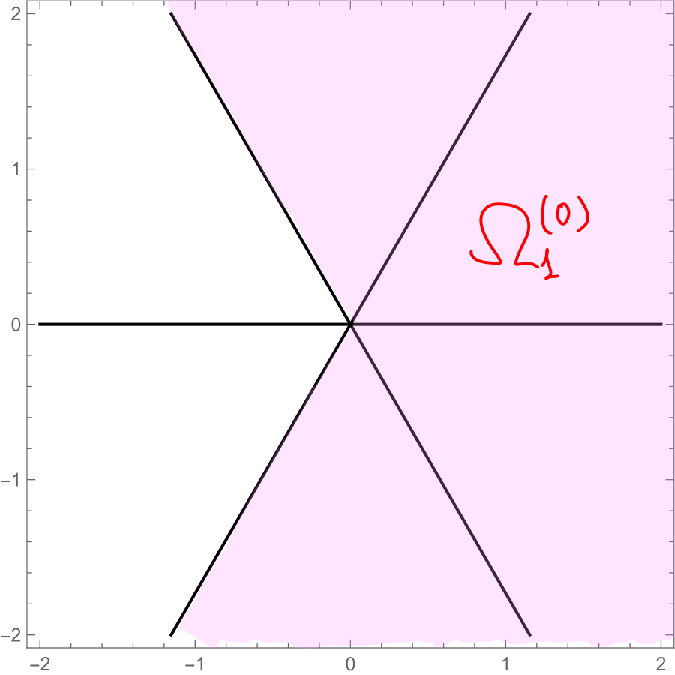}
    \includegraphics[width=3.5cm]{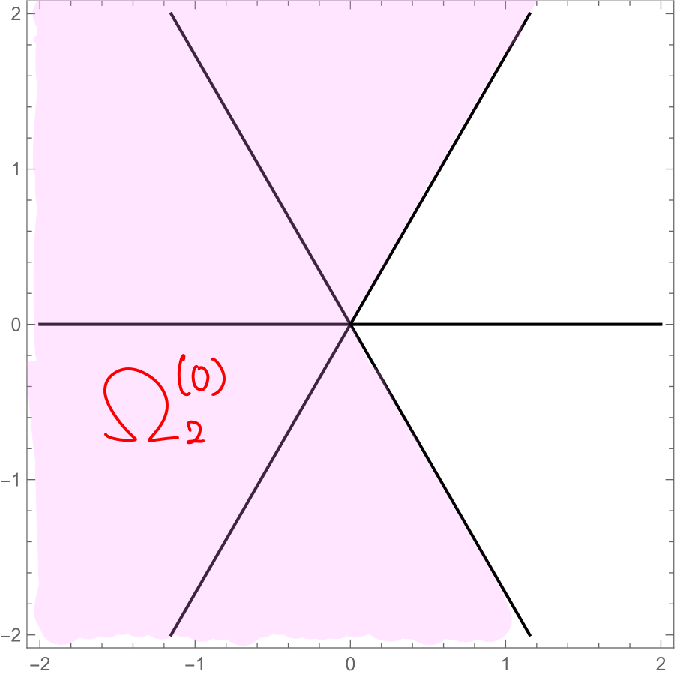}
    \caption{Stokes rays and Stokes sectors at $\zeta = 0$. (Here we only depicted $\Omega_1^{(0)}$ and $\Omega_2^{(0)}$, but the others are analogous.)}
    \label{Stokes rays and Stokes sectors at 0}
\end{figure}

Similarly, we can define the Stokes sectors at $\zeta = \infty$. First, we find the  Stokes rays at $\zeta = \infty$:
\begin{align*}
    \tilde{l}^{(1,2)}_n &= \left\{ \zeta \in \C \cup \{ \infty \} \;\big |\; \arg \zeta = \frac{2 \pi}{3} + n \pi \right\},\\
     \tilde{l}^{(1,3)}_n &= \left\{ \zeta \in \C \cup \{ \infty \} \;\big |\; \arg \zeta = \frac{\pi}{3} + n \pi \right\},\\
     \tilde{l}^{(2,3)}_n &= \left\{ \zeta \in \C \cup \{ \infty \} \;\big |\; \arg \zeta = n\pi \right\}.
\end{align*}
Then, the Stokes sectors at $\infty$ are 
\begin{align*}
    \Omega^{(\infty)}_1 (\zeta) &= \left\{\zeta \in \C^* \: \big | \; -\frac{2 \pi}{3} < \arg \zeta < \frac{2 \pi}{3} \right\},\\
    \Omega^{(\infty)}_2 (\zeta) &= \left\{\zeta \in \C^* \: \big | \; \frac{ \pi}{3} < \arg \zeta < \frac{5\pi}{3} \right\},\\
    \Omega^{(\infty)}_3 (\zeta) &= \left\{\zeta \in \C^* \: \big | \; \frac{4 \pi}{3} < \arg \zeta < \frac{8\pi}{3} \right\}.
\end{align*}
Moreover, we have
\begin{align}
\begin{aligned}
    \Omega^{(0)}_n(\zeta) &= \Omega^{(\infty)}_n(\zeta^{-1}) \; \text{ for $n \in \Z$.}
\end{aligned} \label{relation between omega^0 and omega^infty}
\end{align}
\begin{figure}[htbp]
    \centering
    \includegraphics[width=3.5cm]{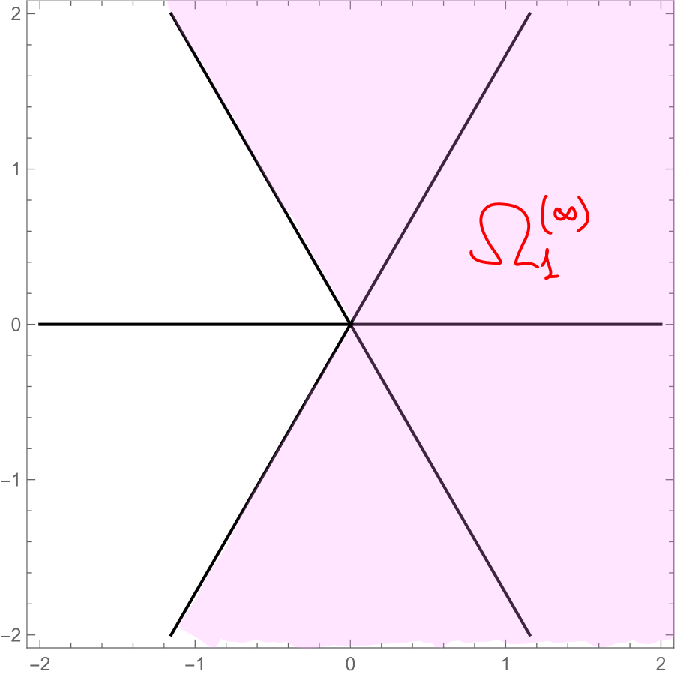}
    \includegraphics[width=3.5cm]{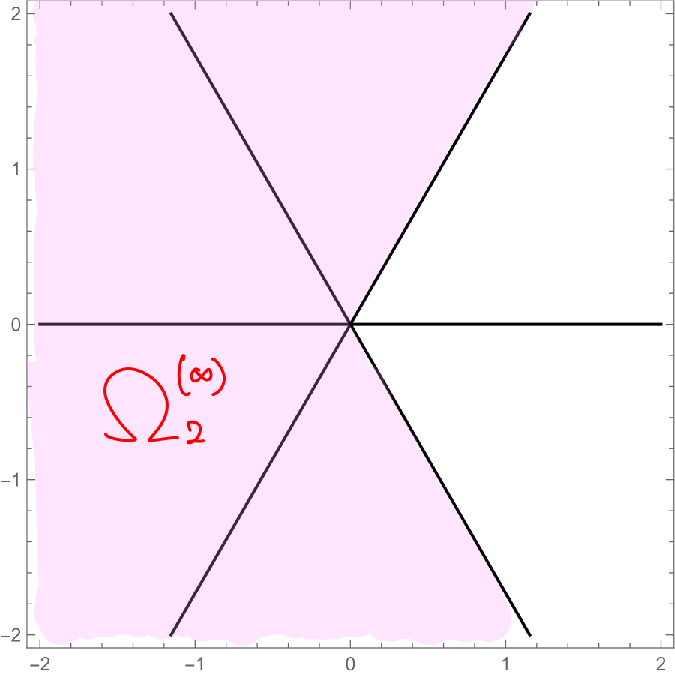}
    \caption{Stokes rays and Stokes sectors at $\zeta = \infty$.}
    \label{stokes rays and Stokes sectors at infty}
\end{figure}

Strictly speaking, sectors $\Omega_{n}^{(\infty, 0)}$ should be considered as lying  on the universal covering $\widetilde{{\mathbb C}^*}$.


\subsubsection{Stokes Matrices}\label{subsection on stokes matrices}

Next, we define the {\it canonical solutions}. According to the general theory (see e.g. Theorem 1.4 in \cite{FIKN}), in each sector $\Omega_{n}^{(\infty, 0)}$ where $n \in \Z$, there exist unique solutions $\Psi^{(\infty, 0)}_n$
of \eqref{first eq} satisfying the asymptotic condition,
\begin{align*}
    &\Psi^{( 0)}_n(\zeta) \sim \Psi^{( 0)}_f(\zeta), \quad\zeta \to 0, \quad \zeta \in \Omega^{(0)}_n,\\
    &\Psi^{(\infty)}_n(\zeta) \sim \Psi^{(\infty)}_f(\zeta), \quad \zeta \to \infty, \quad \zeta \in \Omega^{(\infty)}_n.
\end{align*}
It also should be noticed that
\begin{equation}\label{Psi31}
\Psi_{3}^{(\infty)}(\zeta) = \Psi_{1}^{(\infty)}(\zeta e^{-2\pi i}) \quad \mbox{and} \quad
\Psi_{3}^{(0)}(\zeta) = \Psi_{1}^{(0)}(\zeta e^{2\pi i}).
\end{equation}

\begin{rem}
Another fundamental fact following from the general theory is that every solution of \eqref{first eq}, and, in particular, all our canonical solutions $\Psi^{(\infty, 0)}_n$ admit analytical continuation on the whole universal covering $\widetilde{{\mathbb C}^*}$ of ${\mathbb C}^* = \mathbb C \setminus \{0\}$.
Relations (\ref{Psi31}) should be then understood as equations on  $\widetilde{{\mathbb C}^*}$ where we accept a somewhat ``colloquial'' convention to denote the point on $\widetilde{{\mathbb C}^*}$ and its projection on ${\mathbb C}^*$ by the same letter $\zeta$. We also note that, in the case of ${\mathbb C}^*$, we can write that 
$$
\widetilde{{\mathbb C}^*} \equiv \left\{\tilde{\zeta} = \Bigl(\zeta, \, \arg(\zeta)\Bigr) \, \Big | \, \zeta \in {\mathbb C}^* \right\}
$$
so that the canonical ``altering sheets'' involution $\beta$ on $\widetilde{{\mathbb C}^*}$  can be indicated in short writing as
$$
\beta: \zeta  \mapsto \zeta e^{2\pi i}.
$$
With this convention, the rigorous version of relations (\ref{Psi31}) reads,
$$
\Psi_{3}^{(\infty)} = (\beta^{*})^{-1}[\Psi_1^{(\infty)}] \quad\mbox{and}\quad
\Psi_{3}^{(0)} = \beta^{*}[\Psi_1^{(0)}].
$$
In what follows we shall however use a ``colloquial'' form (\ref{Psi31}) of this relation, but when talking about $\Psi^{(\infty, 0)}_n (\zeta)$ will always recognize that $\arg \zeta$ matters.
\end{rem}

The Stokes matrices are defined by
\begin{align}
    &S_n^{( 0)} = [\Psi^{( 0)}_n(\zeta)]^{-1} \Psi^{( 0)}_{n+1}(\zeta), \label{stokes matrices of 0 and infty}\\
    &S_n^{(\infty)} = [\Psi^{(\infty)}_n(\zeta)]^{-1} \Psi^{(\infty}_{n+1}(\zeta),
\end{align}
where $n \in \Z$. By the analytic continuation property, it holds that
$S_n^{(0, \infty)} = S_{n-2}^{(0, \infty)}$. Thus, for our purposes to find monodromy data, it is enough to focus on choices $n = 1, 2$.

\begin{lem}\label{structure of S's}
Stokes matrices have the following structures,
\begin{align*}
    &S^{(\infty)}_1 = \begin{pmatrix}
        1 & * & 0\\
        0 & 1 & 0\\
        * & * & 1
    \end{pmatrix}, \;
    S^{(\infty)}_2 = \begin{pmatrix}
        1 & 0 & *\\
        * & 1 & *\\
        0 & 0 & 1
    \end{pmatrix}, \\
    &S^{(0)}_1 = \begin{pmatrix}
        1 & 0 & *\\
        * & 1 & *\\
        0 & 0 & 1
    \end{pmatrix}, \;
    S^{(0)}_2 = \begin{pmatrix}
        1 & * & 0\\
        0 & 1 & 0\\
        * & * & 1
    \end{pmatrix}.
\end{align*}
\end{lem}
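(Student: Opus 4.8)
The plan is to read off each admissible off-diagonal position from the competition between the exponentials $e^{\lambda_k/\zeta}$ (at $\zeta=0$) and $e^{-x^2\zeta\lambda_k}$ (at $\zeta=\infty$), where $(\lambda_1,\lambda_2,\lambda_3)=(1,\omega,\omega^2)$ are the diagonal entries of $d_3$. First I would rewrite the defining relation $\Psi^{(0)}_{n+1}=\Psi^{(0)}_n S^{(0)}_n$ using the \emph{common} formal asymptotics \eqref{formal solution near 0}: both $\Psi^{(0)}_n e^{-\frac1\zeta d_3}$ and $\Psi^{(0)}_{n+1} e^{-\frac1\zeta d_3}$ are asymptotic to the same series $P_0(I+\O(\zeta))$, so their ratio tends to $I$ as $\zeta\to 0$ in the overlap $\Omega^{(0)}_n\cap\Omega^{(0)}_{n+1}$. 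On the other hand this ratio equals $e^{\frac1\zeta d_3}\,S^{(0)}_n\,e^{-\frac1\zeta d_3}$, whose $(i,j)$ entry is exactly $(S^{(0)}_n)_{ij}\,e^{(\lambda_i-\lambda_j)/\zeta}$ (this is the factor already visible in the ratio matrix displayed in \S\ref{Stokes sectors}). Forcing convergence to $\delta_{ij}$ gives $(S^{(0)}_n)_{ii}=1$ and the decisive rule: for $i\neq j$, $(S^{(0)}_n)_{ij}$ can be nonzero only when $e^{(\lambda_i-\lambda_j)/\zeta}$ is recessive throughout the overlap, i.e. $\Re\big((\lambda_i-\lambda_j)/\zeta\big)<0$ there.

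The remaining work is a finite trigonometric check. Writing $\zeta=re^{i\theta}$ one has $\Re\big((\lambda_i-\lambda_j)/\zeta\big)=|\lambda_i-\lambda_j|\,r^{-1}\cos\big(\arg(\lambda_i-\lambda_j)-\theta\big)$, which is negative precisely when $\arg(\lambda_i-\lambda_j)-\theta\in(\pi/2,3\pi/2)\pmod{2\pi}$. The only inputs needed are
\[
\arg(\lambda_1-\lambda_2)=-\tfrac{\pi}{6},\qquad \arg(\lambda_1-\lambda_3)=\tfrac{\pi}{6},\qquad \arg(\lambda_2-\lambda_3)=\tfrac{\pi}{2},
\]
together with the reversed pairs, obtained by adding $\pi$. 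For $S^{(0)}_1$ the overlap is $-\tfrac{2\pi}{3}<\theta<-\tfrac{\pi}{3}$; running the six pairs through the criterion shows that exactly the positions $(1,3),(2,1),(2,3)$ are recessive (hence admissible) while $(1,2),(3,1),(3,2)$ are dominant (hence forced to $0$), which is the stated pattern. The same computation on the overlap $-\tfrac{5\pi}{3}<\theta<-\tfrac{4\pi}{3}$ produces $S^{(0)}_2$.

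For the two matrices at $\zeta=\infty$ the exponential is $e^{-x^2\zeta\lambda_k}$, so with $x>0$ the recessiveness condition reads $\Re\big(\zeta(\lambda_i-\lambda_j)\big)>0$ as $\zeta\to\infty$ in $\Omega^{(\infty)}_n\cap\Omega^{(\infty)}_{n+1}$. One may either repeat the trigonometric check on the sectors $\Omega^{(\infty)}_n$, or argue more economically: the relation $\Omega^{(\infty)}_n(\zeta)=\Omega^{(0)}_n(\zeta^{-1})$ in \eqref{relation between omega^0 and omega^infty} together with $W^{T}=P_\infty d_3 P_\infty^{-1}$, $P_\infty=P_0^{T-1}$ means that the substitution $\zeta\mapsto\zeta^{-1}$ turns $e^{(\lambda_i-\lambda_j)/\zeta}$ into $e^{(\lambda_i-\lambda_j)\zeta}$, while the extra sign in the leading coefficient $x^2W^{T}$ reverses dominance; consequently each $S^{(\infty)}_n$ inherits the \emph{transposed} sparsity pattern of $S^{(0)}_n$, which matches the stated matrices. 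The main obstacle is purely the bookkeeping of dominant versus recessive directions — keeping the sign conventions in the exponentials straight and correctly intersecting the open sectors on the universal cover $\widetilde{\C^*}$; once the rule ``$(S_n)_{ij}\neq 0$ only if $e^{q_i-q_j}$ is recessive in the overlap'' is established, the statement is a direct finite verification over the six pairs for each of the four matrices.
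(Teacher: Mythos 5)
Your proposal is correct and follows essentially the same route as the paper's proof: you conjugate the constant Stokes matrix by the exponential factor of the formal solution, demand that the result tend to $I$ on the overlap of consecutive Stokes sectors, and read off which entries are forced to vanish from the sign of the real part of the exponents — exactly the computation the paper carries out explicitly for $S_1^{(\infty)}$ (via the ranges of $\arg\zeta(1-\omega)$, $\arg\zeta(1-\omega^2)$, $\arg\zeta(\omega-\omega^2)$), leaving the remaining three matrices as analogous. Your only deviation, deducing the $\zeta=\infty$ patterns as the transposed sparsity of the $\zeta=0$ ones via $\zeta\mapsto\zeta^{-1}$ and the sign flip in the exponent, is sound — since no Stokes rays lie in the open overlaps, each $\Re\bigl((\lambda_i-\lambda_j)/\zeta\bigr)$ is sign-definite there, so dominance and recessiveness swap pairwise — and it is consistent with the exact symmetry $S_n^{(0)}=d_3S_{n+1}^{(\infty)}d_3^{-1}$ that the paper derives later in \eqref{antisymmetry for stokes matrices at 0}.
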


\begin{proof}
We only give proof for $S_1^{(\infty)}$. The same approach can be applied to determine the structure of other Stokes matrices.  

By the definition of the Stokes matrices \eqref{stokes matrices of 0 and infty} we have  $\Psi^{(\infty)}_n S^{(\infty)}_n \Psi^{(\infty)-1}_{n+1} = I$. For $\zeta \in \Omega_n^{(\infty)} \cap \Omega_{n+1}^{(\infty)}$, it can be further written as
\begin{align*}
    &\lim_{\zeta \to \infty} \Psi^{(\infty)}_f S^{(\infty)}_n \Psi^{(\infty)-1}_{f} = I\\
    &\Longrightarrow \lim_{\zeta \to \infty} P_{\infty} (I + \O(\zeta^{-1})) e^{-x^2 \zeta d_3} S^{(\infty)}_n e^{x^2 \zeta d_3} (I + \O(\zeta^{-1}))^{-1} P_{\infty}^{-1} = I\\
    &\Longrightarrow \lim_{\zeta \to \infty} e^{-x^2 \zeta d_3} S^{(\infty)}_n e^{x^2 \zeta d_3} = I,
\end{align*}
or
\begin{align}
    \lim_{\zeta \to \infty} \begin{pmatrix}
        * & * e^{-x^2 \zeta (1 - \omega)} & * e^{-x^2 \zeta (1 - \omega^2)}\\
        * e^{x^2 \zeta (1 - \omega)} & * & * e^{-x^2 \zeta (\omega - \omega^2)}\\
        * e^{x^2 \zeta(1 - \omega^2)} &  * e^{x^2 \zeta (\omega - \omega^2)} & *
    \end{pmatrix} = I, \label{for defining S_n^infty}
\end{align}
for $\zeta \in \Omega_n^{(\infty)} \cap \Omega_{n+1}^{(\infty)}$.

When $n=1$, $\zeta$ belongs to $\Omega_1^{(\infty)} \cap \Omega_{2}^{(\infty)}$, which means
\begin{align*}
    \zeta \in \left\{\zeta \in \C^* \; \bigg | \; \frac{\pi}{3} < \arg \zeta < \frac{2 \pi}{3} \right \}.
\end{align*}
And it follows that
\begin{align*}
    &\frac{\pi}{6} < \arg \zeta(1 - \omega) < \frac{\pi}{2}\\
    &\frac{\pi}{2} < \arg \zeta(1 - \omega^2) < \frac{5\pi}{6}\\
    &\frac{5\pi}{6} < \arg \zeta(\omega - \omega^2) < \frac{7\pi}{6}.
\end{align*}
Thus equality  \eqref{for defining S_n^infty} holds if  $S^{(\infty)}_1$ have the following structure 
\begin{align*}
    S^{(\infty)}_1 = \begin{pmatrix}
        1 & * & 0\\
        0 & 1 & 0\\
        * & * & 1
    \end{pmatrix}.
\end{align*}
\end{proof}

Lemma \ref{structure of S's} describes the structure of $S_n^{(\infty, 0)}$. In the next subsections, we will discuss some symmetries of these matrices that will help us to parameterize each $S_n^{(\infty, 0)}$.  


\subsubsection{Anti-symmetry and Inversion symmetry relations} \label{anti-symmetry}
In this section, we consider the anti-symmetry relations of the formal and canonical solutions and the Stokes matrices at $\zeta = \infty$, and the inversion symmetry relations of those at $\zeta = 0$.


\paragraph{Anti-symmetry relation at $\zeta = \infty$}

Let $A(\zeta)$ be the coefficient matrix of \eqref{first eq}:
\begin{align*}
    A(\zeta) = - \frac{1}{\zeta^2} W -\frac{x}{\zeta} w_x - x^2 W^{T}.
\end{align*}
Then, one can observe the following symmetry:
\begin{align}
    \Delta A^T(-\zeta) \Delta = A(\zeta), \label{antisymmetry with A at infty}
\end{align}
where
\begin{align*}
    \Delta = \begin{pmatrix}
        0 & 0 & 1\\
        0 & 1 & 0\\
        1 & 0 & 0
    \end{pmatrix}.
\end{align*}
Moreover, \eqref{antisymmetry with A at infty} implies the following lemma: 
\begin{lem}\label{sym sol 1}
$\Psi(\zeta)$ is a solution of \eqref{first eq}, then so is $\tilde{\Psi}(\zeta) := \Delta \Psi^{T-1}(- \zeta)$.
\end{lem}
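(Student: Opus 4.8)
The plan is to verify directly that $\tilde\Psi(\zeta) = \Delta\,\Psi^{T-1}(-\zeta)$ solves \eqref{first eq}, using the symmetry \eqref{antisymmetry with A at infty} together with the elementary fact that $\Delta$ is an involution, $\Delta^2 = I$ (immediate from its matrix form). The computation splits into three independent bookkeeping steps: the reflection $\zeta\mapsto-\zeta$, the passage to the inverse transpose, and finally the conjugation by $\Delta$.

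First I would track how each operation transforms the linear ODE $\Psi_\zeta = A(\zeta)\Psi$. Setting $\Phi(\zeta):=\Psi(-\zeta)$ and applying the chain rule gives $\Phi_\zeta = -A(-\zeta)\Phi$. Next, differentiating the inverse via $(\Phi^{-1})_\zeta = -\Phi^{-1}\Phi_\zeta\Phi^{-1}$ and then transposing yields that $\Xi(\zeta):=\Phi^{T-1}(\zeta)=\Psi^{T-1}(-\zeta)$ satisfies $\Xi_\zeta = A^{T}(-\zeta)\,\Xi$; here the sign flip from the reflection and the one from the inverse-transpose identity combine to leave the coefficient $+A^{T}(-\zeta)$, which is the only place where one must keep careful track of signs.

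It then remains to insert the factor $\Delta$. Since $\tilde\Psi=\Delta\,\Xi$, I compute $\tilde\Psi_\zeta = \Delta\,\Xi_\zeta = \Delta A^{T}(-\zeta)\,\Xi = \bigl(\Delta A^{T}(-\zeta)\Delta\bigr)\,\Delta\,\Xi$, where in the last equality I have inserted $\Delta^2 = I$. Applying the symmetry relation \eqref{antisymmetry with A at infty}, namely $\Delta A^{T}(-\zeta)\Delta = A(\zeta)$, the bracket collapses to $A(\zeta)$ while the trailing $\Delta\,\Xi$ is exactly $\tilde\Psi$, giving $\tilde\Psi_\zeta = A(\zeta)\tilde\Psi$ as desired.

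I do not anticipate any genuine obstacle: the statement is a formal consequence of the matrix symmetry \eqref{antisymmetry with A at infty}, and the whole argument is a short chain of transposition and differentiation identities. The only points demanding attention are the correct accounting of signs under the combined reflection and inverse-transpose, and the observation $\Delta^2=I$; both are routine. One could equivalently phrase the conclusion as the statement that $\Psi\mapsto\Delta\,\Psi^{T-1}(-\zeta)$ is an involutive symmetry of the solution space of \eqref{first eq}, whose consistency is guaranteed precisely by \eqref{antisymmetry with A at infty}.
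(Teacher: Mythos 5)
Your proposal is correct and follows essentially the same route as the paper's proof: a direct verification that $\tilde\Psi_\zeta = A(\zeta)\tilde\Psi$, combining the chain rule for the reflection $\zeta\mapsto-\zeta$, the identity $(\Phi^{-1})_\zeta=-\Phi^{-1}\Phi_\zeta\Phi^{-1}$, the insertion of $\Delta^2=I$, and the symmetry \eqref{antisymmetry with A at infty}. Your staging of the computation into three separate conjugation steps is merely a cosmetic reorganization of the paper's single chain of equalities, and your sign bookkeeping is accurate throughout.
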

\begin{proof}
Since $\Psi(\zeta)$ is a solution of \eqref{first eq}, it satisfies
\begin{align*}
    \frac{d \Psi(\zeta)}{d \zeta} = A(\zeta) \Psi(\zeta).
\end{align*}
Taking a transposition and  changing  $\zeta \mapsto - \zeta$, we have
\begin{align}
    \frac{d \Psi^{T}(-\zeta)}{d \zeta} = \Psi^{T}(-\zeta) A^T(-\zeta). \label{for proving prop 3.1}
\end{align}
Thus,
\begin{align*}
    \frac{d \tilde{\Psi}(\zeta)}{d \zeta} &= \Delta \left( - \frac{d \Psi^{T-1}(- \zeta)}{d \zeta} \right)\\
    &= \Delta \Psi^{T-1} (- \zeta) \frac{d \Psi^{T} (- \zeta)}{d \zeta} \Psi^{T-1} (- \zeta)\\
    &= \Delta \Psi^{T-1} (- \zeta) \Psi^{T} (- \zeta)A^T(-\zeta) \Psi^{T-1} (- \zeta) \; \text{by \eqref{for proving prop 3.1}}\\
    &= \Delta A^T(-\zeta) \Psi^{T-1} (- \zeta)\\
    &= \Delta A^T(-\zeta) \Delta \Delta \Psi^{T-1} (- \zeta)\\
    &= A(\zeta) \tilde{\Psi}(\zeta) \; \text{by \eqref{antisymmetry with A at infty} and definition of $\tilde{\Psi}$.}
\end{align*}
\end{proof}
One can observe that 
\begin{align*}
    \Delta P_{\infty}^{T-1} \frac{d_3}{3} = P_{\infty}.
\end{align*}
Together with Lemma \ref{sym sol 1}  this leads to the anti-symmetry relation for the formal solution:
\begin{align}
    \Delta [\Psi^{(\infty)}_f(- \zeta)]^{T-1} \frac{d_3}{3} = \Psi^{(\infty)}_f(\zeta). \label{antisymmetry for formal solution at infty}
\end{align}
Indeed,
\begin{align*}
    \Delta [\Psi^{(\infty)}_f(- \zeta)]^{T-1} \frac{d_3}{3} &= \Delta [ P_{\infty} (I + \O(\zeta^{-1})) e^{x^2 \zeta d_3}]^{T-1} \frac{d_3}{3}\\
    &=\Delta P_{\infty}^{T-1} (I + \O(\zeta^{-1})) e^{-x^2 \zeta d_3} \frac{d_3}{3}\\
    &= \Delta P_{\infty}^{T-1} \frac{d_3}{3}(I + \O(\zeta^{-1}))e^{-x^2 \zeta d_3}\\
    &= P_{\infty}(I + \O(\zeta^{-1}))e^{-x^2 \zeta d_3}\\
    &= \Psi^{(\infty)}_f(\zeta).
\end{align*}

Moreover, considering the Stokes sectors of each canonical solution (see Figure~\ref{stokes rays and Stokes sectors at infty}), we have that \eqref{antisymmetry for formal solution at infty} implies the anti-symmetry of canonical solutions at $\zeta = \infty$:
\begin{align}
\begin{aligned}
    \Delta [\Psi^{(\infty)}_2(e^{i \pi} \zeta)]^{T-1} \frac{d_3}{3} &= \Psi^{(\infty)}_1(\zeta)\\
    \Delta [\Psi^{(\infty)}_3(e^{i \pi} \zeta)]^{T-1} \frac{d_3}{3} &= \Psi^{(\infty)}_2(\zeta)
\end{aligned} \label{antisymmetry for canonical solution at infty}
\end{align}
In terms of Stokes matrices, this can be interpreted as follows.
\begin{lem} One has 
\begin{align}
    S_2^{(\infty)} = d_3^{-1} [S_1^{(\infty)}]^{T-1} d_3. \label{antisymmetry for stokes matrices at infty}
\end{align}
\end{lem}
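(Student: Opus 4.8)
The plan is to derive the identity purely algebraically from the already-established anti-symmetry relations \eqref{antisymmetry for canonical solution at infty} for the canonical solutions, together with the definition \eqref{stokes matrices of 0 and infty} of the Stokes matrices. No new asymptotic analysis is needed: the sector bookkeeping that fixes the indices in \eqref{antisymmetry for canonical solution at infty} has already been carried out, so I will simply propagate the transpose-inverse symmetry through the product defining $S_1^{(\infty)}$ and read off the relation to $S_2^{(\infty)}$.

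First I would write $S_1^{(\infty)} = [\Psi_1^{(\infty)}(\zeta)]^{-1}\Psi_2^{(\infty)}(\zeta)$ and substitute both relations of \eqref{antisymmetry for canonical solution at infty}, namely $\Psi_1^{(\infty)}(\zeta) = \Delta\,[\Psi_2^{(\infty)}(e^{i\pi}\zeta)]^{T-1}\tfrac{d_3}{3}$ and $\Psi_2^{(\infty)}(\zeta) = \Delta\,[\Psi_3^{(\infty)}(e^{i\pi}\zeta)]^{T-1}\tfrac{d_3}{3}$. Inverting the first factor and using $\Delta^{-1}=\Delta$, the two adjacent $\Delta$'s cancel; the scalar $3$ produced by inverting $\tfrac{d_3}{3}$ cancels the remaining $\tfrac13$, and the diagonal matrices collect as conjugation by $d_3$, leaving $S_1^{(\infty)} = d_3^{-1}\,[\Psi_2^{(\infty)}(e^{i\pi}\zeta)]^{T}\,[\Psi_3^{(\infty)}(e^{i\pi}\zeta)]^{T-1}\,d_3$.

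The key algebraic observation is the identity $A^{T}B^{T-1} = (B^{-1}A)^{T}$, which turns the middle product into $\bigl([\Psi_3^{(\infty)}(e^{i\pi}\zeta)]^{-1}\Psi_2^{(\infty)}(e^{i\pi}\zeta)\bigr)^{T} = \bigl([S_2^{(\infty)}]^{-1}\bigr)^{T}$; here I use that the Stokes matrices are $\zeta$-independent constants, so the reflected argument $e^{i\pi}\zeta$ is immaterial. This yields $S_1^{(\infty)} = d_3^{-1}\,[S_2^{(\infty)}]^{T-1}\,d_3$. A final transpose-inverse of both sides, together with $d_3^{T}=d_3$ (so that conjugation by $d_3$ commutes with the transpose-inverse operation), converts this into the stated relation $S_2^{(\infty)} = d_3^{-1}\,[S_1^{(\infty)}]^{T-1}\,d_3$. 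The only delicate point — and the likeliest place to drop an inverse or mis-order a factor — is tracking the order reversal produced by combining transposition with inversion while simultaneously handling the reflection $\zeta\mapsto e^{i\pi}\zeta$; once those are kept straight, the remainder is a mechanical rearrangement, and the same argument (mutatis mutandis, using the inversion symmetry in place of the anti-symmetry) will deliver the analogous relations at $\zeta=0$.
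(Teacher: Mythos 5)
Your proposal is correct and follows essentially the same route as the paper: substitute the two anti-symmetry relations \eqref{antisymmetry for canonical solution at infty} into $S_1^{(\infty)} = [\Psi_1^{(\infty)}]^{-1}\Psi_2^{(\infty)}$, cancel the $\Delta$'s and the scalar factors $3$ and $\tfrac13$, apply $A^{T}B^{T-1} = (B^{-1}A)^{T}$ to recognize $[S_2^{(\infty)}]^{T-1}$, and finish with a transpose-inverse using $d_3^{T}=d_3$. Your remarks on the $\zeta$-independence of the Stokes matrices (so the reflected argument $e^{i\pi}\zeta$ is harmless) make explicit a point the paper leaves implicit, but the argument is otherwise identical.
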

\begin{proof}
\begin{align*}
    &S_1^{(\infty)} = [\Psi^{(\infty)}_1]^{-1} [\Psi^{(\infty)}_2] = 3d_3^{-1} [\Psi^{(\infty)}_2]^T \Delta \Delta [\Psi^{(\infty)}_3]^{T-1} \frac{d_3}{3} \; \text{ by \eqref{antisymmetry for canonical solution at infty}}\\
    &\quad \; \; \; \; = d_3^{-1} (\Psi^{(\infty)-1}_3 \Psi^{(\infty)}_2)^{T} d_3 = d_3^{-1} [S_2^{(\infty)}]^{T-1} d_3.\\
    & \Leftrightarrow S_2^{(\infty)} = d_3^{-1} [S_1^{(\infty)}]^{T-1} d_3.
\end{align*}
\end{proof}

\paragraph{Inversion symmetry relation at $\zeta = 0$}

The matrix  $A(\zeta)$ admits another symmetry, namely 
\begin{align}
    \frac{1}{x^2 \zeta^2} \Delta A \left(-\frac{1}{x^2 \zeta} \right) \Delta = A(\zeta). \label{antisymmetry with A at 0}
\end{align}
For this, we have the following lemma.
\begin{lem}\label{sym of 2}
If  $\Psi(\zeta)$ is a solution of \eqref{first eq}, then so is 
 $\tilde{\Psi}(\zeta) := \Delta \Psi \left( -\frac{1}{x^2 \zeta} \right).$
 \end{lem}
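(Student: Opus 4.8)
The plan is to mirror the argument already used for Lemma~\ref{sym sol 1}, replacing the reflection $\zeta \mapsto -\zeta$ there by the inversion $\zeta \mapsto -\tfrac{1}{x^2\zeta}$ and invoking the companion symmetry \eqref{antisymmetry with A at 0} in place of \eqref{antisymmetry with A at infty}. The entire statement reduces to a chain-rule computation, so I do not anticipate a genuine obstacle; the only points that require a little care are the bookkeeping of the Jacobian factor produced by differentiating the inverted argument and the insertion of $\Delta^2 = I$ at the right moment. Throughout, $x$ is treated as a fixed parameter, so that $\tfrac{d}{d\zeta}$ acts on $\zeta$ alone.

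Concretely, I would start from the fact that $\Psi$ solves \eqref{first eq}, write $\tilde{\Psi}(\zeta) = \Delta\,\Psi\!\left(-\tfrac{1}{x^2\zeta}\right)$, and differentiate. Since $\tfrac{d}{d\zeta}\bigl(-\tfrac{1}{x^2\zeta}\bigr) = \tfrac{1}{x^2\zeta^2}$, the chain rule together with $\Psi'(\eta) = A(\eta)\Psi(\eta)$ gives
\[
\frac{d\tilde{\Psi}(\zeta)}{d\zeta}
= \Delta\,\frac{d}{d\zeta}\Psi\!\left(-\frac{1}{x^2\zeta}\right)
= \frac{1}{x^2\zeta^2}\,\Delta\, A\!\left(-\frac{1}{x^2\zeta}\right)\Psi\!\left(-\frac{1}{x^2\zeta}\right).
\]
Next I would insert $I = \Delta\Delta$ between $A$ and $\Psi$ and regroup, obtaining
\[
\frac{d\tilde{\Psi}(\zeta)}{d\zeta}
= \left[\frac{1}{x^2\zeta^2}\,\Delta\, A\!\left(-\frac{1}{x^2\zeta}\right)\Delta\right]\Delta\,\Psi\!\left(-\frac{1}{x^2\zeta}\right).
\]
By the inversion symmetry \eqref{antisymmetry with A at 0}, the bracketed factor equals $A(\zeta)$, while the remaining factor $\Delta\,\Psi\!\left(-\tfrac{1}{x^2\zeta}\right)$ is exactly $\tilde{\Psi}(\zeta)$ by definition. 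Hence $\tfrac{d\tilde{\Psi}}{d\zeta} = A(\zeta)\tilde{\Psi}(\zeta)$, which is precisely \eqref{first eq} for $\tilde{\Psi}$, and this would complete the argument.

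I would only remark that this passage from solutions to solutions is entirely formal and takes \eqref{antisymmetry with A at 0} as given; all the structural content sits in that identity. Were one to want a self-contained treatment, the preliminary step would be to check \eqref{antisymmetry with A at 0} directly from the explicit form of $A(\zeta)$, using $\Delta W \Delta = W^T$ (hence also $\Delta W^T \Delta = W$) together with $\Delta\, w_x\, \Delta = -w_x$, and matching the three terms $-\tfrac{1}{\zeta^2}W$, $-\tfrac{x}{\zeta}w_x$, $-x^2 W^T$ after substituting $\tfrac{1}{\mu^2}=x^4\zeta^2$ and $\tfrac{1}{\mu}=-x^2\zeta$ with $\mu = -\tfrac{1}{x^2\zeta}$. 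Given that symmetry, the computation above is identical in spirit to the proof of Lemma~\ref{sym sol 1}.
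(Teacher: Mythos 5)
Your proof is correct and is exactly the argument the paper intends: the paper states the inversion symmetry \eqref{antisymmetry with A at 0} and leaves the lemma as an immediate consequence, to be proved by the same chain-rule computation used for Lemma \ref{sym sol 1}, which is precisely what you carried out (including the Jacobian factor $\tfrac{1}{x^2\zeta^2}$ and the insertion of $\Delta^2 = I$). Your closing verification of \eqref{antisymmetry with A at 0} from $\Delta W \Delta = W^T$ and $\Delta w_x \Delta = -w_x$ is also correct and goes slightly beyond what the paper writes down.
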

On the level of Stokes matrices, it gives the following result.
\begin{lem} One has
\begin{align}
\begin{aligned}
     S^{(0)}_n &= d_3 S^{(\infty)}_{n+1} d_3^{-1}\\
     S_2^{(0)} &= d_3[S_1^{(0)}]^{T-1} d_3^{-1}.
\end{aligned} \label{antisymmetry for stokes matrices at 0}
\end{align}
\end{lem}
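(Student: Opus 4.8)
The plan is to mirror, at $\zeta = 0$, the argument already carried out for the anti-symmetry at $\zeta = \infty$, now using the inversion symmetry of Lemma \ref{sym of 2}. I would proceed in three stages: transport the symmetry first to the formal solutions, then to the canonical solutions, and finally read it off on the Stokes matrices. The second identity will then drop out by combining the result with the $\zeta=\infty$ relation \eqref{antisymmetry for stokes matrices at infty}.

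First I would record the one algebraic identity that drives everything. Since $w=\diag(w_0,0,-w_0)$ and $\Delta$ reverses the diagonal, one has $\Delta e^{w}=e^{-w}\Delta$, and a short computation using \eqref{omega^{-1}} gives $\Delta\Omega^{-1}=\tfrac13\Omega d_3$, equivalently $\Delta P_\infty=\tfrac13 P_0 d_3$. Applying the inversion $\zeta\mapsto\eta:=-1/(x^2\zeta)$ to the formal solution \eqref{formal solution near infty} turns the exponent $-x^2\eta d_3$ into $\tfrac1\zeta d_3$ and the tail $\O(\eta^{-1})$ into $\O(\zeta)$, so that $\Delta\Psi_f^{(\infty)}(\eta)=\tfrac13 P_0(I+\O(\zeta))e^{\frac1\zeta d_3}d_3$. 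Because $d_3$ commutes with $e^{\frac1\zeta d_3}$ and conjugation by $d_3$ preserves the class $I+\O(\zeta)$, the right-hand side equals $\Psi_f^{(0)}(\zeta)\cdot\tfrac13 d_3$; comparing leading coefficients and invoking the uniqueness of the formal solution (Proposition 1.1 of \cite{FIKN}) yields $\Delta\Psi_f^{(\infty)}(\eta)=\Psi_f^{(0)}(\zeta)\,\tfrac13 d_3$.

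Next I would pass to canonical solutions. By Lemma \ref{sym of 2}, $\zeta\mapsto\Delta\Psi_m^{(\infty)}(\eta)$ is again a solution of \eqref{first eq}, and on the universal cover the inversion acts on arguments by $\arg\zeta\mapsto\pi-\arg\zeta$; checking this against the sector definitions shows precisely that $\zeta\in\Omega_n^{(0)}$ iff $\eta\in\Omega_{n+1}^{(\infty)}$. Combined with the formal-solution relation and the uniqueness of canonical solutions in each sector, this gives $\Delta\Psi_{n+1}^{(\infty)}(\eta)=\Psi_n^{(0)}(\zeta)\,\tfrac13 d_3$, i.e. $\Psi_n^{(0)}(\zeta)=3\,\Delta\Psi_{n+1}^{(\infty)}(\eta)\,d_3^{-1}$. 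Substituting this into the definition \eqref{stokes matrices of 0 and infty} of $S_n^{(0)}$ and using $\Delta^2=I$, the factors of $\Delta$ and the scalar $3$ cancel while the two $d_3^{\pm1}$ survive as a conjugation, producing $S_n^{(0)}=d_3 S_{n+1}^{(\infty)}d_3^{-1}$.

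Finally, the second identity is pure bookkeeping: with $n=1,2$ and the periodicity $S_3^{(\infty)}=S_1^{(\infty)}$ the first relation gives $S_1^{(0)}=d_3 S_2^{(\infty)}d_3^{-1}$ and $S_2^{(0)}=d_3 S_1^{(\infty)}d_3^{-1}$; feeding $S_2^{(\infty)}=d_3^{-1}[S_1^{(\infty)}]^{T-1}d_3$ from \eqref{antisymmetry for stokes matrices at infty} into the former shows $S_1^{(\infty)}=[S_1^{(0)}]^{T-1}$, and substituting this into the latter gives $S_2^{(0)}=d_3[S_1^{(0)}]^{T-1}d_3^{-1}$. The step I expect to require the most care is the sector bookkeeping on $\widetilde{\C^*}$: relative to the bare inversion relation \eqref{relation between omega^0 and omega^infty}, the extra factor $e^{i\pi}$ hidden in $-1/(x^2\zeta)$ rotates each sector by $\pi$ and thereby produces the index shift $n\mapsto n+1$, which must be tracked consistently on the universal cover; the only genuinely computational input is the identity $\Delta\Omega^{-1}=\tfrac13\Omega d_3$.
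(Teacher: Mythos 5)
Your proposal is correct and follows essentially the same route as the paper: derive the inversion relation $\Delta \Psi^{(\infty)}_f\bigl(-\tfrac{1}{x^2\zeta}\bigr)3d_3^{-1} = \Psi^{(0)}_f(\zeta)$ from Lemma \ref{sym of 2} (your identity $\Delta\Omega^{-1}=\tfrac13\Omega d_3$, i.e.\ $\Delta P_\infty=\tfrac13 P_0 d_3$, and the sector shift $\zeta\in\Omega_n^{(0)}\Leftrightarrow\eta\in\Omega_{n+1}^{(\infty)}$ are exactly what underlie the paper's \eqref{antisymmetry for formal solution at 0 ver.1} and \eqref{antisymmetry for canonical solution at 0 ver.1}), then read off $S_n^{(0)}=d_3S_{n+1}^{(\infty)}d_3^{-1}$ from the constancy of the Stokes matrices on $\widetilde{\C^*}$. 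The only cosmetic deviation is in the second identity, which the paper obtains via the canonical-solution relation \eqref{antisymmetry for canonical solution at 0 ver.2} while you obtain it by matrix algebra from the first identity, the periodicity $S_3^{(\infty)}=S_1^{(\infty)}$, and \eqref{antisymmetry for stokes matrices at infty}; the two computations encode the same combination of the inversion symmetry with the anti-symmetry at $\zeta=\infty$.
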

\begin{proof}
    
From Lemma \ref{sym of 2} one can obtain that the corresponding formulae for the formal solutions are
\begin{align}
    \Delta \Psi^{(\infty)}_f \left( -\frac{1}{x^2 \zeta} \right) 3 d_3^{-1} = \Psi^{(0)}_f (\zeta), \label{antisymmetry for formal solution at 0 ver.1}
\end{align}
and for the canonical solutions:
\begin{align}
     \Delta \Psi^{(\infty)}_{n+1} \left( e^{i \pi} \frac{1}{x^2 \zeta} \right) 3 d_3^{-1} = \Psi^{(0)}_n (\zeta), \label{antisymmetry for canonical solution at 0 ver.1}
\end{align}
where  $n = 1, 2$. In terms of $\Psi^{(0)}_k$ it can be read as
\begin{align}
    \Psi^{(0)}_2 (\zeta) = \Delta \left[\Psi^{(0)}_1 (e^{i \pi}\zeta) \right]^{T-1} 3d_3^{-1}. \label{antisymmetry for canonical solution at 0 ver.2}
\end{align}
which gives the corresponding formulae for the Stokes matrices.
\end{proof}

\subsubsection{Cyclic symmetry relations}

  The matrix $A(\zeta)$ has what is  called {\it cyclic symmetry}, namely 
\begin{align}
    \omega d_3^{-1} A(\omega \zeta) d_3 = A(\zeta). \label{cyclic symmetry with A}
\end{align}
Using this symmetry, we obtain the next lemma:
\begin{lem}\label{lem3.5}
If $\Psi(\zeta)$ is a solution of \eqref{first eq}, then so is $\tilde{\Psi}(\zeta) := d_3^{-1} \Psi(\omega \zeta)$.
\end{lem}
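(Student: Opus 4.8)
The plan is to mimic the proofs of Lemmas \ref{sym sol 1} and \ref{sym of 2}, since Lemma \ref{lem3.5} has an identical structure: a pointwise symmetry of the coefficient matrix $A(\zeta)$ --- here the cyclic symmetry \eqref{cyclic symmetry with A} --- is to be promoted to a symmetry of the solution space of \eqref{first eq}. Concretely, I would start from the defining ODE $\frac{d\Psi(\zeta)}{d\zeta} = A(\zeta)\Psi(\zeta)$ and simply differentiate the candidate $\tilde{\Psi}(\zeta) = d_3^{-1}\Psi(\omega\zeta)$, using that $d_3$ is a constant matrix and hence commutes with $\frac{d}{d\zeta}$.

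Then by the chain rule (differentiating the inner argument $\omega\zeta$ produces a factor $\omega$) together with the ODE for $\Psi$ evaluated at $\omega\zeta$,
\begin{align*}
\frac{d \tilde{\Psi}(\zeta)}{d \zeta} = d_3^{-1}\, \omega\, A(\omega\zeta)\Psi(\omega\zeta) = \omega\, d_3^{-1} A(\omega\zeta)\, \Psi(\omega\zeta).
\end{align*}
The cyclic symmetry \eqref{cyclic symmetry with A}, namely $\omega d_3^{-1} A(\omega\zeta) d_3 = A(\zeta)$, can be rewritten as $\omega d_3^{-1} A(\omega\zeta) = A(\zeta) d_3^{-1}$ upon multiplying on the right by $d_3^{-1}$. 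Substituting this into the line above gives
\begin{align*}
\frac{d \tilde{\Psi}(\zeta)}{d \zeta} = A(\zeta)\, d_3^{-1} \Psi(\omega\zeta) = A(\zeta) \tilde{\Psi}(\zeta),
\end{align*}
so $\tilde{\Psi}$ solves \eqref{first eq}, which is exactly the claim.

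There is no genuine obstacle here; the computation is three lines. The only points requiring care are the chain-rule factor $\omega$ arising from differentiating $\Psi(\omega\zeta)$, and the correct side on which to multiply when rearranging \eqref{cyclic symmetry with A}. If one wished to make the lemma self-contained, the one preliminary worth recording is the verification of \eqref{cyclic symmetry with A} itself, which follows by direct substitution of $A(\zeta) = -\zeta^{-2} W - x\zeta^{-1} w_x - x^2 W^T$ and reduces to the conjugation identities $d_3^{-1} W d_3 = \omega W$ and $d_3^{-1} W^T d_3 = \omega^2 W^T$ (with $w_x$ diagonal and hence fixed under conjugation by $d_3$), combined with $\omega^3 = 1$ to cancel the powers of $\omega$ against those coming from $A(\omega\zeta)$.
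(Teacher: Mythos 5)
Your proof is correct and follows exactly the route the paper intends: the paper omits a written proof of Lemma \ref{lem3.5}, stating it directly after the symmetry \eqref{cyclic symmetry with A}, with the understanding that it follows by the same differentiation argument spelled out for Lemma \ref{sym sol 1}. Your chain-rule computation, the rearrangement $\omega d_3^{-1}A(\omega\zeta) = A(\zeta)d_3^{-1}$, and the verification of \eqref{cyclic symmetry with A} via $d_3^{-1}Wd_3 = \omega W$ and $d_3^{-1}W^Td_3 = \omega^2 W^T$ are all accurate.
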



\paragraph{Cyclic symmetry relations of the formal solutions}

Let us introduce
\begin{align*}
    \Pi = \begin{pmatrix}
        0 & 1 & 0\\
        0 & 0 & 1\\
        1 & 0 & 0
    \end{pmatrix}.
\end{align*}
Then Lemma \ref{lem3.5} together with equalities  
\begin{align*}
    &d_3^{-1} P_{\infty} \Pi = P_{\infty}, \quad \Pi^{-1} e^{x^2 \omega \zeta d_3} \Pi = e^{x^2 \zeta d_3},  
\end{align*}
gives the following formula for the formal solution:
\begin{align}
    d_3^{-1}\Psi^{(\infty)}_f (\omega \zeta) \Pi = \Psi^{(\infty)}_f (\zeta). \label{cyclic symmetry for formal solution at infty}
\end{align}
Similarly, for $\zeta \to 0$, we have
\begin{align}
    d_3^{-1}\Psi^{(0)}_f (\omega \zeta) \Pi^{-1} = \Psi^{(0)}_f (\zeta), \label{cyclic symmetry for formal solution at 0}
\end{align}
which follows from the equalities 
\begin{align*}
    d_3^{-1} P_{0} \Pi^{-1} = P_{0} , \; \Pi e^{\frac{1}{\omega \zeta} d_3} \Pi = e^{\frac{1}{\zeta} d_3}.
\end{align*}
 

\paragraph{Cyclic symmetry relations of the fundamental solutions}

In order to formulate the cyclic symmetry of the fundamental  solutions  we  need a collection of Stokes sectors compatible with this symmetry.  

First, we consider the case  $\zeta = \infty$. For $n \in \frac{1}{3} \Z$  we introduce additional sectors
\begin{align}
    \Omega_{1+n}^{(\infty)} &:= e^{i\pi n} \Omega_1^{(\infty)}. \label{rotated sectors at infty}
\end{align}
In these sectors we consider  fundamental solutions $\Psi^{(\infty)}_{1+n} (\zeta)$ of \eqref{first eq}   determined by 
\begin{align}
    \Psi^{(\infty)}_{1+n} (\zeta)\sim  \Psi^{(\infty)}_{f} (\zeta), \quad \zeta\rightarrow \infty, \quad \zeta\in \Omega_{1+n}^{(\infty)}.  \label{fundamental solutions on rotated sectors at infty}
\end{align}
Then one can show that the cyclic symmetry relations of the formal solutions \eqref{cyclic symmetry for formal solution at infty} translates to
\begin{align}
    d_3^{-1}\Psi^{(\infty)}_{n + \frac{2}{3}} (\omega \zeta) \Pi = \Psi^{(\infty)}_n (\zeta), \quad \zeta \in \Omega_{n}^{(\infty)},
    \label{cyclic symmetry for fundamental solutions at infty}
\end{align}
for $n \in \frac{1}{3}\Z$.

Similarly, near $\zeta = 0$,  for  $ n \in \frac{1}{3}\Z$ we    introduce additional sectors 
\begin{align}
    \Omega_{n}^{(0)}(\zeta) &:= \Omega_{n}^{(\infty)} (\zeta^{-1})  ,
    \label{rotated sectors at 0}
\end{align}
which is consistent with the relations \eqref{relation between omega^0 and omega^infty}  we had before.
In these sectors we consider  fundamental solutions $\Psi^{(0)}_{n} (\zeta)$ of \eqref{first eq}   determined by   
 \begin{align}
    \Psi^{(0}_{n} (\zeta)\sim  \Psi^{(0)}_{f} (\zeta), \quad \zeta\rightarrow 0, \quad \zeta\in \Omega_{n}^{(0)},  \label{fundamental solutions on rotated sectors at zero}
\end{align}
where $  n \in \frac{1}{3}\Z.$ Then the cyclic symmetry relations   \eqref{cyclic symmetry for formal solution at 0} translates to
\begin{align}
    d_3^{-1}\Psi^{(0)}_{n - \frac{2}{3}} (\omega \zeta) \Pi^{-1} &= \Psi^{(0)}_{n} (\zeta), \quad \zeta \in \Omega_{n}^{(0)},
\label{cyclic symmetry for fundamental solutions on rotated sectors at 0}
\end{align}
for $n \in \frac{1}{3} \Z$.


\paragraph{Cyclic symmetry relations of Jump matrices} \label{5.3.3}

It is not straightforward to formulate the cyclic symmetries for the Stokes matrices. We will obtain them in several steps; first, we define jump matrices for $\pi/3$ rotated fundamental solutions.  

\begin{lem} \label{structure of Qs}
For $n \in \frac{1}{3} \Z$, consider constant matrices  
\begin{align}\label{def of Q}
   Q_n^{(\infty)} := \Psi_n^{(\infty)-1}\Psi_{n+\frac{1}{3}}^{(\infty)}.
\end{align}
Again, from the analytic continuation property, it holds that $Q_n^{(\infty)} = Q_{n-2}^{(\infty)}$ and it is enough for us to consider $n=1, 2$ only.
These jump matrices have the following structure,
\begin{align*}
    Q^{(\infty)}_1 = \begin{pmatrix}
        1 & * & 0\\
        0 & 1 & 0\\
        0 & 0 & 1
    \end{pmatrix}, \;
    Q^{(\infty)}_{1\frac{1}{3}} = \begin{pmatrix}
        1 & 0 & 0\\
        0 & 1 & 0\\
        0 & * & 1
    \end{pmatrix}, \;
    Q^{(\infty)}_{1\frac{2}{3}} = \begin{pmatrix}
        1 & 0 & 0\\
        0 & 1 & 0\\
        * & 0 & 1
    \end{pmatrix},\\
    Q^{(\infty)}_{2} = \begin{pmatrix}
        1 & 0 & 0\\
        * & 1 & 0\\
        0 & 0 & 1
    \end{pmatrix}, \;
    Q^{(\infty)}_{2\frac{1}{3}} = \begin{pmatrix}
        1 & 0 & 0\\
        0 & 1 & *\\
        0 & 0 & 1
    \end{pmatrix}, \;
    Q^{(\infty)}_{2\frac{2}{3}} = \begin{pmatrix}
        1 & 0 & *\\
        0 & 1 & 0\\
        0 & 0 & 1
    \end{pmatrix}.
\end{align*}
\end{lem}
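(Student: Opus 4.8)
The plan is to mimic the proof of Lemma \ref{structure of S's}, replacing the sectors used there by the finer sectors $\Omega_{1+n}^{(\infty)}=e^{i\pi n}\Omega_1^{(\infty)}$ from \eqref{rotated sectors at infty}, which are rotated by multiples of $\pi/3$ rather than $\pi$. Fix $n\in\frac13\Z$. From the definition \eqref{def of Q} the constant matrix $Q_n^{(\infty)}$ satisfies $\Psi_n^{(\infty)}\,Q_n^{(\infty)}\,[\Psi_{n+\frac13}^{(\infty)}]^{-1}=I$, and on the overlap $\Omega_n^{(\infty)}\cap\Omega_{n+\frac13}^{(\infty)}$ both canonical solutions share the asymptotics \eqref{formal solution near infty}. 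Substituting these and cancelling the common factors $P_\infty$ and $I+\O(\zeta^{-1})$ from left and right reduces the identity to
\[
\lim_{\zeta\to\infty} e^{-x^2\zeta d_3}\,Q_n^{(\infty)}\,e^{x^2\zeta d_3}=I,\qquad \zeta\in\Omega_n^{(\infty)}\cap\Omega_{n+\frac13}^{(\infty)} .
\]
The $(i,j)$ entry of the conjugated matrix is $(Q_n^{(\infty)})_{ij}\,e^{-x^2\zeta(d_i-d_j)}$ with $(d_1,d_2,d_3)=(1,\omega,\omega^2)$, so this entry is allowed to be nonzero only when the corresponding exponential decays as $\zeta\to\infty$ throughout the whole overlap, and is forced to $0$ as soon as the exponential grows somewhere on it.

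The key step is to identify, for each overlap, the unique admissible off-diagonal slot. First I would record that each sector has width $4\pi/3$ while consecutive ones are rotated by $\pi/3$, so every overlap $\Omega_n^{(\infty)}\cap\Omega_{n+\frac13}^{(\infty)}$ is an arc of width exactly $\pi$; for instance the $n=1$ arc is $\{-\tfrac{\pi}{3}<\arg\zeta<\tfrac{2\pi}{3}\}$. Writing $\zeta=|\zeta|e^{i\theta}$, the exponential $e^{-x^2\zeta(d_i-d_j)}$ decays precisely when $\theta+\arg(d_i-d_j)\in(-\tfrac{\pi}{2},\tfrac{\pi}{2})$ modulo $2\pi$, an interval in $\theta$ of width $\pi$. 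Since the overlap arc also has width exactly $\pi$, the only way this can hold across the entire arc is for the two intervals to coincide; every other slot has its exponential grow somewhere on the arc and is therefore $0$. Using $\arg(1-\omega)=-\tfrac{\pi}{6}$, $\arg(1-\omega^2)=\tfrac{\pi}{6}$, and $\arg(\omega-\omega^2)=\tfrac{\pi}{2}$, one finds that the surviving slot is the $(i,j)$ whose $\arg(d_i-d_j)$ equals minus the midpoint of the arc; for $n=1$ the midpoint is $\pi/6$ and the surviving slot is $(1,2)$, which gives the stated form of $Q_1^{(\infty)}$.

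To avoid repeating this computation six times, I would exploit the cyclic symmetry \eqref{cyclic symmetry for fundamental solutions at infty}. Substituting $\Psi_n^{(\infty)}(\zeta)=d_3^{-1}\Psi_{n+\frac23}^{(\infty)}(\omega\zeta)\Pi$ into \eqref{def of Q} and simplifying yields
\[
Q_n^{(\infty)}=\Pi^{-1}\,Q_{n+\frac23}^{(\infty)}\,\Pi .
\]
Conjugation by the cyclic permutation matrix $\Pi$ carries a matrix that differs from the identity in a single off-diagonal entry to another such matrix, permuting the admissible position; together with the periodicity $Q_n^{(\infty)}=Q_{n-2}^{(\infty)}$ this organizes the six matrices into the two orbits $\{Q_1^{(\infty)},Q_{1\frac23}^{(\infty)},Q_{2\frac13}^{(\infty)}\}$ and $\{Q_{1\frac13}^{(\infty)},Q_2^{(\infty)},Q_{2\frac23}^{(\infty)}\}$. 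Hence it suffices to run the exponential-decay argument once in each orbit, say for $Q_1^{(\infty)}$ and $Q_2^{(\infty)}$, and to propagate the result by $\Pi$-conjugation.

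I expect the main obstacle to be bookkeeping rather than anything conceptual. Unlike the width-$\tfrac{\pi}{3}$ overlaps that govern the Stokes matrices $S_n^{(\infty)}$ in Lemma \ref{structure of S's}, the overlaps here have width $\pi$, so one must verify that for the admissible slot the exponent stays in its decaying region across the entire arc while each of the other five grows somewhere on it; this is precisely what makes exactly one off-diagonal entry survive. One must also rotate the arcs correctly for each $n$ and confirm that the surviving slot matches the position claimed in the statement. The cyclic-symmetry relation $Q_n^{(\infty)}=\Pi^{-1}Q_{n+\frac23}^{(\infty)}\Pi$ is what keeps the argument short, reducing the six verifications to two.
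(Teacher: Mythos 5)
Your proof is correct and takes essentially the same approach as the paper: Lemma \ref{structure of Qs} is proved there by exactly your reduction to $\lim_{\zeta\to\infty} e^{-x^2\zeta d_3}\,Q_1^{(\infty)}\,e^{x^2\zeta d_3}=I$ on the width-$\pi$ overlap $-\pi/3<\arg\zeta<2\pi/3$, using the same argument ranges for $\arg\zeta(1-\omega)$, $\arg\zeta(1-\omega^2)$, $\arg\zeta(\omega-\omega^2)$ to kill all but the $(1,2)$ entry, with the remaining five matrices handled by ``the same approach.'' Your only deviation --- propagating via $Q_n^{(\infty)}=\Pi^{-1}Q_{n+\frac23}^{(\infty)}\Pi$, which you derive non-circularly from \eqref{cyclic symmetry for fundamental solutions at infty} --- simply anticipates the paper's own subsequent relation \eqref{cyclic symmetry for Q at infty} and is a harmless streamlining that reduces six verifications to two.
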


\begin{proof}
We will prove the structure for  $Q_1^{(\infty)}$ only. The same approach can be applied to determine the structure of other jump matrices.

For $Q^{(\infty)}_1$, equation  \eqref{def of Q} can be written as 
 $\Psi^{(\infty)}_1 Q^{(\infty)}_1 \Psi^{(\infty)-1}_{1\frac{1}{3}} = I$.  
Using the definition of the canonical solutions,    it implies 
\begin{align*}
    &\lim_{\zeta \to \infty} \Psi^{(\infty)}_f Q^{(\infty)}_1 \Psi^{(\infty)-1}_{f} = I,\qquad \zeta \in \Omega_1^{(\infty)} \cap \Omega_{1\frac{1}{3}}^{(\infty)}\\
    &\Longrightarrow \lim_{\zeta \to \infty} P_{\infty} (I + \O(\zeta^{-1})) e^{-x^2 \zeta d_3} Q^{(\infty)}_1 e^{x^2 \zeta d_3} (I + \O(\zeta^{-1}))^{-1} P_{\infty}^{-1} = I\\
    &\Longrightarrow \lim_{\zeta \to \infty} e^{-x^2 \zeta d_3} Q^{(\infty)}_1 e^{x^2 \zeta d_3} = I,
\end{align*}
or
\begin{align}
    \lim_{\zeta \to \infty} \begin{pmatrix}
        * & * e^{-x^2 \zeta (1 - \omega)} & * e^{-x^2 \zeta (1 - \omega^2)}\\
        * e^{x^2 \zeta (1 - \omega)} & * & * e^{-x^2 \zeta (\omega - \omega^2)}\\
        * e^{x^2 \zeta(1 - \omega^2)} &  * e^{x^2 \zeta (\omega - \omega^2)} & *
    \end{pmatrix} = I, \label{for determining the type of Q_1^infty}
\end{align}
for $\zeta \in \Omega_1^{(\infty)} \cap \Omega_{1\frac{1}{3}}^{(\infty)}$. By the definition of $\Omega_1^{(\infty)}$ and  $\Omega_{1\frac13}^{(\infty)}$, we have that $$-\frac{\pi}{3} < \arg \zeta < \frac{2 \pi}{3},$$ and
\begin{align*}
    -\frac{\pi}{2} &< \arg \zeta(1 - \omega) < \frac{\pi}{2},\\
    -\frac{\pi}{6} &< \arg \zeta(1 - \omega^2) < \frac{5\pi}{6},\\
    \frac{\pi}{6} &< \arg \zeta(\omega - \omega^2) < \frac{7\pi}{6}.
\end{align*}
Thus equality \eqref{for determining the type of Q_1^infty} holds if   $Q^{(\infty)}_1$ has the following structure
\begin{align*}
    Q^{(\infty)}_1 = \begin{pmatrix}
        1 & * & 0\\
        0 & 1 & 0\\
        0 & 0 & 1
    \end{pmatrix}.
\end{align*}
\end{proof}

Similarly for $n \in \frac{1}{3} \Z$, we consider constant matrices 
\begin{align}Q^{(0)}_n := \Psi^{(0)-1}_{n} \Psi^{(0)}_{n + \frac{1}{3}}.
\end{align}
Using the same reasoning  as for \eqref{antisymmetry for stokes matrices at infty} and \eqref{antisymmetry for stokes matrices at 0}, one can obtain the following symmetry relations for $Q^{(\infty, 0)}_n$,
\begin{align}
\begin{aligned}
    Q_{n+1}^{(\infty)} &= d_3^{-1} [Q_{n}^{(\infty)}]^{T-1} d_3,\\
    Q_{n+1}^{(\infty)} &= d_3^{-1} Q_{n}^{(0)} d_3,\\
    Q_{n+1}^{(0)} &= d_3 [Q_{n}^{(0)}]^{T-1} d_3^{-1},
\end{aligned}\label{antisymmetry for Q ver.1}
\end{align}
where $n \in \frac{1}{3} \Z$.  Observe that from \eqref{antisymmetry for Q ver.1} it follows that $ Q_{n}^{(0)} $ has the same structure as $ Q_{n+1}^{(\infty)}$.

\begin{lem} For $n \in \frac{1}{3} \Z$, we have the following cyclic symmetry relations of jump matrices $Q^{(\infty,0)}_n$:
\begin{align}
    &Q_n^{(\infty)} = \Pi^{-1} Q_{n + \frac{2}{3}}^{(\infty)} \Pi, \label{cyclic symmetry for Q at infty}\\
    &Q_n^{(0)} = \Pi^{-1} Q_{n + \frac{2}{3}}^{(0)} \Pi \label{cyclic symmetry for Q at 0}.
\end{align}
\end{lem}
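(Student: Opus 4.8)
The plan is to derive the cyclic symmetry relations for the jump matrices directly from the already-established cyclic symmetry relations of the fundamental solutions, namely \eqref{cyclic symmetry for fundamental solutions at infty} and \eqref{cyclic symmetry for fundamental solutions on rotated sectors at 0}, using only the definition of $Q_n^{(\infty,0)}$ together with the crucial fact (recorded in Lemma \ref{structure of Qs}) that these jump matrices are constant, i.e. independent of $\zeta$.

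For the $\zeta=\infty$ case, I would apply \eqref{cyclic symmetry for fundamental solutions at infty} with index $n$ and with index $n+\tfrac13$ to rewrite both factors in $Q_n^{(\infty)}=\Psi_n^{(\infty)-1}(\zeta)\,\Psi_{n+\frac13}^{(\infty)}(\zeta)$, substituting $\Psi_n^{(\infty)}(\zeta)=d_3^{-1}\Psi_{n+\frac23}^{(\infty)}(\omega\zeta)\Pi$ and $\Psi_{n+\frac13}^{(\infty)}(\zeta)=d_3^{-1}\Psi_{n+1}^{(\infty)}(\omega\zeta)\Pi$. Inverting the first factor turns $d_3^{-1}$ into $d_3$ on the right and $\Pi$ into $\Pi^{-1}$ on the left, so the two $d_3$'s cancel and one is left with $Q_n^{(\infty)}=\Pi^{-1}\,\Psi_{n+\frac23}^{(\infty)-1}(\omega\zeta)\,\Psi_{n+1}^{(\infty)}(\omega\zeta)\,\Pi$. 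Since $n+1=(n+\tfrac23)+\tfrac13$, the middle product is precisely $Q_{n+\frac23}^{(\infty)}$ evaluated at $\omega\zeta$; because this jump matrix is constant, the argument $\omega\zeta$ drops out and the product equals $Q_{n+\frac23}^{(\infty)}$, yielding \eqref{cyclic symmetry for Q at infty}. The $\zeta=0$ case proceeds identically, now invoking \eqref{cyclic symmetry for fundamental solutions on rotated sectors at 0} (which carries $\Pi^{-1}$ and a shift of $-\tfrac23$ in place of $+\tfrac23$); this produces $Q_n^{(0)}=\Pi\,Q_{n-\frac23}^{(0)}\,\Pi^{-1}$, and a reindexing $n\mapsto n+\tfrac23$ rewrites it as $Q_{n+\frac23}^{(0)}=\Pi\,Q_n^{(0)}\,\Pi^{-1}$, i.e. $Q_n^{(0)}=\Pi^{-1}Q_{n+\frac23}^{(0)}\Pi$, which is \eqref{cyclic symmetry for Q at 0}.

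I do not expect any serious obstacle: this is a bookkeeping argument in which the substantive content—that $\Pi$-conjugation of fundamental solutions descends to $\Pi$-conjugation of their jump matrices—rests entirely on the constancy of $Q_n^{(\infty,0)}$, which is what permits discarding the $\omega\zeta$ dependence. The only point requiring genuine care is the index arithmetic, and in particular tracking the opposite conventions at the two singular points (the $+\tfrac23$ shift paired with $\Pi$ at $\infty$ versus the $-\tfrac23$ shift paired with $\Pi^{-1}$ at $0$), together with the final reindexing that brings the $\zeta=0$ identity into the stated symmetric form.
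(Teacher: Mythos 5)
Your proposal is correct and coincides with the paper's own proof: the paper likewise substitutes the cyclic symmetry relation \eqref{cyclic symmetry for fundamental solutions at infty} into the definition $Q_n^{(\infty)}=\Psi_n^{(\infty)-1}(\zeta)\,\Psi_{n+\frac13}^{(\infty)}(\zeta)$, cancels the $d_3$ factors, and uses the constancy of the jump matrices to discard the argument $\omega\zeta$ (the paper writes out the case $n=1$ at $\infty$ and declares the others analogous). Your explicit treatment of the $\zeta=0$ case, with the $-\tfrac23$ shift from \eqref{cyclic symmetry for fundamental solutions on rotated sectors at 0} followed by the reindexing $n\mapsto n+\tfrac23$, is exactly the intended ``similar'' argument and the index bookkeeping checks out.
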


\begin{proof}
From \eqref{cyclic symmetry for fundamental solutions at infty}  we have
\begin{align*}
    Q_1^{(\infty)} &= \Psi^{(\infty)-1}_{1}(\zeta) \Psi^{(\infty)}_{1 \frac{1}{3}}(\zeta)\\
    &= (d_3^{-1}\Psi^{(\infty)}_{1 \frac{2}{3}} (\omega \zeta) \Pi)^{-1} d_3^{-1}\Psi^{(\infty)}_{2} (\omega \zeta) \Pi\\
    &= \Pi^{-1} \Psi^{(\infty)-1}_{1 \frac{2}{3}} (\omega \zeta)\Psi^{(\infty)}_{2} (\omega \zeta) \Pi\\
    &= \Pi^{-1} Q_{1\frac{2}{3}}^{(\infty)} \Pi
\end{align*}
when $\zeta \in \Omega_{1}^{(\infty)} \cap \Omega_{1 \frac{1}{3}}^{(\infty)}$. Other cases can be proved similarly.
\end{proof}
 
Finally, the Stokes matrices $S_1^{(\infty,0)}$ and $S_2^{(\infty,0)}$ can be expressed in terms of  $Q_n^{(\infty,0)}$.

\begin{prop}  For $n = 1, 2$, one has 
\begin{align}
     &S_n^{(\infty)} = Q_{n}^{(\infty)}Q_{n+\frac{1}{3}}^{(\infty)}Q_{n+\frac{2}{3}}^{(\infty)} ,\label{S's are expressed Q's}\\
     &S_n^{(0)} =  Q_n^{(0)} Q_{n+\frac{1}{3}}^{(0)}  Q_{n+\frac{2}{3}}^{(0)}. \label{S is expressed by Qs at 0}
\end{align}
\end{prop}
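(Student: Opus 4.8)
The plan is to prove both identities by a direct telescoping argument, exploiting the fact recorded in the earlier remark that all canonical and rotated fundamental solutions $\Psi_m^{(\infty)}$, $m \in \frac{1}{3}\Z$, admit analytic continuation to the whole universal covering $\widetilde{\C^*}$. Because of this, any two of them are solutions of the \emph{same} linear system \eqref{first eq} and hence differ by a constant right factor; in particular the jump matrices $Q_m^{(\infty)}$ of \eqref{def of Q} are genuinely $\zeta$-independent, which is what makes the product of three of them meaningful as an equality of constant matrices.

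First I would write out the three consecutive jump matrices entering the right-hand side of \eqref{S's are expressed Q's}, using only the definition \eqref{def of Q}:
\begin{align*}
Q_n^{(\infty)} &= [\Psi_n^{(\infty)}]^{-1}\Psi_{n+\frac13}^{(\infty)},\\
Q_{n+\frac13}^{(\infty)} &= [\Psi_{n+\frac13}^{(\infty)}]^{-1}\Psi_{n+\frac23}^{(\infty)},\\
Q_{n+\frac23}^{(\infty)} &= [\Psi_{n+\frac23}^{(\infty)}]^{-1}\Psi_{n+1}^{(\infty)},
\end{align*}
where the last line uses the index arithmetic $(n+\tfrac23)+\tfrac13 = n+1$. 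Multiplying these three in order, the two interior pairs $\Psi_{n+\frac13}^{(\infty)}[\Psi_{n+\frac13}^{(\infty)}]^{-1}$ and $\Psi_{n+\frac23}^{(\infty)}[\Psi_{n+\frac23}^{(\infty)}]^{-1}$ collapse to the identity, leaving
\begin{align*}
Q_n^{(\infty)}Q_{n+\frac13}^{(\infty)}Q_{n+\frac23}^{(\infty)} = [\Psi_n^{(\infty)}]^{-1}\Psi_{n+1}^{(\infty)} = S_n^{(\infty)},
\end{align*}
which is exactly \eqref{S's are expressed Q's}. The identity \eqref{S is expressed by Qs at 0} at $\zeta = 0$ follows verbatim, replacing the superscript $(\infty)$ by $(0)$ and using the corresponding definitions of $Q_m^{(0)}$ and $S_n^{(0)}$ together with $S_n^{(0)} = [\Psi_n^{(0)}]^{-1}\Psi_{n+1}^{(0)}$.

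The only point that requires care — and the sole place where anything beyond bookkeeping enters — is the well-definedness of the fractional-index objects that support the telescoping. I would therefore first note that the rotated sectors $\Omega_{n+\frac13}^{(\infty)}$ and $\Omega_{n+\frac23}^{(\infty)}$ genuinely overlap their neighbours: each sector has angular opening $4\pi/3$, while successive indices in $\frac13\Z$ correspond to a rotation by only $\pi/3$, so on each overlap the two relevant fundamental solutions are simultaneously defined and their ratio is a constant matrix. This legitimizes each $Q_m^{(\infty)}$ as a constant and turns the displayed cancellation into an equality of constant matrices rather than of $\zeta$-dependent ones. With this observed, I expect no further obstacle: the statement is in essence a cocycle relation among the transition matrices of a single flat connection, and the factorization of the $\pi$-step Stokes matrix into three $\pi/3$-step jumps is forced by the index arithmetic.
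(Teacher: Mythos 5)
Your proof is correct and is essentially the paper's own argument: the paper also proves \eqref{S's are expressed Q's} by telescoping, writing $\Psi_2^{(\infty)} = \Psi_{1\frac{2}{3}}^{(\infty)}Q_{1\frac{2}{3}}^{(\infty)} = \Psi_{1\frac{1}{3}}^{(\infty)}Q_{1\frac{1}{3}}^{(\infty)}Q_{1\frac{2}{3}}^{(\infty)} = \Psi_1^{(\infty)}Q_1^{(\infty)}Q_{1\frac{1}{3}}^{(\infty)}Q_{1\frac{2}{3}}^{(\infty)}$ and comparing with the definition of $S_1^{(\infty)}$, treating the case $\zeta=0$ and $n=2$ ``similarly.'' Your extra remark on the overlap of the rotated sectors and the resulting constancy of the $Q_m^{(\infty,0)}$ is a correct elaboration of what the paper leaves implicit.
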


\begin{proof}
 For $\zeta \in \Omega_1^{(\infty)} \cap \Omega_2^{(\infty)}$, we have
\begin{align*}
    \Psi^{(\infty)}_{2} &= \Psi^{(\infty)}_{1 \frac{2}{3}} Q_{1\frac{2}{3}}^{(\infty)} = \Psi^{(\infty)}_{1 \frac{1}{3}}Q_{1\frac{1}{3}}^{(\infty)}Q_{1\frac{2}{3}}^{(\infty)} = \Psi^{(\infty)}_1 Q_{1}^{(\infty)}Q_{1\frac{1}{3}}^{(\infty)}Q_{1\frac{2}{3}}^{(\infty)}.
\end{align*}
By definition of the Stokes matrix $S_1^{(\infty)}$, we have that  $S_1^{(\infty)} = Q_{1}^{(\infty)}Q_{1\frac{1}{3}}^{(\infty)}Q_{1\frac{2}{3}}^{(\infty)}$. Other cases can be proved similarly.
\end{proof}


\subsubsection{Parametrization of Stokes Matrices}

In the past two subsections, we formulated many symmetry equations that will help us to parameterize the monodromy data. Let us list them here.
\begin{framed}
In summary,
\begin{itemize}
\item{Symmetry relations for formal solutions \eqref{antisymmetry for formal solution at infty}, \eqref{antisymmetry for formal solution at 0 ver.1}, \eqref{cyclic symmetry for formal solution at infty}, \eqref{cyclic symmetry for formal solution at 0}:}
\begin{align*}
    &\Delta [\Psi^{(\infty)}_f(- \zeta)]^{T-1} \frac{d_3}{3} = \Psi^{(\infty)}_f(\zeta)\\
    &\Delta \Psi^{(\infty)}_f \left( -\frac{1}{x^2 \zeta} \right) 3 d_3^{-1} = \Psi^{(0)}_f (\zeta)\\
    &\Delta [\Psi^{(0)}_f(-\zeta)]^{T-1} 3d_3^{-1} = \Psi^{(0)}_f(\zeta)\\
    &d_3^{-1}\Psi^{(\infty)}_f (\omega \zeta) \Pi = \Psi^{(\infty)}_f (\zeta)\\
    &d_3^{-1}\Psi^{(0)}_f (\omega \zeta) \Pi^{-1} = \Psi^{(0)}_f (\zeta)
\end{align*}
\item{Symmetry relations for canonical solutions \eqref{antisymmetry for canonical solution at infty}, \eqref{antisymmetry for canonical solution at 0 ver.1},\eqref{antisymmetry for canonical solution at 0 ver.2}, \eqref{cyclic symmetry for fundamental solutions at infty}, \eqref{cyclic symmetry for fundamental solutions on rotated sectors at 0}:}
\begin{align*}
    &\Delta [\Psi^{(\infty)}_{n+1}(e^{i \pi} \zeta)]^{T-1} \frac{d_3}{3} = \Psi^{(\infty)}_n(\zeta)\\
    &\Delta \Psi^{(\infty)}_{n+1} \left( e^{i \pi} \frac{1}{x^2 \zeta} \right) 3 d_3^{-1} = \Psi^{(0)}_n (\zeta)\\
    &\Delta [\Psi^{(0)}_{n-1}(e^{i \pi}\zeta)]^{T-1} 3d_3^{-1} = \Psi^{(0)}_n(\zeta)\\
    &d_3^{-1}\Psi^{(\infty)}_{n + \frac{2}{3}} (\omega \zeta) \Pi = \Psi^{(\infty)}_n (\zeta)\\
    &d_3^{-1}\Psi^{(0)}_{n-\frac{2}{3}} (\omega \zeta) \Pi^{-1} = \Psi^{(0)}_n (\zeta)
\end{align*}
\item{Symmetry relations for Jump matrices \eqref{antisymmetry for Q ver.1}, \eqref{cyclic symmetry for Q at infty}, \eqref{cyclic symmetry for Q at 0}:}
\begin{align*}
    &Q_{n+1}^{(\infty)} = d_3^{-1} [Q_{n}^{(\infty)}]^{T-1} d_3\\
    &Q_{n+1}^{(\infty)} = d_3^{-1} Q_{n}^{(0)} d_3\\
    &Q_{n+1}^{(0)} = d_3 [Q_{n}^{(0)}]^{T-1} d_3^{-1}\\
    &Q_n^{(\infty)} = \Pi^{-1} Q_{n + \frac{2}{3}}^{(\infty)} \Pi\\
    &Q_n^{(0)} = \Pi^{-1} Q_{n + \frac{2}{3}}^{(0)} \Pi
\end{align*}

\end{itemize}
\end{framed}


\paragraph{Parametrization of $S^{(\infty)}_{n}$} \label{sec 3.6.1}

Let us parameterize $ Q^{(\infty)}_1$ by
\begin{align*}
    Q^{(\infty)}_1 = \begin{pmatrix}
        1 & a & 0\\
        0 & 1 & 0\\
        0 & 0 & 1
    \end{pmatrix}.
\end{align*}
Then, using  the anti-symmetry and inversion symmetry relations \eqref{antisymmetry for Q ver.1} and  the cyclic symmetry of Jump matrices \eqref{cyclic symmetry for Q at infty}, \eqref{cyclic symmetry for Q at 0}, we have  
\begin{align*}
&Q_{1\frac{1}{3}}^{(\infty)} = \Pi^{-1} Q_{2}^{(\infty)} \Pi
    =  \begin{pmatrix}
        1 & 0 & 0\\
        0 & 1 & 0\\
        0 & -a\omega^2 & 1
    \end{pmatrix}, \\
   & Q_{1\frac{2}{3}}^{(\infty)} = \Pi Q_{1}^{(\infty)} \Pi^{-1} =
     \begin{pmatrix}
        1 & 0 & 0\\
        0 & 1 & 0\\
        a & 0 & 1
    \end{pmatrix},\\
   & Q_{2\frac{1}{3}}^{(\infty)} = d_3^{-1} [Q^{(\infty)}_{1\frac{1}{3}}]^{T-1} d_3 =  \begin{pmatrix}
        1 & 0 & 0\\
        0 & 1 & a\\
        0 & 0 & 1
    \end{pmatrix} \\
   & Q_{2\frac{2}{3}}^{(\infty)} = d_3^{-1} [Q^{(\infty)}_{1\frac{2}{3}}]^{T-1} d_3 = \begin{pmatrix}
        1 & 0 & -a\omega^2\\
        0 & 1 & 0\\
        0 & 0 & 1
    \end{pmatrix},\\
     &Q^{(\infty)}_2 = d_3^{-1} [Q^{(\infty)}_1]^{T-1} d_3  = \begin{pmatrix}
        1 & 0 & 0\\
        -a\omega^2 & 1 & 0\\
        0 & 0 & 1
    \end{pmatrix}.
\end{align*}
Equation \eqref{S's are expressed Q's} implies that
\begin{align}
    S_1^{(\infty)} &= Q_{1}^{(\infty)}Q_{1\frac{1}{3}}^{(\infty)}Q_{1\frac{2}{3}}^{(\infty)} = \begin{pmatrix}
        1 & a & 0\\
        0 & 1 & 0\\
        a & -a\omega^2 & 1
    \end{pmatrix}. \label{parametrization of S_1 at infty}
\end{align}
and 
\begin{align}
    S_2^{(\infty)} &= Q_{2}^{(\infty)}Q_{2\frac{1}{3}}^{(\infty)}Q_{2\frac{2}{3}}^{(\infty)} = \begin{pmatrix}
        1 & 0 & -a\omega^2\\
        -a\omega^2 & 1 & a^2\omega + a\\
        0 & 0 & 1
    \end{pmatrix}. \label{parametrization of S_2 at infty}
\end{align}


\paragraph{Parametrization of $S^{(0)}_{n}$}

Using \eqref{antisymmetry for Q ver.1} once more, we have
\begin{align*}
    &Q^{(0)}_1 = \begin{pmatrix}
        1 & 0 & 0\\
        -a & 1 & 0\\
        0 & 0 & 1
    \end{pmatrix}, \; Q_{1\frac{1}{3}}^{(0)} = \begin{pmatrix}
        1 & 0 & 0\\
        0 & 1 & a \omega^2\\
        0 & 0 & 1
    \end{pmatrix}, \; Q_{1\frac{2}{3}}^{(0)} = \begin{pmatrix}
        1 & 0 & -a\\
        0 & 1 & 0\\
        0 & 0 & 1
    \end{pmatrix},\\
    &Q^{(0)}_2 = \begin{pmatrix}
        1 & a\omega^2 & 0\\
        0 & 1 & 0\\
        0 & 0 & 1
    \end{pmatrix}, \; Q_{2\frac{1}{3}}^{(0)} = \begin{pmatrix}
        1 & 0 & 0\\
        0 & 1 & 0\\
        0 & -a & 1
    \end{pmatrix}, \; Q_{2\frac{2}{3}}^{(0)} = \begin{pmatrix}
        1 & 0 & 0\\
        0 & 1 & 0\\
        a\omega^2 & 0 & 1
    \end{pmatrix}.
\end{align*}
Equation \eqref{S is expressed by Qs at 0} gives the following parameterisation  of $S^{(0)}_n$: 
\begin{align}
    S_1^{(0)} = \begin{pmatrix}
        1 & 0 & -a\\
        -a & 1 & a^2 + a\omega^2\\
        0 & 0 & 1
    \end{pmatrix} \; \text{ and } \; 
    S_2^{(0)} = \begin{pmatrix}
        1 & a\omega^2 & 0\\
        0 & 1 & 0\\
        a\omega^2 & -a & 1
    \end{pmatrix}. \label{parametrization of S_1,2 at 0}
\end{align}
Therefore, each Stokes matrix is parameterized by a single parameter $a$ --- {\it Stokes data}.


\subsubsection{Connection Matrices}

Next we define the connection matrices $E_k$ which connects solutions at $\zeta = \infty$ and solutions at $\zeta = 0$ by
\begin{align}
\begin{aligned}
    \Psi_1^{(\infty)}(\zeta) &= \Psi_1^{(0)}(\zeta) E_1,\\
    \Psi_2^{(\infty)}(\zeta) &= \Psi_2^{(0)}(\zeta e^{-2 \pi i}) E_2,
\end{aligned} \label{defn of connection matrices}
\end{align}
for all $\zeta$ in the universal covering $\widetilde{\C^*}$.   


\paragraph{Anti-symmetry and inversion symmetry relations}


\begin{prop}\label{E symmetry 1}
One has
\begin{align}
&E_2 = S_1^{(0)-1} E_1 S_2^{(\infty)-1} = S_2^{(0)} E_1 S_1^{(\infty)}, \label{E_2 is expressed by E_1 and so on}\\
&S_2^{(0)} E_1 S_1^{(\infty)} = \frac{1}{9} d_3 E_1^{T-1} d_3, \label{anti-symmetry relations of E_1}\\
&S_2^{(0)} E_1 S_1^{(\infty)} = \frac{1}{9} d_3 E_1^{-1} d_3. \label{inversion symmetry relations of E_1}
\end{align}
As a corollary of \eqref{anti-symmetry relations of E_1} and \eqref{inversion symmetry relations of E_1}, we have $E_1 = E_1^{T}$.
\end{prop}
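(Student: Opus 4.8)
The plan is to derive all three identities by transporting the two defining relations \eqref{defn of connection matrices} through, first, the monodromy/Stokes structure of \eqref{first eq}, and then through its two discrete symmetries: the anti-symmetry of Lemma \ref{sym sol 1} and the inversion symmetry of Lemma \ref{sym of 2}. The common mechanism is that each connection matrix is a ratio of canonical solutions, $E_k=[\Psi_k^{(0)}]^{-1}\Psi_k^{(\infty)}$, and every symmetry acts on the canonical solutions by left/right multiplication by the constant matrices $\Delta,d_3,\Pi$ together with a relabelling of the sector index. After substituting, the solution factors $\Psi_1^{(0)}(\zeta)$ cancel and one is left with a matrix identity among $E_1,E_2$ and the Stokes matrices. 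Throughout I use $\Delta^2=I$, $\Delta^T=\Delta$, the diagonality of $d_3$, and the index periodicity $S_n^{(\infty,0)}=S_{n-2}^{(\infty,0)}$.

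First I would establish \eqref{E_2 is expressed by E_1 and so on}. Combining \eqref{Psi31} with the Stokes definitions \eqref{stokes matrices of 0 and infty} and $S_n^{(0)}=S_{n-2}^{(0)}$ gives the monodromy relations
\begin{align*}
\Psi_1^{(0)}(\zeta e^{-2\pi i}) = \Psi_1^{(0)}(\zeta)\bigl(S_1^{(0)}S_2^{(0)}\bigr)^{-1}, \qquad \Psi_2^{(0)}(\zeta e^{-2\pi i}) = \Psi_1^{(0)}(\zeta)\,S_2^{(0)-1}.
\end{align*}
Substituting the second of these and $\Psi_2^{(\infty)}=\Psi_1^{(\infty)}S_1^{(\infty)}=\Psi_1^{(0)}E_1S_1^{(\infty)}$ into the definition of $E_2$ yields $E_2=S_2^{(0)}E_1S_1^{(\infty)}$. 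Continuing $\Psi_1^{(\infty)}=\Psi_1^{(0)}E_1$ once around via $\Psi_1^{(\infty)}(\zeta e^{-2\pi i})=\Psi_1^{(\infty)}(\zeta)S_1^{(\infty)}S_2^{(\infty)}$ produces the cyclic consistency relation $S_1^{(0)}S_2^{(0)}E_1S_1^{(\infty)}S_2^{(\infty)}=E_1$; left- and right-multiplying by $S_1^{(0)-1}$ and $S_2^{(\infty)-1}$ gives the second equality $S_2^{(0)}E_1S_1^{(\infty)}=S_1^{(0)-1}E_1S_2^{(\infty)-1}$, completing \eqref{E_2 is expressed by E_1 and so on}.

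For the anti-symmetry identity \eqref{anti-symmetry relations of E_1} I would push $\Psi_1^{(\infty)}=\Psi_1^{(0)}E_1$ through \eqref{antisymmetry for canonical solution at infty} and \eqref{antisymmetry for canonical solution at 0 ver.2} together with the definition of $E_2$. Rearranging \eqref{antisymmetry for canonical solution at 0 ver.2} gives $\Delta[\Psi_2^{(0)}(e^{-i\pi}\zeta)]^{T-1}=\tfrac13\Psi_1^{(0)}(\zeta)d_3$; feeding this and $\Psi_2^{(\infty)}(e^{i\pi}\zeta)=\Psi_2^{(0)}(e^{-i\pi}\zeta)E_2$ into \eqref{antisymmetry for canonical solution at infty} and cancelling $\Psi_1^{(0)}(\zeta)$ yields $E_1=\tfrac19 d_3E_2^{T-1}d_3$, which is equivalent to $E_2=\tfrac19 d_3E_1^{T-1}d_3$. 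Combined with $E_2=S_2^{(0)}E_1S_1^{(\infty)}$ this is exactly \eqref{anti-symmetry relations of E_1}.

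The inversion identity \eqref{inversion symmetry relations of E_1} follows the same pattern, using the involution $T:\Psi(\zeta)\mapsto\Delta\Psi(-1/(x^2\zeta))$ of Lemma \ref{sym of 2}. Relation \eqref{antisymmetry for canonical solution at 0 ver.1} reads $T[\Psi_{n+1}^{(\infty)}]=\Psi_n^{(0)}\tfrac{d_3}{3}$, and since $T^2=\mathrm{id}$ it also gives $T[\Psi_n^{(0)}]=\Psi_{n+1}^{(\infty)}3d_3^{-1}$; extending to $n=0$ on $\widetilde{\C^*}$ and using $S_0^{(0)}=S_2^{(0)}$ gives $T[\Psi_1^{(\infty)}]=\Psi_0^{(0)}\tfrac{d_3}{3}=\Psi_1^{(0)}S_2^{(0)-1}\tfrac{d_3}{3}$, while $T[\Psi_1^{(0)}]=\Psi_2^{(\infty)}3d_3^{-1}=\Psi_1^{(0)}E_1S_1^{(\infty)}3d_3^{-1}$. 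Applying $T$ to $\Psi_1^{(\infty)}=\Psi_1^{(0)}E_1$ and cancelling $\Psi_1^{(0)}(\zeta)$ gives $S_2^{(0)-1}d_3=9E_1S_1^{(\infty)}d_3^{-1}E_1$; left-multiplying by $S_2^{(0)}$ and right-multiplying by $E_1^{-1}d_3$ turns this into $S_2^{(0)}E_1S_1^{(\infty)}=\tfrac19 d_3E_1^{-1}d_3$, which is \eqref{inversion symmetry relations of E_1}. Finally, comparing \eqref{anti-symmetry relations of E_1} with \eqref{inversion symmetry relations of E_1} forces $d_3E_1^{T-1}d_3=d_3E_1^{-1}d_3$, hence $E_1^T=E_1$, the claimed corollary. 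The main obstacle here is not conceptual but bookkeeping: one must track $\arg\zeta$ consistently on $\widetilde{\C^*}$ through every substitution (the factors $e^{\pm i\pi}$, $e^{-2\pi i}$, and $-1/(x^2\zeta)$) and use the periodicities $S_n^{(\infty,0)}=S_{n-2}^{(\infty,0)}$ and \eqref{Psi31} correctly, since a single misplaced sheet or index shift corrupts the constant matrices on the right-hand sides.
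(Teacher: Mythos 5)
Your proposal is correct and follows essentially the same route as the paper: both rest on the definitions \eqref{stokes matrices of 0 and infty}, \eqref{defn of connection matrices}, the sheet relations \eqref{Psi31}, and the symmetry relations \eqref{antisymmetry for canonical solution at infty}, \eqref{antisymmetry for canonical solution at 0 ver.1}, \eqref{antisymmetry for canonical solution at 0 ver.2}, with only cosmetic reorganizations (your cyclic consistency relation $S_1^{(0)}S_2^{(0)}E_1S_1^{(\infty)}S_2^{(\infty)}=E_1$ and your involution $T$ extended to $n=0$ via $\Psi_0^{(0)}=\Psi_1^{(0)}S_2^{(0)-1}$ are exactly the paper's comparison of two continuations of $\Psi_2^{(\infty)}$ and its second use of \eqref{antisymmetry for canonical solution at 0 ver.1} together with \eqref{Psi31}). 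The sheet bookkeeping you flag at the end is handled correctly throughout, including the opposite monodromy conventions at $0$ and $\infty$ in \eqref{Psi31}.
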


\begin{proof}
For the first equality of \eqref{E_2 is expressed by E_1 and so on}: By definitions of the Stokes matrices and connection matrices, 
\begin{align}
    \Psi_2^{(\infty)}(\zeta) &= \Psi_1^{(\infty)}(\zeta) S_1^{(\infty)} = \Psi_1^{(0)}(\zeta) E_1 S_1^{(\infty)} = \Psi_2^{(0)}(\zeta) S_1^{(0)-1} E_1 S_1^{(\infty)} \label{for proving E's symmetry 1}
\end{align}
for all $\zeta \in \widetilde{\C^*}$. Similarly,  using the definition of the Stokes matrices, we have
\begin{align*}
    \Psi_2^{(\infty)}(\zeta) S_2^{(\infty)} = \Psi_3^{(\infty)}(\zeta) = \Psi_1^{(\infty)}(\zeta e^{-2 \pi i}),
\end{align*}
which implies
\begin{align*}
    \Psi_2^{(\infty)}(\zeta) S_2^{(\infty)} S_1^{(\infty)} &= \Psi_1^{(\infty)}(\zeta e^{-2 \pi i}) S_1^{(\infty)} = \Psi_2^{(\infty)}(\zeta e^{-2 \pi i}).
\end{align*}
By \eqref{defn of connection matrices}, it becomes
\begin{align}
    \Psi_2^{(\infty)}(\zeta e^{-2 \pi i}) = \Psi_2^{(\infty)}(\zeta) S_2^{(\infty)} S_1^{(\infty)} = \Psi_2^{(0)}(\zeta e^{-2 \pi i}) E_2 S_2^{(\infty)} S_1^{(\infty)} \label{for proving E's symmetry 2}
\end{align}
for all $\zeta \in \widetilde{\C^*}$. Comparing \eqref{for proving E's symmetry 1} with \eqref{for proving E's symmetry 2}, we have
\begin{align*}
    & E_2 S_2^{(\infty)} S_1^{(\infty)} = S_1^{(0)-1} E_1 S_1^{(\infty)} \Leftrightarrow E_2 = S_1^{(0)-1} E_1 S_2^{(\infty)-1}.
\end{align*}

For the second equality of \eqref{E_2 is expressed by E_1 and so on}:
By definitions of the Stokes matrices and connection matrices, 
\begin{align*}
    \Psi_2^{(\infty)}(\zeta) &= \Psi_1^{(\infty)}(\zeta) S_1^{(\infty)} = \Psi_1^{(0)}(\zeta) E_1 S_1^{(\infty)}\\
    &= \Psi_3^{(0)}(\zeta e^{-2 \pi i}) E_1 S_1^{(\infty)} = \Psi_2^{(0)}(\zeta e^{-2 \pi i}) S_2^{(0)} E_1 S_1^{(\infty)}.
\end{align*}
Thus, it follows that 
\begin{align}E_2 = S_2^{(0)} E_1 S_1^{(\infty)}. \label{def of E2}\end{align} 

For the anti-symmetry relation \eqref{anti-symmetry relations of E_1}:
From \eqref{antisymmetry for canonical solution at infty}, we have
\begin{align}
\begin{aligned}
    &\Delta [\Psi^{(\infty)}_2(e^{i \pi} \zeta)]^{T-1} \frac{d_3}{3} = \Psi^{(\infty)}_1(\zeta)\\
    &\Leftrightarrow [\Psi^{(\infty)}_2(e^{i \pi} \zeta)]^{T-1} = 3 \Delta \Psi_1^{(\infty)}(\zeta) d_3^{-1}\\
    &\Leftrightarrow \Psi^{(\infty)}_2(\zeta) = \frac{1}{3} \Delta [\Psi_1^{(\infty)}(e^{-i \pi} \zeta)]^{T-1} d_3\\
    &\qquad \quad \quad \; \; \; \; = \frac{1}{3} \Delta [\Psi_1^{(0)}(e^{-i \pi} \zeta) E_1]^{T-1} d_3\\
    &\qquad \quad \quad \; \; \; \; = \frac{1}{3} \Delta [\Psi_1^{(0)}(e^{-i \pi} \zeta)]^{T-1} d_3^{-1} d_3E_1^{T-1} d_3.
\end{aligned} \label{for proving E's symmetry 3}
\end{align}
Now, using \eqref{antisymmetry for canonical solution at 0 ver.2}:
\begin{align*}
    &\Psi^{(0)}_2 (\zeta) = \Delta [\Psi^{(0)}_1 (e^{i \pi}\zeta)]^{T-1} 3d_3^{-1}\\
    &\Leftrightarrow \frac{1}{3} \Psi^{(0)}_2 (e^{-2 \pi i} \zeta) = \Delta [\Psi^{(0)}_1 (e^{-i \pi}\zeta)]^{T-1} d_3^{-1},
\end{align*}
\eqref{for proving E's symmetry 3} can be proceeded as follows,
\begin{align*}
   \Psi^{(\infty)}_2(\zeta) = \frac{1}{9} \Psi^{(0)}_2 (e^{-2 \pi i} \zeta) d_3E_1^{T-1} d_3.
\end{align*}
Comparing it with the definition of the connection matrices, we get $E_2 = \frac{1}{9}d_3E_1^{T-1} d_3$.

For the inversion symmetry relation \eqref{inversion symmetry relations of E_1}:
From \eqref{antisymmetry for canonical solution at 0 ver.1} and definition of $E_1$, we have
\begin{align}
    &\Delta \Psi^{(\infty)}_2 \left( e^{i \pi} \frac{1}{x^2\zeta} \right) 3 d_3^{-1} = \Psi^{(0)}_1(\zeta) = \Psi^{(\infty)}_1(\zeta) E_1^{-1}. \label{inv of E_2 eq 1}
\end{align}
If we use \eqref{antisymmetry for canonical solution at 0 ver.1} once more with \eqref{Psi31}, we could obtain
\begin{align}
    &\Delta \Psi^{(0)}_2 \left( e^{-i \pi} \frac{1}{x^2\zeta} \right) \frac{d_3}{3} = \Psi^{(\infty)}_1(\zeta). \label{inv of E_2 eq 2}
\end{align}
From \eqref{inv of E_2 eq 1} and \eqref{inv of E_2 eq 2}, it follows that
\begin{align*}
    \Psi^{(\infty)}_2 \left( e^{i \pi} \frac{1}{x^2\zeta} \right) = \Psi^{(0)}_2 \left( e^{-i \pi} \frac{1}{x^2\zeta} \right) \frac{1}{9} d_3 E_1^{-1} d_3.
\end{align*}
Thus, we have $E_2 = \frac{1}{9}d_3E_1^{-1} d_3$.
\end{proof}

\begin{rem} Using the anti-symmetry relation of $E_1$,
\begin{align}
    S_2^{(0)} E_1 S_1^{(\infty)} = \frac{1}{9} d_3 E_1^{T-1} d_3, \label{antisymmetry for E}
\end{align}
we can calculate the determinant of $E_1$. By \eqref{antisymmetry for stokes matrices at 0}, we have
\begin{align}
    S_1^{(\infty)} d_3^{-1} E_1 S_1^{(\infty)} = \frac{1}{9} E_1^{T-1} d_3. \label{antisymmetry for E updated}
\end{align}
Taking a determinant of previous equality, we get
\begin{align*}
    &\det S_1^{(\infty)} \det d_3^{-1} \det E_1 \det S_1^{(\infty)} = \left (\frac{1}{9}\right)^3 \det E_1^{T-1} \det d_3\\
    &\Leftrightarrow (\det E_1)^2 = \left ( \frac{1}{3} \right )^6
    \Leftrightarrow \det E_1 = \pm \left ( \frac{1}{3} \right )^3,
\end{align*}
where we used $\det d_3^{\pm 1} = \det S_n^{(\infty)} = 1$. 
We can also determine the sign of $\det E_1$. Taking a determinant of \eqref{defn of connection matrices}, we have 
\begin{align*}
    \det E_1 = \frac{\det \Psi_1^{(\infty)}}{\det \Psi_1^{(0)}}.
\end{align*}
Using the Jacobi's formula 
\begin{align}
    \dfrac{d}{d\zeta} \det \Psi(\zeta)= \det \Psi(\zeta)\cdot \left(\operatorname{tr}  \Psi^{-1}(\zeta) \; \dfrac{d}{d \zeta} \Psi(\zeta)\right)
\end{align}
and the fact that $\operatorname{tr}  \Psi^{-1}(\zeta) \; \dfrac{d}{d \zeta} \Psi(\zeta) = \operatorname{tr} A(\zeta) = 0$ we have that  $\det \Psi_n^{(\infty, 0)}$ is a constant. Thus, we can calculate   $\det \Psi_n^{(\infty, 0)}$ using its asymptotic:  $\det \Psi_f^{(\infty, 0)} = \det P_{\infty, 0}$.  Since 
\begin{align*}
   \det P_0 = (\det P_{\infty})^{-1} = \det e^{-w}\cdot \det \Omega= \det \Omega = -i 3\sqrt{3}, 
\end{align*}
we obtain
\begin{align}
    \det E_1 &= \frac{1}{(\det \Omega)^2} = -\frac{1}{27}.
\end{align}
\end{rem}


\paragraph{Cyclic symmetry relation of $E_1$}

By definition of the jump matrices $Q_n^{(\infty, 0)}$ for $n \in \frac{1}{3}\Z$ introduced in subsubsection \ref{5.3.3}, we have
\begin{align*}
    \Psi_{1\frac{2}{3}}^{(\infty, 0)}(\zeta) = \Psi_{1}^{(\infty, 0)}(\zeta) Q_1^{(\infty, 0)} Q_{1\frac{1}{3}}^{(\infty, 0)}.
\end{align*}
Using the cyclic symmetry relation, \eqref{cyclic symmetry for Q at 0} and \eqref{cyclic symmetry for fundamental solutions on rotated sectors at 0}, it follows that
\begin{align}
    \Psi_1^{(\infty)} = d_3^{-1}\Psi_{1\frac{2}{3}}^{(\infty)}(\omega \zeta) \Pi = d_3^{-1}\Psi_{1}^{(\infty)}(\omega \zeta) Q_1^{(\infty)} Q_{1\frac{1}{3}}^{(\infty)} \Pi \label{for cyclic symmetry for E_1 item 1}
\end{align}
and
\begin{align}
    \Psi_1^{(0)} &= \Psi_{1\frac{2}{3}}^{(0)}(\zeta) \left(Q_{1}^{(0)} Q_{1\frac{1}{3}}^{(0)} \right)^{-1} = d_3^{-1} \Psi_{1}^{(0)}(\omega \zeta) \Pi^{-1} \left( Q_{1}^{(0)} Q_{1\frac{1}{3}}^{(0)} \right)^{-1}. \label{for cyclic symmetry for E_1 item 2}
\end{align}
Substituting \eqref{for cyclic symmetry for E_1 item 1} and \eqref{for cyclic symmetry for E_1 item 2} into \eqref{defn of connection matrices}, we have
\begin{align*}
    & d_3^{-1}\Psi_{1}^{(\infty)}(\omega \zeta) Q_1^{(\infty)} Q_{1\frac{1}{3}}^{(\infty)} \Pi = d_3^{-1} \Psi_{1}^{(0)}(\omega \zeta) \Pi^{-1} \left( Q_{1}^{(0)} Q_{1\frac{1}{3}}^{(0)} \right)^{-1} E_1\\
    & \Longrightarrow d_3^{-1}\Psi_{1}^{(0)}(\omega \zeta) E_1 Q_1^{(\infty)} Q_{1\frac{1}{3}}^{(\infty)} \Pi = d_3^{-1} \Psi_{1}^{(0)}(\omega \zeta) \Pi^{-1} \left( Q_{1}^{(0)} Q_{1\frac{1}{3}}^{(0)} \right)^{-1} E_1\\
    & \Longrightarrow E_1 Q_1^{(\infty)} Q_{1\frac{1}{3}}^{(\infty)} \Pi = \Pi^{-1} \left( Q_{1}^{(0)} Q_{1\frac{1}{3}}^{(0)} \right)^{-1} E_1.
\end{align*}
Thus, we obtain that
\begin{align}
    E_1 = \left( Q_1^{(0)} Q_{1\frac{1}{3}}^{(0)} \Pi \right) E_1 \left(Q_1^{(\infty)} Q_{1\frac{1}{3}}^{(\infty)} \Pi \right). \label{cyclic symmetry for E_1}
\end{align}
Moreover,  using the symmetry relation \eqref{antisymmetry for Q ver.1}, equation \eqref{cyclic symmetry for E_1} becomes
\begin{align*}
    &E_1 = \left( d_3 Q_2^{(\infty)} d_3^{-1} d_3 Q_{2\frac{1}{3}}^{(\infty)} d_3^{-1} \Pi \right) E_1 \left(Q_1^{(\infty)} Q_{1\frac{1}{3}}^{(\infty)} \Pi \right)\\
    & \Longrightarrow d_3^{-1} E_1 = \left( Q_2^{(\infty)}  Q_{2\frac{1}{3}}^{(\infty)} d_3^{-1} \Pi d_3 \right) d_3^{-1} E_1 \left(Q_1^{(\infty)} Q_{1\frac{1}{3}}^{(\infty)} \Pi \right).
\end{align*}
One can verify that $d_3^{-1} \Pi d_3 = \omega \Pi$, thus  
\begin{align}
    d_3^{-1} E_1 = \omega \left( Q_2^{(\infty)}  Q_{2\frac{1}{3}}^{(\infty)} \Pi \right) d_3^{-1} E_1 \left(Q_1^{(\infty)} Q_{1\frac{1}{3}}^{(\infty)} \Pi \right). \label{cyclic symmetry for E_1 updated}
\end{align}
This is what we call the \textit{cyclic symmetry relation of $E_1$}.

\begin{framed}
In summary,  
\begin{itemize}
    \item Anti-symmetry and inversion relation of $E_1$ \eqref{antisymmetry for E}, \eqref{inversion symmetry relations of E_1}:
    \begin{align*}
        S_2^{(0)} E_1 S_1^{(\infty)} = \frac{1}{9} d_3 E_1^{T-1} d_3 = \frac{1}{9} d_3 E_1^{-1} d_3
    \end{align*}
    \item Cyclic symmetry relation of $E_1$ \eqref{cyclic symmetry for E_1 updated}:
    \begin{align*}
        d_3^{-1} E_1 = \omega \left( Q_2^{(\infty)}  Q_{2\frac{1}{3}}^{(\infty)} \Pi \right) d_3^{-1} E_1 \left(Q_1^{(\infty)} Q_{1\frac{1}{3}}^{(\infty)} \Pi \right)
    \end{align*}
\end{itemize}
\end{framed}


\subsection{Reality Condition}

Recall that we considered only real solutions $w_0(x) \in \R$  of the radial  Toda equation  \eqref{negative tt*-Toda with x when n=2}. This implies that the coefficient matrix $A(\zeta)$ of \eqref{first eq}:
\begin{align}
    &\Psi_{\zeta} = A(\zeta) \Psi, \label{one of the Lax pair again}\\
    &A(\zeta) = - \frac{1}{\zeta^2} W -\frac{x}{\zeta} w_x - x^2 W^{T},
\end{align}
satisfy  $\overline{A(\overline{\zeta})} = A(\zeta)$, which gives us the following Lemma.

\begin{lem}\label{prop3.4}
If $\Psi(\zeta)$ is a solution of \eqref{first eq}, then so is $\overline{\Psi(\bar\zeta)} $.
\end{lem}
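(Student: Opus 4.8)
The plan is to follow the same template used for the earlier symmetry lemmas (Lemma \ref{sym sol 1}, Lemma \ref{sym of 2}, and Lemma \ref{lem3.5}): start from the reality condition $\overline{A(\bar\zeta)} = A(\zeta)$ on the coefficient matrix and transport it through the differential equation to the candidate solution $\tilde\Psi(\zeta) := \overline{\Psi(\bar\zeta)}$. Since every solution of \eqref{first eq} is holomorphic on the universal cover of $\C^*$, I may treat $\Psi$ as a holomorphic matrix function throughout.

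First I would take the equation $\frac{d\Psi}{d\zeta}(\zeta) = A(\zeta)\Psi(\zeta)$, evaluate it with $\bar\zeta$ in place of $\zeta$, and conjugate both sides entrywise. Writing $\Psi' = d\Psi/d\zeta$ and invoking the reality condition $\overline{A(\bar\zeta)} = A(\zeta)$, this gives $\overline{\Psi'(\bar\zeta)} = \overline{A(\bar\zeta)}\,\overline{\Psi(\bar\zeta)} = A(\zeta)\,\overline{\Psi(\bar\zeta)} = A(\zeta)\,\tilde\Psi(\zeta)$. It then remains only to recognize the left-hand side as $\frac{d}{d\zeta}\tilde\Psi(\zeta)$.

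That last identification is the one point needing care, and I expect it to be the only genuine (if modest) obstacle: one must show that the Schwarz-type reflection $\zeta \mapsto \overline{\Psi(\bar\zeta)}$ of a holomorphic matrix is again holomorphic, with derivative $\overline{\Psi'(\bar\zeta)}$. I would verify this entrywise by writing a given entry as $u + iv$ with $u,v$ real, noting that the reflected entry equals $u(x,-y) - iv(x,-y)$ in coordinates $\zeta = x + iy$, and checking directly that this satisfies the Cauchy--Riemann equations (using the Cauchy--Riemann equations for the original entry). Reading off its complex derivative yields exactly $\overline{\Psi'(\bar\zeta)}$. Combining this holomorphic-reflection identity with the previous display produces $\frac{d}{d\zeta}\tilde\Psi(\zeta) = A(\zeta)\,\tilde\Psi(\zeta)$, which is precisely the assertion that $\tilde\Psi(\zeta) = \overline{\Psi(\bar\zeta)}$ solves \eqref{first eq}.
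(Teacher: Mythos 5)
Your proof is correct and is exactly the argument the paper intends: the paper states Lemma \ref{prop3.4} as an immediate consequence of $\overline{A(\overline{\zeta})}=A(\zeta)$, following the template of Lemma \ref{sym sol 1}, and your conjugate-and-substitute computation together with the Schwarz-reflection identity $\frac{d}{d\zeta}\overline{\Psi(\bar\zeta)}=\overline{\Psi'(\bar\zeta)}$ fills in precisely the details left to the reader. The Cauchy--Riemann verification of that identity is sound, so there is nothing to add.
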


\subsubsection{Reality Condition of the Formal Solutions}

Recall that the formal solution at $\zeta = \infty$ is given by \eqref{formal solution near infty}, so  we have
\begin{align}
\begin{aligned}
    \overline{\Psi_f^{(\infty)}(\overline{\zeta})} &= \overline{P_{\infty}}(I + \O(\zeta^{-1}))e^{-x^2 \zeta \overline{d_3}}\\
    &= \overline{P_{\infty}}C(I + \O(\zeta^{-1}))Ce^{-x^2 \zeta \overline{d_3}},\\
    &=\overline { e^{w} \Omega^{-1}C} (I + \O(\zeta^{-1})) e^{-x^2 \zeta {d_3}}C,
\end{aligned}\label{reality of formal solution at infty}
\end{align}
where we used the property  $C^2 = I$ , the definition $P_{\infty} = e^{w} \Omega^{-1}$, and equality 
\begin{align}
    C\overline{d_3} = d_3 C. \label{reality of d_3}
\end{align}
Observe that by \eqref{omega^{-1}}, we have 
\begin{align*}
    \overline{\Omega^{-1} C}=\Omega^{-1}.
\end{align*}
Thus,  we get
\begin{align*}
    \overline{\Psi_f^{(\infty)}(\overline{\zeta})} &= e^{w} \Omega^{-1} (I + \O(\zeta^{-1}))Ce^{-x^2 \zeta \overline{d_3}}\\
    &= P_{\infty}(I + \O(\zeta^{-1}))CCe^{-x^2 \zeta d_3}C\\
    &= P_{\infty}(I + \O(\zeta^{-1}))e^{-x^2 \zeta d_3}C.
\end{align*}
Therefore, we obtain the reality condition of the formal solutions:
\begin{align}
    \overline{\Psi_f^{(\infty)}(\overline{\zeta})} C = \Psi_f^{(\infty)}(\zeta). \label{reality of formal solution at infty updated}
\end{align}
Similarly, one can show that the formal solution at $\zeta = 0$ has the same property:
\begin{align}
    \overline{\Psi_f^{(0)}(\overline{\zeta})} C = \Psi_f^{(0)}(\zeta).\label{reality of formal solution at 0}
\end{align}


\subsubsection{Reality Condition of the Stokes Matrices}

Next, we will describe the reality condition on the level of the canonical solutions and the Stokes matrices. Recall that the Stokes sectors for the canonical solution at $\zeta = \infty$ are given by:
\begin{align*}
    \Omega^{(\infty)}_1 &= \left\{\zeta \in \C^* \: \big | \; -\frac{2 \pi}{3} < \arg \zeta < \frac{2 \pi}{3} \right\},\\
    \Omega^{(\infty)}_2 &= \left\{\zeta \in \C^* \: \big | \; \frac{ \pi}{3} < \arg \zeta < \frac{5\pi}{3} \right\}.
\end{align*}
By taking complex conjugate of $\zeta$, i.e. $\arg \overline{\zeta} = 2\pi - \arg \zeta$, the Stokes sectors for the solutions $\overline{\Psi_n^{(\infty)}(\overline{\zeta})}$ for $n=1,2$ are
\begin{align*}
    \overline{\Omega^{(\infty)}_1} &:= \left\{\zeta \in \C^* \: \big | \; \frac{4 \pi}{3} < \arg \zeta < \frac{8 \pi}{3} \right\},\\
    \overline{\Omega^{(\infty)}_2} &:= \left\{\zeta \in \C^* \: \big | \; \frac{\pi}{3} < \arg \zeta < \frac{5\pi}{3} \right\}.
\end{align*}
Thus, we have 
\begin{align}
\begin{aligned}
    \overline{\Omega^{(\infty)}_1} &= \Omega_3^{(\infty)},\\
    \overline{\Omega^{(\infty)}_2} &= \Omega_2^{(\infty)},
\end{aligned} \label{reality of stokes sector at infty}
\end{align}
  and the relations for the    canonical solutions are
\begin{align}
\begin{aligned}
    \overline{\Psi_1^{(\infty)}(\overline{\zeta})}C &= \Psi_3^{(\infty)}(\zeta), \\ \overline{\Psi_2^{(\infty)}(\overline{\zeta})}C &= \Psi_2^{(\infty)}(\zeta).
\end{aligned} \label{reality of canonical solutions at infty}
\end{align}
Finally, the reality conditions for the Stokes matrices are
\begin{align}
\begin{aligned}
    \overline{S_1^{(\infty)}} &= [\overline{\Psi_1^{(\infty)}(\overline{\zeta})}]^{-1} \overline{\Psi_2^{(\infty)}(\overline{\zeta})} \; \text{ by \eqref{stokes matrices of 0 and infty}}\\
    &= C [\Psi_3^{(\infty)}(\zeta)]^{-1} \Psi_2^{(\infty)}(\zeta) C \; \text{ by \eqref{reality of canonical solutions at infty}}\\
    &= C S_2^{(\infty)-1}C
\end{aligned} \label{reality of stokes matrices at infty}
\end{align}
Recall that we have parametrization of $S_1^{(\infty)}$ given by \eqref{parametrization of S_1 at infty} and  of $S_2^{(\infty)}$, given by \eqref{parametrization of S_2 at infty}. Using these parametrizations, \eqref{reality of stokes matrices at infty} becomes 
 
\begin{align*}
    \begin{pmatrix}
        1 & \overline{a} & 0\\
        0 & 1 & 0\\
        \overline{a} & - \overline{a} \omega & 1
    \end{pmatrix} = \begin{pmatrix}
        1 & a \omega^2 & 0\\
        0 & 1 & 0\\
        a \omega^2 & -a & 1
    \end{pmatrix} \iff \overline{a} = a \omega^2
\end{align*}
Thus, introducing  a real parameter  $s^{\R}$, one can write $a$ as
\begin{align}
     a = \omega^{2} s^{\R}. \label{a = omega^2 s}
\end{align}


\subsubsection{Reality Condition of the Connection Matrices}

In this section, we will describe the reality condition for the connection matrix $E_1$. Since it does not contain $\zeta$, we have freedom in the definition of the complex conjugate of $\zeta$. Let $\arg \overline{\zeta} = - \arg \zeta$ this time. Then, the Stokes sectors have the property,
\begin{align*}
    \overline{\Omega^{(\infty)}_1} &= \Omega_1^{(\infty)} \; \text{ and } \;\overline{\Omega^{(0)}_1} = \Omega_1^{(0)},
\end{align*}
which implies the following  reality conditions of the canonical solutions,
\begin{align}
\begin{aligned}
    \overline{\Psi_1^{(\infty)}(\overline{\zeta})}C &= \Psi_1^{(\infty)}(\zeta), \\ 
    \overline{\Psi_1^{(0)}(\overline{\zeta})}C &= \Psi_1^{(0)}(\zeta).
\end{aligned} \label{reality of canonical solutions updated}
\end{align}
By \eqref{defn of connection matrices} and \eqref{reality of canonical solutions updated}, it follows that
\begin{align}
    E_1 = \left[\Psi_1^{(0)} \right]^{-1}\Psi_1^{(\infty)} = C \left[ \overline{\Psi_1^{(0)}} \right]^{-1} \overline{\Psi_1^{(\infty)}} C = C \overline{E_1} C. \label{reality of E_1}
\end{align}


\begin{framed}
In summary, 
\begin{itemize}
    \item reality condition of the formal solutions \eqref{reality of formal solution at infty updated}, \eqref{reality of formal solution at 0}:
    \begin{align*}
        \overline{\Psi_f^{(\infty)}(\overline{\zeta})} C &= \Psi_f^{(\infty)}(\zeta).\\
        \overline{\Psi_f^{(0)}(\overline{\zeta})} C &= \Psi_f^{(0)}(\zeta).
    \end{align*}
    \item reality condition of the canonical solutions \eqref{reality of canonical solutions at infty}:
    \begin{align*}
        \overline{\Psi_1^{(\infty)}(\overline{\zeta})}C &= \Psi_3^{(\infty)}(\zeta) \\ \overline{\Psi_2^{(\infty)}(\overline{\zeta})}C &= \Psi_2^{(\infty)}(\zeta).
    \end{align*}
    \item reality condition of the Stokes matrices \eqref{reality of stokes matrices at infty}:
    \begin{align*}
        \overline{S_1^{(\infty)}} &= C S_2^{(\infty)-1}C
    \end{align*}
    \item reality condition of the connection matrices \eqref{reality of E_1}:
    \begin{align*}
        E_1 = C \overline{E_1} C.
    \end{align*}
\end{itemize}
\end{framed}


\subsubsection{More about the Monodromy Data} \label{section titled more about the monodromy data}

In this section, we will use symmetry relations to parameterize the connection matrix $E_1$ and will show that the monodromy data consists only of two real parameters.

\begin{prop} \label{E parametrization} The connection matrix $E_1$ can be parameterized by two real numbers $A^\R$ and $s^\R$, and one complex number $B$:
\begin{align}
    E_1 = \begin{pmatrix}
        A^{\R} & B & \overline{B}\\
        B & \omega s^{\R} A^{\R} - \omega^2 s^{\R} B + \overline{B} & A^{\R}\\
        \overline{B} & A^{\R} & \omega^2 s^{\R} A^{\R} + B - \omega s^{\R} \overline{B}
    \end{pmatrix} \label{parametrization of E_1 ver.2}
\end{align}
where the parameter $s^\R$ was introduced in \eqref{a = omega^2 s} and parameters $A^\R,s^\R,B$ satisfy 
\begin{align}
    &(A^{\R})^2 - \frac{1}{3} A^{\R} = |B|^2, \label{identity 7 at parametrization of E_1}\\
    &(1 + s^{\R})A^{\R} + \omega B + \omega^2 \overline{B} = \frac{1}{3}. \label{identity 6 at parametrization of E_1}
\end{align}
\end{prop}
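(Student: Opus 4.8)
The plan is to determine $E_1$ by imposing, one at a time, the symmetry relations collected in the two framed summaries, reserving the single genuinely nonlinear relation (together with the known determinant) for the two scalar identities at the end.

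First I would fix the general shape of $E_1$ using only the linear constraints. The corollary $E_1 = E_1^{T}$ of Proposition~\ref{E symmetry 1} makes $E_1$ symmetric, and the reality relation \eqref{reality of E_1}, $E_1 = C\overline{E_1}C$, then forces the $(1,1)$ entry to be real (this is $A^{\R}$), the $(1,2)$ and $(1,3)$ entries to be complex conjugates (these are $B$ and $\overline{B}$), the $(2,3)$ entry to be real, and the $(3,3)$ entry to be the complex conjugate of the $(2,2)$ entry. At this point $E_1$ depends on $A^{\R}$, one further real off-diagonal entry, and the complex $(2,2)$ entry.

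Next I would impose the cyclic symmetry relation \eqref{cyclic symmetry for E_1}, written as $E_1 = \bigl(Q_1^{(0)}Q_{1\frac{1}{3}}^{(0)}\Pi\bigr)\,E_1\,\bigl(Q_1^{(\infty)}Q_{1\frac{1}{3}}^{(\infty)}\Pi\bigr)$, where the two bracketed factors are the explicit constant matrices coming from the parametrizations of the $Q_n^{(\infty,0)}$ in terms of $a=\omega^2 s^{\R}$. Multiplying out and comparing entries, using $\omega^3=1$ repeatedly, forces the $(2,3)$ entry to equal $A^{\R}$ and pins the $(2,2)$ and $(3,3)$ entries to the exact combinations displayed in \eqref{parametrization of E_1 ver.2}; the remaining entry comparisons turn out to be automatically satisfied, so the cyclic relation contributes no further scalar constraint beyond fixing this form. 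This establishes the matrix \eqref{parametrization of E_1 ver.2} with free parameters $A^{\R}$, $s^{\R}$, $B$. I would then obtain the two identities from the anti-symmetry/inversion relation \eqref{antisymmetry for E updated}, $S_1^{(\infty)}d_3^{-1}E_1 S_1^{(\infty)}=\tfrac{1}{9}E_1^{-1}d_3$ (using $E_1^{T-1}=E_1^{-1}$), together with $\det E_1=-\tfrac{1}{27}$ from the Remark. Setting $T:=3E_1 S_1^{(\infty)}d_3^{-1}$, the relation is equivalent to $T^2=I$, while the determinant gives $\det T=-1$; hence $T$ is a diagonalizable involution with eigenvalues $\{1,1,-1\}$, so $\operatorname{tr}T=1$. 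Computing $\operatorname{tr}T=3\operatorname{tr}\bigl(E_1 S_1^{(\infty)}d_3^{-1}\bigr)$ from \eqref{parametrization of S_1 at infty} and the form \eqref{parametrization of E_1 ver.2} yields exactly \eqref{identity 6 at parametrization of E_1}, and the second identity \eqref{identity 7 at parametrization of E_1} comes from a further component of $T^2=I$, equivalently from expanding $\det E_1=-\tfrac{1}{27}$ and simplifying with \eqref{identity 6 at parametrization of E_1}.

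I expect the main obstacle to be twofold. The heavier part is purely the bookkeeping: every entry comparison carries factors of $\omega$ and of $a=\omega^2 s^{\R}$, so one must organize the entrywise identities carefully and verify that symmetry, reality, and cyclicity together leave precisely the claimed free parameters, with no hidden extra relation. The subtler point is excluding the degenerate branch $\operatorname{tr}T=-3$, i.e. $T=-I$, which would invalidate the derivation of \eqref{identity 6 at parametrization of E_1}; this is ruled out because $T=-I$ would force the non-generic value $E_1=-\tfrac{1}{3}d_3 S_1^{(\infty)-1}$, incompatible with the symmetric, real form obtained in the previous steps.
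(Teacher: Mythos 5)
Your proposal is correct, and it assembles the same ingredients as the paper's Appendix~\ref{proof of E} --- the cyclic relation \eqref{cyclic symmetry for E_1}, the involution relation \eqref{antisymmetry for E updated}, $\det E_1=-\tfrac{1}{27}$, and the reality condition \eqref{reality of E_1} --- but differs at the two places that matter. First, you impose symmetry and reality before cyclicity, whereas the paper starts from a general nine-entry matrix, lets the cyclic relation cut it down to three complex parameters $A,B,D$ (with $E_1=E_1^{T}$ emerging automatically), and only afterwards imposes reality to get $A=A^{\R}$, $D=\overline{B}$, $a=\omega^2 s^{\R}$; both orders work, and your claim that the remaining cyclic entry-comparisons are automatic is indeed true (the six relations $F=B$, $H=L=A$, $K=D$ together with the $(2,2)$, $(3,3)$ expressions make rows two and three of the cyclic identity hold identically). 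Second --- the genuine difference --- the paper extracts from $\bigl(S_1^{(\infty)}d_3^{-1}E_1\bigr)^2=\tfrac{1}{9}I$ two entrywise identities, squares their combination to get $\bigl[(1+s^{\R})A^{\R}+\omega B+\omega^2\overline{B}\bigr]^2=\tfrac{1}{9}$, and fixes the sign $+\tfrac13$ by an unspecified consistency check with $\det E_1=-\tfrac{1}{27}$; your route instead reads \eqref{identity 6 at parametrization of E_1} off as $\operatorname{tr}T=1$ for the involution $T=3E_1S_1^{(\infty)}d_3^{-1}$ (and indeed $\operatorname{tr}T=3[(1+s^{\R})A^{\R}+\omega B+\omega^2\overline{B}]$, using $a\omega=s^{\R}$), with $\det T=-1$ forcing the spectrum $\{1,1,-1\}$ unless $T=-I$. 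This makes the paper's sign ambiguity transparent: the rejected sign is precisely the branch $T=-I$. Your exclusion of that branch is sound, with one precision worth adding: $E_1=-\tfrac13 d_3 S_1^{(\infty)-1}$ fails symmetry only when $s^{\R}\neq 0$; when $s^{\R}=0$ it equals $-\tfrac13 d_3$, which \emph{is} symmetric and satisfies $E_1=C\overline{E_1}C$, and is ruled out only by the cyclic pinning $(E_1)_{23}=A^{\R}$ (here $0\neq -\tfrac13$) --- so the exclusion genuinely needs the full form \eqref{parametrization of E_1 ver.2}, not symmetry and reality alone. Finally, your derivation of \eqref{identity 7 at parametrization of E_1} from one further component of $T^2=I$ combined with \eqref{identity 6 at parametrization of E_1} is exactly the paper's step (its identity $s^{\R}(A^{\R})^2+|B|^2+A^{\R}(\omega B+\omega^2\overline{B})=0$ plus \eqref{identity 6 at parametrization of E_1}), and since the Proposition only asserts the two identities as necessary constraints, nothing further is required.
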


The proof of this proposition is given in Appendix \ref{proof of E}. 


\begin{rem}
In the following paper \cite{gikmo2} in this series we shall show that the case $A^\R < 0$ corresponds to having a complex solution to \eqref{negative tt*-Toda with x when n=2} of the form $w_0 = v_0 + \frac{i \pi}{2}$ where $v_0$ is real. Hence we must have
$A^\R > 0$ for any real solution of our equation \eqref{negative tt*-Toda with x when n=2}. Together with \eqref{identity 7 at parametrization of E_1} it gives $A^\R \geq \frac13$.
\end{rem}

Summarizing, we have  that all Stokes matrices $S_n^{(0,\infty)}$ and connection matrices $E_n$ for $n=1,2$ depend on the  parameters $(A^\R,B,s^\R)$.

\begin{prop}\label{ABS TH}
The data $(A^{\R},B,s^\R)$ can be further parameterized by two real parameters $\{ s^{\R}, y^{\R} \}$, where $s^{\R}$ satisfies $-3 < s^{\R} < 1$ and $y^\R$ is arbitrary. In fact,
\begin{align}   
    & A^{\R} = \frac{1}{3(3 + s^{\R})} + \frac{2}{3 + s^{\R}} \sqrt{ \frac{1}{36} + \frac{3 + s^{\R}}{1 - s^{\R}} \left( \frac{1}{36} + (y^{\R})^2 \right)}, \label{parametrization of A}\\
    & B = \omega^2 \left( \frac{1}{3(3 + s^{\R})} - \frac{1 + s^{\R}}{3 + s^{\R}} \sqrt{ \frac{1}{36} + \frac{3 + s^{\R}}{1 - s^{\R}} \left( \frac{1}{36} + (y^{\R})^2 \right)} + i y^{\R} \right), \label{parametrization of B}
\end{align}
where $y^\R = \Im(\omega B)$.
\end{prop}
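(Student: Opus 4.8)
The plan is to regard \eqref{identity 7 at parametrization of E_1} and \eqref{identity 6 at parametrization of E_1} as two real scalar equations in the real unknowns $A^\R, s^\R$ and the complex unknown $B$, and to eliminate $B$ in favour of the single real parameter $y^\R := \Im(\omega B)$ together with $\Re(\omega B)$. The key simplifying observation is that, since $\omega^2 = \bar\omega$, one has $\omega B + \omega^2 \overline{B} = \omega B + \overline{\omega B} = 2\Re(\omega B)$, so the linear identity \eqref{identity 6 at parametrization of E_1} becomes
\begin{align*}
(1 + s^\R) A^\R + 2\,\Re(\omega B) = \tfrac13,
\end{align*}
while, writing $|B|^2 = |\omega B|^2 = \Re(\omega B)^2 + (y^\R)^2$, the quadratic identity \eqref{identity 7 at parametrization of E_1} becomes
\begin{align*}
(A^\R)^2 - \tfrac13 A^\R = \Re(\omega B)^2 + (y^\R)^2.
\end{align*}

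First I would solve the linear equation, $\Re(\omega B) = \tfrac12\big(\tfrac13 - (1+s^\R)A^\R\big)$, and substitute it into the quadratic one. After clearing denominators this collapses to a single quadratic equation in $A^\R$ whose coefficients involve only $s^\R$ and $y^\R$; a short computation identifies its leading coefficient as $\tfrac14(1-s^\R)(3+s^\R)$ and its linear coefficient as $-\tfrac16(1-s^\R)$. Dividing by the leading coefficient (legitimate precisely when $(1-s^\R)(3+s^\R)\neq 0$) and applying the quadratic formula yields the two candidates
\begin{align*}
A^\R = \frac{1}{3(3+s^\R)} \pm \frac{2}{3+s^\R}\sqrt{\frac{1}{36} + \frac{3+s^\R}{1-s^\R}\Big(\frac{1}{36} + (y^\R)^2\Big)},
\end{align*}
which is exactly \eqref{parametrization of A} for the $+$ sign. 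Back-substituting this value into the formula for $\Re(\omega B)$ and reassembling $B = \omega^2\big(\Re(\omega B) + i\,y^\R\big)$ produces \eqref{parametrization of B}, so this direction is a purely algebraic verification.

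The step I expect to be the main obstacle is the determination of the admissible range $-3 < s^\R < 1$ and the attendant choice of the $+$ root, which is where the condition $A^\R > 0$ (equivalently $A^\R \ge \tfrac13$, by \eqref{identity 7 at parametrization of E_1} and the Remark following Proposition \ref{E parametrization}) enters. For $s^\R \in (-3,1)$ both $1-s^\R$ and $3+s^\R$ are positive, so the radicand is well defined and at least $\tfrac{1}{36}$; hence the radical exceeds $\tfrac16$ strictly, which makes the $-$ root $\le 0$ and therefore inadmissible, forcing the $+$ root. Conversely, I would exclude the complementary values of $s^\R$: for $s^\R > 1$ the reduced quadratic has strictly negative discriminant for every $y^\R$, so no real $A^\R$ exists, whereas for $s^\R < -3$ one checks, via the signs of the sum and product of the roots, that both roots are negative, so $A^\R>0$ fails; the endpoints $s^\R = 1$ and $s^\R = -3$ degenerate to a linear equation admitting no admissible ($A^\R>0$) solution. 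Finally, the resulting map $(s^\R,y^\R) \mapsto (A^\R,B,s^\R)$ is injective because $s^\R$ and $y^\R = \Im(\omega B)$ are read off directly from the data, and the analysis above shows it surjects onto the admissible set; hence it is the claimed bijective parametrization.
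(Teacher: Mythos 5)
Your proposal is correct and follows essentially the same route as the paper's Appendix B: writing $\omega B = x^{\R} + iy^{\R}$, solving the linear identity \eqref{identity 6 at parametrization of E_1} for $\Re(\omega B)$, substituting into \eqref{identity 7 at parametrization of E_1} to get the quadratic $\frac{(3+s^{\R})(1-s^{\R})}{4}(A^{\R})^2 - \frac{1-s^{\R}}{6}A^{\R} - \frac{1}{36} = (y^{\R})^2$, and then selecting the $+$ root and the range $-3 < s^{\R} < 1$ from positivity of $A^{\R}$. Your sign analysis (negative discriminant for $s^{\R}>1$, Vieta's formulas for $s^{\R}<-3$) is in fact slightly more explicit than the paper's one-line exclusion of those ranges, and is correct.
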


The proof of this proposition will be given in Appendix \ref{Proof of ABS}.
  
Recall that we called the parameter $s^{\R}$ Stokes data. Together with the new data $y^{\R}$ coming from connection matrix, we introduce the set of {\it monodromy data} $\mathcal{M}$ by
\begin{align}\label{monodormy data}
    \mathcal{M}=\{ (s^{\R}, y^{\R})\in \R\times\R \;\bigl| \; -3<s^\R<1 \}.
\end{align}

\section{Inverse Monodromy Problem and Asymptotics near $x = \infty$} \label{section 3}
\subsection{Riemann-Hilbert Problem} \label{section 6}

Recall that $\Psi_n^{(\infty, 0)}$ were originally defined on the Stokes sector $\Omega_n^{(\infty, 0)}$ and then were extended to the universal covering $\widetilde{\C^*}$ of $\C^*$ as was explained in the subsection \ref{subsection on stokes matrices}. In this section instead we consider $\Psi_n^{(\infty, 0)}$ as  holomorphic functions on the projected regions $\Omega_n^{(\infty, 0)}$ on $\C^*$. (We will indicate from which sheet $\Psi_n^{(\infty, 0)}$ are projected down to $\C^*$ in the following subsection.) In other words, these functions are sectionally holomorphic on $\C^*$ whose jumps are given by either the Stokes matrices $S_n^{(\infty, 0)}$ or the connection matrices $E_n$ which are parametrized by the monodromy data $m \in \mathcal{M}$, and whose asymptotics can be given as formal solutions. 

Then, we will reconstruct the coefficient $A(\zeta)$ in the equation \eqref{first eq} (and so the solution $w_0(x)$ of \eqref{negative tt*-Toda with x when n=2}) from the given monodromy data. 
\begin{align*}
    \mathcal{M} \ni m = \left( s^{\R}, \, y^{\R} \right) \mapsto w_0(x,\,m).
\end{align*}
This inverse monodromy problem forms a \textit{Riemann-Hilbert Problem}.

\subsubsection{$\hat{\Psi}$-problem} \label{hat-Psi RH problem}

Let us define a sectionally holomorphic function $\hat{\Psi}$ by Figure \ref{psi-hat problem}. The contour on this figure consists of the imaginary axis and a circle $S^1_{\rho} = \{ \zeta \in \C \,|\, |\zeta| = \rho \, (> 0) \}$ where the radius $\rho$ is arbitrary. We call this oriented contour $\Gamma_1$. Note that we picked holomorphic functions $\Psi_{n}^{(\infty, 0)}$ on $\widetilde{\C^*}$ by specifying which sheet they are from in Figure \ref{psi-hat problem} to define this sectionally holomorphic function $\hat{\Psi}$ on $\C^*$.
\begin{figure}[htbp]
    \centering
    \includegraphics[width=12cm]{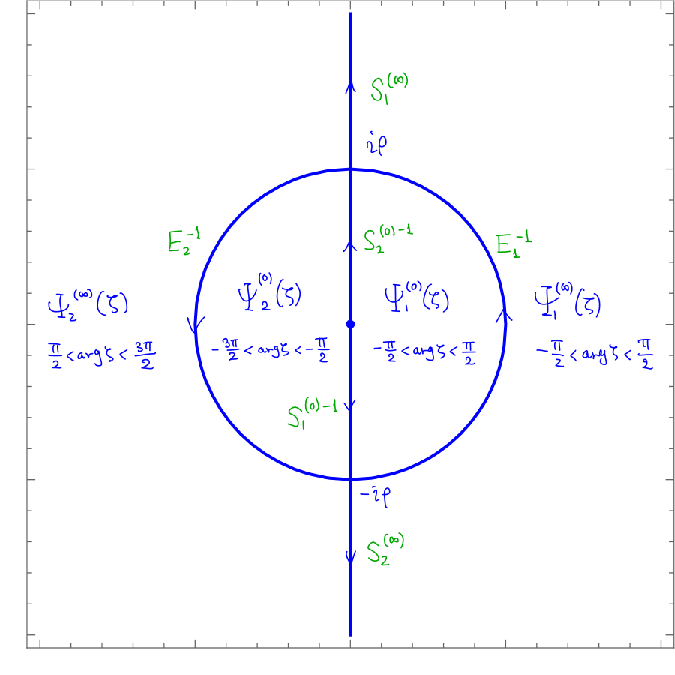}
    \caption{Riemann-Hilbert Problem of $\hat{\Psi}$ (original domains of $\Psi$ shown, for clarity).}
    \label{psi-hat problem}
\end{figure}

$\hat{\Psi}$ has jumps only on $\Gamma_1$ with the jump matrices shown in Figure \ref{psi-hat problem}. For example 
\begin{itemize}
    \item For $\zeta$ on the outer part of the $\arg \zeta = \frac{\pi}{2}$ ray, 
    $$\hat{\Psi}_+(\zeta) = \Psi^{(\infty)}_2(\zeta) = \Psi^{(\infty)}_1(\zeta) S_1^{(\infty)} = \hat{\Psi}_-(\zeta) S_1^{(\infty)}$$
    by \eqref{stokes matrices of 0 and infty}.
    \item For $\zeta$ on the semi-circle in the left half-plane,
    \begin{align*}
        \hat{\Psi}_+(\zeta) &= \Psi^{(0)}_2(\zeta) \underset{\text{$\pi / 2 < \arg \zeta < 3\pi/2$}}{=} \Psi^{(0)}_2(\zeta e^{- 2 \pi i})\\
        &= \Psi^{(\infty)}_2(\zeta) E_2^{-1} = \hat{\Psi}_-(\zeta) E_2^{-1}
    \end{align*}
    by \eqref{defn of connection matrices}.
\end{itemize}
At the self-intersection of $\Gamma_1$, points  $\zeta = i \rho$ and $\zeta = -i \rho$ in Figure \ref{psi-hat problem}, by Proposition \ref{E symmetry 1}, we have
\begin{align*}
      E_2^{-1}S_2^{(0)} E_1 S_1^{(\infty)} =  E_2^{-1}S_1^{(0)-1} E_1 S_2^{(\infty)-1}=I.
\end{align*}
In particular, we have no formal monodromy around these points.

From \eqref{formal solution near 0} and \eqref{formal solution near infty} the asymptotic behaviour of $\hat{\Psi}(\zeta)$ is 
\begin{align}
    &\hat{\Psi} (\zeta) = P_{\infty} (I + \O(\zeta^{-1})) e^{-x^2 \zeta d_3},\qquad  \zeta\rightarrow \infty, \\
    &\hat{\Psi} (\zeta) = P_0 (I + \O(\zeta)) e^{\frac{1}{\zeta} d_3},\qquad  \zeta\rightarrow 0. 
\end{align}
It is important to emphasize that $\hat{\Psi}(\zeta)$ is defined as a piecewise analytic function on the complex plane ${\mathbb C}^*$ and {\it not} on its universal covering; i.e., for $\hat{\Psi}(\zeta)$ we do not need to indicate $\arg \zeta$.


In the next subsection, we will deform the original $\hat{\Psi}$-problem by a series of transformations.


\subsubsection{$\Phi$-problem} \label{Phi RH problem}

First, we define a $\Phi$-problem by Figure \ref{phi problem}. The function $\Phi$ is a piecewise analytic function whose analytic pieces are obtained from the function $\hat{\Psi}$ by the proper right matrix multiplication as shown in Figure \ref{phi problem}.
This transformation comes from the decomposition of the Stokes matrices proved in equation \eqref{S's are expressed Q's} and equation \eqref{S is expressed by Qs at 0}. Note that our new oriented jump contours are the unit circle and infinite rays with $\frac{n \pi}{3}$ angles and we call it $\Gamma_2$.
\begin{figure}[htbp]
    \centering
    \includegraphics[width=12cm]{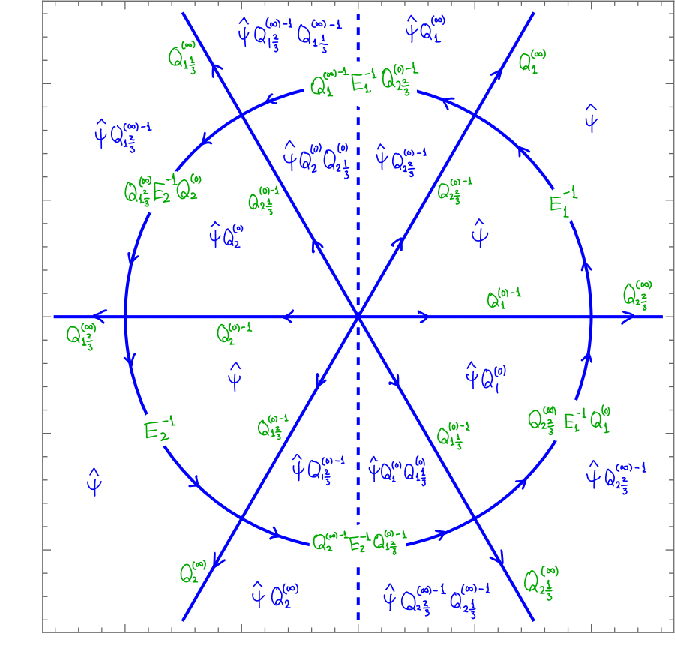}
    \caption{Contour $\Gamma_2$ and jump matrix $G_\Phi$ of  $\Phi$.}
    \label{phi problem}
\end{figure}
One can check that the new jump matrices on $\Gamma_2$ for this problem are the ones written in Figure \ref{phi problem} and we will denote them by $G_{\Phi}$. Moreover, $\Phi(\zeta)$ takes the same asymptotic behaviors at $\zeta=0$ and $\zeta=\infty$ as $\hat{\Psi}(\zeta)$ does.


\subsubsection{$\tilde{\Phi}$-problem}

Next, we define a $\tilde{\Phi}$-problem by
\begin{align}
    \tilde{\Phi} (\zeta) = \begin{dcases}
    \Phi(\zeta) & \text{ if } |\zeta| > \rho\\
    \Phi(\zeta) \frac{1}{3}C & \text{ if } |\zeta| < \rho
\end{dcases}. \label{defn of phi tilde}
\end{align}

One can check that this transformation doesn't affect the contour but the jump matrices will be changed: 
\begin{itemize}
    \item Jumps on the ray outside of $S^1_{\rho}$ do not change.
    \item Jumps on the ray inside of $S^1_{\rho}$ change from $G_{\Phi}$ to $C G_{\Phi} C$.
    \item Jumps on $S^1_{\rho}$ change from $G_{\Phi}$ to $\frac{1}{3} G_{\Phi} C$.
\end{itemize}
The jump matrices inside the circle can be further simplified. For example,
\begin{align*}
    C Q_{2\frac{2}{3}}^{(0)-1} C \underset{\text{\eqref{antisymmetry for Q ver.1}}}{=} C d_3 Q_{1\frac{2}{3}}^{(\infty)-1} d_3^{-1} C \underset{\text{\eqref{antisymmetry for Q ver.1}}}{=} C \left[ Q_{2\frac{2}{3}}^{(\infty)} \right]^T C \underset{\text{computation}}{=} Q_2^{(\infty)}
\end{align*}
One can repeat a similar discussion for the others.
Next, introducing $\tilde{E}_n^{-1} = \frac{1}{3} E_n^{-1} C$ for $n = 1,2$,  the jump matrices on $S^1_{\rho}$ can be also simplified, for instance,
\begin{align*}
    \frac{1}{3} Q_{1}^{(\infty)-1} E_1^{-1} Q_{2\frac{2}{3}}^{(0)-1} C = Q_{1}^{(\infty)-1} \left( \frac{1}{3} E_1^{-1} C \right ) \left( C Q_{2\frac{2}{3}}^{(0)-1} C \right) = Q_{1}^{(\infty)-1} \tilde{E}_1^{-1} Q_{2}^{(\infty)}.
\end{align*}
The updated jump matrices on $\Gamma_2$ are depicted in Figure \ref{tilde Phi problem}.
\begin{figure}[htbp]
    \centering
    \includegraphics[width=12cm]{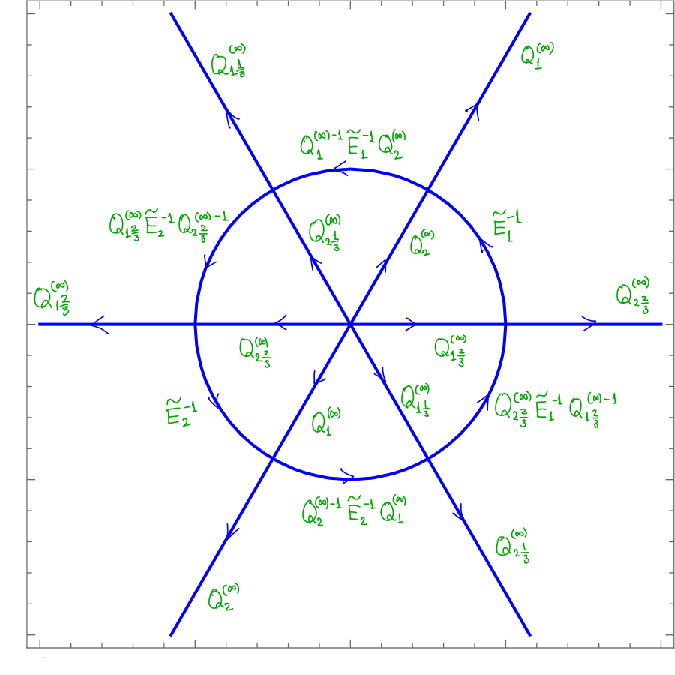}
    \caption{Contour $\Gamma_2$ and jump matrix $G_{\tilde{\Phi}}$ of    $\tilde{\Phi}$.}
    \label{tilde Phi problem}
\end{figure}

By definition of $\tilde{\Phi}(\zeta)$ and \eqref{formal solution near infty}, we observe that this transformation doesn't change  the   asymptotics at $\zeta=\infty$ :
\begin{align} 
    \tilde{\Phi}(\zeta) = P_{\infty} (I + \O(\zeta^{-1})) e^{-x^2 \zeta d_3}, \label{asymptotic solution of phi tilde near infty}
\end{align}
but at $\zeta = 0$, definition of $\tilde{\Phi}(\zeta)$ and \eqref{formal solution near 0} imply the following change:
\begin{align}
\begin{aligned}
    \tilde{\Phi}(\zeta) &= \frac{1}{3} P_{0} (I + \O(\zeta)) e^{\frac{1}{\zeta} d_3} C =\frac{1}{3} e^{-w} \Omega (I + \O(\zeta)) C  e^{\frac{1}{\zeta} d_3^{-1}}   \\
    &= \frac{1}{3} e^{-w} \Omega C (I + \O(\zeta)) e^{\frac{1}{\zeta} d_3^{-1}} = e^{-w} \Omega^{-1} (I + \O(\zeta)) e^{\frac{1}{\zeta} d_3^{-1}}.
\end{aligned} \label{asymptotic solution of phi tilde near 0}
\end{align}


\subsubsection{$Y$-problem} \label{section Y problem}

Recall that $\rho$ in the original RHP was arbitrary. We now set $\rho = \frac{1}{x}$.
Motivated by the asymptotics \eqref{asymptotic solution of phi tilde near infty}, we introduce the following $Y$ function
\begin{align}
    Y(\zeta) := P_{\infty}^{-1} \tilde{\Phi} \left( \frac{\zeta}{x} \right) e^{-x \theta(\zeta)}, \label{defn of Y}
\end{align}
where $\theta (\zeta) = -\zeta d_3 + \frac{1}{\zeta} d_3^{-1}$. Since \eqref{defn of Y} is just a scaling, the jump contour does not change from $\Gamma_2$ as in Figure \ref{tilde Phi problem}; the only difference is that previously the circle has a radius $\rho > 0$, but now it is a unit circle.
However, jump matrices do change from $G_{\tilde{\Phi}}$ to $G_{Y}$:
\begin{align}\label{Gy}
\begin{aligned}
    &Y_{+}(\zeta) = P_{\infty}^{-1} \tilde{\Phi}_+ \left( \frac{\zeta}{x} \right) e^{-x \theta(\zeta)} = P_{\infty}^{-1} \tilde{\Phi}_- \left( \frac{\zeta}{x} \right) G_{\tilde{\Phi}} e^{-x \theta(\zeta)}\\
    &\qquad \; \; = Y_{-}(\zeta) e^{x \theta(\zeta)} G_{\tilde{\Phi}} e^{-x \theta(\zeta)},\\
    & \Rightarrow G_{Y}=  e^{x \theta(\zeta)} G_{\tilde{\Phi}} e^{-x \theta(\zeta)}.
\end{aligned}
\end{align}
For convenience, let us define 
\begin{align}
    &\check{Q}_m^{(0,\infty)}= e^{x \theta(\zeta)} Q_m^{(0,\infty)} e^{-x \theta(\zeta)},\qquad m\in\frac13 \Z, \\
    & \check{E}_n=e^{x \theta(\zeta)} \tilde{E}_n e^{-x \theta(\zeta)},\qquad n=1,2. \label{check E}
\end{align}

The normalization condition of this problem becomes the following:
\begin{itemize}
    \item As $\zeta \to \infty$, we have
    \begin{align*}
        Y(\zeta) = I + \O(\zeta^{-1})
    \end{align*}
    by \eqref{asymptotic solution of phi tilde near infty}.
    \item As $\zeta \to 0$, we have
    \begin{align}
        Y(\zeta) =  \Omega e^{-2w} \Omega^{-1} + \O(\zeta) \label{Y at 0}
    \end{align}
    by \eqref{asymptotic solution of phi tilde near 0} and the fact $P_{\infty} = e^{w} \Omega^{-1}$.
\end{itemize}

Therefore we pose the following RHP.
\begin{framed}
    \begin{RHP}\label{RHP1}
    Find a  matrix-valued function $Y$ satisfying the following conditions 
    \begin{itemize}
    \item $Y(\zeta) \in H(\C \setminus \Gamma_2)$,  
    \item The jump conditions are
    \begin{align*}
    Y_{+}(\zeta) &= Y_{-}(\zeta)   G_{Y}  ,
    \end{align*}
    \item The normalization condition is 
    \begin{align*}
        Y(\zeta) &= \begin{dcases}
        I + \O(\zeta^{-1}) & \zeta \to \infty,\\
        \Omega e^{-2w} \Omega^{-1} + \O(\zeta) & \zeta \to 0.
        \end{dcases}
    \end{align*} 
\end{itemize}
Here the contour $\Gamma_2$ and jump matrices $G_Y$ are depicted in Figure \ref{Y problem}.
    \end{RHP}
\end{framed}

\begin{figure}[htbp]
    \centering
    \includegraphics[width=12cm]{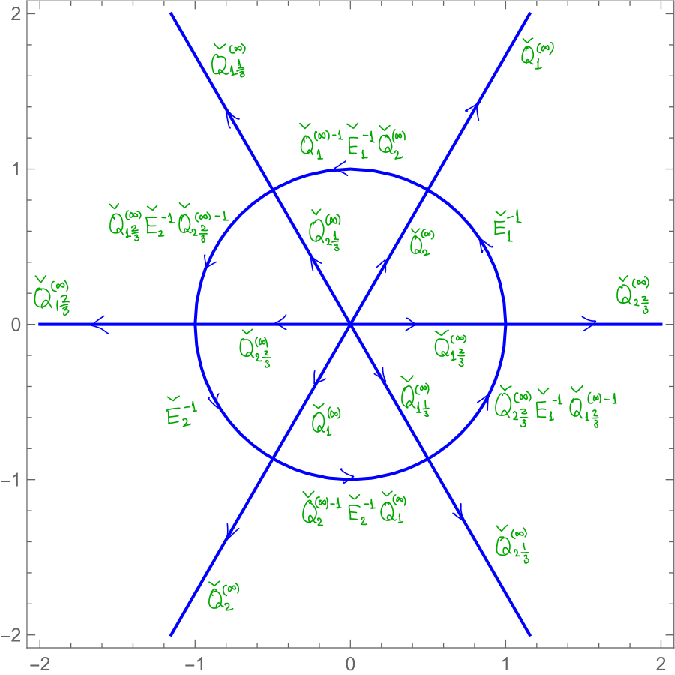}
    \caption{Contour $\Gamma_2$ and jump matrix $G_Y$   of $Y$.}
    \label{Y problem}
\end{figure}


\subsection{Further Deformation of the $Y$-problem}

Our goal is to solve the $Y$-problem asymptotically as $x \to \infty$. To this end, we will transform the $Y$-problem (if needed) to the one where jump matrices are close to the identity matrix with respect to the $L^2 \cap L^{\infty}$ norm so that one can apply the small norm theorem (see \cite{FIKN}, Theorem 8.1) and prove the asymptotic solvability of the RHP. Let us first analyze matrix $G_Y$.

\subsubsection{Large $x$ behavior of $G_{Y}$}\label{small norb sec}

Let 
\begin{equation}\label{phii} \begin{aligned} 
    \varphi_1(\zeta) &= (1 - \omega)\zeta - (1 - \omega^2)\frac{1}{\zeta} = \sqrt{3} \left(\zeta e^{- i\frac{\pi}{6}} - \frac{1}{\zeta} e^{i \frac{\pi}{6}}\right),\\
    \varphi_2(\zeta) &= (1 - \omega^2)\zeta - (1 - \omega)\frac{1}{\zeta} = \sqrt{3} \left(\zeta e^{i\frac{\pi}{6}} - \frac{1}{\zeta} e^{-i \frac{\pi}{6}}\right),\\
    \varphi_3(\zeta) &= (\omega - \omega^2)\zeta + (\omega - \omega^2)\frac{1}{\zeta} = i \sqrt{3} \left( \zeta + \frac{1}{\zeta} \right).
\end{aligned}\end{equation}
By \eqref{Gy}, we have
\begin{align}
    G_{Y} = \begin{pmatrix}
        * & * e^{-x \varphi_1 } & * e^{-x\varphi_2}\\
        * e^{x \varphi_1 } & * & * e^{-x \varphi_3}\\
        * e^{x \varphi_2} &  * e^{x \varphi_3} & *
    \end{pmatrix}, \label{jump matrices of Y problem}
\end{align}
where $*$ do not depend on $\zeta$.

To understand the sign-change of the real part of each exponent in \eqref{jump matrices of Y problem}, we shall compute the stationary points. First, consider $\varphi_1$. We have 
\begin{align}
&0 = \varphi_1'(\zeta) = \sqrt{3} (e^{-i\frac{\pi}{6}} + \zeta^{-2} e^{i \frac{\pi}{6}}) \quad \Rightarrow \quad \zeta = \pm e^{-i\frac{\pi}{3}},\\
&\varphi_1 \left( \pm e^{-i\frac{\pi}{3}} \right) = \mp 2\sqrt{3} i,\\
&\varphi_1''( \pm e^{-i\frac{\pi}{3}}) =   \pm 2 \sqrt{3} e^{i \frac{\pi}{6}}.
\end{align}
Let $\zeta = \xi + i\eta$, then 
\begin{align*}
    \Re (\varphi_1( \zeta )) =   \frac{(\sqrt{3} \xi+\eta)( \xi^2  + \eta^2-1) }{2(\xi^2 + \eta^2)}  .
\end{align*}
The zero level curve and the sign of $\Re(\varphi_1(\zeta))$ are shown in Figure \ref{Re phi_1}.
\begin{figure}[htbp]
\centering
\includegraphics[width=6cm]{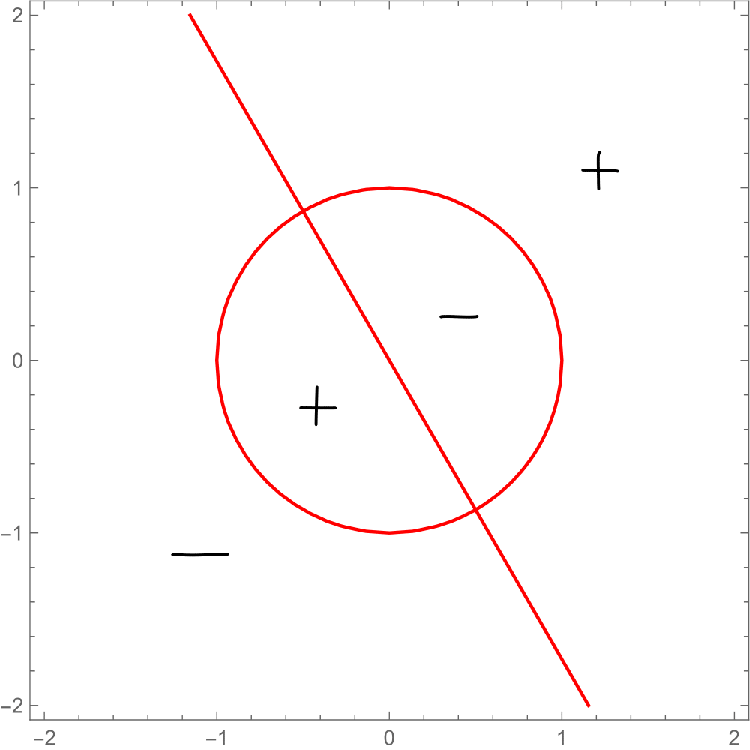}
\caption{The zero level curve and the sign of $\Re (\varphi_1(\zeta)) $.}
\label{Re phi_1}
\end{figure}

Next, consider  $\varphi_2$, we have 
\begin{align}
&0 = \varphi_2'(\zeta) = \sqrt{3} (e^{i\frac{\pi}{6}} + \zeta^{-2} e^{-i \frac{\pi}{6}}) \quad \Rightarrow\quad \zeta = \pm e^{i\frac{\pi}{3}},\\
 &\varphi_2 \left( \pm e^{i\frac{\pi}{3}} \right) = \pm 2\sqrt{3} i,\\
 &\varphi_2''( \pm e^{i\frac{\pi}{3}})   =  \pm 2 \sqrt{3} e^{-i \frac{\pi}{6}}.
\end{align}
and 
\begin{align*}
  \Re(\varphi_2(\zeta)) = \frac{(\sqrt{3} \xi-\eta)( \xi^2  + \eta^2-1) }{2(\xi^2 + \eta^2)}  .
\end{align*}
The zero level curve and the sign of $\Re(\varphi_2(\zeta))$ are shown in Figure \ref{Re phi_2}.
\begin{figure}[htbp]
\centering
\includegraphics[width=6cm]{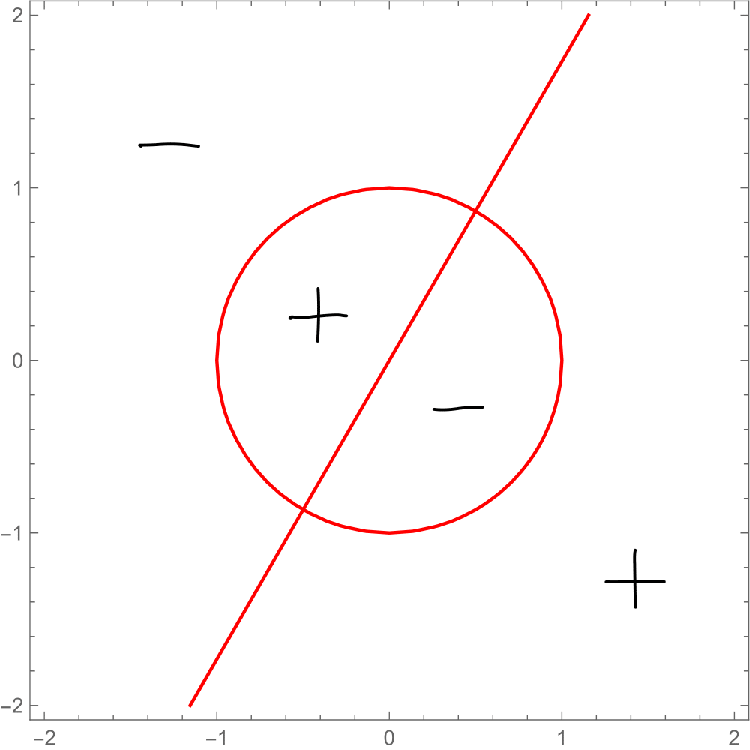}
\caption{The zero level curve and the sign of $\Re(\varphi_2(\zeta))$.}
\label{Re phi_2}
\end{figure}

Finally, consider $\varphi_3$; we have
\begin{align} 
&0 = \varphi_3'(\zeta) = i\sqrt{3} (1 - \zeta^{-2}) \quad \Rightarrow \quad \zeta = \pm 1,\\
&\varphi_3 \left( \pm 1 \right) = \pm 2\sqrt{3} i,\\
&\varphi_3''( \pm 1) = \pm 2 \sqrt{3} i,
\end{align}
 and 
\begin{align*}
    \Re (\varphi_3(\zeta))  = \sqrt{3}\left( -\eta + \frac{\eta}{\xi^2 + \eta^2} \right)  .
\end{align*}
The zero level curve and the sign of $\Re(\varphi_3(\zeta))$ are shown in Figure \ref{Re phi_3}.
\begin{figure}[htbp]
\centering
\includegraphics[width=6cm]{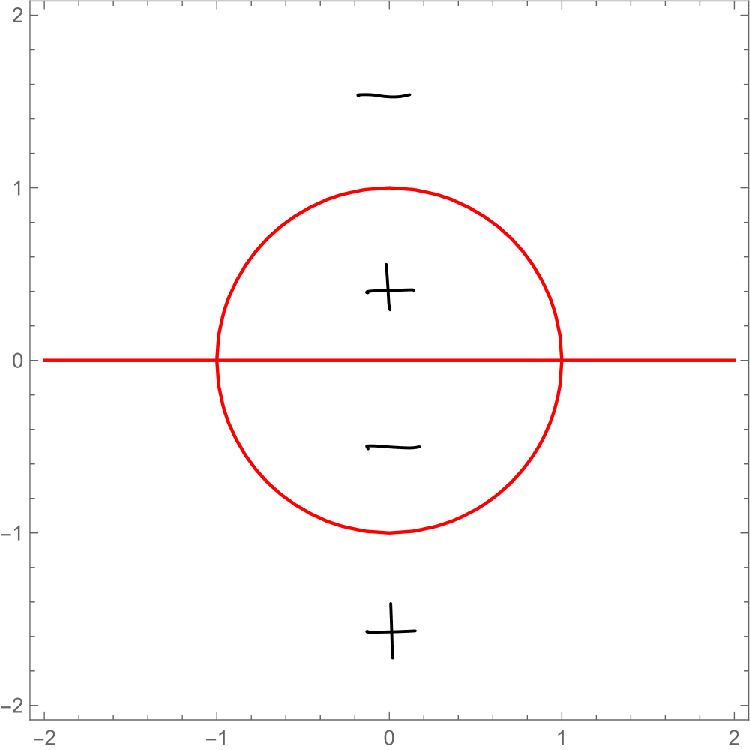}
\caption{The zero level curve and the sign of $\Re(\varphi_3(\zeta))$}
\label{Re phi_3}
\end{figure}

Using Figures \ref{Re phi_1}-\ref{Re phi_3}, one can verify that the jump matrices of the $Y$-problem on each ray approach the identity matrix in the $L^2 \cap L^{\infty}$ norm as $x \to \infty$. For instance,  
\begin{align*}
    e^{x \theta} Q_{1}^{(\infty)} e^{-x \theta} = 
    \begin{pmatrix}
        1 & a e^{-x \varphi_1(\zeta) } & 0\\
        0 & 1 & 0\\
        0 & 0 & 1
    \end{pmatrix},
\end{align*}
and from Figure \ref{Re phi_1} we have that  $|| ae^{-x \varphi_1(\zeta)} || \to 0$ when $\zeta$ belongs to the region described by Figure \ref{Y problem}. All other jumps on the rays can be checked similarly.

The remaining jump matrices to be considered are the ones on the unit circle, i.e., conjugated connection matrices. Since $G_Y$ has structure \eqref{jump matrices of Y problem} and each exponent is purely imaginary on the unit circle by \eqref{phii}, we see that they are oscillating (no decay), so we cannot apply the small norm theorem immediately to the $Y$-problem. To overcome this issue, we need to make further modifications.


\subsubsection{$\check{Y}$-problem}

First, we decompose the jump matrices of the $Y$-problem as follows
\begin{align}
\begin{aligned}
    & \check{E}_1^{-1}  = \check{L}_1 D_1 \check{R}_1, \\
    &\check{Q}_1^{(\infty)-1} \check{E}_1^{-1} \check{Q}_2^{(\infty)}  = \check{L}_2 D_2 \check{R}_2,\\ 
    & \check{Q}_{1\frac{2}{3}}^{(\infty)} \check{E}_2^{-1} \check{Q}_{2\frac{2}{3}}^{(\infty)-1}  = \check{L}_3 D_3 \check{R}_3, \\
    & \check{E}_2^{-1}  = \check{L}_4 D_4 \check{R}_4, \\
    & \check{Q}_2^{(\infty)-1} \check{E}_2^{-1} \check{Q}_1^{(\infty)}  = \check{L}_5 D_5 \check{R}_5, \\ 
    & \check{Q}_{2\frac{2}{3}}^{(\infty)} \check{E}_1^{-1} \check{Q}_{1\frac{2}{3}}^{(\infty)-1}  = \check{L}_6 D_6 \check{R}_6,
\end{aligned} \label{LDR decompositions}
\end{align}
so that $\check{L}_k$ and $\check{R}_k $ have diagonal entries equal to $1$ and non-diagonal entries exponentially decaying as $x \to \infty$, and $D_k$ are some constant diagonal matrices. We discuss it in subsection \ref{decomp of jump mat of Y-check prob}.

Then, one can transform the $Y$-problem to the following $\check{Y}$-problem defined in Figure \ref{Y check problem}. Such procedure is called  {\it opening lenses}.  We call the new oriented contour $\Gamma_3$ and the jump matrices $G_{\check{Y}}$. After this decomposition all jump matrices $G_{\check{Y}}$ except $D_k$'s tend to the identity matrix as $x \to \infty$. 

\begin{framed}
    \begin{RHP}\label{YcheckRHP}
    Find a  matrix-valued function $\check{Y}$ satisfying the following conditions 
    \begin{itemize}
    \item $\check{Y}(\zeta) \in H(\C \setminus \Gamma_3)$,  
    \item The jump conditions are
    \begin{align*}
    \check{Y}_{+}(\zeta) &= \check{Y}_{-}(\zeta)   G_{\check{Y}}  ,
    \end{align*}
    \item The normalization condition is 
    \begin{align*}
        \check{Y}(\zeta) &= \begin{dcases}
        I + \O(\zeta^{-1}) & \zeta \to \infty,\\
        \Omega e^{-2w} \Omega^{-1} + \O(\zeta) & \zeta \to 0.
        \end{dcases}
    \end{align*} 
    \end{itemize}
    \end{RHP}
\end{framed}

Next, we solve a model RHP that has jumps only on the unit circle and the corresponding jump matrices are   $D_k$'s. We call its solution a \textit{global parametrix} $\check{Y}^D(\zeta)$ and discuss it in subsection  \ref{global param}. Then the function $\check{Y}(\zeta) \left[ \check{Y}^D \right]^{-1}(\zeta)$ will have jump matrices which tend to the identity as $x \to \infty$. However, the convergence of such jump matrices to the identity matrix is not in the $L^{\infty}$ norm if $\zeta$ is near the stationary points.
This suggests constructing other model RHPs near the stationary points. We call their solutions \textit{local parametrices} $P^{(p_k)}(\zeta)$ where $p_k$'s are stationary points and discuss them in subsections \ref{param 1}, \ref{other param}. 
Finally, in subsection \ref{error sec} we consider a small norm RHP for an error function $R(\zeta)$  by comparing $\check{Y}(\zeta)$ with its parametrices.

\begin{rem}
In analogy with the classical steepest descent method, the main contribution to the asymptotics of $\check{Y}(\zeta)$ as $x \to \infty$ comes from the unit circle and from the neighborhood of the stationary points.
\end{rem}

\begin{figure}[htbp]
\centering
\includegraphics[width=12cm]{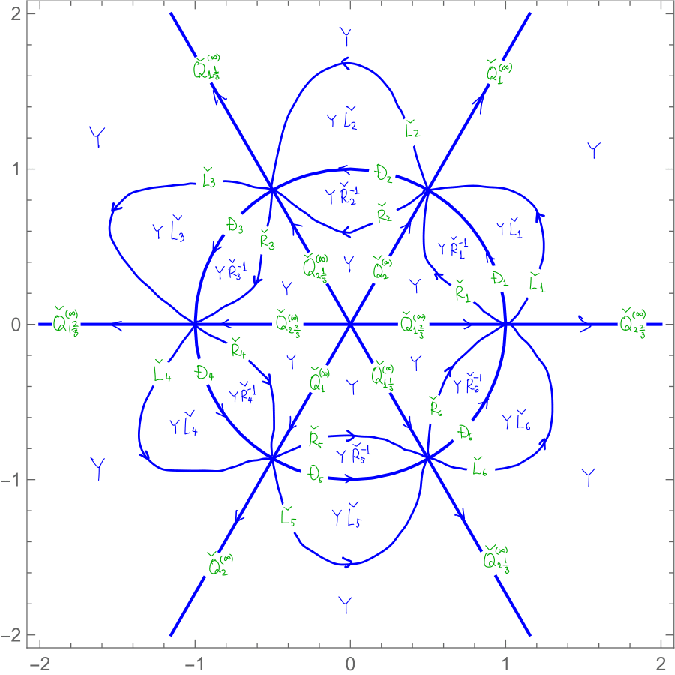}
\caption{Contour $\Gamma_3$ and jump matrices of the $\check{Y}$-problem.}
\label{Y check problem}
\end{figure}


\subsubsection{Decomposition of Jump Matrices on the Unit Circle} \label{decomp of jump mat of Y-check prob}

The structures of the $\check{L}_k$ and $\check{R}_k$ in \eqref{LDR decompositions}  are dictated by the  $\Re(\varphi_1)$, $\Re(\varphi_2)$, and $\Re(\varphi_3)$. 
For instance, $\check{L}_1$ and $\check{R_1}$ in        $\check{E}_1^{-1} =\check{L}_1 D_1 \check{R}_1 $, need to be of the form
\begin{align}\label{L1R1}
   \check{L}_1= \begin{pmatrix}
        1 & * e^{-x \varphi_1 } & * e^{-x\varphi_2}\\
        0  & 1 & 0\\
        0  &  * e^{x \varphi_3} & 1
    \end{pmatrix},\quad 
    \check{R}_1=
    \begin{pmatrix}
        1 & 0 & 0\\
        * e^{x \varphi_1 } & 1 & * e^{-x \varphi_3}\\
        * e^{x \varphi_2} &  0 & 1
    \end{pmatrix},
\end{align} 
since the non-diagonal entries are exponentially decaying when $\zeta$ belongs to the corresponding regions in Figure \ref{Y check problem}, as $x \to \infty$. All the other structures of $\check{L}_k$ and $\check{R}_k$ can be obtained similarly. One can verify that indeed such structured matrix decompositions exist by direct computations written in Appendix \ref{Appendix E decomp}. Thus, we obtain the following proposition.

\begin{prop}\label{Prop E decomp}
Recall $\check{L}_k$ and $\check{R}_k$ introduced in \eqref{LDR decompositions}. Let
\begin{align}\label{LLcheck}
    L_k\equiv e^{-x \theta(\zeta)} \check{L}_k e^{x \theta(\zeta)},\quad R_k\equiv e^{-x \theta(\zeta)}\check{R}_k e^{x \theta(\zeta)}.
\end{align} 
Then, one has
\begin{align}
\begin{aligned}
    L_1 &= \begin{pmatrix}
     1 & - \frac{B}{A^{\R}} - \omega^2 \frac{|B|^2}{(A^{\R})^2} & -\frac{\overline{B}}{A^{\R}}\\
     0 & 1 & 0\\
     0 & \omega^2 \frac{B}{A^{\R}} & 1
    \end{pmatrix}, \;
    D_1 = \begin{pmatrix}
        \frac{1}{3A^{\R}} & & \\
        & 3 A^{\R} & \\
        & & 1
    \end{pmatrix},\\
    R_1 &= \begin{pmatrix}
        1 & 0 & 0\\
        - \frac{\overline{B}}{A^{\R}} - \omega \frac{|B|^2}{(A^{\R})^2} & 1 & \omega\frac{\overline{B}}{A^{\R}}\\
        - \frac{B}{A^{\R}} & 0 & 1
    \end{pmatrix}.
\end{aligned} \label{parametrization of L_1 D_1 R_1 ver.2}
\end{align}
\begin{align}
\begin{aligned}
    L_2 &= \begin{pmatrix}
     1 & \omega\frac{\overline{B}}{A^{\R}} & 0\\
     0 & 1 & 0\\
     - \frac{B}{A^{\R}} & - \omega \frac{|B|^2}{(A^{\R})^2} - \omega s^{\R} - \frac{\overline{B}}{A^{\R}} & 1
    \end{pmatrix}, \;
    D_2 = \begin{pmatrix}
       1 & & \\
        & 3 A^{\R} & \\
        & & \frac{1}{3A^{\R}}
    \end{pmatrix},\\
    R_2 &= \begin{pmatrix}
        1 & 0 & - \frac{\overline{B}}{A^{\R}}\\
        \omega^2 \frac{B}{A^{\R}} & 1 & - \omega^2 \frac{|B|^2}{(A^{\R})^2} - \omega^2 s^{\R} - \frac{B}{A^{\R}}\\
        0 & 0 & 1
    \end{pmatrix}.
\end{aligned} \label{parametrization of L_2 D_2 R_2 ver.2}
\end{align}
\begin{align}
\begin{aligned}
    L_3 &= \begin{pmatrix}
     1 & 0 & 0\\
     \omega^2 \frac{B}{A^{\R}} & 1 & 0\\
     -\omega^2 \frac{|B|^2}{(A^{\R})^2} - \frac{B}{A^{\R}} & - \frac{\overline{B}}{A^{\R}} & 1
    \end{pmatrix}, \; 
    D_3 = \begin{pmatrix}
       3 A^{\R} & & \\
        & 1 & \\
        & & \frac{1}{3A^{\R}}
    \end{pmatrix},\\
    R_3 &= \begin{pmatrix}
        1 & \omega \frac{\overline{B}}{A^{\R}} & -\omega \frac{|B|^2}{(A^{\R})^2} - \frac{\overline{B}}{A^{\R}}\\
        0 & 1 & -\frac{B}{A^{\R}}\\
        0 & 0 & 1
    \end{pmatrix}.
\end{aligned} \label{parametrization of L_3 D_3 R_3 ver.2}
\end{align}
\begin{align}
\begin{aligned}
    L_4 &= \begin{pmatrix}
     1 & 0 & 0\\
    -\omega \frac{|B|^2}{(A^{\R})^2} - \omega s^{\R} - \frac{\overline{B}}{A^{\R}} & 1 & - \frac{B}{A^{\R}}\\
     \omega \frac{\overline{B}}{A^{\R}} & 0 & 1
    \end{pmatrix}, \;
    D_4 = \begin{pmatrix}
       3 A^{\R} & & \\
        & \frac{1}{3 A^{\R}} & \\
        & & 1
    \end{pmatrix},\\
    R_4 &= \begin{pmatrix}
        1 & -\omega^2 \frac{|B|^2}{(A^{\R})^2} - \omega^2 s^{\R} - \frac{B}{A^{\R}} & \omega^2 \frac{B}{A^{\R}}\\
        0 & 1 & 0\\
        0 & -\frac{\overline{B}}{A^{\R}} & 1
    \end{pmatrix}.
\end{aligned} \label{parametrization of L_4 D_4 R_4 ver.2}
\end{align}
\begin{align}
\begin{aligned}
    L_5 &= \begin{pmatrix}
     1 & 0 & \omega^2 \frac{B}{A^{\R}}\\
     -\frac{\overline{B}}{A^{\R}} & 1 & - \omega^2 \frac{|B|^2}{(A^{\R})^2} - \frac{B}{A^{\R}}\\
     0 & 0 & 1
    \end{pmatrix}, \; 
    D_5 = \begin{pmatrix}
       1 & & \\
        & \frac{1}{3A^{\R}} & \\
        & & 3 A^{\R}
    \end{pmatrix},\\
    R_5 &= \begin{pmatrix}
        1 & -\frac{B}{A^{\R}} & 0\\
        0 & 1 & 0\\
        \omega \frac{\overline{B}}{A} & - \omega \frac{|B|^2}{(A^{\R})^2} - \frac{\overline{B}}{A^{\R}} & 1
    \end{pmatrix}.
\end{aligned} \label{parametrization of L_5 D_5 R_5 ver.2}
\end{align}
\begin{align}
\begin{aligned}
    L_6 &= \begin{pmatrix}
     1 & - \frac{B}{A^{\R}} & -\omega \frac{|B|^2}{(A^{\R})^2} - \omega s^{\R} - \frac{\overline{B}}{A^{\R}}\\
     0 & 1 & \omega \frac{\overline{B}}{A^{\R}}\\
     0 & 0 & 1
    \end{pmatrix}, \; 
    D_6 = \begin{pmatrix}
       \frac{1}{3A^{\R}} & & \\
        & 1 & \\
        & & 3 A^{\R}
    \end{pmatrix}\\
    R_6 &= \begin{pmatrix}
        1 & 0 & 0\\
        -\frac{\overline{B}}{A^{\R}} & 1 & 0\\
        -\omega^2 \frac{|B|^2}{(A^{\R})^2} - \omega^2 s^{\R} - \frac{B}{A^{\R}} & \omega^2 \frac{B}{A^{\R}} & 1
    \end{pmatrix}.
\end{aligned} \label{parametrization of L_6 D_6 R_6 ver.2}
\end{align}
\end{prop}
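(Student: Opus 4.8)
The plan is to remove all the $x$-dependence by conjugation and thereby reduce the six factorizations in \eqref{LDR decompositions} to six factorizations of constant matrices. Since $\theta(\zeta)=-\zeta d_3+\frac1\zeta d_3^{-1}$ is diagonal, conjugation by $e^{x\theta(\zeta)}$ commutes with the diagonal factors $D_k$; moreover, by \eqref{check E} and the analogous definition of $\check Q_m^{(0,\infty)}$, one has $e^{-x\theta}\check E_n^{\pm1}e^{x\theta}=\tilde E_n^{\pm1}$ and $e^{-x\theta}\check Q_m^{(\infty)\pm1}e^{x\theta}=Q_m^{(\infty)\pm1}$. Hence, applying $e^{-x\theta}(\cdot)e^{x\theta}$ to each line of \eqref{LDR decompositions} and using \eqref{LLcheck}, every relation turns into an identity among constant matrices, e.g.\ $\tilde E_1^{-1}=L_1D_1R_1$ and $Q_1^{(\infty)-1}\tilde E_1^{-1}Q_2^{(\infty)}=L_2D_2R_2$, and so on. Thus it suffices to exhibit, for each of the six constant matrices, a factorization $L_kD_kR_k$ with $L_k,R_k$ of the stated shape and $D_k$ diagonal; the exponential decay of the off-diagonal entries of $\check L_k,\check R_k$ is then automatic.

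Next I would pin down the prescribed sparsity patterns of $L_k$ and $R_k$. The nonzero off-diagonal slots are exactly those which, after conjugating back by $e^{x\theta}$, carry a factor $e^{\mp x\varphi_j(\zeta)}$ that decays in the relevant lens region; these are read off from the signs of $\Re\varphi_1,\Re\varphi_2,\Re\varphi_3$ recorded in Figures \ref{Re phi_1}--\ref{Re phi_3}. For instance this dictates the ansatz \eqref{L1R1} for $\check L_1,\check R_1$, hence the displayed pattern of $L_1,R_1$ in \eqref{parametrization of L_1 D_1 R_1 ver.2}. Note that these patterns are not the usual lower/upper triangular ones, so what is required is a non-standard $LDR$ decomposition whose admissibility must be checked rather than taken for granted.

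With the pattern fixed, the entries are determined by linear algebra. Using the explicit form \eqref{parametrization of E_1 ver.2} of $E_1$ (together with $\tilde E_1^{-1}=\frac13E_1^{-1}C$, where $E_1^{-1}$ is computed from $\det E_1=-\frac1{27}$), I would substitute into $\tilde E_1^{-1}=L_1D_1R_1$ and solve slot by slot; the remaining five factorizations follow either by the same direct computation or, where convenient, by transporting the first one through the cyclic and (anti-)symmetry relations \eqref{antisymmetry for Q ver.1} among the $Q_m^{(\infty)}$ and $E_n$. The main obstacle — and really the only substantive point — is verifying that the diagonal factors collapse to the clean form with entries $3A^\R,\ \frac1{3A^\R},\ 1$: this is forced precisely by the two constraints \eqref{identity 7 at parametrization of E_1} and \eqref{identity 6 at parametrization of E_1} on $(A^\R,B,s^\R)$, which are exactly what permit the pivots to be expressed through $A^\R$ alone. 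Admissibility of each factorization then reduces to the single condition $A^\R\neq0$, which holds since $A^\R\geq\frac13$. The complete entry-by-entry verification is carried out in Appendix \ref{Appendix E decomp}.
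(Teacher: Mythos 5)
Your proposal is correct and follows essentially the same route as the paper's proof in Appendix \ref{Appendix E decomp}: conjugate by $e^{x\theta(\zeta)}$ to reduce \eqref{LDR decompositions} to constant factorizations such as $\tilde{E}_1^{-1}=L_1D_1R_1$, fix the sparsity patterns from the sign charts of $\Re\varphi_j$, solve entry by entry using the parametrization \eqref{parametrization of E_1 ver.2} together with the constraints \eqref{identity 7 at parametrization of E_1} and \eqref{identity 6 at parametrization of E_1} (which, as you correctly identify, are what collapse the pivots to $3A^\R$, $\tfrac{1}{3A^\R}$, $1$), and handle the remaining five cases similarly or via the symmetry relations. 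The only cosmetic difference is that the paper inverts the relation to $3CE_1=R_1^{-1}D_1^{-1}L_1^{-1}$ and parametrizes the inverse factors, thereby avoiding your explicit computation of $E_1^{-1}$ via $\det E_1=-\tfrac{1}{27}$; also note that $A^\R\neq0$ already follows from \eqref{identity 7 at parametrization of E_1}--\eqref{identity 6 at parametrization of E_1} alone, without invoking $A^\R\geq\tfrac13$.
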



\subsubsection{Model Riemann-Hilbert Problems}

Let us name the arcs of the unit circle in Figure \ref{arcs} by $C_k$ and the intersection points of the unit circle with the rays by $p_k$ where $k = 1, 2, \cdots, 6$. Thus,
\begin{align}
    p_1=1, \;\; p_2=-\bar{\omega},\;\; p_3=\omega,\;\; p_4=-1,\; p_5=\bar{\omega},\;\; p_6=-\omega. \label{defn of points}
\end{align}

\begin{figure}[htbp]
\centering
\includegraphics[width=8cm]{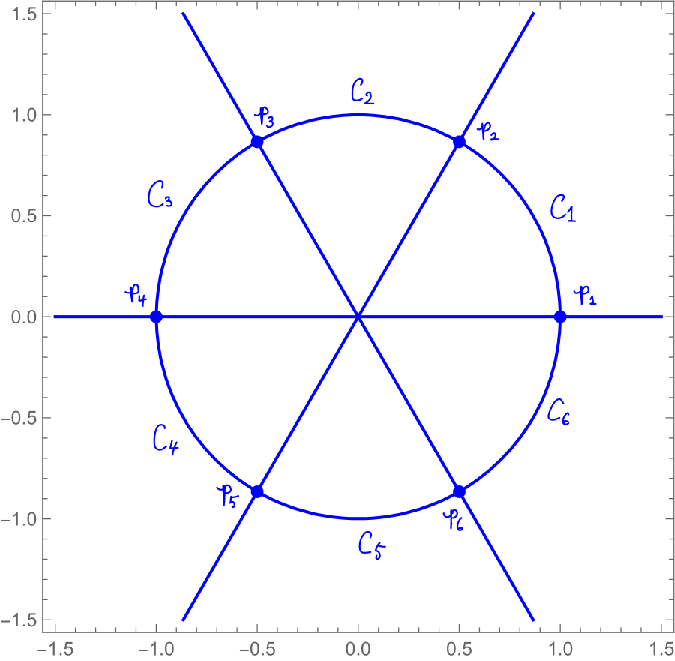}
\caption{Arcs $C_k$ and stationary points $p_k$.}
\label{arcs}
\end{figure}

The following sections of model RHPs consist of three parts. First, we solve a model RHP \ref{RHP2} on the unit circle $S^1 = \bigcup_{k=1}^{6} C_k$ to get the global parametrix $\check{Y}^D$. Then, we consider a model RHP near $p_1$ to get a local parametrix $P^{(1)}(\zeta)$. Lastly, we obtain the other local parametrices: 
\begin{align*}
    P^{(p_k)}(\zeta) \; \text{ for } \; k =2, \cdots, 6,
\end{align*}
from $P^{(1)}(\zeta)$ and symmetry relations.


\subsubsection{Global Parametrix}\label{global param}

Consider the following model RHP for the global parametrix.

\begin{framed}
\begin{RHP}\label{RHP2}
Find a matrix-valued function $\check{Y}^D(\zeta)$ satisfying
\begin{itemize}
    \item $\check{Y}^D(\zeta) \in H(\C \setminus S^{1})$.
    \item On the unit circle oriented counterclockwise, the following jump condition is valid:
    \begin{align*}
        \check{Y}^D_+(\zeta) = \check{Y}^D_-(\zeta) G_D,
    \end{align*}
    where the jump matrices are
    \begin{align*}
        G_D = \begin{dcases}
                D_1 & \text{if } \zeta \in C_1\\
                D_2 & \text{if } \zeta \in C_2\\
                D_3 & \text{if } \zeta \in C_3\\
                D_4 & \text{if } \zeta \in C_4\\
                D_5 & \text{if } \zeta \in C_5\\
                D_6 & \text{if } \zeta \in C_6\\
    \end{dcases}
    \end{align*}
    with $D_1, \cdots, D_6$ given in Proposition \ref{Prop E decomp} .
    \item The normalization condition is $$\check{Y}^D(\zeta) =I+\O(\zeta^{-1})\; \text{ as}\;  \zeta \to \infty.$$
\end{itemize}
\end{RHP}
\end{framed}

\begin{thm}
The solution of the RHP \ref{RHP2} is
\begin{align}
\begin{aligned}
    \check{Y}^{D}(\zeta) &= f_1(\zeta) f_2(\zeta) f_3(\zeta) f_4(\zeta) f_5(\zeta) f_6(\zeta)
\end{aligned} \label{global parametrix}
\end{align}
where
\begin{align*}
    f_1(\zeta) &= \left( \frac{\zeta - 1}{\zeta - 1/2 -i\sqrt{3}/2} \right)^{-\frac{1}{2 \pi i} \ln D_1}, \quad f_2(\zeta) = \left( \frac{\zeta - 1/2 -i\sqrt{3}/2}{\zeta + 1/2 -i\sqrt{3}/2} \right)^{-\frac{1}{2 \pi i} \ln D_2},\\
    f_3(\zeta) &= \left( \frac{\zeta + 1/2 -i\sqrt{3}/2}{\zeta + 1} \right)^{-\frac{1}{2 \pi i} \ln D_3},\quad f_4(\zeta) = \left( \frac{\zeta + 1}{\zeta + 1/2 + i\sqrt{3}/2} \right)^{-\frac{1}{2 \pi i} \ln D_4},\\
    f_5(\zeta) &= \left( \frac{\zeta + 1/2 + i\sqrt{3}/2}{\zeta - 1/2 + i\sqrt{3}/2} \right)^{-\frac{1}{2 \pi i} \ln D_5},\quad f_6(\zeta) = \left( \frac{\zeta - 1/2 + i\sqrt{3}/2}{\zeta -1} \right)^{-\frac{1}{2 \pi i} \ln D_6},
\end{align*}
and each $f_k(\zeta)$ is holomorphic in $\overline{\C} \setminus C_k$ and normalized so that $f_k (0) = D_k^{1/6}$ for $k = 1, 2, \cdots, 6$.
\end{thm}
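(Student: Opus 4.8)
The plan is to exploit the fact that every jump matrix $D_k$ in RHP \ref{RHP2} is diagonal, which makes the whole problem abelian. For a diagonal matrix $M$ and a scalar $z$ the power $z^{M}$ is again diagonal, so each proposed factor $f_k(\zeta)=\left(\frac{\zeta-p_k}{\zeta-p_{k+1}}\right)^{-\frac{1}{2\pi i}\ln D_k}$ is a diagonal matrix-valued function (here $p_7:=p_1$, and the arcs are labelled so that $C_k$ joins $p_k$ to $p_{k+1}$, as one reads off the linear factors appearing in each $f_k$). Since diagonal matrices commute, the product $\check{Y}^D=f_1\cdots f_6$ is diagonal and its order is immaterial for its value; consequently RHP \ref{RHP2} decouples into three independent scalar Riemann--Hilbert problems, one for each diagonal entry $j\in\{1,2,3\}$, with piecewise-constant \emph{positive} jump $(D_k)_{jj}$ on $C_k$ and normalization $1$ at $\zeta=\infty$.

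Each scalar problem I would solve by the classical logarithmic/Cauchy transform. Writing $y_j$ for the $j$-th entry and $g_j=\ln y_j$, the multiplicative jump $y_{j,+}=y_{j,-}(D_k)_{jj}$ becomes the additive scalar problem $g_{j,+}-g_{j,-}=\ln (D_k)_{jj}$ on $C_k$ with $g_j\to 0$ at infinity, whose unique solution is $g_j(\zeta)=\frac{1}{2\pi i}\sum_{k=1}^{6}\ln(D_k)_{jj}\int_{C_k}\frac{ds}{s-\zeta}$. The elementary evaluation $\int_{C_k}\frac{ds}{s-\zeta}=\ln\frac{\zeta-p_{k+1}}{\zeta-p_k}$ then gives $y_j(\zeta)=\prod_{k=1}^{6}\left(\frac{\zeta-p_k}{\zeta-p_{k+1}}\right)^{-\frac{1}{2\pi i}\ln(D_k)_{jj}}$, which is exactly the $(j,j)$-entry of the claimed $f_1\cdots f_6$, so reassembling the three entries reproduces \eqref{global parametrix}. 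Since the answer is already guessed, I would equally well just \emph{verify} the three defining conditions directly: analyticity of $\check{Y}^D$ on $\C\setminus S^1$ is clear because the branch cut of $f_k$ is taken along $C_k$, so $f_k$ is holomorphic on $\overline{\C}\setminus C_k$ while the remaining factors are holomorphic across $C_k$; and the normalization $\check{Y}^D(\zeta)\to I$ as $\zeta\to\infty$ holds because $\frac{\zeta-p_k}{\zeta-p_{k+1}}\to 1$ forces each $f_k\to I$.

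The only genuinely delicate step is the jump relation, which is a matter of branch and orientation bookkeeping. On $C_k$ only the factor $f_k$ is discontinuous, and because all factors are diagonal and commute, the jump of the product collapses to $(\check{Y}^D_-)^{-1}\check{Y}^D_+=f_{k,-}^{-1}f_{k,+}$, with the neighbouring factors contributing trivially. Crossing the cut $C_k$, the argument of $\frac{\zeta-p_k}{\zeta-p_{k+1}}$ advances by $2\pi$, so that, once the orientation of $C_k$ inherited from the counterclockwise unit circle is accounted for, $f_{k,-}^{-1}f_{k,+}=e^{2\pi i\left(-\frac{1}{2\pi i}\ln D_k\right)}=D_k$; the branch is pinned down unambiguously by the stated normalization $f_k(0)=D_k^{1/6}$, which is consistent since $\frac{p_k}{p_{k+1}}=e^{-i\pi/3}$ yields $f_k(0)=e^{\frac16\ln D_k}=D_k^{1/6}$. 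I expect this sign/branch chase to be the main obstacle: one must confirm that the jump lands on $D_k$ rather than $D_k^{-1}$ and that no spurious jumps appear on the neighbouring arcs or at the intersection points $p_k$.

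Finally, to conclude that this is \emph{the} solution I would address uniqueness. The crucial structural fact is that each $(D_k)_{jj}$ is a positive real number (being $1$, $3A^{\R}$, or $\frac{1}{3A^{\R}}$ with $A^{\R}>0$, as listed in Proposition \ref{Prop E decomp}), so each exponent $-\frac{1}{2\pi i}\ln(D_k)_{jj}$ is purely imaginary; hence near an endpoint $p_k$ the factor $f_k$ behaves like $(\zeta-p_k)^{i\beta}$ with $\beta\in\R$, which is of bounded modulus (merely oscillatory) and in particular has an integrable singularity. Applying Riemann's removable-singularity theorem and Liouville's theorem to the ratio $\check{Y}^D_1(\check{Y}^D_2)^{-1}$ of two candidate solutions then removes the endpoint singularities and forces the ratio to be $I$, establishing uniqueness and completing the proof.
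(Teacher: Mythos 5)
Your proposal is correct in substance and its verification half is essentially the paper's own proof: the paper also checks the three conditions directly, writing $f_k(\zeta)=\exp\bigl\{-\frac{\ln D_k}{2\pi i}L_{5\pi/6}\bigl(\frac{\zeta-p_k}{\zeta-p_{k+1}}\bigr)\bigr\}$, noting that the M\"obius map sends $C_k$ to the ray at angle $5\pi/6$, that $L_{5\pi/6,+}=L_{5\pi/6,-}-2\pi i$ produces the jump $D_k$, and that commutativity of the diagonal factors lets the neighbouring $f_j$ pass through, exactly as in your argument. What you add is genuinely worthwhile: the scalar Plemelj derivation $g_j(\zeta)=\frac{1}{2\pi i}\sum_k \ln (D_k)_{jj}\int_{C_k}\frac{ds}{s-\zeta}$ explains where the formula comes from instead of guessing it, and your uniqueness argument (each $(D_k)_{jj}>0$ forces purely imaginary exponents, hence bounded oscillatory behavior $(\zeta-p_k)^{i\beta}$ at the nodes, then removable singularities and Liouville applied to $\check{Y}^D_1\bigl(\check{Y}^D_2\bigr)^{-1}$) justifies the word ``the'' in the theorem, which the paper's proof silently omits --- though to be airtight it needs the standard proviso that the competing solution is assumed bounded (or at worst has integrable singularities) near the $p_k$, since RHP \ref{RHP2} as stated imposes no endpoint condition. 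One sign slip should be fixed: as written, $e^{2\pi i\left(-\frac{1}{2\pi i}\ln D_k\right)}=e^{-\ln D_k}=D_k^{-1}$, not $D_k$. With the counterclockwise orientation, the $+$ side of $C_k$ corresponds to the boundary value with $L_{5\pi/6,+}=L_{5\pi/6,-}-2\pi i$, i.e.\ the argument of $\frac{\zeta-p_k}{\zeta-p_{k+1}}$ \emph{decreases} by $2\pi$ in crossing from $-$ to $+$, whence $f_{k,-}^{-1}f_{k,+}=e^{-2\pi i z_k}=e^{\ln D_k}=D_k$ with $z_k=-\frac{1}{2\pi i}\ln D_k$; this is precisely the orientation bookkeeping you yourself flagged as the delicate point, and the conclusion stands once the sign of the argument jump is corrected.
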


\begin{proof}
We check that $\check{Y}^D(\zeta)$ defined by \eqref{global parametrix} satisfies RHP \ref{RHP2}. 

(i)
Observe that each $f_k(\zeta)$ is a composition of a linear fractional transformation and a power function:
\begin{align*}
    \zeta \mapsto g_k(\zeta) = \frac{\zeta - p_k}{\zeta - p_{k+1}} \mapsto g_k(\zeta)^{z_k},
\end{align*}
where $p_k$ and $p_{k+1}$ are the endpoints of the arc $C_k$ (see Figure \ref{arcs}), $z_k = -\frac{1}{2\pi i} \ln D_k$. Such $g_k$ maps the arc $C_k$ to the ray from the origin with angle $5\pi/6$. For each $k$, we select this ray as the branch cut of the complex logarithm to obtain
\begin{align}
f_k (\lambda) = \exp \left\{ -\frac{\ln D_k}{2 \pi i} L_{5\pi/6} \left( \frac{\zeta - p_k}{\zeta - p_{k+1}} \right) \right\}
\end{align}
where \( L_{5\pi/6}(z) := \ln|z| + iA_{5\pi/6} (z) \) and $A_{5\pi/6} (z)$ is the argument of \( z \) that belongs to \( ( -7\pi/6, 5\pi/6] \) (we will use the fact that \( L_{5\pi/6} \left( \frac{1}{2} - \frac{\sqrt{3}}{2}i \right) = -\frac{\pi i}{3} \) later). 

This proves that each $f_k$ is holomorphic everywhere except the arc $C_k$, and therefore, their product,
$$
\check{Y}^{D}(\zeta) = f_1(\zeta) f_2(\zeta) \cdots f_6(\zeta),
$$
is holomorphic everywhere except  $S^{1}$.

(ii)
Then, one can readily check that
\[
L_{5\pi/6, +}(z) = L_{5\pi/6, -}(z) - 2\pi i
\]
on the ray \( \{z \in \C \,|\, \arg(z) = 5\pi/6 \} \) oriented from the origin to infinity. Thus, it holds that
\begin{align}
    f_{k,+}(\zeta) = f_{k,-}(\zeta) D_k
\end{align}
on each \( C_k \). This immediately shows that the jump condition is fulfilled since all matrices are diagonal and commute. 

(iii) Finally, as \( L_{5\pi/6}(1) = 0 \), it holds that \( f_k (\infty) = I\) for each $k$, which shows that the normalization condition is satisfied and finishes the proof.
\end{proof}

\begin{rem}
In the neighborhood of $p_1 = 1$, we consider a small disk $U_1$ centered at that point. Introduce the following regions:
\begin{align*}
    K^{(1)} &= U_1 \cap \text{(outside of the unit circle)}\\
    K^{(2)} &= U_1 \cap \text{(inside of the unit circle)} \cap \text{(upper half plane)}\\
    K^{(3)} &= U_1 \cap \text{(inside of the unit circle)} \cap \text{(lower half plane)}.
\end{align*}

\begin{figure}[H]
    \centering
    \begin{subfigure}{.33\textwidth}
    \includegraphics[width=0.95\linewidth]{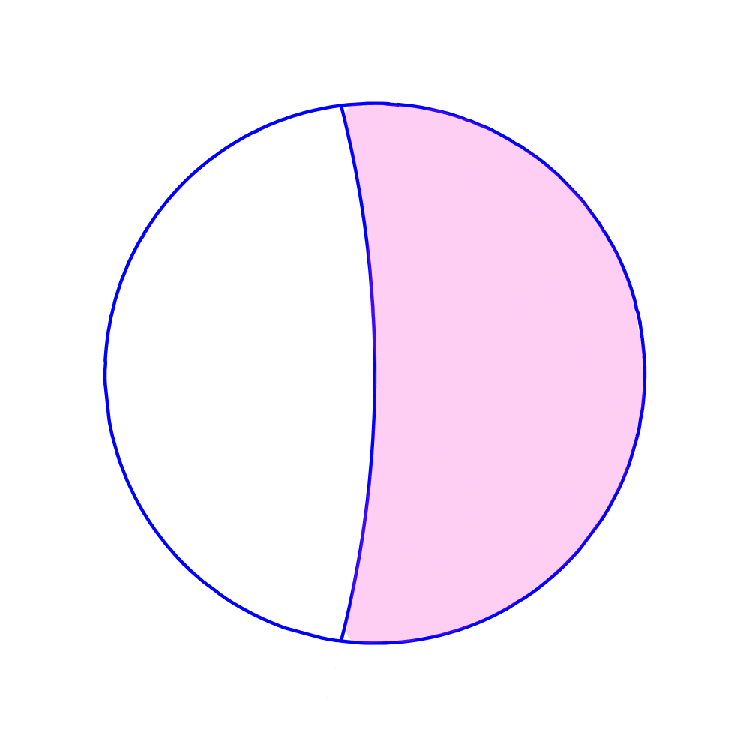}
    \caption{$K^{(1)}$}
    \label{Picture of K^{(1)}}
    \end{subfigure}%
    \begin{subfigure}{.33\textwidth}
    \includegraphics[width=0.95\linewidth]{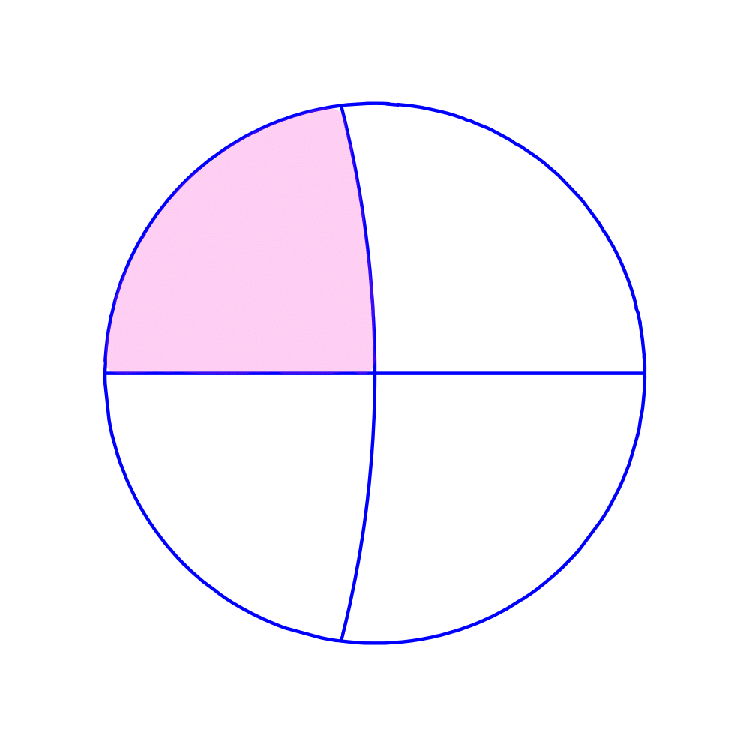}
    \caption{$K^{(2)}$}
    \label{Picture of K^{(2)}}
    \end{subfigure}
    \begin{subfigure}{.33\textwidth}
    \includegraphics[width=0.95\linewidth]{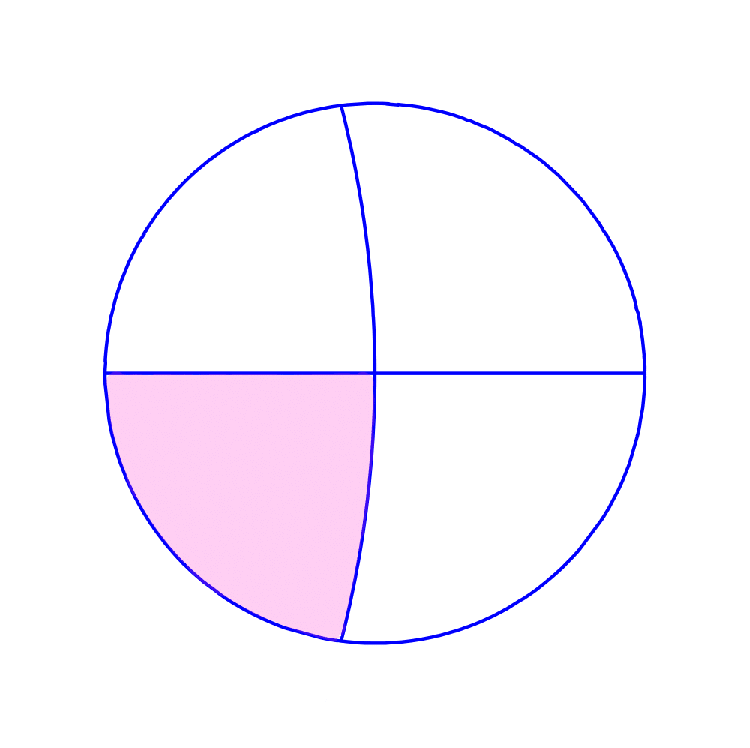}
    \caption{$K^{(3)}$}
    \label{Picture of K^{(3)}}
    \end{subfigure}
    
    \caption{Regions $K^{(i)}$}
    \label{Pictures of K^{(i)}}
\end{figure}

Originally, we chose $\overline{\C} \setminus C_k$ as the domain for a branch $f_k$. But here we let $(\zeta - 1)^{\nu}$ with $\nu = -\frac{1}{2 \pi i} \ln 3A^{\R}$ be a branch with the cut $\{ \zeta \leq 1\}$ where we take $\arg (\zeta - 1) \in (-\pi, \pi]$.
Then the global parametrix $\check{Y}^D(\zeta)$ admits the following representation in the neighborhood $U_1$ of $\zeta = 1$,
\begin{align}
\begin{aligned}
\check{Y}^D (\zeta) = \Theta (\zeta)
\begin{pmatrix}
1 & 0 & 0 \\
0 & (\zeta - 1)^{\nu} & 0\\
0 & 0 & (\zeta - 1)^{-\nu}
\end{pmatrix}
\begin{dcases}
    I & \text{for $\zeta \in K^{(1)}$}\\
    D_1 & \text{for $\zeta \in K^{(2)}$}\\
    D_6 & \text{for $\zeta \in K^{(3)}$},
\end{dcases}
\end{aligned} \label{asymptotics of Y^D}
\end{align}
where $\Theta (\zeta)$ is locally analytic for $\zeta \in U_1$: 
\begin{align*}
     &\Theta (\zeta)= F \left(I +\sum_{l=1}^\infty \Theta_l (\zeta-1)^l \right), \quad F =
    \begin{pNiceMatrix}[margin]
         e^{\pi i \nu} & & 0  & & & 0  & \\
         0 & & & \Block{2-2}{e^{\frac{\pi i \nu}{2}} \left( 2\sqrt{3}\right)^{-\nu \sigma_3} } \\
         0      
    \end{pNiceMatrix}.
\end{align*}

\end{rem}


\subsubsection{Local Parametrix near $\zeta=1$}\label{param 1}
 We are interested in finding a {\it local parametrix} $P^{(1)}(\zeta)$, such that  

\begin{itemize}
    \item $P^{(1)}(\zeta)$ has the same jumps as  the $\check{Y}$-problem in the disk $U_1$ of $\zeta=1$,
    \item on the boundary $\partial U_1$, we have the following matching relation
    \begin{align}
        &P^{(1)}(\zeta) = \check{Y}^D(\zeta) (I + \O(1/\sqrt{x})), \quad x \to \infty. \label{matching up cond appearing for the first time}
    \end{align} 
\end{itemize}

Recall that the contour and the jump matrices of the $\check{Y}$-problem near $\zeta = p_1$ (Note that $p_1 = 1$ from \eqref{defn of points}; we might switch from one to another but they both mean the same point) are the ones in Figure \ref{local picture of Phi0 check near 1}, where  
\begin{align*}
    \check{L}_k\equiv e^{x\theta}L_ke^{-x\theta},\quad \check{R}_k\equiv e^{x\theta}R_ke^{-x\theta},
\end{align*}
and the matrices $L_k, R_k$ are given in Proposition \ref{Prop E decomp}.
\begin{figure}[H]
    \centering
    \includegraphics[width=8cm]{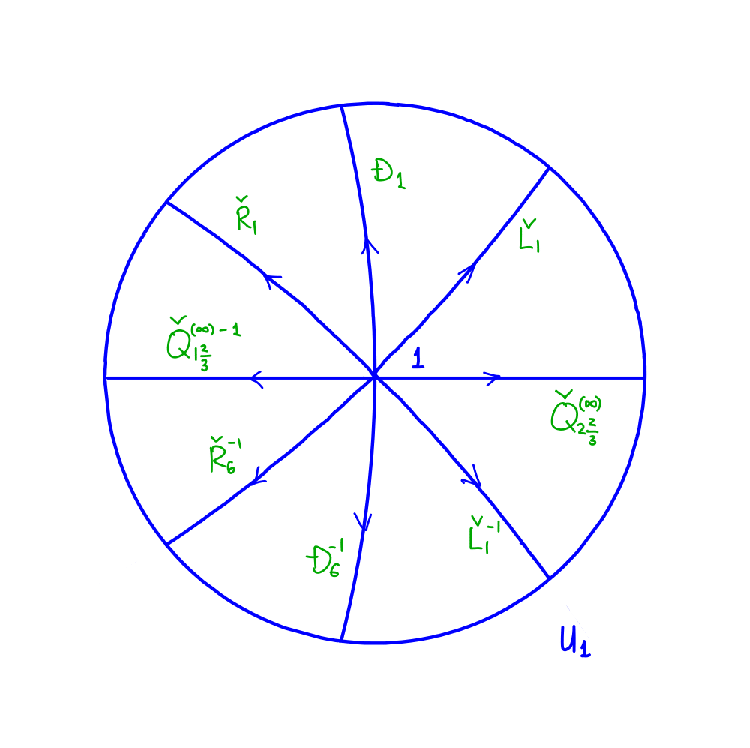}
    \caption{$\check{Y}$-problem near $\zeta=1$.}
    \label{local picture of Phi0 check near 1}
\end{figure}

Observe that, if we locally define $\check{Y}_{p_1}(\zeta)$ by the transformation shown in Figure \ref{defn of Phi1 check near 1}, where
\begin{align}
    \check{N} = \begin{pmatrix}
        1 & 0 & 0\\
        \frac{\overline{B}}{A^{\R}} e^{x \varphi_1} & 1 & 0\\
        \frac{B}{A^{\R}} e^{x \varphi_2} & 0 & 1
    \end{pmatrix}, \quad
    \check{M} = \begin{pmatrix}
        1 & -\frac{B}{A^{\R}} e^{-x \varphi_1} & -\frac{\overline{B}}{A^{\R}} e^{-x \varphi_2}\\
        0 & 1 & 0\\
        0 & 0 & 1
    \end{pmatrix}
\end{align}
then we obtain a new contour and jump matrices shown in Figure \ref{local picture of Phi1 check near 1}. There we introduced
\renewcommand\thesubfigure{\Alph{subfigure}}
\begin{figure}[H]
    \centering
    \begin{subfigure}{.5\textwidth}
    \includegraphics[width=1\linewidth]{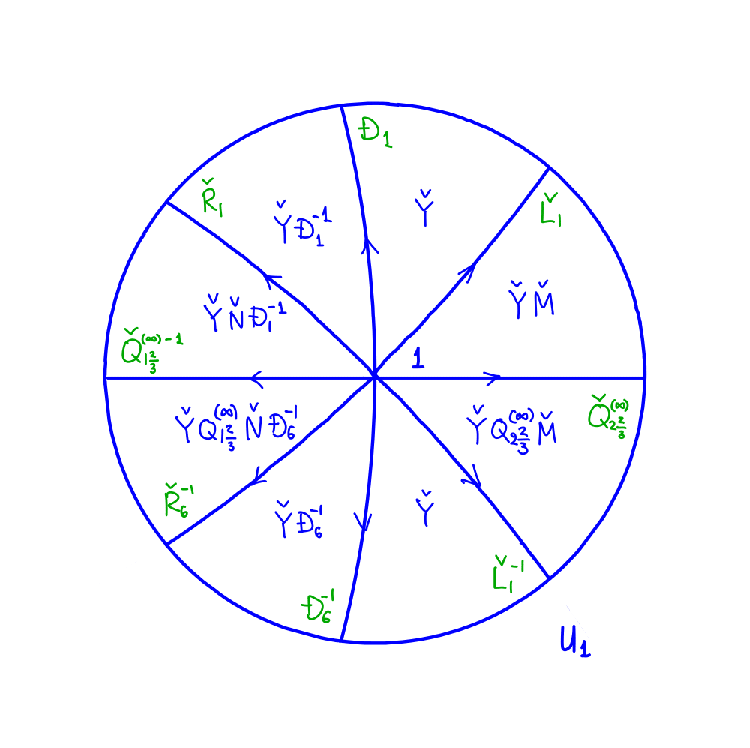}
    \caption{}
    \label{defn of Phi1 check near 1}
    \end{subfigure}%
    \begin{subfigure}{.5\textwidth}
    \includegraphics[width=1\linewidth]{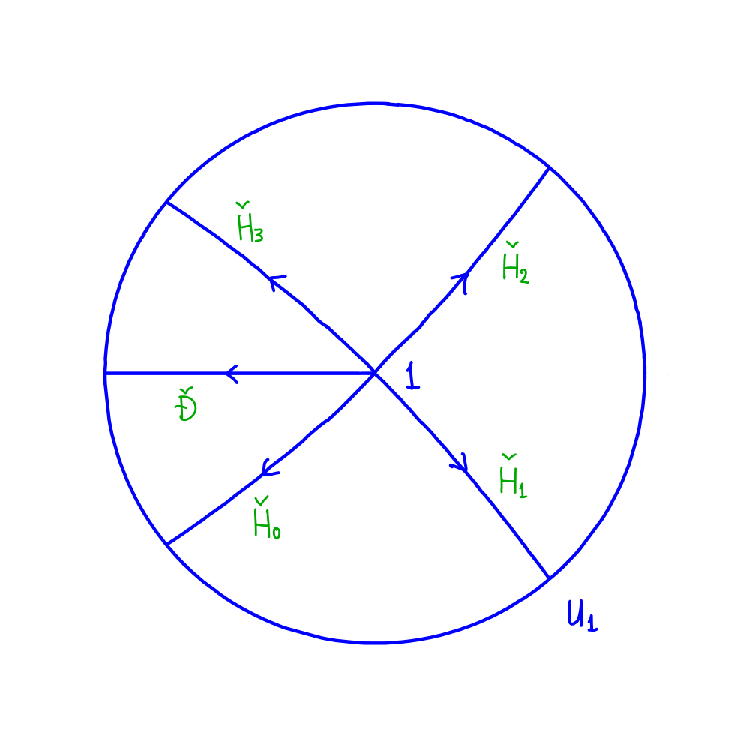}
    \caption{}
    \label{local picture of Phi1 check near 1}
    \end{subfigure}
    \caption{Definition and contour of $\check{Y}_{p_1}$.}
    \label{Local phi cont}
\end{figure}
\begin{equation} \label{ H matrix}
\begin{aligned}
    &\check{H}_2 = \check{M}^{-1} \check{L}_1 = \begin{pmatrix}
        1 & 0 & 0\\
        0 & 1 & 0\\
        0 & s_1 e^{x \varphi_3} & 1
    \end{pmatrix}, \\
    &\check{H}_3 = D_1 \check{R}_1 \check{N} D_1^{-1} = \begin{pmatrix}
        1 & 0 & 0 \\
        0 & 1 & 3 A^{\R} \overline{s_1} e^{-x\varphi_3}\\
        0 & 0 & 1
    \end{pmatrix},  \\    
    &\check{D} = D_1 D_6^{-1} = \begin{pmatrix}
        1 & 0 & 0\\
        0 & 3A^{\R} & 0\\
        0 & 0 & \frac{1}{3A^{\R}}
    \end{pmatrix}, \\
    &\check{H}_0 = D_6  \check{N}^{-1}   \check{Q}_{1\frac{2}{3}}^{(\infty)-1} \check{R}_6^{-1} D_6^{-1} = \begin{pmatrix}
        1 & 0 & 0\\
        0 & 1 & 0\\
        0 & -3 A^{\R} s_1 e^{x \varphi_3} & 1
    \end{pmatrix}, \\
    &\check{H}_1 = \check{L}_{6}^{-1} \check{Q}_{2\frac{2}{3}}^{(\infty) }  \check{M} = \begin{pmatrix}
        1 & 0 & 0\\
        0 & 1 & - \overline{s_1} e^{- x \varphi_3}\\
        0 & 0 & 1
    \end{pmatrix},   
\end{aligned}
\end{equation}
and  $s_1 = \omega^2 \frac{B}{A^{\R}}$. Note that $A^{\R}$ cannot be zero (see \eqref{identity 7 at parametrization of E_1} and \eqref{identity 6 at parametrization of E_1}), so $s_1$ is always well-defined.

Out of these observations, notice that we see that the local problem can be reduced to a $2\times 2$ setting because of the block structure of the jump matrices above. Moreover, near the stationary point $p_1$ (i.e., $\zeta = 1$) the exponent $\varphi_3(\zeta) - 2\sqrt{3} i$ has a second order zero:
\begin{align}\label{phi 3 near 1}
    \varphi_3(\zeta)=2\sqrt{3}i+\sqrt{3}i(\zeta-1)^2+ \O\left( (\zeta - 1)^3\right).
\end{align}
All these facts suggest the {\it parabolic cylinder function problem} as the candidate for the model RHP in the neighborhood of the point $p_1$. This intuitive idea can be implemented as follows.

Consider the following $2\times 2$ RHP for the function $\phi(z)$.

\begin{framed}
\begin{RHP}\label{RHP 3}
Find a matrix-valued function $\phi(z)$ satisfying the following conditions 
\begin{itemize}
    \item $\phi(z)$ is holomorphic in $\C \setminus \Gamma_4$ where $\Gamma_4$ is the jump contour defined in Figure \ref{defn of Psi1}.
    \item $\phi_+(z) = \phi_-(z) G_{\phi}$, where the jump matrix $G_{\phi}$ is given by \eqref{jump matrices for 2 by 2 problem}.
    \item $\phi (z) = (I + \O(1/z)) z^{\nu \sigma_3}$ as $z \to \infty$.
\end{itemize}
\end{RHP}
\end{framed}

\begin{equation}\label{jump matrices for 2 by 2 problem}
\begin{aligned}
    &H_2(z) =\begin{pmatrix}
        1 & 0\\
        s_1 e^{2\sqrt{3}ix-\frac{z^2}{2}} & 1
    \end{pmatrix} = \begin{pmatrix}
        1 & 0\\
        \widehat{s_1} e^{ - \frac{z^2}{2}} & 1
    \end{pmatrix} , \\
    &H_3(z)= \begin{pmatrix}
        1 & 3 A^{\R} \overline{s_1} e^{-2\sqrt{3}ix+\frac{z^2}{2}}\\
        0 & 1    
    \end{pmatrix} = \begin{pmatrix}
        1 & e^{- 2 \pi i \nu} \overline{\widehat{s_1}} e^{ \frac{z^2}{2}}\\
        0 & 1
    \end{pmatrix} , \\
    & D  = \begin{pmatrix}
         3A^{\R} & 0\\
         0 & \frac{1}{3A^{\R}}
     \end{pmatrix} = e^{- 2\pi i \nu \sigma_3}, \\
    & H_0(z)= \begin{pmatrix}
        1 & 0\\
        -3 A^{\R} s_1 e^{2\sqrt{3}ix-\frac{z^2}{2}} & 1
    \end{pmatrix} = \begin{pmatrix}
        1 & 0\\
        -e^{- 2 \pi i \nu} \widehat{s_1} e^{ - \frac{z^2}{2}} & 1
    \end{pmatrix}, \\
    &H_1(z)= \begin{pmatrix}
        1 & - \overline{s_1} e^{-2\sqrt{3}ix+\frac{z^2}{2}}\\
        0 & 1
    \end{pmatrix} = \begin{pmatrix}
        1 & - \overline{\widehat{s_1}} e^{ \frac{z^2}{2}}\\
        0 & 1
    \end{pmatrix}, 
\end{aligned} 
\end{equation}
where 
\begin{align*}
    &\widehat{s_1} = s_1 e^{2 \sqrt{3} ix}\\
    &\nu = -\frac{1}{2 \pi i} \ln 3A^{\R} = \frac{1}{2 \pi i } \ln (1 - |s_1|^2).
\end{align*}

\begin{figure}
    \centering
    \includegraphics[width=8cm]{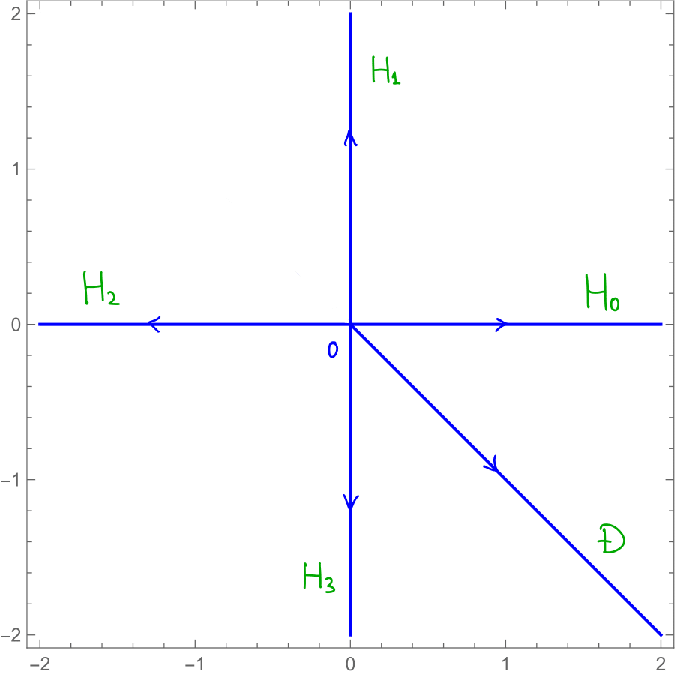}
    \caption{Contour $\Gamma_4$ and jump matrix $G_\phi$ of $\phi(z)$. }
    \label{defn of Psi1}
\end{figure}

One can solve this RHP using the parabolic cylinder function as demonstrated in section 1.5 in Chapter 2 of \cite{FIKN}. The asymptotic behavior of $\phi$ as $z \to \infty$ is
\begin{align}
    \phi(z) = \left( I + \frac{1}{z} \begin{pmatrix}
        0 &  -\alpha\\
        \nu/\alpha & 0\\
    \end{pmatrix} + \O(z^{-2}) \right) z^{\nu \sigma_3}, \label{parabolic cylinder asymptotic}
\end{align}
where
\begin{align*}
    \alpha &= - \frac{i}{ s_1 } \frac{\sqrt{2 \pi} e^{2\pi i \nu-2\sqrt3 ix }}{\Gamma(-\nu)}.
\end{align*}
Next, we introduce $Z(z)$ by
\begin{align}
    Z(z)=   \begin{pNiceMatrix}
        1 & 0 & 0 \\
        0 & \Block{2-2}{\phi(z)}\;\;\phantom{r} \\
        0 \\
    \end{pNiceMatrix},
\end{align}
so the asymptotic behavior of $Z(z)$   is
\begin{align}
   Z(z) = \left( I + \frac{1}{z} \begin{pNiceMatrix}
        0 & 0 & 0 \\
        0 & \Block{2-2}{m} \\
        0 \\
    \end{pNiceMatrix} + \O\left( z^{-2} \right) \right) \begin{pmatrix}
        1 & & \\
        & z^{\nu} & \\
        & & z^{-\nu}
    \end{pmatrix}, \quad \text{as } z\rightarrow \infty,
\end{align}
where we introduced 
\begin{align*}
    m = \begin{pmatrix}
        0 &  -\alpha\\
        \nu/\alpha & 0\\
    \end{pmatrix}.
\end{align*}
\begin{rem}
Observe that $s_1=0$ gives $B=0$, $A^\R=\frac13$, and $s^\R=0$. This implies $S_1^{(0, \infty)} = S_2^{(0, \infty)} = I$ and $E_1=\frac13 C$. Thus, the $Y$-problem of this case has no jump and Liouville's Theorem tells us that $Y(\zeta) = I$ entirely. Therefore, the corresponding solution to the radial Toda equation is $w_0(x)=i n \pi$, $n\in \Z$. With the reality condition, what we have here is the trivial solution and this agrees with the direct substitution of $s_1 = 0$ in \eqref{w_0(x) expression 1} below.
\end{rem}

The jump graph shown in Figure \ref{defn of Psi1}, if rotated by $\frac{3\pi}{4}$ clockwise, approximates the jump graph near the stationary point $p_1$ on Figure \ref{local picture of Phi1 check near 1}. Comparison of the quadratic terms in the exponentials in \eqref{ H matrix} and \eqref{jump matrices for 2 by 2 problem} suggests the relation 
\begin{align}
    x\varphi_3(\zeta)\sim2\sqrt{3}ix -\frac{z^2}{2} 
\end{align}
Let us define a local change of the variables by 
\begin{align}\label{change of var}
    z(\zeta) :&= \sqrt{2x} e^{i \frac{\pi}{2}} \sqrt{ \varphi_3(\zeta) - i 2 \sqrt{3}},
\end{align}
where its asymptotics is given by
\begin{align}
    z( \zeta ) \sim \sqrt{2x} e^{i \frac{3 \pi}{4}} 3^{1/4} (\zeta - 1) \label{z(zeta)}
\end{align}
using the branch determined to obtain \eqref{asymptotics of Y^D}.

Since $p_1$  is a stationary point, i.e. $\varphi_3(p_1)=0$,  equation \eqref{change of var} defines a holomorphic change of variables in some neighborhood $U_1$:
\begin{align}
    |\zeta-1|<r<1.
\end{align}
Observe that  $Z(\zeta):=Z(z(\zeta))$ has exactly the same jumps in $U_1$ as $\check{Y}_{p_1}$ does:
\begin{align}
    \check{H}_l(\zeta)= \begin{pNiceMatrix}
        1 & \phantom{f}0\phantom{f} & \phantom{f}0 \phantom{f} \\
        0 & \Block{2-2}{H_l(z(\zeta))}   \\
        0 \\
    \end{pNiceMatrix} , \quad \zeta\in U_1. 
\end{align}
Thus, applying the inverse transformation described in Figure \ref{Local phi cont}  to $Z(\zeta)$, we obtain $\check{Z}(\zeta)$, which has exactly the same jumps in $U_1$ as $\check{Y}(\zeta)$ does. Furthermore, $\check{Z}(\zeta)$ has the following asymptotics:  
\begin{align}
\begin{aligned}
\check{Z}(\zeta) = \left( I + \frac{1}{z(\zeta)} \begin{pNiceMatrix}
        0 & 0 & 0 \\
        0 & \Block{2-2}{m} \\
        0 \\
    \end{pNiceMatrix} + \O\left( z^{-2} \right) \right)
\begin{pmatrix}
1 &  &  \\
 & z(\zeta)^{\nu} & \\
 & & z(\zeta)^{-\nu}
\end{pmatrix}
\begin{dcases}
    I & \text{for $\zeta \in K^{(1)}$}\\
    D_1 & \text{for $\zeta \in K^{(2)}$}\\
    D_6 & \text{for $\zeta \in K^{(3)}$},
\end{dcases}
\end{aligned} \label{asymptotics of Phi^0}
\end{align}
as $x\rightarrow \infty$ and $0<r_0\leq |\zeta-1|\leq r_1<1$ (and hence $|z|\rightarrow \infty)$.

Finally, to get the matching with the asymptotics  \eqref{asymptotics of Y^D} of $\check{Y}^D(\zeta)$, we define the local parametrix $P^{(1)}(\zeta)$ near $\zeta=1$ by
\begin{align}
    P^{(1)}(\zeta) = V(\zeta) \check{Z}(\zeta), \label{defn of P^(1)}
\end{align}
 where 
\begin{align}
    V(\zeta) &= \check{Y}^D(\zeta) \begin{pmatrix}
        1 &  &  \\
         & z(\zeta)^{-\nu} & \\ 
         &  & z(\zeta)^{\nu}
    \end{pmatrix}
    \begin{dcases}
        I & \text{for $\zeta \in K^{(1)}$}\\
        D_1^{-1} & \text{for $\zeta \in K^{(2)}$}\\
        D_6^{-1} & \text{for $\zeta \in K^{(3)}$}
    \end{dcases} \label{defn of E(zeta)}
\end{align}
One can observe that  $V(\zeta)$ is holomorphic in $U_1$ and has the  following Taylor expansion: 
\begin{align*}
    V(\zeta) = \Theta (\zeta) \begin{pmatrix}
        1 & 0 & 0 \\
        0 & (\sqrt{2x} e^{i \frac{3 \pi}{4}} 3^{1/4} )^{-\nu} & 0\\
        0 & 0 & (\sqrt{2x} e^{i \frac{3 \pi}{4}} 3^{1/4})^{\nu}
    \end{pmatrix}
\end{align*}
by \eqref{asymptotics of Y^D} and \eqref{defn of E(zeta)}. Moreover, on the boundary of $U_1$, since  $\check{Z}(\zeta)$ has the asymptotic expansion \eqref{asymptotics of Phi^0}, we have
\begin{align}
\begin{aligned}
    &P^{(1)}(\zeta) = \\
    & \check{Y}^D(\zeta) \left(I + \frac{1}{z(\zeta)} 
    \left\{\!\begin{aligned}
        &I \\[1ex]
        &D_1^{-1} \\[1ex]
        &D_6^{-1}
    \end{aligned}\right\}
    \begin{pNiceMatrix}
        1 & 0 & 0 \\
        0 & \Block{2-2}{z^{-\nu \sigma_3}} \\
        0 \\
    \end{pNiceMatrix}
    \begin{pNiceMatrix}
        0 & 0 & 0 \\
        0 & \Block{2-2}{m} \\
        0 \\
    \end{pNiceMatrix}
    \begin{pNiceMatrix}
        1 & 0 & 0 \\
        0 & \Block{2-2}{z^{\nu \sigma_3}} \\
        0 \\
    \end{pNiceMatrix}
    \left\{\!\begin{aligned}
        &I \\[1ex]
        &D_1 \\[1ex]
        &D_6
    \end{aligned}\right\}
    + \O\left(z^{-2}\right) \right)\\
    &= \check{Y}^D(\zeta) (I + \O(1/\sqrt{x})). \label{matching up cond for parametrix near 1}
\end{aligned}
\end{align}
In other words, the matching condition \eqref{matching up cond appearing for the first time} is satisfied.


\subsubsection{Other Local Parametrices}\label{other param}

Similarly, for the other stationary points $\zeta = p_k$ where $k = 2,\cdots,6$, we are interested in finding a local parametrix  $P^{(p_k)}(\zeta)$, such that 
\begin{itemize}
    \item $P^{(p_k)}(\zeta)$ has the same jumps as the $\check{Y}$-problem in some neighborhood $U_{p_k}$ of $\zeta=p_k$,
    \item on the boundary $\partial U_{p_k}$, asymptotics of $P^{(p_k)}(\zeta)$ matches with the asymptotics  \eqref{asymptotics of Y^D} of $\check{Y}^D(\zeta).$ 
\end{itemize}

\begin{prop}
These local parametrices can be found using the following symmetries: 
\begin{align}
    &P^{(-1)}(\zeta) = d_3^{-1} \left[ P^{(1)}(-\zeta) \right]^{T-1} d_3. \label{local parametrix near -1}\\
    &P^{(-\overline{\omega})} (\zeta) = \Pi^{-1} P^{(-1)}(\omega \zeta) \Pi. \label{local parametrix near - omega bar}\\
    &P^{(\omega)} (\zeta) = \Pi P^{(1)}(\overline{\omega} \zeta) \Pi^{-1}. \label{local parametrix near omega}\\
    &P^{(\overline{\omega})}(\zeta) = d_3^{-1} \left[ P^{(-\overline{\omega})}(-\zeta) \right]^{T-1} d_3. \label{local parametrix near omega bar} \\
    &P^{(-\omega)} (\zeta) = d_3^{-1} \left[ P^{(\omega)}(-\zeta) \right]^{T-1} d_3. \label{local parametrix near minus omega}
\end{align}
\end{prop}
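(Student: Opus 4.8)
The plan is to obtain every $P^{(p_k)}$, $k=2,\dots,6$, by transporting the already-constructed parametrix $P^{(1)}$ through the two discrete symmetries of equation \eqref{first eq}, namely the anti-symmetry of Lemma \ref{sym sol 1} ($\zeta\mapsto-\zeta$) and the cyclic symmetry of Lemma \ref{lem3.5} ($\zeta\mapsto\omega\zeta$). After the dressing $\hat\Psi\to\Phi\to\tilde\Phi\to Y\to\check Y$, these descend to the two operators
\begin{align*}
\mathcal A\colon\ M(\zeta)&\mapsto d_3^{-1}\bigl[M(-\zeta)\bigr]^{T-1}d_3,\\
\mathcal C\colon\ M(\zeta)&\mapsto \Pi^{-1}M(\omega\zeta)\Pi,
\end{align*}
exactly as the matrix $\Delta$ of Lemma \ref{sym sol 1} was converted into a $d_3$-conjugation when passing from \eqref{antisymmetry for canonical solution at infty} to \eqref{antisymmetry for stokes matrices at infty}, using $\Delta P_\infty^{T-1}\frac{d_3}{3}=P_\infty$ and $d_3^{-1}P_\infty\Pi=P_\infty$. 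The five formulas \eqref{local parametrix near -1}--\eqref{local parametrix near minus omega} are precisely the images of $P^{(1)}$ under words in $\mathcal A$ and $\mathcal C^{\pm1}$: $\mathcal A$ sends $p_1\mapsto p_4$; then $\mathcal C$ sends $p_4\mapsto p_2$ and $\mathcal C^{-1}$ sends $p_1\mapsto p_3$; and a final application of $\mathcal A$ sends $p_2\mapsto p_5$ and $p_3\mapsto p_6$, so that all six stationary points of \eqref{defn of points} are reached.

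First I would check that both the $\check Y$-problem (RHP \ref{YcheckRHP}) and the global parametrix $\check Y^D$ (RHP \ref{RHP2}) are invariant under $\mathcal A$ and $\mathcal C$. For $\check Y$ this is inherited term-by-term from the symmetry relations for the canonical solutions and the jump matrices collected in the two boxed summaries, together with the behaviour $\theta(-\zeta)=-\theta(\zeta)$ and $\Pi^{-1}e^{x\theta(\omega\zeta)}\Pi=e^{x\theta(\zeta)}$ of the phase; by uniqueness of the RHP solution, $\check Y$ equals its own image. For $\check Y^D$ one checks that $\mathcal A$ and $\mathcal C$ permute the arcs $C_k$ and carry each diagonal jump $D_k$ to the jump prescribed on the image arc (these are the same index permutations that act on the $D_k$ in Proposition \ref{Prop E decomp}); uniqueness then gives the invariance of $\check Y^D$. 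Granting this, the two defining properties of $P^{(p_k)}$ transport automatically: since $P^{(1)}$ shares the jumps of $\check Y$ in $U_1$, its image under a symmetry shares the jumps of $\check Y$ in the image disk $U_{p_k}$; and since $P^{(1)}=\check Y^D\,(I+\O(1/\sqrt x))$ on $\partial U_1$ by \eqref{matching up cond for parametrix near 1}, applying the symmetry and using the invariance of $\check Y^D$ reproduces the matching $P^{(p_k)}=\check Y^D\,(I+\O(1/\sqrt x))$ on $\partial U_{p_k}$.

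The main obstacle is the bookkeeping in the invariance check of Step 1: one must verify arc-by-arc and ray-by-ray that $\mathcal A$ and $\mathcal C$ really do send each jump matrix of $G_{\check Y}$ to the jump prescribed on the image contour, keeping careful track of contour orientations (a factor $\zeta\mapsto-\zeta$ or $\zeta\mapsto\omega\zeta$ reverses or rotates them) and of the sheet conventions for $\arg\zeta$. A subtler point is to confirm that the multiplicative error $I+\O(1/\sqrt x)$ stays on the correct side after the transpose-inverse in $\mathcal A$: writing $P^{(1)}(-\zeta)=\check Y^D(-\zeta)\,(I+\O(1/\sqrt x))$ and applying $d_3^{-1}[\,\cdot\,]^{T-1}d_3$ yields $\check Y^D(\zeta)\cdot d_3^{-1}(I+\O(1/\sqrt x))^{T-1}d_3$, and since $(I+\O(1/\sqrt x))^{T-1}=I+\O(1/\sqrt x)$ and conjugation by the constant $d_3$ preserves the estimate, the error indeed remains a right factor $I+\O(1/\sqrt x)$. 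This last step works precisely because $\check Y^D$ satisfies the symmetry exactly rather than merely asymptotically, which is why the invariance established in Step 1 is essential.
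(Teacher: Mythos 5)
Your proposal is correct and follows essentially the same route as the paper: transport $P^{(1)}$ to the other stationary points via the anti-symmetry $\zeta\mapsto-\zeta$ (with $d_3^{-1}[\,\cdot\,]^{T-1}d_3$) and the cyclic symmetry $\zeta\mapsto\omega\zeta$ (with $\Pi$-conjugation), verifying that the jump matrices of the $\check{Y}$-problem map correctly (which the paper does by checking $Q_{2\frac23}^{(\infty)} = d_3^{-1}[Q_{1\frac13}^{(\infty)}]^{T-1}d_3$ and the explicit relations among the $L_k$, $R_k$, $D_k$) and that the matching with $\check{Y}^D$ is preserved, including your correct observation that the $(I+\O(1/\sqrt{x}))$ factor survives the transpose-inverse as a right factor because $\check{Y}^D$ satisfies the symmetry exactly. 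The only organizational difference is that you route the jump verification through a global invariance of the $\check{Y}$-problem via uniqueness, whereas the paper checks the local jump identities near the stationary points directly; both come down to the same computations.
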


\begin{proof}
Let $G^{(p_k)}(\zeta)$ denote the jump matrices for the $P^{(p_k)}$-problem.

Observe that we have the following sequence  of equivalent statements 
\begin{align}
    &P^{(-1)}_+(\zeta) = P^{(-1)}_-(\zeta) G^{(-1)}(\zeta) \nonumber\\ &\Leftrightarrow d_3^{-1} \left[ P^{(1)}_+(-\zeta) \right]^{T-1} d_3 = d_3^{-1} \left[ P^{(1)}_-(-\zeta) \right]^{T-1} d_3 G^{(-1)}(\zeta) \nonumber\\
    &\Leftrightarrow \left[ P^{(1)}_+(-\zeta) \right]^{T-1} = \left[ P^{(1)}_-(-\zeta) \right]^{T-1} d_3 G^{(-1)}(\zeta) d_3^{-1} \nonumber\\
    &\Leftrightarrow \left[ G^{(1)}(-\zeta) \right]^{T-1} = d_3 G^{(-1)}(\zeta) d_3^{-1} \nonumber\\
    &\Leftrightarrow G^{(-1)}(\zeta) = d_3^{-1} \left[ G^{(1)}(-\zeta) \right]^{T-1} d_3.\label{G at -1}
\end{align}
Thus we need to check if \eqref{G at -1} is satisfied for all jump matrices of the RHP near   $\zeta = 1$  and the ones near $\zeta=-1$. Because of the anti-symmetry for $Q^{(\infty)}_n$ we   have that 
\begin{align*}
    Q_{2\frac23}^{(\infty)} = d_3^{-1} \left[ Q_{1\frac13}^{(\infty)} \right]^{T-1} d_3,
\end{align*}
and direct computation  shows that indeed
\begin{align*}
    &L_3^{-1} = d_3^{-1} \left[ L_6 \right]^{T} d_3, \, L_4 = d_3^{-1} \left[ L_1 \right]^{T-1} d_3,\\
     & R_3^{-1} = d_3^{-1} \left[ R_6 \right]^{T} d_3, \, R_4 = d_3^{-1} \left[ R_1 \right]^{T-1} d_3,\\
       & D_4 = D_1^{-1}, D_3=D_6^{-1}.
\end{align*}
Moreover, the matching condition with $\check{Y}^D$ is automatically satisfied near $\zeta=-1$, since by \eqref{matching up cond for parametrix near 1} 
\begin{align*}
    P^{(-1)}(\zeta) &= d_3^{-1} \left[\check{Y}^D (- \zeta)\right]^{T-1} d_3 (I + \O(1/\sqrt{x}))\\
    &=\left[ \check{Y}^D (- \zeta)\right]^{-1}(I + \O(1/\sqrt{x}))\\
   &=\check{Y}^D (\zeta) (I + \O(1/\sqrt{x}))
\end{align*}
So,  $P^{(-1)}(\zeta)$ satisfies all   local parametrix conditions. 

Similar reasoning can be applied to prove the other relations, \eqref{local parametrix near - omega bar} -- \eqref{local parametrix near minus omega}.
\end{proof}


\subsection{Approximate Solution of the Original RHP as $x \to \infty$}\label{ Aprox solution section}

Let us call the union of all neighborhoods $U_{p_k}$ of the stationary points $\zeta=p_k$ from the previous subsection $U_{all}$:  
\begin{align*}
    U_{all} = U_1 \cup U_{- \overline{\omega}} \cup U_{\omega} \cup U_{- 1} \cup U_{\overline{\omega}} \cup U_{-\omega}.
\end{align*}
Then, we define a piecewise holomorphic function $\check{Y}^{(approx)} (\zeta)$ by
\begin{align}
    \check{Y}^{(approx)} (\zeta) = \begin{dcases}
    \check{Y}^D(\zeta) & \text{if } \zeta \in \C \setminus (S^1 \cup U_{all})\\
    P^{(1)}(\zeta) & \text{if } \zeta \in U_1\\
    P^{(- \overline{\omega})}(\zeta) & \text{if } \zeta \in U_{- \overline{\omega}} \\
    P^{(\omega)}(\zeta) & \text{if } \zeta \in U_{\omega} \\
    P^{(-1)}(\zeta) & \text{if } \zeta \in U_{-1} \\
    P^{(\overline{\omega})}(\zeta) & \text{if } \zeta \in U_{\overline{\omega}} \\
    P^{(- \omega)}(\zeta) & \text{if } \zeta \in U_{-\omega}
    \end{dcases}
\end{align}
which is discontinuous on $S^1 \setminus U_{all}$ and on $\partial U_{all}$.


\subsubsection{Riemann-Hilbert Problem for the Error Function}
\label{error sec}

Recall that in subsection \ref{small norb sec} we discussed that we cannot apply the small norm theorem to the $\check{Y}$-problem directly. Instead, using   $\check{Y}^{(approx)}(\zeta)$ constructed in the previous subsections, we would be able to apply the small norm theorem to  the {\it error function} $R(\zeta)$,
\begin{align}
    R(\zeta) := \check{Y}(\zeta) \left( \check{Y}^{(approx)}(\zeta) \right)^{-1}.
\end{align}
Observe that the error function $R(\zeta)$ solves the following RHP.

\begin{framed}
\begin{RHP}\label{RHP 4}
Find a matrix-valued function $R( \zeta )$ satisfying the following conditions:
\begin{itemize}
    \item $R( \zeta )$ is holomorphic in $\C \setminus \Gamma_5$ where the contour  $\Gamma_5$ is shown in Figure \ref{contour of R problem}.
    \item $R_+(\zeta ) = R_-( \zeta ) G_{R}$, where the jump matrix $G_{R}$ are
    \begin{align*}
        G_R = \begin{dcases}
            \check{Y}^D(\zeta) G_{\check{Y}} \left( \check{Y}^{D}(\zeta) \right)^{-1} & \text{if $\zeta \in \Gamma_5 \setminus \partial U_{all}$}\\
            \check{Y}^D(\zeta) \left( P^{(p_k)}(\zeta) \right)^{-1} & \text{if $\zeta \in \partial U_{p_k}$}, \; k= 1, \cdots, 6.
        \end{dcases}
    \end{align*}
    \item $R(\zeta ) = I + \O(1 / \zeta )$ as $\zeta \to \infty$.
\end{itemize}
\end{RHP}
\end{framed}

\begin{figure}
    \centering
    \includegraphics[width=8cm]{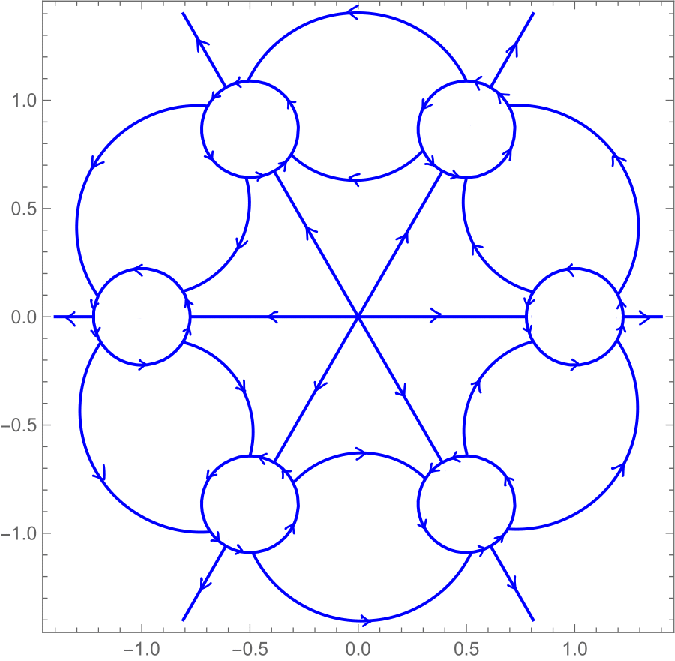}
    \caption{The jump contour $\Gamma_5$ of the $R$-problem.}
    \label{contour of R problem}
\end{figure}

Due to the structure of jump matrices of the $\check{Y}$-problem and the function $\check{Y}^{(approx)} (\zeta)$ 
and the matching up conditions such as \eqref{matching up cond for parametrix near 1}, we have that 
\begin{align}
    || G_R - I ||_{L^2(\Gamma_{5}) \cap L^{\infty}(\Gamma_{5}) } \leq c_1 \frac{1}{\sqrt{x}}, \quad x \to \infty. \label{small norm ineq}
\end{align}
Thus, by the small norm theorem, there is a unique solution $R(\zeta)$  of the RHP \ref{RHP 4} for $x$ large. Moreover, one can show the following fact:
\begin{thm} Given any point of $\mathcal{M} = \{(s^\R,y^\R) \}$ defined by \eqref{monodormy data} and for every $x \in (0, \infty)$, all the above introduced Rieman-Hilbert Problems are solvable.
\end{thm}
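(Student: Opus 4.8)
The plan is to reduce the statement to the solvability of a single Riemann--Hilbert problem and then transfer it along the chain of transformations. The model problems --- the global parametrix (RHP \ref{RHP2}), the parabolic-cylinder problem (RHP \ref{RHP 3}), and the local parametrices $P^{(p_k)}$ --- are already solved unconditionally, in closed form or in terms of special functions, so their solvability needs no further argument. It therefore remains to treat the ``full'' problems $\hat{\Psi}$, $\Phi$, $\tilde{\Phi}$, $Y$ (RHP \ref{RHP1}), $\check{Y}$ (RHP \ref{YcheckRHP}), and $R$ (RHP \ref{RHP 4}). These are connected by explicit, everywhere-invertible operations: right multiplication by products of the $Q$'s and by $C$, the scaling \eqref{defn of Y}, the opening of lenses by the unipotent factors $\check{L}_k,\check{R}_k$, and division by the invertible approximant $\check{Y}^{(approx)}$. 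Each operation is a bijection between the solution sets of consecutive problems, so it suffices to establish solvability of, say, the $R$-problem; unwinding the chain then yields all the others for the same $x$.

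For $x \ge x_0$ with $x_0$ large, the $R$-problem is solvable by the analysis just carried out: estimate \eqref{small norm ineq} gives $\| G_R - I\|_{L^2(\Gamma_5)\cap L^\infty(\Gamma_5)} = \O(x^{-1/2})$, and the small-norm theorem (\cite{FIKN}, Theorem 8.1) produces a unique $R(\zeta)$. The real content of the theorem is the extension of solvability to the entire half-line $0<x<\infty$, where the steepest-descent estimate is unavailable.

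Here I would argue by isomonodromy. Each of the full problems depends analytically on $x$, the dependence entering only through $\theta(\zeta)=-\zeta d_3+\zeta^{-1}d_3^{-1}$ in \eqref{defn of Y} and the conjugations \eqref{check E}, all entire in $x$; moreover every jump matrix has determinant one, so $\det Y \equiv 1$ and the associated singular integral operator is Fredholm of index zero (Chapter 8 of \cite{FIKN}). Consequently the $Y$-problem is uniquely solvable off a discrete set $\mathcal{Z}\subset(0,\infty)$ --- the zeros of the isomonodromic tau function --- and its solution is meromorphic in $x$ there. At any $x_\ast\in\mathcal{Z}$ the coefficient $A(\zeta)$ reconstructed from $Y$, hence $w_0$, would acquire a singularity. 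But on each interval of solvability the function $w_0$ read off from \eqref{Y at 0} solves the radial Toda equation \eqref{negative tt*-Toda with x when n=2}, and Proposition \ref{smoothness prop} guarantees that every such solution is smooth on all of $(0,\infty)$. Since the solvable region contains $[x_0,\infty)$, the reconstructed $w_0$ continues to a smooth solution throughout $(0,\infty)$; therefore $\mathcal{Z}=\emptyset$ and all the full problems are solvable for every $x>0$.

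The step I expect to be the main obstacle is closing this loop without circularity: one must verify that the solution produced for large $x$ continues through $(0,\infty)$ as a \emph{single} isomonodromic family, so that Proposition \ref{smoothness prop} applies to the continued object. This is secured by the analyticity of $Y$ in $x$ on the solvable set together with the fact that the data \eqref{monodormy data} are $x$-independent first integrals of \eqref{negative tt*-Toda with x when n=2}, which determines the continuation uniquely. A self-contained alternative that bypasses tau-function language is a direct vanishing lemma: show that the homogeneous $\check{Y}$-problem admits only the trivial solution for every $x$, using the reality symmetries $\overline{S_1^{(\infty)}}=C S_2^{(\infty)-1}C$ and $E_1=C\overline{E_1}C$ to build a sign-definite Hermitian form on the contour, after which Fredholm index zero upgrades uniqueness to existence. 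The genuinely higher-rank difficulty in this route is that the $3\times3$ jumps are not unitary, so the required positivity must be extracted from the explicit structure of Proposition \ref{E parametrization} and the constraints \eqref{identity 7 at parametrization of E_1}--\eqref{identity 6 at parametrization of E_1}.
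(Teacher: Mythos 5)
Your proposal is correct in outline, and its first half (small-norm solvability of the $R$-problem for $x\ge x_0$, plus the observation that the transformations $\hat{\Psi}\to\Phi\to\tilde{\Phi}\to Y\to\check{Y}\to R$ are invertible, so solvability propagates along the chain) coincides with the paper. Where you diverge is the continuation to all of $(0,\infty)$. The paper's argument is shorter and entirely constructive: the solvability for $x>x_0$ produces a solution $w_0(x)$ of \eqref{negative tt*-Toda with x when n=2} with the prescribed monodromy data; Proposition \ref{smoothness prop} extends this $w_0$ smoothly to the whole half-line; and then, for \emph{every} $x>0$, the canonical solutions of the linear system \eqref{first eq} built from the extended $w_0$ directly furnish a solution of the $\hat{\Psi}$-problem (the jump data being $x$-independent first integrals), hence of all the others. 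No Fredholm theory, tau function, or vanishing lemma is needed. Your route instead invokes meromorphic Fredholm theory to get solvability off a discrete set $\mathcal{Z}$ and then asserts that at $x_\ast\in\mathcal{Z}$ the reconstructed $w_0$ ``would acquire a singularity.'' That implication --- non-solvability of the RHP at $x_\ast$ forces a singularity of the continued isomonodromic solution at $x_\ast$ --- is exactly a Malgrange-divisor-type theorem, which for a $3\times 3$ system with irregular singularities is a substantive result you neither prove nor cite; as written, this is the one genuine soft spot in your argument. Notice, moreover, that its contrapositive (smooth $w_0$ at $x_\ast$ implies solvability at $x_\ast$) is precisely the paper's direct-monodromy construction; once you grant that step, the tau-function and analytic-Fredholm scaffolding becomes superfluous and your proof collapses to the paper's. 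So either supply the Malgrange correspondence explicitly, or replace that step by the direct construction --- in which case you recover the published proof. Your alternative suggestion (a vanishing lemma for the homogeneous $\check{Y}$-problem using the reality symmetries) would be a genuinely different and self-contained route, but you correctly flag that the non-unitary $3\times 3$ jumps make the required positivity nontrivial, and the paper does not pursue it.
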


\begin{proof}
Let $(s^{\R}, y^{\R}) \in \mathcal{M}$ be arbitrary. Then, for sufficiently large $x \in \R_{>0}$, say $x > x_0$, there exists a unique solution $R(\zeta)$ by the small norm theorem. Thus, all the other RH problems ($\check{Y}$, $Y$, $\Tilde{\Phi}$, $\Phi$, $\hat{\Psi}$ problems) are solvable for such $x$. It means that there exists a solution $w_0(x)$ of the radial Toda equation \eqref{negative tt*-Toda with x when n=2} for $x > x_0$ whose monodromy data $(s^{\R}, y^{\R})$ is given.
Since a solution $w_0(x)$ of \eqref{negative tt*-Toda with x when n=2} can be smoothly extended to the whole half line $x>0$ (see Proposition \ref{smoothness prop}), the Riemann-Hilbert Problem for $\hat{\Psi}$ will be solvable for all $x>0$ by the canonical solutions of the system \eqref{first eq} corresponding to the extended $w_0(x)$. Simultaneously, all the other RH problems will be solvable for all $x>0$.
\end{proof}


\subsubsection{Asymptotics Analysis of $w_0(x)$ as $x \to \infty$}\label{as at inf}

In this subsection, we will find the desired asymptotics of $w_0(x)$ as $x \to \infty$. First, we note that   by   definition  
\begin{align}
    \lim_{\zeta\rightarrow 0}   \check{Y} (\zeta) =\lim_{\zeta\rightarrow 0}   R(\zeta) \check{Y}^{D} (\zeta). \label{Y(0) = R(0) Y^D(0)}
\end{align}
From equation \eqref{global parametrix}, one can see that $\check{Y}^D(0) = I$. Thus, by the definition of $\check{Y}$ and \eqref{Y at 0}, we have that
\begin{equation}\begin{aligned}\label{R(0) asymptotic equation 3}
    &\lim_{\zeta\rightarrow 0}   {R}(\zeta)  =\lim_{\zeta\rightarrow 0}   \check{Y}(\zeta)= \Omega e^{-2w} \Omega^{-1}  \\
    &= \frac{1}{3} \begin{pmatrix}
        1 + e^{2 w_0} + e^{-2 w_0} & \omega^2 + \omega e^{2 w_0} + e^{-2 w_0} & \omega + \omega^2 e^{2 w_0} + e^{-2 w_0} \\
        \omega + \omega^2 e^{2 w_0} + e^{-2 w_0} & 1 + e^{2 w_0} + e^{-2 w_0} & \omega^2 + \omega e^{2 w_0} + e^{-2 w_0}\\
        \omega^2 + \omega e^{2 w_0} + e^{-2 w_0} & \omega + \omega^2 e^{2 w_0} + e^{-2 w_0} & 1 + e^{2 w_0} + e^{-2 w_0}
    \end{pmatrix}.
\end{aligned}\end{equation}

On the other hand, we can write $R(\zeta)$ in a form of Cauchy integral,
\begin{align}
    R(\zeta) = I + \frac{1}{2 \pi i } \int_{\Gamma_{5}} \frac{\rho(\zeta') (G_R (\zeta') - I)}{\zeta' - \zeta} d\zeta', \; \zeta \notin \Gamma_{5}, \label{singular integral equation}
\end{align}
where $\displaystyle{\rho(\zeta') = R_{-}(\zeta') = \lim_{\text{\{$-$ side of $\Gamma_5$\}} \ni \lambda \to \zeta'} R(\lambda)}$ for $\zeta' \in \Gamma_5.$
Observe that inequality \eqref{small norm ineq} allows us to apply the small norm theorem, which in turn implies
\begin{align}
    ||\rho - I||_{L^2(\Gamma_{5})} \leq c_2 \frac{1}{\sqrt{x}}. \label{rho - I has a small norm}
\end{align}

Taking the limit $\zeta \to 0$ of \eqref{singular integral equation} gives
\begin{align}
\begin{aligned}
    \lim_{\zeta\rightarrow 0} R(\zeta) &= I + \frac{1}{2\pi i} \int_{\Gamma_{5}} \frac{\rho(\zeta')(G_{R}(\zeta') - I)}{\zeta'} d\zeta'\\
    &= I + \frac{1}{2\pi i} \int_{\Gamma_{5}} \frac{G_{R}(\zeta') - I}{\zeta'} d\zeta' + \frac{1}{2\pi i} \int_{\Gamma_{5}} \frac{(\rho(\zeta') - I)(G_{R}(\zeta') - I)}{\zeta'} d\zeta'\\
    &= I + \frac{1}{2\pi i} \int_{\Gamma_{5}} \frac{G_{R}(\zeta') - I}{\zeta'} d\zeta' + \O \left(\frac{1}{x} \right),
\end{aligned} \label{R(0) asymptotic equation 1}
\end{align}
where for the last equality we used the following estimate
\begin{align*}
      \left| \int_{\Gamma_{5}} \frac{(\rho(\zeta') - I)(G_{R}(\zeta') - I)}{\zeta'} d\zeta' \right| &\leq \frac{1}{\text{dist}(\Gamma_5, 0)} || \rho - I ||_{L^2(\Gamma_5)} || G_R - I ||_{L^2(\Gamma_5)}\\
      &\leq c_3 \frac{1}{x}
\end{align*}
due to the Cauchy Schwarz inequality, \eqref{small norm ineq}, and \eqref{rho - I has a small norm}.

For convenience, let us  denote the boundary of $U_{p_k}$'s as follows:
\begin{align*}
    \gamma_1 = \partial U_1, \;
    \gamma_2 = \partial U_{- \overline{\omega}},\;
    \gamma_3 = \partial U_{\omega}, \;
    \gamma_4 = \partial U_{- 1}, \;
    \gamma_5 = \partial U_{\overline{\omega}},\;
    \gamma_6 = \partial U_{-\omega}.
\end{align*}
Let $\widetilde{\Gamma_5} := \Gamma_5 \setminus \cup_{k = 1}^6 \gamma_k$. Since jump matrices on $\widetilde{\Gamma_5}$ are exponentially close to the identity matrix, we can rewrite the last equation \eqref{R(0) asymptotic equation 1}
as
\begin{align}
    \lim_{\zeta\rightarrow 0} R(\zeta) &= I + \sum_{k = 1}^{6} \frac{1}{2 \pi i} \int_{\gamma_k} \frac{G_R (\zeta') - I}{\zeta'} d \zeta' + \O \left( \frac{1}{x} \right). \label{R(0) asymptotic equation 2}
\end{align}

\begin{prop}\label{Residue prop}
    One has 
    \begin{align}
    \lim_{\zeta\rightarrow 0} R(\zeta) &= \begin{pmatrix}
        1 & \widehat{\alpha} - \omega \widehat{\beta} & \widehat{\beta} - \omega^2 \widehat{\alpha}\\
        \widehat{\beta} - \omega^2 \widehat{\alpha} & 1 & \widehat{\alpha} - \omega \widehat{\beta}\\
        \widehat{\alpha} - \omega \widehat{\beta} & \widehat{\beta} - \omega^2 \widehat{\alpha} & 1
    \end{pmatrix} + \O\left( \frac{1}{x} \right), \label{R(0) asymptotic equation 4}
\end{align}
where 
\begin{align*}
    &\widehat{\alpha} = \frac{\alpha}{\sqrt{2x}} (24 \sqrt{3} x)^{- \nu} 3^{-1/4} e^{-\frac{3 \pi i}{4} - \frac{3 \pi i}{2} \nu},\\
    &\widehat{\beta} = -\frac{1}{\sqrt{2x}} \frac{\nu}{\alpha} (24 \sqrt{3} x)^{\nu} 3^{-1/4} e^{-\frac{3 \pi i}{4} + \frac{3 \pi i}{2} \nu},\\
    &\alpha = - \frac{i}{s_1 e^{i 2\sqrt{3} x}} \frac{\sqrt{2 \pi} e^{2\pi i \nu}}{\Gamma(-\nu)}, \; \nu  = \frac{1}{2 \pi i } \ln (1 - |s_1|^2), \; s_1 = \omega^2 \frac{B}{A^{\R}}.
\end{align*}
\end{prop}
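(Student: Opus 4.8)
The plan is to evaluate the six contour integrals in \eqref{R(0) asymptotic equation 2} by residues, reducing everything to a single model computation at $\zeta=1$ and then propagating the answer to the remaining stationary points via the parametrix symmetries. First I would treat the disk $U_1$. On $\gamma_1=\partial U_1$ the jump is $G_R=\check{Y}^D(P^{(1)})^{-1}$, and the matching \eqref{matching up cond for parametrix near 1} writes $P^{(1)}=\check{Y}^D\bigl(I+\tfrac{1}{z(\zeta)}\Xi(\zeta)+\O(z^{-2})\bigr)$, where $\Xi$ is the embedded parabolic-cylinder matrix $m$ conjugated by the diagonal factors $\diag(1,z^{-\nu},z^{\nu})$ and the blocks $D_1^{\pm1},D_6^{\pm1}$ appearing there. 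Hence $G_R-I=-\tfrac{1}{z(\zeta)}\check{Y}^D\Xi(\check{Y}^D)^{-1}+\O(1/x)$. Since all these $z^{\pm\nu}$ and $D$-factors are diagonal, they commute through and cancel against the corresponding factors hidden in $\check{Y}^D$ by \eqref{defn of E(zeta)}, leaving
\[
G_R-I=-\tfrac{1}{z(\zeta)}\,V(\zeta)\,\widetilde{m}\,V(\zeta)^{-1}+\O(1/x),
\]
with $V$ the \emph{holomorphic} prefactor of \eqref{defn of E(zeta)} and $\widetilde{m}$ the constant embedding of $m$ in the lower $2\times2$ block. Because $V$ is analytic and invertible in $U_1$ and $z(\zeta)$ has a simple zero only at $\zeta=1$ (see \eqref{z(zeta)}), the integrand $\tfrac{1}{\zeta'}(G_R-I)$ is meromorphic in $U_1$ with a single simple pole at $\zeta'=1$, so $\tfrac{1}{2\pi i}\oint_{\gamma_1}\tfrac{G_R-I}{\zeta'}\,d\zeta'$ equals the residue there (with the orientation of Figure~\ref{contour of R problem}).

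The residue is then pure phase bookkeeping. Using $z(\zeta)\sim c(\zeta-1)$ with $c=\sqrt{2x}\,e^{i3\pi/4}3^{1/4}$ (so $c^2=-2\sqrt{3}\,ix$) and $V(1)=F\,\diag(1,c^{-\nu},c^{\nu})$ diagonal, one finds that $V(1)\widetilde{m}V(1)^{-1}$ has $(2,3)$-entry $-\alpha(2\sqrt{3})^{-2\nu}c^{-2\nu}$ and $(3,2)$-entry $(\nu/\alpha)(2\sqrt{3})^{2\nu}c^{2\nu}$. The identity $(2\sqrt{3})^3=24\sqrt{3}$ collapses $(2\sqrt{3})^{-2\nu}c^{-2\nu}$ to $(24\sqrt{3}x)^{-\nu}e^{-3\pi i\nu/2}$, and multiplying by the residue prefactor $-1/c$ yields exactly $\widehat{\alpha}$ in position $(2,3)$ and $\widehat{\beta}$ in position $(3,2)$, with all other entries zero. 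This is the single core computation of the proposition.

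Finally I would obtain the five remaining contributions from the parametrix symmetries \eqref{local parametrix near -1}--\eqref{local parametrix near minus omega}. Each relation, of the form $P^{(p_k)}(\zeta)=d_3^{-1}[P^{(\cdot)}(-\zeta)]^{T-1}d_3$ or $P^{(p_k)}(\zeta)=\Pi^{\pm1}P^{(\cdot)}(\omega^{\pm1}\zeta)\Pi^{\mp1}$, together with the analogous symmetry of $\check{Y}^D$ already used in the proof near $\zeta=-1$, transforms $G_R|_{\partial U_{p_k}}-I$ into a conjugate (and possibly transpose) of $G_R|_{\gamma_1}-I$ evaluated at the rotated argument. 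Under the residue these conjugations permute the positions of $\widehat{\alpha},\widehat{\beta}$ and weight them by powers of $\omega$. Summing the six contributions should assemble the circulant matrix of \eqref{R(0) asymptotic equation 4}; and since each integral carries an $\O(1/x)$ remainder — on $\partial U_{p_k}$ one has $|z|\sim\sqrt{x}$, so the subleading matching terms are $\O(1/x)$ — and there are only finitely many disks, the total error is $\O(1/x)$.

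I expect the main obstacle to be twofold. First, getting the branch choices and accumulated phases in the residue at $\zeta=1$ to land \emph{exactly} on the stated $\widehat{\alpha},\widehat{\beta}$: the factors $c^{\mp2\nu}$, $(2\sqrt{3})^{\mp2\nu}$ and $e^{\mp3\pi i\nu/2}$ must combine consistently with the branch of $z(\zeta)$ fixed in \eqref{asymptotics of Y^D}. Second, verifying that the symmetry-transported contributions deposit the $\omega$-weighted cross terms in precisely the entries dictated by the circulant, so that each off-diagonal entry of \eqref{R(0) asymptotic equation 4} is correctly built as $\widehat{\alpha}-\omega\widehat{\beta}$ or $\widehat{\beta}-\omega^2\widehat{\alpha}$ from two \emph{different} disks.
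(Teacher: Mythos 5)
Your proposal is correct and follows essentially the same route as the paper's proof in Appendix \ref{proof of res}: reduce \eqref{R(0) asymptotic equation 2} to six residues, extract the residue at $\zeta=1$ from the subleading term of the matching condition \eqref{matching up cond for parametrix near 1} (your cancellation of the diagonal $z^{\pm\nu}$ and $D$-factors, leaving $-\tfrac{1}{z}V\widetilde{m}V^{-1}$ with $V$ holomorphic, is just a streamlined form of the paper's $\kappa$-conjugation through $\widetilde{\Theta}$, and your phase bookkeeping via $(2\sqrt{3})^{3}=24\sqrt{3}$ lands on exactly the stated $\widehat{\alpha},\widehat{\beta}$), and transport to the other five points via \eqref{local parametrix near -1}--\eqref{local parametrix near minus omega}, whose $d_3$-transpose and $\Pi$-conjugations deposit the $\omega$-weighted terms in precisely the entries the paper records in \eqref{residue at 1 ver 2}--\eqref{residue at minus omega}. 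Your two flagged concerns (branch consistency at $\zeta=1$, and each circulant entry being assembled from two different disks) are exactly the points the paper's computation resolves, so no gap remains.
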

The proof of this proposition is given in Appendix \ref{proof of res}.

Comparing \eqref{R(0) asymptotic equation 3} with \eqref{R(0) asymptotic equation 4}, we  have 
\begin{align}
    e^{-2w_0}-1=(1-\omega^2)(\widehat{\alpha} - \omega \widehat{\beta}) +\O \left( \frac{1}{x}\right),
\end{align}
where the leading term, as it is shown below, is of order $\O(x^{-1/2})$, thus the real solution should have the following asymptotics: 
\begin{align}
w_0 (x) = \frac{\sqrt{3}}{2} e^{-\frac{5 \pi i}{6}} (\widehat{\alpha} - \omega \widehat{\beta})  + \O \left( \frac{1}{x} \right),
\end{align}
  substituting $\widehat{\alpha}$ and $\widehat{\beta}$ we obtain the following theorem. 

\begin{thm} \label{result 1}
For every $s^{\R} \in (-3,1)$ and every $y^{\R} \in \R$, the asymptotics of the corresponding solution $w_0(x)$ of the radial Toda equation \eqref{negative tt*-Toda with x when n=2} is given by the formula
\begin{align}
    w_0(x) = \frac{\sigma}{\sqrt{x}} \cos \left( 2 \sqrt{3} x - i \nu \ln x + \psi \right) + \O \left( \frac{1}{x} \right) \label{w_0(x) expression 1}, \quad x\rightarrow \infty 
\end{align}
where
\begin{align*}
    \sigma^2 &= - \frac{\sqrt{3}}{4 \pi} \ln (1 - |s_1|^2), \; \sigma > 0,\\
    \psi &= - i \nu \ln (24\sqrt{3}) + \frac{\pi}{12} + \arg s_1 - \arg \Gamma(\nu),\\
    \nu & = \frac{1}{2 \pi i } \ln (1 - |s_1|^2), \\
    s_1 &= \omega^2 \frac{B}{A^{\R}}.
\end{align*}

\end{thm}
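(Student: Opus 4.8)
The plan is to obtain \eqref{w_0(x) expression 1} by substituting into the relation
\[
w_0(x)=\tfrac{\sqrt{3}}{2}\,e^{-5\pi i/6}\bigl(\widehat{\alpha}-\omega\widehat{\beta}\bigr)+\O(1/x),
\]
derived immediately above the statement, the explicit expressions for $\widehat{\alpha}$, $\widehat{\beta}$ and $\alpha$ provided by Proposition \ref{Residue prop}, and then reorganising the right-hand side into a single real cosine. The fact driving the whole computation is that $\nu$ is purely imaginary: \eqref{identity 7 at parametrization of E_1} together with $A^{\R}\ge\tfrac13$ gives $|s_1|^2=1-\tfrac{1}{3A^{\R}}\in[0,1)$, so $\nu=\tfrac{1}{2\pi i}\ln(1-|s_1|^2)=i\tau$ with $\tau\ge0$ and $1-|s_1|^2=e^{2\pi i\nu}$. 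Consequently every power $(\,\cdot\,)^{\pm\nu}$ with positive real base, such as $x^{\pm\nu}$ and $(24\sqrt3)^{\pm\nu}$, has unit modulus and contributes only to the phase, while $e^{\pm 2\pi i\nu}$ and $e^{\pm 3\pi i\nu/2}$ are real and positive and contribute only to the modulus.

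Next I would track the two carrier frequencies. Since $\alpha\propto\bigl(s_1e^{2\sqrt3 ix}\bigr)^{-1}$, the term $\widehat{\alpha}$ carries $e^{-2\sqrt3 ix}$ while $\widehat{\beta}\propto\alpha^{-1}$ carries $e^{+2\sqrt3 ix}$; hence, after multiplication by $\tfrac{\sqrt3}{2}e^{-5\pi i/6}$, the leading term takes the shape $c_+e^{2\sqrt3 ix}+c_-e^{-2\sqrt3 ix}$. Because the solution $w_0$ is real (Proposition \ref{smoothness prop}), the two coefficients are forced to be complex conjugate, $c_-=\overline{c_+}$, so that $w_0(x)=2\Re\bigl(c_+e^{2\sqrt3 ix}\bigr)+\O(1/x)=2|c_+|\cos\bigl(2\sqrt3 x+\arg c_+\bigr)+\O(1/x)$.

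It then remains to bring $|c_+|$ and $\arg c_+$ into closed form. Writing $x^{\nu}=e^{i\tau\ln x}$ exhibits the contribution $-i\nu\ln x$ to the argument and separates it from the genuinely $x$-independent modulus and constant phase, so that $2|c_+|=\sigma/\sqrt{x}$ with $\sigma$ independent of $x$. Collecting moduli, $|c_+|^2$ is proportional to $\tau^2|s_1|^2|\Gamma(-\nu)|^2 e^{\pi\tau}$; using the reflection formula in the form $|\Gamma(-\nu)|^2=\Gamma(-\nu)\Gamma(\nu)=\pi/(\tau\sinh\pi\tau)$ together with $|s_1|^2=1-e^{-2\pi\tau}=2e^{-\pi\tau}\sinh\pi\tau$ collapses the identity $\tau|s_1|^2|\Gamma(-\nu)|^2 e^{\pi\tau}=2\pi$, whence $\sigma^2=\tfrac{\sqrt3}{2}\tau=\tfrac{\sqrt3}{2}(-i\nu)=-\tfrac{\sqrt3}{4\pi}\ln(1-|s_1|^2)$, as claimed. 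For the phase I would add the elementary arguments of $\tfrac{\sqrt3}{2}e^{-5\pi i/6}$, of $\omega=e^{2\pi i/3}$, of $\nu=i\tau$ and of the factor $i$ coming from $\alpha^{-1}$ (each equal to $\pi/2$), and of $e^{-3\pi i/4}$, which sum to $\tfrac{\pi}{12}$; together with the phase $-i\nu\ln(24\sqrt3)$ from $(24\sqrt3)^{\nu}$, the argument $\arg s_1$, and $\arg\Gamma(-\nu)=-\arg\Gamma(\nu)$, this gives exactly $\psi=-i\nu\ln(24\sqrt3)+\tfrac{\pi}{12}+\arg s_1-\arg\Gamma(\nu)$. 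Existence for every $(s^{\R},y^{\R})\in(-3,1)\times\R$ is already furnished by the solvability theorem of the preceding subsection, so only this asymptotic evaluation is required.

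I expect the main obstacle to be the Gamma-function simplification of the amplitude --- establishing $\tau|s_1|^2|\Gamma(-\nu)|^2e^{\pi\tau}=2\pi$ via the reflection formula, which is precisely what turns the bare combination of $\alpha$, $s_1$ and $\Gamma(-\nu)$ into the transparent expression $\sigma^2=-\tfrac{\sqrt3}{4\pi}\ln(1-|s_1|^2)$ --- together with the disciplined accumulation of the many constant phases, where a single misplaced $\omega$- or $i$-factor would corrupt the value $\tfrac{\pi}{12}$ and hence $\psi$.
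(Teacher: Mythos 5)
Your proposal is correct and follows essentially the same route as the paper's own proof in Appendix~\ref{proof of main}: substituting the expressions for $\widehat{\alpha}$, $\widehat{\beta}$, $\alpha$ from Proposition~\ref{Residue prop} into $w_0=\tfrac{\sqrt{3}}{2}e^{-5\pi i/6}(\widehat{\alpha}-\omega\widehat{\beta})+\O(1/x)$, exploiting that $\nu$ is purely imaginary, and reducing the amplitude via the reflection formula (your identity $\tau|s_1|^2|\Gamma(-\nu)|^2e^{\pi\tau}=2\pi$ is exactly the content of \eqref{useful relations of gamma function}), with the same constant-phase tally yielding $\tfrac{\pi}{12}$. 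The only cosmetic difference is that you invoke the reality of $w_0$ to force the conjugate-pair structure $c_-=\overline{c_+}$, whereas the paper exhibits it by direct computation of $\nu/\alpha$; both are sound.
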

The proof of this theorem is given in Appendix \ref{proof of main}.

\begin{cor}
For every $s^{\R} \in (-3,1)$ and every $y^{\R} \in \R$, the asymptotics of the corresponding solution of the Painlev\'e III ($D_7$) equation \eqref{Painleve III} is given by the formula
\begin{align*}
        \Tilde{w}(s) = s^{1/3} -\frac{4}{\sqrt{3}} \sigma \cos \left( \frac{3 \sqrt{3}}{2} s^{2/3} - i \nu \ln \left( \frac{3}{4} s^{2/3} \right) + \psi \right) + \O \left( \frac{1}{s^{1/3}} \right), \quad s \to \infty
\end{align*}
where $\sigma$ and $\psi$ are the ones we introduced in Theorem \ref{result 1}.
\end{cor}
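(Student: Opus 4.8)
The plan is to obtain the Corollary as a direct consequence of Theorem \ref{result 1} together with the explicit change of variables
\[
x = \frac{3}{4}s^{\frac{2}{3}}, \qquad \Tilde{w} = s^{\frac{1}{3}}e^{-2w_0},
\]
which was shown at the start of this section to transform \eqref{negative tt*-Toda with x when n=2} into the Painlev\'e III ($D_7$) equation \eqref{Painleve III}. First I would note that as $s \to \infty$ the corresponding variable $x \to \infty$, and by Theorem \ref{result 1} the solution $w_0(x)$ decays, with $w_0(x) = \O(x^{-1/2}) = \O(s^{-1/3})$. Hence the exponential may be expanded as $e^{-2w_0} = 1 - 2w_0 + \O(w_0^2)$, and since $s^{1/3}w_0^2 = \O(s^{-1/3})$, one gets
\[
\Tilde{w} = s^{1/3} - 2 s^{1/3} w_0 + \O\!\left(s^{-1/3}\right).
\]

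Next I would substitute the asymptotic formula \eqref{w_0(x) expression 1} for $w_0$ and convert all occurrences of $x$ to $s$. The two numerical facts driving the computation are $\sqrt{x} = \tfrac{\sqrt{3}}{2}\,s^{1/3}$, so that the amplitude becomes
\[
2 s^{1/3}\cdot \frac{\sigma}{\sqrt{x}} = 2 s^{1/3}\cdot \frac{2\sigma}{\sqrt{3}}\,s^{-1/3} = \frac{4\sigma}{\sqrt{3}},
\]
and the phase evaluations $2\sqrt{3}\,x = \tfrac{3\sqrt{3}}{2}\,s^{2/3}$ and $\ln x = \ln\!\left(\tfrac{3}{4}s^{2/3}\right)$. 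Inserting these into the cosine yields precisely
\[
\Tilde{w}(s) = s^{1/3} - \frac{4}{\sqrt{3}}\,\sigma \cos\!\left( \frac{3\sqrt{3}}{2}s^{2/3} - i\nu \ln\!\left(\frac{3}{4}s^{2/3}\right) + \psi \right) + \O\!\left(s^{-1/3}\right),
\]
with the \emph{same} $\sigma$, $\nu$, $\psi$ as in Theorem \ref{result 1}, as claimed.

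The only point requiring care is the bookkeeping of the error terms, and this is where I would spend the attention rather than on the trigonometric algebra. One must verify that every discarded contribution is genuinely $\O(s^{-1/3})$: the $\O(w_0^2)$ term contributes $s^{1/3}\cdot \O(x^{-1}) = \O(s^{-1/3})$, and the $\O(1/x)$ remainder in \eqref{w_0(x) expression 1}, when multiplied by $s^{1/3}$, likewise gives $s^{1/3}\cdot\O(s^{-2/3}) = \O(s^{-1/3})$. Since both error sources are of the same order as the stated remainder, they may be absorbed into it. No genuinely new analytic input is needed beyond Theorem \ref{result 1}; the entire argument is an elementary substitution and expansion, so the ``hard part'' is merely confirming that no term of order $s^{-1/3}$ has been inadvertently promoted to the leading asymptotics.
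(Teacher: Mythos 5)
Your proposal is correct and coincides with the paper's (implicit) argument: the corollary is stated without a separate proof precisely because it follows from Theorem \ref{result 1} by the change of variables $x = \tfrac{3}{4}s^{2/3}$, $\Tilde{w} = s^{1/3}e^{-2w_0}$ introduced at the start of Section 2, exactly as you carry out. Your bookkeeping of the error terms — $s^{1/3}\O(w_0^2) = \O(s^{-1/3})$ and $s^{1/3}\O(1/x) = \O(s^{-1/3})$ — as well as the amplitude factor $4\sigma/\sqrt{3}$ and the phase conversions $2\sqrt{3}\,x = \tfrac{3\sqrt{3}}{2}s^{2/3}$, $\ln x = \ln\bigl(\tfrac{3}{4}s^{2/3}\bigr)$, are all accurate.
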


\section{Asymptotics near $x = 0$ and Connection Formulae} \label{section 4}

\subsection{Asymptotic Data of Local Solutions near $x = 0$}\label{section 4.1}

It was pointed out in \cite{GuLi14, GIL3} that the method of loop group Iwasawa factorization produces a local solution of the 2D periodic Toda equations near $z_0 \in \C$ from any matrix of the form
\begin{align*}
\begin{pmatrix}
    0 & 0 & p_0\\
    p_1 & 0 & 0\\
    0 & p_2 & 0\\
\end{pmatrix}
\end{align*}
where each function $p_i = p_i(z)$ is holomorphic in a neighborhood of $z_0$. To obtain radial solutions near zero, we need to take $p_i(z)=c_i z^{k_i}$ for certain real $k_i$ and $c_i$.
Indeed, if we take $p_i(z) = c_i z^{k_i}$ with $c_i > 0$ and $k_i>-1$,
then radial solutions near zero satisfy
\begin{align}
    w_0(x) = \gamma \ln x + \rho + o(1) \label{asymptotics of w_0 near 0}
\end{align}
where $-\frac{1}{2} < \gamma < 1$. All such solutions with this asymptotic behavior arise this way. As we have pointed out earlier, all of these solutions are in fact global solutions.
We call the two real parameters $\{\gamma,\rho\}$ asymptotic data and the two real parameters $\{c_i, k_i\}$ holomorphic data. The relation between these two data will be given below.

As we obtained the asymptotic formula for the solution near $x = \infty$ as in Theorem \ref{result 1}, our goal is now to find connection formulae between two kinds of asymptotic parameters. This will be done by making use of the monodromy data $\{s^\R, y^\R\}$, and a method based on the Iwasawa factorization.


\subsubsection{Iwasawa Factorization}

The following discussion is parallel to the one in \cite{GIL3}, so we omit many details.
Consider the (possibly multi-valued) holomorphic matrix function $L(z, \lambda)$ that satisfies the following ODE,
\begin{align}
    \frac{d L}{d z} = \frac{1}{\lambda} L \eta, \quad L \big |_{z = 0} = I, \label{iwasawa ODE}
\end{align}
where $\lambda \in \C^*$ and 
\begin{equation}\begin{aligned}\label{eta def}
    &\eta = \begin{pmatrix}
        0 & 0 & p_0 \\
        p_1 & 0 & 0\\
        0 & p_1 & 0
    \end{pmatrix}, \\
  &p_i = c_i z^{k_i}, \; c_i > 0, \; k_i > - 1.
\end{aligned}\end{equation}

\begin{rem}
For convenience, we omit arguments of functions when no confusion is likely.
\end{rem}
 
Consider the {\it twisted loop group}   $\Lambda SL_3^t \C$:
\begin{align*}
    \Lambda SL_3^t \C &= \left\{ M(\lambda) = \sum_{j = - \infty}^{\infty} \lambda^{-j} M_j \, \Bigg| \, \det M = 1, \, \Delta M^{T-1}(-\lambda) \Delta = M(\lambda) \right\}
\end{align*}
where 
\begin{align*}
    \Delta = \begin{pmatrix}
        0 & 0 & 1\\
        0 & 1 & 0\\
        1 & 0 & 0
    \end{pmatrix}.
\end{align*}
From \eqref{eta def} it follows that $L \in \Lambda SL_3^t \C$. 

\begin{thm}
Let $L(z, \lambda)$ be as in \eqref{iwasawa ODE}. Then, there is a unique Iwasawa factorization of $L(z, \lambda)$ :
\begin{align*}
    L(z, \lambda) = L_{\R} (z, \overline{z}, \lambda) L_+ (z, \overline{z}, \lambda)
\end{align*}
where
\begin{itemize}
    \item   $\left[ L_{\R}^* \left( 1/ \overline{\lambda} \right) \right]^{-1} = L_{\R} (\lambda)$,
    \item   $L_{+} = \sum_{j = 0}^{\infty} L_j (z, \overline{z}) \lambda^j$,  with $ L_0 = diag (b_0, 1, b_0^{-1}), \; b_0 > 0$, where $b_0$ is smooth in a small neighbourhood of $0$ and depends on $|z|$.  
    \item The maps $L_{\R}, \; L_+$   are (possibly multi-valued) maps from a small neighborhood of $0$ into $\Lambda SL_3^t \C$,
    \item  $L_{\R} \big|_{z = 0} = L_{+} \big|_{z = 0} = I$.
\end{itemize}
\end{thm}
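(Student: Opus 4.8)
The plan is to recognize the claimed decomposition as the Iwasawa factorization for the twisted loop group $\Lambda SL_3^t \C$ and to deduce it from the standard loop-group Iwasawa theory (Pressley--Segal, in the graded form used in \cite{GIL3}), the only genuinely new bookkeeping being the two involutions and the normalization of the constant term. First I would record the two involutions at play: the twisting involution $\theta(M)(\lambda) = \Delta M^{T-1}(-\lambda)\Delta$, whose fixed-point group is $\Lambda SL_3^t\C$ and which contains $L$ by \eqref{eta def}, and the reality involution $\sigma(M)(\lambda) = [M^*(1/\overline{\lambda})]^{-1}$, whose fixed-point group is the ``unitary'' subgroup $\Lambda_{\R} SL_3^t\C$ in which $L_{\R}$ is required to lie. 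I would check that $\theta$ and $\sigma$ commute, so that $\sigma$ restricts to an involution of $\Lambda SL_3^t\C$ and that the subgroup $\Lambda_{\R}$ together with the subgroup $\Lambda^+$ of loops extending holomorphically to $\{|\lambda|<1\}$ are the two candidate factors of an Iwasawa-type splitting.

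For existence I would invoke the loop-group Iwasawa decomposition $\Lambda SL_3\C = \Lambda_{\R}\cdot\Lambda^+$ (a diffeomorphism onto) and then restrict to the $\theta$-fixed subgroups, using that $\theta$ preserves both factors. Equivalently, since $L\big|_{z=0}=I$, a purely local argument suffices: at the Lie-algebra level the Fourier/$\overline{\partial}$ decomposition gives a direct-sum splitting of $\Lambda\mathfrak{sl}_3^t$ into the tangent algebras of $\Lambda_{\R}$ and $\Lambda^+$, and the inverse function theorem in the appropriate Banach Lie group then produces the factorization for $z$ in a small neighborhood of $0$, with smooth dependence on $(z,\overline{z})$ inherited from that of $L$. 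The multivaluedness, and the dependence of $b_0$ on $|z|$, are then inherited from the multivalued, radial structure of $L$ coming from $p_i = c_i z^{k_i}$.

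The form $L_0 = \diag(b_0,1,b_0^{-1})$ is the delicate point and I would treat it separately. Writing $L_+ = \sum_{j\ge 0}L_j\lambda^j$, the identity $\theta(L_+)=L_+$ at $\lambda=0$ forces $\Delta L_0^{T-1}\Delta = L_0$, so a diagonal $L_0=\diag(a,b,c)$ must satisfy $a=c^{-1}$ and $b^2=1$; positivity of the $A$-part of the Iwasawa selects $b=1$, $a=b_0>0$, $c=b_0^{-1}$, and $\det=1$ is then automatic. The reason $L_0$ is diagonal rather than merely upper-triangular is that, in the principal (twisted) grading of the loop algebra, the finite nilpotent directions are carried by the strictly positive Fourier modes $\lambda^{\ge 1}$, so the $\lambda^0$ constant term of the positive factor lands in the maximal torus; this is exactly the normalization adopted in \cite{GIL3}.

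Finally, for uniqueness and the normalization at $z=0$: if $L=L_{\R}L_+=\tilde{L}_{\R}\tilde{L}_+$, then $g:=\tilde{L}_{\R}^{-1}L_{\R}=\tilde{L}_+L_+^{-1}$ is simultaneously holomorphic in $\{|\lambda|\le 1\}$ (from the $+$-side) and $\sigma$-invariant (from the $\R$-side); the reality condition extends $g$ holomorphically across $|\lambda|=1$ to all of $\C^*$, so $g$ is a bounded entire loop, hence constant by Liouville, and that constant is both unitary and of the normalized diagonal type, forcing $g=I$. Applying uniqueness to $L\big|_{z=0}=I=I\cdot I$ then yields $L_{\R}\big|_{z=0}=L_+\big|_{z=0}=I$. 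The main obstacle is the existence (surjectivity) step together with the verification that the unipotent part of the constant term vanishes so that $L_0$ is diagonal; both are handled by quoting the twisted, graded Pressley--Segal theory established in \cite{GIL3}, which is precisely why the paper defers the details to that reference.
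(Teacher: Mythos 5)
Your proposal is correct and takes essentially the same route as the paper: the paper's entire proof consists of the citation ``This is a special case of Theorem 8.1.1 of \cite{PressleySegal}. See also \cite{McIntosh}.'', and your sketch simply unpacks that citation into the standard twisted-Iwasawa argument (commuting twisting and reality involutions, restriction of the untwisted splitting $\Lambda SL_3\C = \Lambda_{\R}\cdot\Lambda^+$ to the fixed subgroup, Liouville-type uniqueness, and normalization at $z=0$). One small caveat worth noting: your diagonality argument for $L_0$ appeals to the principal (cyclic) grading, whereas the paper's $\Lambda SL_3^t\C$ as defined carries only the order-two $\Delta$-twist, whose $\lambda^0$ part is $SO_{\Delta}(\C)$ and still contains unipotent directions --- the diagonal normalization really uses the additional cyclic symmetry $d_3 L(\omega\lambda)d_3^{-1}=L(\lambda)$ enjoyed by this particular $L$, exactly as in the setting of \cite{GIL3}, a detail the paper leaves implicit in its citation.
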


\begin{proof}
This is a special case of Theorem 8.1.1 of \cite{PressleySegal}. See also \cite{McIntosh}.
\end{proof}

Moreover, we have that 
\begin{align}
    \begin{dcases}
        L_{\R}^{-1} \frac{d L_{\R}}{d z} = a + \frac{1}{\lambda} A^T\\
        L_{\R}^{-1} \frac{d L_{\R}}{d \overline{z}} = - \overline{a} - \lambda \overline{A}
    \end{dcases} \label{Iwasawa linear system I}
\end{align}
where
\begin{align*}
    & a = \diag \left( \frac{d}{dz} \ln b_0, 0, - \frac{d}{dz} \ln b_0 \right), \\
    & A^T = \begin{pmatrix}
        0 & 0 & A_0\\
        A_1 & 0 & 0\\
        0 & A_1 & 0
    \end{pmatrix}, \\
    &A_0 = p_0 b_0^2, \; A_1 = p_1 b_0^{-1}.
\end{align*}
Introduce
\begin{align*}
    G(z, \overline{z}) := \begin{pmatrix}
        \frac{|h_0|}{h_0} & 0 & 0 \\
        0 & 1 & 0\\
        0 & 0 & \frac{h_0}{|h_0|}
    \end{pmatrix}, \quad h_0 = p_1^{1/3} p_0^{-1/3}.
\end{align*}
By setting
\begin{align*}
    w_0 = \ln \frac{b_0}{|h_0|} \quad \text{and} \quad \nu = p_0^{1/3} p_1^{2/3}=(c_0 c_1^2z^{k_0+2k_1})^{\frac13},
\end{align*}
one can verify that 
\begin{align}
\begin{dcases}
    \frac{d}{dz} (L_{\R} G) = L_{\R} G \left( w_z + \frac{1}{\lambda} \nu W^{T} \right)\\
    \frac{d}{d \overline{z}} (L_{\R} G) = L_{\R} G \left( -w_{\overline{z}} - \lambda \overline{\nu} W \right)
\end{dcases} \label{Iwasawa linear system II}
\end{align}
where $w$, $W$ are as in \eqref{w and W}.

Next, we introduce a new variable $t$ by
\begin{align*}
    t = (c_0 c_1^2)^{1/3} z^{N/3} \frac{3}{N}, \quad N = k_0 + 2 k_1 + 3.
\end{align*}
Note that $\frac{dt}{dz} = \nu$. Then, define 
\begin{align}
  \Tilde{\Psi} (t, \overline{t}, \lambda) := (L_{\R} G)^T(z(t,\bar{t}),\bar{z}(t,\bar{t}),\lambda)  .
\end{align}
After this transformation, equation \eqref{Iwasawa linear system II}  becomes 
\begin{align}
    \begin{dcases}
        \frac{d \Tilde{\Psi}}{d t} = \left( w_t + \frac{1}{\lambda} W \right) \Tilde{\Psi}\\
        \frac{d \Tilde{\Psi}}{d \overline{t}} = \left( - w_{\overline{t}} - \lambda W^T \right) \Tilde{\Psi},
    \end{dcases} \label{Iwasawa Lax I}
\end{align}
which is exactly the Lax pair \eqref{Lax pair} for the 2D periodic Toda equation of type $A_2$. This change of variables gives
\begin{align*}
    w_0(t) = - \frac{k_1 - k_0}{N} \ln |t| - \frac{k_1 - k_0}{N} \ln \left( \frac{N}{3}(c_0 c_1)^{-1/3} \right) - \frac{1}{3} \ln \left( \frac{c_1}{c_0} \right) + o(1).
\end{align*}
Comparing it with \eqref{asymptotics of w_0 near 0}, we have
\begin{align*}
    \gamma = \frac{k_0 - k_1}{N}, \quad \rho = \ln \left[ \left( \frac{N}{3} \right)^{\gamma} c_0^{\frac{-\gamma + 1}{3}} c_1^{\frac{-2 \gamma - 1}{3}} \right].
\end{align*}

Let
\begin{align*} 
    g(\lambda) = \lambda^m, \quad m = \diag(- \gamma, 0, \gamma ), \quad  \gamma = \frac{k_0 - k_1}{N}.
\end{align*}
As in \cite{GIL3}, the radial symmetry of $w_0$ implies that 
$\Tilde{\Psi} (t, \overline{t}, \lambda) g(\lambda)^T$ depends only on $|t|$ and $\lambda/t$. We then write
\begin{align*}
    \Tilde{\Psi} (t, \overline{t}, \lambda) g(\lambda)^T =: \Psi(x, \zeta), \quad x = |t|, \quad \zeta = \frac{\lambda}{t},
\end{align*}
and this transforms \eqref{Iwasawa Lax I} into 
\begin{align}
    \begin{dcases}
            \frac{d \Psi}{d \zeta} = \left(- \frac{1}{\zeta^2} W -\frac{x}{\zeta} w_x - x^2 W^{T} \right) \Psi\\
            \frac{d \Psi}{d x} = \left( -w_x - 2x\zeta W^T \right) \Psi.
    \end{dcases} \label{Iwasawa Lax II}
\end{align}

Simultaneously, let us introduce
\begin{align*}
    \Phi(\lambda) := (g L)^T
\end{align*}
Then, it transforms equation \eqref{iwasawa ODE} into
\begin{align}
        \frac{d \Phi}{d \lambda} &= \left[ -\frac{3}{N} \frac{z}{\lambda^2} \eta^T + \frac{1}{\lambda} m \right] \Phi. \label{Iwasawa Phi}
\end{align}
It will be convenient to introduce
\begin{align*}
    \mathring{\Phi}(\zeta) := t^{-m} \Phi(\lambda) \big|_{\lambda = \zeta t},
\end{align*}
and from \eqref{Iwasawa Phi}, using similar arguments to those in \cite{GIL3}, one can obtain
\begin{align}
    \frac{d \mathring{\Phi}}{d \zeta} = \left[ -\frac{1}{\zeta^2} \begin{pmatrix}
        0 & \kappa & 0\\
        0 & 0 & \kappa\\
        \kappa^{-2} & 0 & 0
    \end{pmatrix} + \frac{1}{\zeta} m \right] \mathring{\Phi}, \label{Iwasawa Phi0}
\end{align}
where $\kappa = \left( \frac{3}{N} \right)^{\gamma} c_1^{\frac{2 \gamma + 1}{3}} c_0^{\frac{\gamma - 1}{3}} = e^{- \rho}$.

\begin{note}
    The above $\Phi$ is not related to the $\Phi$ in section \ref{Phi RH problem}.
\end{note}

Summarizing, we have obtained the Lax pair \eqref{Iwasawa Lax II} and the subsidiary ODE \eqref{Iwasawa Phi0} from the Iwasawa factorization and the connection between the holomorphic data $\{ c_i, k_i \}$ and the asymptotic data $\{\gamma, \rho\}$:
\begin{align}
    \gamma = \frac{k_0 - k_1}{N}, \quad \kappa = \left( \frac{3}{N} \right)^{\gamma} c_1^{\frac{2 \gamma + 1}{3}} c_0^{\frac{\gamma - 1}{3}} = e^{- \rho}. \label{Iwasawa data and asymptotic data}
\end{align}


\subsubsection{Monodromy Data for $\mathring{\Phi}$}

We shall see that $\Psi$ and $\mathring{\Phi}$ have very similar behavior at $0$, in particular the same Stokes data there. 
Moreover, as we will see, the Stokes data of \eqref{Iwasawa Phi0} can be calculated explicitly because of the very specific structure of \eqref{Iwasawa Phi0}.
Recall from section \ref{monodromy data} that we have
\begin{itemize}
    \item Canonical solutions near singularities:
    \begin{align*}
        &\Psi_k^{(0)} = P_{0} \left( I + \O(\zeta) \right) e^{\frac{1}{\zeta} d_3}, \; \zeta \to 0, \; \zeta \in \Omega_{k}^{(0)}, \; k = 1,2,3.\\
        &\Psi_k^{(\infty)} = P_{\infty} \left( I + \O(1/\zeta) \right) e^{x^2 \zeta d_3}, \; \zeta \to \infty, \; \zeta \in \Omega_{k}^{(\infty)}, \; k = 1,2,3.
    \end{align*}
    \item Stokes matrices:
    \begin{align*}
        S_k^{(0, \infty)} = [\Psi^{(0, \infty)}_k(\zeta)]^{-1} \Psi^{(0, \infty)}_{k+1}(\zeta), \; k = 1, 2.
    \end{align*}
    \item Connection matrix:
    \begin{align*}
        E_1 = [\Psi_1^{(0)}(\zeta)]^{-1}\Psi^{(\infty)}_{1}(\zeta).
    \end{align*}    
\end{itemize}

For $\mathring{\Phi}$, one can derive similar formulae (see \cite{GIL3}):
\begin{itemize}
    \item Canonical solution near singularities:
    \begin{align*}
        &\mathring{\Phi}_k^{(0)} = \mathring{O}_{0} \left( I + \O(\zeta) \right) e^{\frac{1}{\zeta} d_3}, \; \zeta \to 0, \; \zeta \in \Omega_{k}^{(0)}, \; k = 1,2,3,\\
        &\qquad \quad \text{where } \, \mathring{O}_{0} = \begin{pmatrix}
            \kappa & 0 & 0\\
            0 & 1 & 0\\
            0 & 0 & \kappa^{-1}
        \end{pmatrix} \Omega, \quad \Omega = \begin{pmatrix}
            1 & 1 & 1 \\
            1 & \omega & \omega^2\\
            1 & \omega^2 & \omega
        \end{pmatrix}.\\
        & \mathring{\Phi}^{(\infty)} = \left( I + \O(1/\zeta) \right) \zeta^m, \; \zeta \to \infty.
    \end{align*}
    \item Stokes matrices:
    \begin{align*}
        R_k = [\mathring{\Phi}^{(0)}_k(\zeta)]^{-1} \mathring{\Phi}^{(0)}_{k+1}(\zeta), \; k = 1, 2.
    \end{align*}
    \item Connection matrices:
    \begin{align*}
        D_k = [\mathring{\Phi}_k^{(0)}(\zeta)]^{-1}\mathring{\Phi}^{(\infty)}(\zeta).
    \end{align*}
\end{itemize}


\subsubsection{Connection Formulae}\label{connection formulae sec}

\begin{lem} \label{E_1 = D_1 D_1^* C}
$S_k^{(0)} = R_k$ and $E_1 = D_1 D_1^* C$.
\end{lem}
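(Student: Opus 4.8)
The plan is to treat the two identities separately, since $S_k^{(0)}=R_k$ follows from a purely local gauge equivalence at $\zeta=0$, whereas $E_1=D_1D_1^{*}C$ additionally exploits the reality condition $[L_{\R}^{*}(1/\bar\lambda)]^{-1}=L_{\R}(\lambda)$ built into the Iwasawa factorization.

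First I would establish the gauge relation between the two linear systems near $\zeta=0$. Writing $L=L_{\R}L_{+}$ and recalling $\Psi=G\,L_{\R}^{T}g$ together with $\mathring{\Phi}=t^{-m}(gL)^{T}=t^{-m}L_{+}^{T}L_{\R}^{T}g$, one obtains
\[
\mathring{\Phi}(\zeta)=M(\zeta)\,\Psi(x,\zeta),\qquad M(\zeta):=t^{-m}L_{+}^{T}(\zeta t)\,G^{-1}.
\]
Since $L_{+}=\sum_{j\ge0}L_{j}\lambda^{j}$ is holomorphic at $\lambda=0$ with $L_{0}=\diag(b_{0},1,b_{0}^{-1})$, and $t^{-m},G$ are constant in $\zeta$ and diagonal, $M(\zeta)$ is holomorphic and invertible in a neighborhood of $\zeta=0$, with diagonal value $M(0)=t^{-m}L_{0}G^{-1}$. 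A direct computation using $w_{0}=\ln(b_{0}/|h_{0}|)$, $h_{0}=p_{1}^{1/3}p_{0}^{-1/3}$ and the data relations \eqref{Iwasawa data and asymptotic data} shows that the $z$-powers cancel and $M(0)P_{0}=\mathring{O}_{0}$; for instance the $(1,1)$-entry reduces to $t^{\gamma}h_{0}=(3/N)^{\gamma}c_{0}^{(\gamma-1)/3}c_{1}^{(2\gamma+1)/3}=\kappa$. Consequently $M$ carries the formal solution $\Psi_{f}^{(0)}=P_{0}(I+\O(\zeta))e^{d_{3}/\zeta}$ to $\mathring{\Phi}_{f}^{(0)}=\mathring{O}_{0}(I+\O(\zeta))e^{d_{3}/\zeta}$, hence $\mathring{\Phi}_{k}^{(0)}=M\Psi_{k}^{(0)}$ for the canonical solutions in each sector $\Omega_{k}^{(0)}$. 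Because $M$ is single-valued and holomorphic at $0$ it cancels in the ratio of consecutive canonical solutions, giving
\[
R_{k}=[\mathring{\Phi}_{k}^{(0)}]^{-1}\mathring{\Phi}_{k+1}^{(0)}=[\Psi_{k}^{(0)}]^{-1}M^{-1}M\Psi_{k+1}^{(0)}=S_{k}^{(0)}.
\]

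For the connection-matrix identity I would start from $E_{1}=[\Psi_{1}^{(0)}]^{-1}\Psi_{1}^{(\infty)}$ and use the local relation just derived to replace $[\Psi_{1}^{(0)}]^{-1}=[\mathring{\Phi}_{1}^{(0)}]^{-1}M$. The remaining factor $M\Psi_{1}^{(\infty)}$ involves the behavior at the \emph{irregular} point $\zeta=\infty$, where $M$ is no longer the correct gauge; this is precisely where the reality condition enters. The loop-group involution $\lambda\mapsto1/\bar\lambda$ translates, through $\zeta=\lambda/t$ and $x=|t|$, into the map $\zeta\mapsto1/(x^{2}\bar\zeta)$, which interchanges neighborhoods of $0$ and $\infty$. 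Applying $\overline{L_{\R}(1/\lambda)}^{T}L_{\R}(\lambda)=I$ I would express $\Psi_{1}^{(\infty)}$ through the complex conjugate of a $\zeta=0$ solution, convert that conjugate to the model via the gauge $M$ (bringing in a second factor of the model's connection data), and reassemble. The upshot is that $E_{1}$ is built from the model connection matrix $D_{1}=[\mathring{\Phi}_{1}^{(0)}]^{-1}\mathring{\Phi}^{(\infty)}$ and its reality conjugate $D_{1}^{*}$, twisted by the reality matrix $C$ that already governs $E_{1}$ in \eqref{reality of E_1}, i.e. $E_{1}=D_{1}D_{1}^{*}C$. All of this runs in close parallel to the corresponding computation in \cite{GIL3}, so I would import that bookkeeping rather than redo it.

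The main obstacle is the second identity. Keeping track of the reality involution for the model problem $\mathring{\Phi}$ --- in particular the exact placement of the conjugation, the change of variable $\zeta\mapsto1/(x^{2}\bar\zeta)$, and the resulting precise factor $C$ and meaning of $D_{1}^{*}$ --- is delicate, because the two singular points of the $\Psi$-system are both irregular while those of the model $\mathring{\Phi}$-system are an irregular point at $0$ and a regular point at $\infty$; matching them correctly under the reality symmetry is the crux. The first identity, by contrast, is essentially immediate once the holomorphic gauge $M$ and the normalization $M(0)P_{0}=\mathring{O}_{0}$ are in hand.
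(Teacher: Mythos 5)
Your overall route coincides with the paper's. For $S_k^{(0)}=R_k$ your gauge argument --- $\mathring{\Phi}=M\Psi$ with $M(\zeta)=t^{-m}L_{+}^{T}(\zeta t)G^{-1}$ holomorphic and invertible at $\zeta=0$, $M(0)P_{0}=\mathring{O}_{0}$ (your $(1,1)$-entry check $t^{\gamma}h_{0}=\kappa$ is correct, and the $z$-powers do cancel), hence $\mathring{\Phi}_{k}^{(0)}=M\Psi_{k}^{(0)}$ with $M$ cancelling in the ratio of consecutive canonical solutions --- is exactly the argument of Corollary 4.3 of \cite{GIL3}, which the paper simply cites; you supply more detail here than the paper itself does. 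Note that the same gauge also gives $\mathring{\Phi}^{(\infty)}=M\,(gL_{\R}G)^{T}$, and therefore the identification $D_{1}=Y_{1}^{-1}$ needed below, an ingredient the paper likewise leaves implicit.

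For $E_{1}=D_{1}D_{1}^{*}C$, however, you have the right strategy but stop short of a proof: the sentence ``the upshot is that $E_{1}$ is built from $D_{1}$ and its reality conjugate, twisted by $C$'' asserts precisely what must be computed --- why the involution produces exactly one factor $C$, on the right, the pairing $D_{1}D_{1}^{*}$ rather than $D_{1}^{*}D_{1}$, and no residual diagonal twist by $d_{3}$. The paper closes this as follows: introduce constant matrices $Y_{1},X_{1}$ with $\Psi_{1}^{(0)}=(gL_{\R}G)^{T}Y_{1}$ and $\Psi_{1}^{(\infty)}=(gL_{\R}G)^{T}X_{1}$, so that $E_{1}=Y_{1}^{-1}X_{1}$ (and, via your $M$, $D_{1}=Y_{1}^{-1}$, $D_{1}^{*}=\overline{Y_{1}}^{T-1}$). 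Combining the inversion symmetry \eqref{antisymmetry for canonical solution at 0 ver.1} with the reality condition \eqref{reality of canonical solutions at infty} gives
\begin{align}
\Delta\,\Psi_{2}^{(\infty)}(e^{2\pi i}\zeta)\,C\,3d_{3}=\overline{\Psi_{1}^{(0)}\left(\tfrac{e^{i\pi}}{x^{2}\bar{\zeta}}\right)},
\end{align}
while the loop-group reality of $L_{\R}$, together with $\overline{G}G=I$, $g(e^{i\pi}\zeta t)\,\overline{g\bigl(e^{i\pi}/\overline{t\zeta}\bigr)}=I$, and \eqref{antisymmetry for canonical solution at 0 ver.2}, rewrites the same quantity as $\Delta\,\Psi_{2}^{(\infty)}(e^{2\pi i}\zeta)\,3d_{3}^{-1}X_{1}^{T}\overline{Y_{1}}$. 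Comparing yields $X_{1}^{T}\overline{Y_{1}}=d_{3}Cd_{3}=C$ --- the small identity $d_{3}Cd_{3}=C$ is exactly where the stray $d_{3}$'s disappear --- whence $X_{1}=\overline{Y_{1}}^{T-1}C$ and $E_{1}=D_{1}D_{1}^{*}C$. So nothing in your outline would fail, and your alternative bookkeeping via $M\Psi_{1}^{(\infty)}=\mathring{\Phi}^{(\infty)}X_{1}$ reduces to the same remaining task $X_{1}=D_{1}^{*}C$; but the crux you yourself flag (the placement of the conjugation and the precise emergence of $C$) is exactly the step you defer. Amusingly, you prove in full the half the paper outsources to \cite{GIL3}, and outsource the half the paper proves in full.
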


\begin{proof}
    The proof for the Stokes matrices is exactly the same as Corollary 4.3 in \cite{GIL3}. We will give a proof for the connection matrices.
    
    Since   $(g L_{\R} G)^T$  and $\Psi_k^{(0, \infty)}$ are   solutions of the first equation of \eqref{Iwasawa Lax II}, they differ by constant (in both $\zeta$ and $x$) matrices $Y_k$, $X_k$ respectively:
    \begin{align}
        \Psi_k^{(0)} = (g L_{\R} G)^T \Bigr|_{\lambda = \zeta t} Y_k, \quad \Psi_k^{(\infty)} = (g L_{\R} G)^T \Bigr|_{\lambda = \zeta t} X_k. \label{definition of Y_k and Z_k}
    \end{align}
    Notice that
    \begin{align*}
        E_1 = [\Psi_1^{(0)}(\zeta)]^{-1}\Psi^{(\infty)}_{1}(\zeta) = Y_1^{-1} \left[ (g L_{\R} G)^T \right]^{-1} (g L_{\R} G)^T X_1 = Y_1^{-1} X_1.
    \end{align*}
    By the inversion symmetry relation \eqref{antisymmetry for canonical solution at 0 ver.1} and reality condition \eqref{reality of canonical solutions at infty},
    \begin{align}
        \Delta \overline{\Psi^{(\infty)}_{2} ( \overline{\zeta} )} 3 d_3 = \overline{\Psi^{(0)}_1 \left( \frac{e^{i \pi}}{x^2 \overline{\zeta}} \right)} \Rightarrow \Delta \Psi^{(\infty)}_{2} ( e^{2 \pi i} \zeta ) C 3 d_3 = \overline{\Psi^{(0)}_1 \left( \frac{e^{i \pi}}{x^2 \overline{\zeta}} \right)}. \label{expression I for E_1 decomposition}
    \end{align}
    Using the loop group reality $\left[ L_{\R}^* \left( 1/ \overline{\lambda} \right) \right]^{-1} = L_{\R} (\lambda)$ for \eqref{definition of Y_k and Z_k}, we have
    \begin{align*}
        \overline{\Psi_1^{(0)} \left( \frac{e^{i \pi}}{x^2 \overline{\zeta}} \right)} = \overline{G} L_{\R} (e^{i \pi} \zeta t)^{-1} \overline{g \left( \frac{e^{i \pi}}{\overline{t \zeta} } \right)} \overline{Y_1}.
    \end{align*}
    By \eqref{definition of Y_k and Z_k}, it also holds that
    \begin{align*}
        L_{\R} (e^{i \pi} \zeta t)^{-1} = G \left[ \Psi_1^{(\infty)} (e^{i \pi} \zeta) \right]^{T-1} X_1^T g(e^{i \pi} \zeta t).
    \end{align*}
    Since $\overline{G} G = I$ and $g(e^{i \pi} \zeta t) \overline{g \left( \frac{e^{i \pi}}{\overline{t \zeta} } \right)} = I$,
    it   follows that
    \begin{align}
        \overline{\Psi_1^{(0)} \left( \frac{e^{i \pi}}{x^2 \overline{\zeta}} \right)} = \left[ \Psi_1^{(\infty)} (e^{i \pi} \zeta) \right]^{T-1} X_1^T \overline{Y_1} = \Delta \Psi^{(\infty)}_2 (e^{i 2 \pi}\zeta) 3d_3^{-1} X_1^T \overline{Y_1}, \label{expression II for E_1 decomposition}
    \end{align}
    where we used \eqref{antisymmetry for canonical solution at 0 ver.2} for the last equality. By \eqref{expression I for E_1 decomposition} and \eqref{expression II for E_1 decomposition}, we obtain $E_1 = D_1 D_1^* C$.
\end{proof}

\begin{rem}
Recall the parametrization \eqref{parametrization of E_1 ver.2} of $E_1$. Since $E_1 C = D_1 D_1^{*}$, $E_1C$ is positive definite. Therefore, the upper left $1 \times 1$ corner of $E_1 C$ has a positive determinant, i.e., $A^{\R}$ is positive in this situation, i.e., when the solution arises through the Iwasawa factorization.
\end{rem}

It turns out that equation \eqref{Iwasawa Phi0} can be reduced to $3$rd order scalar differential equations which can be solved by Barnes-type integrals involving Gamma functions. This gives rise to an explicit formula for the connection matrix $D_1$. As this is very similar to the discussion leading to Theorem 3.13 in \cite{GIL3}, we just state the result.  

\begin{lem} The connection matrix $D_1$ admits the decomposition,
\begin{align}
    D_1 = R_1 K^{-1} \Lambda^{-1}, \label{decomposition of D_1}
\end{align}
where
\begin{align*}
    &R_1 = S_1^{(0)} = \begin{pmatrix}
        1 & 0 & - \omega^2 s^{\R}\\
        - \omega^2 s^{\R} & 1 & \omega (s^{\R})^2 + \omega s^{\R}\\
        0 & 0 & 1
    \end{pmatrix}, \;
    K = \begin{pmatrix}
        1 & \omega^{\gamma} & \omega^{-\gamma}\\
        1 & \omega & \omega^{-1} \\
        1 & \omega^{2 - \gamma} & \omega^{\gamma - 2}
    \end{pmatrix},\\
    &\Lambda = - \frac{i \sqrt{3}}{4 \pi^2} \begin{pmatrix}
        \kappa \mathcal{A}_0 & & \\
         & (1 - \gamma) \mathcal{B}_0 & \\
         & & \frac{2(\gamma - 1)^2}{\kappa} \mathcal{C}_0
    \end{pmatrix}, \\
    &\mathcal{A}_0 = 2 \pi i e^{- \pi i \gamma} 3^{-\gamma}  \Gamma \left( \frac{2 - 2 \gamma }{3} \right) \Gamma \left( \frac{1 - \gamma}{3} \right),\\
    &\mathcal{B}_0 = - \frac{2 \pi i}{3}  \Gamma \left( \frac{1 - \gamma }{3} \right) \Gamma \left( \frac{\gamma - 1}{3} \right), \\
    &\mathcal{C}_0 = 2 \pi i e^{\pi i \gamma} 3^{-2 + \gamma}  \Gamma \left( \frac{\gamma - 1}{3} \right) \Gamma \left( \frac{2\gamma - 2}{3} \right).
\end{align*}
\end{lem}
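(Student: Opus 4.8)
The plan is to follow the strategy that leads to Theorem 3.13 of \cite{GIL3}, adapted to the rank-$3$ subsidiary equation \eqref{Iwasawa Phi0}. The decisive structural fact is that, unlike the full Lax equation \eqref{first eq}, equation \eqref{Iwasawa Phi0} has a \emph{regular} (Fuchsian) singularity at $\zeta=\infty$ and only an irregular singularity of Poincar\'e rank $1$ at $\zeta=0$; indeed the substitution $\zeta\mapsto 1/\zeta$ turns the coefficient into a simple pole with residue $-m$, so the exponents at $\infty$ are exactly the diagonal entries $-\gamma,0,\gamma$ of $m$. Because of this, the connection matrix can be computed in closed form. First I would eliminate two of the three components of $\mathring\Phi$ by successive differentiation to obtain a scalar third-order linear ODE for a single entry. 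As in \cite{GIL3}, the cyclic symmetry of \eqref{Iwasawa Phi0} (the analogue of \eqref{cyclic symmetry with A}: one checks that $d_3^{-1}\mathring\Phi(\omega\zeta)$ again solves \eqref{Iwasawa Phi0}, using $d_3^{-1}N_0 d_3=\omega N_0$ for the leading matrix) lets one pass to the variable $\zeta^{3}$, converting the scalar equation into a generalized hypergeometric equation whose characteristic data at $0$ are the cube roots of unity carried by $d_3$. This threefold structure is the source of the arguments $\tfrac{1-\gamma}{3},\tfrac{2-2\gamma}{3},\tfrac{\gamma-1}{3},\tfrac{2\gamma-2}{3}$ of the Gamma functions in $\mathcal A_0,\mathcal B_0,\mathcal C_0$.

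Next I would write the independent solutions of this hypergeometric equation as Barnes-type (Mellin--Barnes) integrals whose integrand is a ratio of Gamma functions in the integration variable times a power of $\zeta$. Closing the contour so as to collect the poles produces the convergent Frobenius expansions at $\infty$. Assembling the three such solutions into a fundamental matrix and comparing with the normalized canonical solution $\mathring\Phi^{(\infty)}=(I+\O(1/\zeta))\zeta^{m}$ fixes the regular-singular block and produces the matrix $K$, which is the rank-$3$ analogue of $\Omega$ adapted to the exponents $-\gamma,0,\gamma$; the powers $\omega^{\pm\gamma}$ appearing in $K$ arise from the $\zeta^{\pm\gamma}$ factors continued across the sectors $\Omega^{(0)}_n$.

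The matching at $\zeta=0$ is where the Gamma-function prefactors enter. The same Barnes integrals are evaluated there by a saddle-point / steepest-descent analysis, which reproduces the exponential normalization $\mathring O_{0}(I+\O(\zeta))e^{d_3/\zeta}$ of the canonical solutions; the subdominant constants multiplying each exponential $e^{\omega^{j}/\zeta}$ are residues of the Gamma-function integrand and organize into the diagonal matrix $\Lambda$, together with the overall constant $-\tfrac{i\sqrt3}{4\pi^2}$ and the powers of $3$. Since the contour deformation places the integral representation in one definite Stokes sector, the solution thus obtained coincides with $\mathring\Phi^{(0)}_1$ only after insertion of a single Stokes factor, which is exactly the source of the factor $R_1=S_1^{(0)}$ in \eqref{decomposition of D_1}. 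Reading off $D_1=[\mathring\Phi^{(0)}_1]^{-1}\mathring\Phi^{(\infty)}$ then yields the stated factorization $D_1=R_1K^{-1}\Lambda^{-1}$.

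I expect the main obstacle to be bookkeeping rather than conceptual: tracking the analytic continuation of each Barnes integral from $\infty$ into the correct sector at $0$, pinning down precisely which Stokes factor is required (so that $R_1$ rather than $R_1^{-1}$ appears), and extracting the exact constants in $\Lambda$ — the signs, the prefactor $-\tfrac{i\sqrt3}{4\pi^2}$, and the factors $3^{-\gamma}$ and $3^{-2+\gamma}$ — from the saddle-point asymptotics. A convenient consistency check at the end is that the resulting $D_1$ must reproduce, through the relation $E_1=D_1 D_1^{*}C$ established above, the parametrization \eqref{parametrization of E_1 ver.2} of $E_1$ with $A^{\R}>0$, and must be compatible with the cyclic, anti-symmetry and reality relations satisfied by the monodromy data.
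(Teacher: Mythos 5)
Your proposal is correct and follows essentially the same route as the paper, which itself gives no detailed proof but defers to the method of Theorem 3.13 of \cite{GIL3}: reduce the subsidiary equation \eqref{Iwasawa Phi0} (Fuchsian at $\zeta=\infty$ with exponents $m$, rank-one irregular at $\zeta=0$) to a third-order scalar ODE, solve it by Barnes-type integrals of Gamma functions, and read off $D_1=[\mathring{\Phi}_1^{(0)}]^{-1}\mathring{\Phi}^{(\infty)}$ by matching the integral representations at both singular points, with the Stokes factor $R_1=S_1^{(0)}$ accounting for the sector bookkeeping. The only quibble is a slip of wording: at the irregular point $\zeta=0$ the constants assembling into $\Lambda$ come from the saddle-point (Stirling) evaluation of the Barnes integrals, not from residues, which — as you correctly state — instead produce the Frobenius expansions at the regular singular point $\zeta=\infty$.
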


\begin{thm} Let us introduce a real parameter $q^\R$ by
\begin{align*}
    q^\R := \frac{2 (\gamma - 1)^2 }{\kappa^2} 3^{2 (\gamma - 1)} \frac{\Gamma \left( \frac{\gamma - 1}{3} \right) \Gamma \left( \frac{2 \gamma - 2}{3} \right) }{ \Gamma \left( \frac{2 - 2 \gamma }{3} \right) \Gamma \left( \frac{1 - \gamma}{3} \right) }.
\end{align*}
Then,
\begin{align}
    E_1^{-1} = \frac{1}{\lambda_0} C \left[S_1^{(0)} \right]^{*-1} K^* \begin{pmatrix}
        \frac{1}{\lambda_1 q^{\R}} & & \\
        & 1 & \\
        & & \frac{q^\R}{\lambda_1} 
    \end{pmatrix}
    K \left[S_1^{(0)} \right]^{-1} \label{E_1 decomposition near 0}
\end{align}
where $\lambda_0 = \frac{4}{3} \sin^2 \frac{\pi}{3} (1 - \gamma)$ and $\lambda_1 = 2 \cos \frac{\pi}{3} (1 - \gamma)$.
\end{thm}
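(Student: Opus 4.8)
The plan is to assemble the final formula purely from the two structural results established immediately above: the factorization $E_1 = D_1 D_1^\ast C$ of Lemma~\ref{E_1 = D_1 D_1^* C}, and the explicit decomposition $D_1 = R_1 K^{-1}\Lambda^{-1}$ of \eqref{decomposition of D_1}, in which $R_1 = S_1^{(0)}$ and $\Lambda$ is diagonal. Substituting the decomposition into $E_1 = D_1 D_1^\ast C$ and using that $\Lambda$ is diagonal, so that $\Lambda^\ast = \overline{\Lambda}$ and $(K^{-1}\Lambda^{-1})^\ast = \overline{\Lambda}^{-1} K^{-\ast}$, I would first record
\begin{align*}
E_1 = R_1 K^{-1}\bigl(\Lambda\overline{\Lambda}\bigr)^{-1} K^{-\ast} R_1^\ast\, C .
\end{align*}
Inverting and using $C^{-1}=C$ (from $C^2=I$) then gives
\begin{align*}
E_1^{-1} = C\,[S_1^{(0)}]^{\ast-1} K^\ast \bigl(\Lambda\overline{\Lambda}\bigr) K\,[S_1^{(0)}]^{-1},
\end{align*}
which is exactly \eqref{E_1 decomposition near 0}, provided one establishes the single diagonal identity
\begin{align*}
\Lambda\overline{\Lambda} = \frac{1}{\lambda_0}\,\diag\!\left(\frac{1}{\lambda_1 q^{\R}},\,1,\,\frac{q^{\R}}{\lambda_1}\right).
\end{align*}
Thus the entire theorem reduces to computing the three positive numbers $|\Lambda_{ii}|^2=\Lambda_{ii}\overline{\Lambda_{ii}}$.

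The core of the argument is therefore this diagonal evaluation. In each $|\Lambda_{ii}|^2$ the phases $e^{\mp\pi i\gamma}$ and the explicit factors of $i$ coming from $\mathcal A_0,\mathcal B_0,\mathcal C_0$ cancel, leaving products of squared Gamma values together with elementary factors in $\kappa$ and $\gamma$. For the middle entry I would apply the reflection formula in the form $\Gamma(u)\Gamma(-u) = -\pi/(u\sin\pi u)$ with $u=(1-\gamma)/3$ to collapse $\mathcal B_0$, obtaining at once
\begin{align*}
|\Lambda_{22}|^2 = \tfrac{3}{4}\,\sin^{-2}\tfrac{\pi(1-\gamma)}{3} = \tfrac{1}{\lambda_0}.
\end{align*}

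For the corner entries I would avoid matching $|\Lambda_{11}|^2$ and $|\Lambda_{33}|^2$ head-on and instead organize the Gamma factors through their ratio and product. Using the definition of $q^{\R}$ directly, one verifies $|\Lambda_{33}|^2/|\Lambda_{11}|^2 = (q^{\R})^2$; applying the reflection formula to both $\Gamma(u)\Gamma(-u)$ (with $u=(1-\gamma)/3$) and $\Gamma(v)\Gamma(-v)$ (with $v=(2-2\gamma)/3$), together with the double‑angle identity $\sin\tfrac{2\pi(1-\gamma)}{3}=2\sin\tfrac{\pi(1-\gamma)}{3}\cos\tfrac{\pi(1-\gamma)}{3}$, yields $|\Lambda_{11}|^2\,|\Lambda_{33}|^2 = 1/(\lambda_0\lambda_1)^2$ (the $\kappa^{\pm2}$ factors cancelling in the product). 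Since $\lambda_0,\lambda_1>0$ and $q^{\R}>0$ on $\gamma\in(-1/2,1)$—the Gamma arguments $(\gamma-1)/3$ and $(2\gamma-2)/3$ both lie in $(-1,0)$, so those factors pair negative with negative—these two relations pin down the entries uniquely as $|\Lambda_{11}|^2 = 1/(\lambda_0\lambda_1 q^{\R})$ and $|\Lambda_{33}|^2 = q^{\R}/(\lambda_0\lambda_1)$, completing the identity. The only genuinely delicate point, and the one I would treat most carefully, is exactly this sign‑and‑phase bookkeeping: verifying positivity of $q^{\R}$ and the cancellation of the phases fixes which of the $(1,1)$ and $(3,3)$ slots receives $1/(\lambda_1 q^{\R})$ versus $q^{\R}/\lambda_1$, and an oversight there would interchange them. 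Everything else is a direct application of the Gamma reflection and duplication formulae.
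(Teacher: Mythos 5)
Your proposal is correct and follows essentially the same route as the paper, whose proof consists precisely of substituting the decomposition $D_1 = R_1 K^{-1}\Lambda^{-1}$ (with $R_1 = S_1^{(0)}$) into the identity $E_1 = D_1 D_1^{*} C$ and carrying out the direct computation. Your diagonal bookkeeping checks out --- $|\Lambda_{22}|^2 = 1/\lambda_0$, $|\Lambda_{33}|^2/|\Lambda_{11}|^2 = (q^{\R})^2$, $|\Lambda_{11}|^2\,|\Lambda_{33}|^2 = 1/(\lambda_0\lambda_1)^2$, and $q^{\R}>0$ for $\gamma\in(-1/2,1)$ --- so you have merely made explicit the computation the paper leaves implicit (only the reflection formula and the sine double-angle identity are needed, not duplication).
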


\begin{proof}
    Substituting the decomposition \eqref{decomposition of D_1} of $D_1$ into $E_1 = D_1 D_1^* C$, one can get \eqref{E_1 decomposition near 0} by direct computation.
\end{proof}

Moreover, \eqref{E_1 decomposition near 0} implies
\begin{align}
    \left[ S_1^{(0)} \right]^* C E_1^{-1} S_1^{(0)} = \frac{1}{\lambda_0} K^* \begin{pmatrix}
        \frac{1}{\lambda_1 q^\R} & & \\
        & 1 & \\
        & & \frac{q^\R}{\lambda_1} 
    \end{pmatrix} K. \label{E_1 decomposition near 0 ver. 2}
\end{align}
Note that the LHS of \eqref{E_1 decomposition near 0 ver. 2} is parametrized   only by the  monodromy data
\begin{align}
\begin{aligned}
    &\left[ S_1^{(0)} \right]^* C E_1^{-1} S_1^{(0)}\\
    &= \begin{pmatrix}
        9 A^{\R} & 9 \omega^2 B & 9 \omega \overline{B} \\
        9 \omega \overline{B} & 9 A^{\R} & 9 \omega s^{\R} A^{\R} + 9 \omega^2 B - 9 s^{\R} \overline{B}\\
        9 \omega^2 B & 9 \omega^2 s^{\R} A^{\R} - 9 s^{\R} B + 9 \omega \overline{B} & 9 A^{\R}
    \end{pmatrix}
\end{aligned} \label{expression for decomposition of E_1 inverse I}
\end{align}
by \eqref{parametrization of S_1,2 at 0} and \eqref{parametrization of E_1 ver.2}, while the RHS of \eqref{E_1 decomposition near 0 ver. 2} is parametrized by the asymptotic data
\begin{align}
    \frac{1}{\lambda_0} K^* \begin{pmatrix}
        \frac{1}{\lambda_1 q^\R} & & \\
        & 1 & \\
        & & \frac{q^\R}{\lambda_1} 
    \end{pmatrix} K &= \frac{1}{\lambda_0 \lambda_1}\begin{pmatrix}
        \Lambda_1 & \Lambda_2 & \overline{\Lambda_2} \\
        \overline{\Lambda_2} & \Lambda_1 & \overline{\Lambda_3} \\
        \Lambda_2 & \Lambda_3 & \Lambda_1 
    \end{pmatrix}, \label{expression for decomposition of E_1 inverse II}
\end{align}
where
\begin{align*}
    &\Lambda_1 = \frac{1}{q^\R} + \lambda_1 + q^\R, \qquad \Lambda_2 = \frac{1}{q^\R} \omega^{\gamma} + \lambda_1 \omega + q^\R \omega^{2 - \gamma}, \\
    &\Lambda_3 = \frac{1}{q^\R} \omega^{2 \gamma} + \lambda_1 \omega^2 + q^\R \omega^{4 - 2 \gamma}.
\end{align*}

Comparing \eqref{expression for decomposition of E_1 inverse I} with \eqref{expression for decomposition of E_1 inverse II}, we have
\begin{align} 
    & 9A^{\R} = \frac{1}{\lambda_0 \lambda_1} \Lambda_1 \label{prep for connection formula 1}\\
    & 9 \omega^2 B = \frac{1}{\lambda_0 \lambda_1} \Lambda_2 \label{prep for connection formula 2}\\
    & 9 \omega^2 s^{\R} A^{\R} - 9 s^{\R} B + 9 \omega \overline{B} = \frac{1}{\lambda_0 \lambda_1} \Lambda_3. \label{prep for connection formula 3}
\end{align} 

This provides us with the one-to-one correspondence between monodromy data $\{s^{\R}, y^{\R} \}$ and asymptotic data $\{ \gamma, \rho \}$.
\begin{thm}One has
\begin{align}
    & s^{\R} = -2 \cos \left[\frac{2 \pi}{3}(1 - \gamma) \right] - 1 \label{connection formula 1},\\
    & y^{\R} = \frac{1}{9 \lambda_0 \lambda_1} \left( q^\R - \frac{1}{q^\R} \right) \sin \left[ \frac{2 \pi}{3} (1 - \gamma) \right]. \label{connection formula 2}
\end{align}
where $\lambda_0 = \frac{4}{3} \sin^2 \frac{\pi}{3} (1 - \gamma)$, $\lambda_1 = 2 \cos \frac{\pi}{3} (1 - \gamma)$, and 
    $q^\R = \frac{2 (\gamma - 1)^2 }{e^{-2\rho}} 3^{2 (\gamma - 1)} \frac{\Gamma \left( \frac{\gamma - 1}{3} \right) \Gamma \left( \frac{2 \gamma - 2}{3} \right) }{ \Gamma \left( \frac{2 - 2 \gamma }{3} \right) \Gamma \left( \frac{1 - \gamma}{3} \right) }.$
\end{thm}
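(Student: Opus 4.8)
The plan is to extract $s^\R$ and $y^\R$ from the three scalar relations \eqref{prep for connection formula 1}--\eqref{prep for connection formula 3}, which already equate the monodromy parametrization of the matrix in \eqref{E_1 decomposition near 0 ver. 2} with its asymptotic-data parametrization. First I would solve \eqref{prep for connection formula 1} and \eqref{prep for connection formula 2} for the two quantities $A^\R$ and $\omega B$, obtaining
\begin{align*}
 A^\R = \frac{1}{9\lambda_0\lambda_1}\left(q^\R + \frac{1}{q^\R} + \lambda_1\right),\qquad \omega B = \frac{1}{9\lambda_0\lambda_1}\,\omega^2\Lambda_2 ,
\end{align*}
where the second equality rewrites $9\omega^2 B = \lambda_0^{-1}\lambda_1^{-1}\Lambda_2$ using $\omega^3 = 1$.

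For $y^\R$ I would use the identity $y^\R = \Im(\omega B)$ recorded in Proposition \ref{ABS TH}. Expanding $\omega^2\Lambda_2 = \frac{1}{q^\R}\omega^{2+\gamma} + \lambda_1 + q^\R\omega^{1-\gamma}$ and taking imaginary parts, the only input needed is the reflection relation $\tfrac{2\pi(2+\gamma)}{3} = 2\pi - \tfrac{2\pi(1-\gamma)}{3}$, which gives $\sin\tfrac{2\pi(2+\gamma)}{3} = -\sin\tfrac{2\pi(1-\gamma)}{3}$ and $\cos\tfrac{2\pi(2+\gamma)}{3} = \cos\tfrac{2\pi(1-\gamma)}{3}$. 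The imaginary part of $\omega^2\Lambda_2$ then collapses to $\bigl(q^\R - \tfrac{1}{q^\R}\bigr)\sin\tfrac{2\pi(1-\gamma)}{3}$, and dividing by $9\lambda_0\lambda_1$ yields \eqref{connection formula 2} immediately.

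For $s^\R$ I would avoid the more cumbersome relation \eqref{prep for connection formula 3} and instead feed the values of $A^\R$ and $\Re(\omega B)$ into the structural constraint \eqref{identity 6 at parametrization of E_1}, rewritten as $(1+s^\R)A^\R = \tfrac13 - 2\Re(\omega B)$ via $\omega B + \omega^2\overline{B} = 2\Re(\omega B)$. Since $\Re(\omega^2\Lambda_2) = \bigl(q^\R + \tfrac{1}{q^\R}\bigr)\cos\tfrac{2\pi(1-\gamma)}{3} + \lambda_1$, both sides become explicit rational expressions in $q^\R + 1/q^\R$. Substituting $\lambda_0 = \tfrac43\sin^2\tfrac{\pi(1-\gamma)}{3}$, $\lambda_1 = 2\cos\tfrac{\pi(1-\gamma)}{3}$ and the double-angle formula $\cos\tfrac{2\pi(1-\gamma)}{3} = 1 - 2\sin^2\tfrac{\pi(1-\gamma)}{3}$, I expect the numerator of $\tfrac13 - 2\Re(\omega B)$ to factor as $-2\cos\tfrac{2\pi(1-\gamma)}{3}\bigl(q^\R + 1/q^\R + \lambda_1\bigr)$, exactly matching $9\lambda_0\lambda_1 A^\R = q^\R + 1/q^\R + \lambda_1$ up to the scalar $-2\cos\tfrac{2\pi(1-\gamma)}{3}$. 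Cancelling this common factor leaves $1 + s^\R = -2\cos\tfrac{2\pi(1-\gamma)}{3}$, which is \eqref{connection formula 1}.

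The main obstacle is precisely this last factorization: a priori $s^\R$ could depend on both $\gamma$ and $\rho$ (through $q^\R$), and the content of \eqref{connection formula 1} is that the $q^\R$-dependence drops out entirely. This cancellation hinges on the trigonometric identity $3\lambda_0 - 2 = -2\cos\tfrac{2\pi(1-\gamma)}{3}$, equivalently $4\sin^2\tfrac{\pi(1-\gamma)}{3} = 2 - 2\cos\tfrac{2\pi(1-\gamma)}{3}$, which forces the coefficient of the $q^\R$-term and that of the $q^\R$-independent term in the numerator to carry the same factor $-2\cos\tfrac{2\pi(1-\gamma)}{3}$. Verifying this identity and bookkeeping the powers of $\omega$ is routine but is where all the care is needed; once it is in place, both connection formulae follow without appeal to \eqref{prep for connection formula 3} or \eqref{identity 7 at parametrization of E_1}, which then serve only as consistency checks.
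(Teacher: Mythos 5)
Your proposal is correct, and for \eqref{connection formula 2} it coincides with the paper's own proof: the paper obtains that formula by taking the imaginary part of \eqref{prep for connection formula 2} multiplied by $\omega^2$, which is exactly your computation of $y^\R = \Im(\omega B)$ from $\omega B = \frac{1}{9\lambda_0\lambda_1}\,\omega^2\Lambda_2$ and the reflection $\tfrac{2\pi(2+\gamma)}{3} = 2\pi - \tfrac{2\pi(1-\gamma)}{3}$.

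For \eqref{connection formula 1} you genuinely deviate. The paper substitutes \eqref{prep for connection formula 1} and \eqref{prep for connection formula 2} into the third entry relation \eqref{prep for connection formula 3} and solves for $s^\R$; you bypass \eqref{prep for connection formula 3} and instead feed $A^\R$ and $\Re(\omega B)$ into the constraint \eqref{identity 6 at parametrization of E_1}. This is legitimate and non-circular, because \eqref{identity 6 at parametrization of E_1} was established independently of the Iwasawa data in Proposition \ref{E parametrization} (proved in Appendix \ref{proof of E} from the symmetries of $E_1$); in effect you run in reverse the paper's remark after Proposition \ref{Prop xyA} that \eqref{identity 6 at parametrization of E_1} and \eqref{identity 7 at parametrization of E_1} are automatically satisfied by \eqref{x in terms of Nj}--\eqref{A in terms of Nj}. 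Your key factorization checks out: with $\theta = \tfrac{\pi(1-\gamma)}{3}$ one has $3\lambda_0\lambda_1 = (2 - 2\cos 2\theta)\lambda_1$, so $\tfrac13 - 2\Re(\omega B) = \frac{-2\cos 2\theta}{9\lambda_0\lambda_1}\bigl(q^\R + \tfrac{1}{q^\R} + \lambda_1\bigr)$, which matches $(1+s^\R)A^\R$ with $A^\R = \frac{1}{9\lambda_0\lambda_1}\bigl(q^\R + \tfrac{1}{q^\R} + \lambda_1\bigr)$. The one point you should make explicit is that the cancelled factor is nonzero: for $\gamma \in (-1/2,1)$ the arguments $\tfrac{\gamma-1}{3}$ and $\tfrac{2\gamma-2}{3}$ lie in $(-1,0)$ where $\Gamma < 0$, so the Gamma ratio in $q^\R$ is positive and $q^\R > 0$, whence $q^\R + 1/q^\R \geq 2$ and $\lambda_1 = 2\cos\theta > 0$, so $q^\R + 1/q^\R + \lambda_1 > 0$ (equivalently $A^\R \geq \tfrac13 > 0$, as recorded in the paper). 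What your route buys is a shorter computation --- $\Lambda_3$ and its $\omega^{\pm 2\gamma}$ bookkeeping never appear; what the paper's route buys is self-containment within the single matrix identity \eqref{E_1 decomposition near 0 ver. 2}, leaving \eqref{identity 6 at parametrization of E_1} and \eqref{identity 7 at parametrization of E_1} available afterwards as independent consistency checks rather than inputs.
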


\begin{proof}
    Substitute \eqref{prep for connection formula 1} and \eqref{prep for connection formula 2} into \eqref{prep for connection formula 3} to have \eqref{connection formula 1}.
    Take the imaginary part of \eqref{prep for connection formula 2} multiplied by $\omega^2$ to have \eqref{connection formula 2}.
\end{proof}

\begin{rem}
      Since $\gamma$ satifies $-\frac{1}{2} < \gamma < 1$, it follows that $-3 < s^{\R} < 1$ from \eqref{connection formula 1}.
\end{rem}


\subsection{Connection Formulae in terms of $q^\R, \gamma$}

In this subsection, we will describe different ways to write down the connection formulae in terms of $q^\R,\gamma$. Formulae \eqref{prep for connection formula 1}-\eqref{prep for connection formula 3} imply the following statement. 
\begin{prop}\label{Prop xyA}
    Let $\omega B=x^{\R}+iy^{\R}$ and $A^\R$ be the monodromy data, then one has
    \begin{align}
        &x^{\R} \equiv x^{\R}(\rho,\gamma)=\dfrac{1}{9\lambda_0\lambda_1}\left[\bigl(q^\R+\dfrac{1}{q^\R}\bigr)\cos\Bigl(\dfrac{2\pi}{3}(1-\gamma)\Bigr)+\lambda_1\right], \label{x in terms of Nj}\\
        &y^{\R}\equiv y^{\R}(\rho,\gamma)=\dfrac{1}{9\lambda_0\lambda_1}\left[\bigl(q^\R-\dfrac{1}{q^\R}\bigr)\sin\Bigl(\dfrac{2\pi}{3}(1-\gamma)\Bigr)\right],\label{y in terms of Nj}\\
        &3A^\R\equiv 3A^\R(\rho,\gamma)=\dfrac{1}{3\lambda_0\lambda_1}\left[\bigl(q^\R+\dfrac1{q^\R}\bigr) +\lambda_1\right], \label{A in terms of Nj}
    \end{align}
    where 
    \begin{align*}
    q^\R \equiv q^\R(\rho,\gamma)= \frac{2 (\gamma - 1)^2 }{e^{-2\rho}} 3^{2 (\gamma - 1)} \frac{\Gamma \left( \frac{\gamma - 1}{3} \right) \Gamma \left( \frac{2 \gamma - 2}{3} \right) }{ \Gamma \left( \frac{2 - 2 \gamma }{3} \right) \Gamma \left( \frac{1 - \gamma}{3} \right) }.
\end{align*}
\end{prop}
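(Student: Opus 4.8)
The plan is to derive all three formulae directly from the three scalar matching relations \eqref{prep for connection formula 1}--\eqref{prep for connection formula 3}, which are already obtained by comparing the monodromy-side expression \eqref{expression for decomposition of E_1 inverse I} with the asymptotic-side expression \eqref{expression for decomposition of E_1 inverse II} of the matrix identity \eqref{E_1 decomposition near 0 ver. 2}. No new analytic input is needed; everything reduces to elementary algebra with the cube root of unity $\omega = e^{2\pi i/3}$, together with the explicit forms $\Lambda_1 = \tfrac{1}{q^\R} + \lambda_1 + q^\R$ and $\Lambda_2 = \tfrac{1}{q^\R}\omega^{\gamma} + \lambda_1\omega + q^\R\omega^{2-\gamma}$.

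The formula \eqref{A in terms of Nj} for $3A^\R$ is immediate: substituting $\Lambda_1 = q^\R + \tfrac{1}{q^\R} + \lambda_1$ into \eqref{prep for connection formula 1} gives $9A^\R = \tfrac{1}{\lambda_0\lambda_1}\big(q^\R + \tfrac1{q^\R} + \lambda_1\big)$, and dividing by $3$ yields the claim.

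For $x^\R$ and $y^\R$ I would work from \eqref{prep for connection formula 2}. Since $x^\R = \Re(\omega B)$ and $y^\R = \Im(\omega B)$ by definition, I first rewrite $9\omega B = \omega^2\big(9\omega^2 B\big) = \tfrac{1}{\lambda_0\lambda_1}\,\omega^2\Lambda_2$, using $\omega^4 = \omega$. The crux is to simplify $\omega^2\Lambda_2$: with $\omega^3 = 1$ it reduces to $\omega^2\Lambda_2 = \tfrac{1}{q^\R}\omega^{2+\gamma} + \lambda_1 + q^\R\omega^{1-\gamma}$, and then I would invoke the key identity $\omega^{2+\gamma} = \overline{\omega^{1-\gamma}}$, valid because $2+\gamma \equiv -(1-\gamma)\pmod 3$. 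Writing $\omega^{1-\gamma} = \cos\!\big(\tfrac{2\pi}{3}(1-\gamma)\big) + i\sin\!\big(\tfrac{2\pi}{3}(1-\gamma)\big)$ then gives $\omega^2\Lambda_2 = \big(q^\R + \tfrac1{q^\R}\big)\cos\!\big(\tfrac{2\pi}{3}(1-\gamma)\big) + \lambda_1 + i\big(q^\R - \tfrac1{q^\R}\big)\sin\!\big(\tfrac{2\pi}{3}(1-\gamma)\big)$. Taking real and imaginary parts and dividing by $9$ produces \eqref{x in terms of Nj} and \eqref{y in terms of Nj} respectively. Relation \eqref{prep for connection formula 3} is not required here; it was already consumed in establishing the formula for $s^\R$, and serves only as a consistency check.

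The computation is entirely routine, so there is no substantial obstacle; the only point requiring a moment's care is recognizing the congruence $2+\gamma \equiv -(1-\gamma)\pmod 3$ that converts the two off-diagonal $\omega$-powers into a complex-conjugate pair, which is exactly what collapses $\omega^2\Lambda_2$ into the stated trigonometric form and guarantees that the real and imaginary parts separate cleanly into the real quantities $x^\R$ and $y^\R$.
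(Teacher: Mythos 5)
Your proposal is correct and is essentially the paper's own (largely implicit) argument: the paper derives the proposition directly from \eqref{prep for connection formula 1}--\eqref{prep for connection formula 3}, and indeed obtains \eqref{connection formula 2} by exactly your manipulation of multiplying \eqref{prep for connection formula 2} by $\omega^2$ and taking the imaginary part. Your computation --- the reduction $\omega^2\Lambda_2 = \tfrac{1}{q^\R}\omega^{2+\gamma} + \lambda_1 + q^\R\omega^{1-\gamma}$, the congruence $2+\gamma \equiv \gamma-1 \pmod 3$ giving $\omega^{2+\gamma} = \overline{\omega^{1-\gamma}}$, and the use of the reality of $q^\R$, $\lambda_0$, $\lambda_1$ to separate real and imaginary parts --- correctly fills in the elementary algebra the paper leaves to the reader, and you are right that \eqref{prep for connection formula 3} is not needed here.
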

Direct computation shows that the relations between $A^{\R}, B, s^{\R}$ described by 
\eqref{identity 7 at parametrization of E_1} and \eqref{identity 6 at parametrization of E_1} are automatically satisfied  if we substitute formulae \eqref{x in terms of Nj}, \eqref{y in terms of Nj}, and \eqref{A in terms of Nj}.

In order to describe the asymptotics of $w_0$ at infinity in terms of $q^{\R},\gamma$ we need the following corollary of Proposition \ref{Prop xyA}.
\begin{cor} For $\gamma\in (-1/2,1)$, one has 
\begin{align}
\arg \omega B= \arg\left[\left(q^\R+\dfrac{1}{q^\R}\right)\cos\left(\dfrac{2\pi}{3}(1-\gamma)\right) + \lambda_1+i \left(q^\R-\dfrac{1}{q^\R}\right)\sin\left(\dfrac{2\pi}{3}(1-\gamma)\right) \right]
\end{align}
\end{cor}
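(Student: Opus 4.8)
The plan is to read off the argument of $\omega B$ directly from the explicit real and imaginary parts supplied by Proposition \ref{Prop xyA}, after checking that the common prefactor multiplying those parts is a positive real number (and therefore invisible to $\arg$). The entire content of the corollary is contained in the three formulae \eqref{x in terms of Nj}, \eqref{y in terms of Nj}; no further analytic input is needed.

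First I would recall from Proposition \ref{Prop xyA} that $\omega B = x^{\R} + i y^{\R}$, with
\begin{align*}
x^{\R} &= \frac{1}{9\lambda_0\lambda_1}\left[\left(q^\R+\frac{1}{q^\R}\right)\cos\left(\frac{2\pi}{3}(1-\gamma)\right)+\lambda_1\right],\\
y^{\R} &= \frac{1}{9\lambda_0\lambda_1}\left(q^\R-\frac{1}{q^\R}\right)\sin\left(\frac{2\pi}{3}(1-\gamma)\right).
\end{align*}
Pulling out the common factor $\tfrac{1}{9\lambda_0\lambda_1}$ then yields
\begin{align*}
\omega B = \frac{1}{9\lambda_0\lambda_1}\left[\left(q^\R+\frac{1}{q^\R}\right)\cos\left(\frac{2\pi}{3}(1-\gamma)\right)+\lambda_1 + i\left(q^\R-\frac{1}{q^\R}\right)\sin\left(\frac{2\pi}{3}(1-\gamma)\right)\right],
\end{align*}
so that the bracketed quantity is exactly the complex number whose argument appears on the right-hand side of the corollary.

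The only step that requires an argument is the positivity of the scalar $\tfrac{1}{9\lambda_0\lambda_1}$, for then the elementary fact $\arg(c\,z)=\arg(z)$ valid for any $c>0$ finishes the proof. I would verify this using the definitions $\lambda_0 = \tfrac43\sin^2\tfrac{\pi}{3}(1-\gamma)$ and $\lambda_1 = 2\cos\tfrac{\pi}{3}(1-\gamma)$: for $\gamma\in(-1/2,1)$ one has $1-\gamma\in(0,3/2)$ and hence $\tfrac{\pi}{3}(1-\gamma)\in(0,\pi/2)$, an interval on which $\sin\tfrac{\pi}{3}(1-\gamma)\neq 0$ and $\cos\tfrac{\pi}{3}(1-\gamma)>0$. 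Thus $\lambda_0>0$ and $\lambda_1>0$, whence $9\lambda_0\lambda_1>0$ and the prefactor may be dropped without affecting the argument. I do not expect any genuine obstacle here; the restriction $\gamma\in(-1/2,1)$ is used precisely to keep $\tfrac{\pi}{3}(1-\gamma)$ inside the first quadrant so that $\lambda_0\lambda_1>0$, and the corollary is then a one-line consequence of Proposition \ref{Prop xyA}.
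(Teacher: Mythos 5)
Your proposal is correct and is essentially the paper's own proof: the paper also deduces the corollary directly from Proposition \ref{Prop xyA} together with the positivity of $\lambda_0$ and $\lambda_1$ on $\gamma\in(-1/2,1)$, which you verify explicitly by noting $\tfrac{\pi}{3}(1-\gamma)\in(0,\pi/2)$. Your write-up merely spells out the one-line argument in more detail; no gap.
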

 
\begin{proof}
    Follows from Proposition \ref{Prop xyA} and the fact that
    $\lambda_0,\lambda_1 > 0$ for $\gamma \in (-1/2, 1)$.
\end{proof}
Observe that the statement of Theorem \ref{result 1} can be written in the following way
\begin{align*}
    w_0(x) = \frac{\sigma}{\sqrt{x}} \cos \left( 2 \sqrt{3} x +  \dfrac{\ln 3A^\R}{2\pi} \, \ln x + \psi \right) + \O \left( \frac{1}{x} \right), 
\end{align*}
where
\begin{align*}
    \sigma^2 &=  \frac{ \sqrt{3}}{ {4 \pi}} \ln 3A^\R, \; \sigma > 0,\\
    \psi &= \frac{\ln (24\sqrt{3})}{2 \pi } \ln 3A^{\R}  + \frac{3\pi}{4} + \arg \omega B - \arg \Gamma\left(i\,\frac{\ln 3A^{\R}}{2 \pi } \right).\\
\end{align*}

Combining these observations we have the following connection formula between the asymptotics of $w_0(x)$ at zero and infinity.

\begin{thm}
For every $\gamma\in (-1/2,1)$ and every $\rho \in\R$, there exists a unique solution $w_0(x)$ of the radial Toda equation, i.e.,
\begin{align*}
      (w_0)_{xx} + \frac{1}{x} (w_0)_x = 2e^{-2w_0} -2e^{4w_{0}},
\end{align*}
such that
\begin{align*}
    w_0(x) = \gamma \ln x + \rho + o(1), \qquad x\rightarrow 0,
\end{align*}
and 
\begin{align*}
 w_0(x) = \frac{\sigma}{\sqrt{x}} \cos \left( 2 \sqrt{3} x +  \dfrac{\ln 3A^\R}{2\pi} \, \ln x + \psi \right) + \O \left( \frac{1}{x} \right), \quad x\rightarrow \infty,
 \end{align*}
 where 
 \begin{align*}
    \sigma^2 &=  \frac{ \sqrt{3}}{ {4 \pi}} \ln 3A^\R, \; \sigma > 0,\\
    \psi &= \frac{\ln (24\sqrt{3})}{2 \pi } \ln 3A^{\R}  + \frac{3\pi}{4} + \arg \omega B - \arg \Gamma\left(i\,\frac{\ln 3A^{\R}}{2 \pi } \right),\\
    3A^\R  &=\dfrac{1}{3\lambda_0\lambda_1}\left[\left(q^\R+\dfrac{1}{q^\R}\right) +\lambda_1\right],\\
    \arg \omega B &=  \arg \left[ \left(q^\R+\dfrac{1}{q^\R}\right)\cos\left(\dfrac{2\pi}{3}(1-\gamma)\right) + \lambda_1 + i \left(q^\R-\dfrac{1}{q^\R}\right) \sin \left(\dfrac{2\pi}{3}(1-\gamma)\right) \right],\\
    \lambda_1 &= 2 \cos \frac{\pi}{3} (1 - \gamma),\\
    \lambda_0 &= \frac{4}{3} \sin^2 \frac{\pi}{3} (1 - \gamma),\\
    q^\R &= \frac{2 (\gamma - 1)^2 }{e^{-2\rho}} 3^{2 (\gamma - 1)} \frac{\Gamma \left( \frac{\gamma - 1}{3} \right) \Gamma \left( \frac{2 \gamma - 2}{3} \right) }{ \Gamma \left( \frac{2 - 2 \gamma }{3} \right) \Gamma \left( \frac{1 - \gamma}{3} \right) }.
 \end{align*}
 or
\begin{align*}
    \sigma^2 &=  \frac{ \sqrt{3}}{ {2}} X, \; \sigma > 0,\\
    \psi &=  \ln (24\sqrt{3})  X  + \frac{3\pi}{4} + \alpha  - \arg \Gamma\left(i\,X \right),\\
    X &= \dfrac{1}{2\pi}\ln \left[ \dfrac{1}{8 \cos\pi\tilde{\gamma} \sin^2\pi\tilde{\gamma}}    \left(q^\R+\dfrac{1}{q^\R}\right) +\dfrac{1}{4\sin^2\pi\tilde{\gamma}} \right],\\
    \alpha &=  \arg \left( \left(q^\R+\dfrac{1}{q^\R}\right) \cos 2\pi\tilde{\gamma} +2\cos\pi\tilde{\gamma} + i \left(q^\R-\dfrac{1}{q^\R}\right) \sin 2\pi\tilde{\gamma} \right),\\
    q^\R &= \frac{18 \tilde{\gamma}^2 }{e^{-2\rho}} 3^{-6 \tilde{\gamma}} \frac{\Gamma (-\tilde{\gamma}) \Gamma (-2\tilde{\gamma}) }{ \Gamma (\tilde{\gamma}) \Gamma (2\tilde{\gamma}) },\\
    \tilde{\gamma}&=\dfrac{1-\gamma}{3}.
 \end{align*}
\end{thm}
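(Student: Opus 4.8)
The plan is to assemble this final connection theorem from three ingredients already established: the large-$x$ asymptotics of Theorem \ref{result 1}, the small-$x$ connection formulae of Proposition \ref{Prop xyA} and its corollary obtained via Iwasawa factorization, and the solvability statement guaranteeing that every point of $\mathcal{M}$ produces a genuine solution. Thus the work is essentially one of translation: rewriting the parameters $\sigma$ and $\psi$, which Theorem \ref{result 1} delivers in terms of the monodromy quantity $s_1=\omega^2 B/A^{\R}$, first in terms of $3A^{\R}$ and $\arg\omega B$, and then in terms of the zero-data $q^{\R}$ and $\gamma$.

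First I would recast Theorem \ref{result 1}. The key algebraic identity is $1-|s_1|^2=(3A^{\R})^{-1}$, which follows immediately from $|s_1|^2=|B|^2/(A^{\R})^2$ together with \eqref{identity 7 at parametrization of E_1}; consequently $\nu=-\tfrac{1}{2\pi i}\ln 3A^{\R}=i\tfrac{\ln 3A^{\R}}{2\pi}$ and $\sigma^2=\tfrac{\sqrt3}{4\pi}\ln 3A^{\R}$. For the phase, since $A^{\R}>0$ one has $\arg s_1=\arg(\omega^2 B)=\arg\omega B+\tfrac{2\pi}{3}$, so the constant $\tfrac{\pi}{12}$ appearing in Theorem \ref{result 1} combines with $\arg s_1$ to give exactly $\tfrac{3\pi}{4}+\arg\omega B$. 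This yields the intermediate form with $\psi=\tfrac{\ln(24\sqrt3)}{2\pi}\ln 3A^{\R}+\tfrac{3\pi}{4}+\arg\omega B-\arg\Gamma\!\big(i\tfrac{\ln 3A^{\R}}{2\pi}\big)$, and one checks that the logarithmic coefficient $\tfrac{1}{2\pi}\ln 3A^{\R}$ agrees with $\tfrac{2}{\sqrt3}\sigma^2$ as in \eqref{infty}. Next I would substitute the Iwasawa formulae: Proposition \ref{Prop xyA} gives $3A^{\R}=\tfrac{1}{3\lambda_0\lambda_1}\big[(q^{\R}+1/q^{\R})+\lambda_1\big]$, and using $3\lambda_0\lambda_1=8\cos\tfrac{\pi(1-\gamma)}{3}\sin^2\tfrac{\pi(1-\gamma)}{3}$ this expands to exactly the bracket defining $X=\tfrac{1}{2\pi}\ln 3A^{\R}$, while the corollary of Proposition \ref{Prop xyA} identifies $\arg\omega B$ with $\alpha$. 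Writing $\tilde\gamma=(1-\gamma)/3$ then produces the two equivalent presentations in the statement.

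Finally, existence and uniqueness follow from the bijection between asymptotic and monodromy data. Given $(\gamma,\rho)\in(-1/2,1)\times\R$, formulae \eqref{connection formula 1}--\eqref{connection formula 2} assign a point $(s^{\R},y^{\R})\in\mathcal{M}$, and the solvability theorem together with the Iwasawa construction of subsection \ref{section 4.1} produces a solution with the prescribed behaviour $\gamma\ln x+\rho+o(1)$ at zero; uniqueness holds because the zero-data determine the monodromy data and hence, through the inverse monodromy problem, the solution itself. I expect the main obstacle to be the careful bookkeeping of the phase terms in $\psi$ --- tracking the $\omega^2 B$ versus $\omega B$ discrepancy, the branch of $\arg$, and the constant shifts so that the $\arg\Gamma$ contribution lands on the correct argument --- together with verifying $X>0$ (equivalently $3A^{\R}>1$), which is where the strict positivity $A^{\R}>1/3$ noted after Proposition \ref{E parametrization}, i.e. $|s_1|<1$, is needed to ensure that $\sigma$ is real and positive.
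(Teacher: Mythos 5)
Your proposal is correct and takes essentially the same route as the paper: there the theorem is likewise obtained by pure synthesis --- rewriting Theorem \ref{result 1} via the identity $1-|s_1|^2=(3A^{\R})^{-1}$ (hence $\nu=i\,\frac{\ln 3A^{\R}}{2\pi}$) and the phase bookkeeping $\frac{\pi}{12}+\arg s_1=\frac{3\pi}{4}+\arg \omega B$, then substituting Proposition \ref{Prop xyA} and its corollary with $3\lambda_0\lambda_1=8\cos\pi\tilde{\gamma}\sin^2\pi\tilde{\gamma}$, and settling existence and uniqueness through the one-to-one correspondence between $(\gamma,\rho)$ and the monodromy data plus the solvability of the Riemann--Hilbert problems. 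The only quibble is that the paper asserts $A^{\R}\geq \frac{1}{3}$ rather than your strict $A^{\R}>\frac{1}{3}$ (equality corresponding to the degenerate trivial solution with $s_1=0$, $\sigma=0$), which does not affect the argument.
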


Every real solution of \eqref{negative tt*-Toda with x when n=2} corresponds to a point in the monodromy data set $\mathcal{M}$ (see \eqref{monodormy data}). Given any monodromy data $(s^\R, y^\R)$, with $A^\R > 0$, we can solve equations \eqref{connection formula 1} and \eqref{connection formula 2} for $\gamma$ and $\rho$.
As stated in the remark in section \ref{section titled more about the monodromy data}, we shall show in \cite{gikmo2} that there are no solutions of \eqref{negative tt*-Toda with x when n=2} when $A^\R < 0$.
These facts yield the last statement of Theorem \ref{main theorem}, i.e., the completeness of the description \eqref{zero} of the behavior of solutions of \eqref{negative tt*-Toda with x when n=2} at $x = 0$.

\appendix


\section{Parametrization of the Connection Matrix $E_1$}\label{proof of E}

Recall the cyclic symmetry relation \eqref{cyclic symmetry for E_1 updated}:
  \begin{align*}
        d_3^{-1} E_1 = \omega \left( Q_2^{(\infty)}  Q_{2\frac{1}{3}}^{(\infty)} \Pi \right) d_3^{-1} E_1 \left(Q_1^{(\infty)} Q_{1\frac{1}{3}}^{(\infty)} \Pi \right).
    \end{align*}
It can be written as  
\begin{align}
    d_3^{-1} E_1 \left( Q_1^{(\infty)}  Q_{1\frac{1}{3}}^{(\infty)} \Pi \right)^{-1} = \omega \left( Q_2^{(\infty)}  Q_{2\frac{1}{3}}^{(\infty)} \Pi \right) d_3^{-1} E_1. \label{cyclic symmetry for E_1 to parametrize}
\end{align}
Let us parametrize  $E_1$ by complex numbers $A, B, D, F, G, H, K, L, M \in \C$:
\begin{align*}
   \begin{pmatrix}
        A & B & D\\
        F& G & H\\
        K & L & M
    \end{pmatrix},
\end{align*}
then the LHS of \eqref{cyclic symmetry for E_1 to parametrize} is
\begin{align*}
    d_3^{-1} E_1 \Pi ^{-1} Q_{1\frac{1}{3}}^{(\infty)-1} Q_1^{(\infty)-1}
    = \begin{pmatrix}
        B          & a \omega^2 A - a B + D                 & A\\
        \omega^2 G & a \omega F - a \omega^2 G + \omega^2 D & \omega^2 F\\
        \omega L   & a K - a \omega L + \omega M & \omega K
    \end{pmatrix},
\end{align*}
and the RHS of \eqref{cyclic symmetry for E_1 to parametrize} is
\begin{align*}
   & \omega \left( Q_2^{(\infty)}  Q_{2\frac{1}{3}}^{(\infty)} \Pi \right) d_3^{-1} E_1 =\\
    &\begin{pmatrix}
        F & G & H\\
        - a \omega^2 F + \omega^2 K + a \omega A & -a\omega^2 G + \omega^2 L + a \omega B & -a\omega^2 H + \omega^2 M + a \omega D\\
        \omega A & \omega B & \omega D
    \end{pmatrix}.
\end{align*}
Therefore, we should have 
\begin{align*}
    F &= B, \; G = a \omega^2 A - aB + D, \; H = A\\ 
    L &= A, \; M = a A + B - a \omega^2 D, \; K = D.
\end{align*}
So, the parametrization of $E_1$ can be reduced to $3$ variables, $A, \; B,$ and $D$:
\begin{align}
    E_1 &= \begin{pmatrix}
        A & B & D\\
        B & a\omega^2 A - a B + D & A\\
        D & A & a A + B - a \omega^2 D
    \end{pmatrix} \label{parametrization of E_1 ver.1}
\end{align}
which satisfies the symmetry $E_1=E_1^T$ observed in Proposition \ref{E symmetry 1}. From the relation \eqref{antisymmetry for E updated} with this symmetry, one has
\begin{align}
\begin{aligned}
     \left(S_1^{(\infty)} d_3^{-1} E_1\right)^2   = \frac{1}{9} I,
\end{aligned} \label{antisymmetry for E_1 to parametrize}
\end{align}
 where
\begin{align*}
    & S_1^{(\infty)} d_3^{-1} E_1\\
    &= \begin{pmatrix}
        A + a \omega^2 B & B + a^2 \omega A - a^2 \omega^2 B + a \omega^2 D & D + a \omega^2 A \\
        \omega^2 B & a \omega A - a \omega^2 B + \omega^2 D & \omega^2 A\\
        a A - a \omega B + \omega D & a B - a^2 A + a^2 \omega B - a \omega D + \omega A & \omega B
    \end{pmatrix}.
\end{align*}
Comparing $(S_1^{(\infty)} d_3^{-1} E_1)^2$ with $\frac{1}{9} I$, we have
\begin{align}
   A^2 + \omega^2 B^2 + \omega D^2 + a\omega^2 AB - a \omega BD + a DA = \frac{1}{9},
\label{identity 1 at parametrization of E_1}
\end{align}
and
\begin{align}
\begin{aligned}
    a A^2 + AB + \omega^2 BD + \omega AD = 0.\label{identity 2 at parametrization of E_1}
\end{aligned}
\end{align}
All other relations are equivalent to these two. Subtracting the $a\omega^2$ multiple of \eqref{identity 2 at parametrization of E_1} from \eqref{identity 1 at parametrization of E_1}, we have
\begin{align}
    (1 - a^2 \omega^2) A^2 + \omega^2 B^2 - 2 a \omega BD + \omega D^2 = \frac{1}{9}. \label{identity 3 at parametrization of E_1}
\end{align}
Furthermore, by the reality condition  \eqref{reality of E_1}
 \begin{align*}
        E_1 = C \overline{E_1} C,
    \end{align*}
we should have
\begin{align*}
   &\begin{pmatrix}
        A & B & D\\
        B & a\omega^2 A - a B + D & A\\
        D & A & a A + B - a \omega^2 D
    \end{pmatrix}=\begin{pmatrix}
        \overline{A} & \overline{D} & \overline{B}\\
        \overline{D} & \overline{a} \overline{A} + \overline{B} - \overline{a} \omega \overline{D} & \overline{A}\\
        \overline{B} & \overline{A} & \overline{a} \omega \overline{A} - \overline{a} \overline{B} + \overline{D}
    \end{pmatrix}.
\end{align*}
Thus, we have
\begin{align}
    \begin{aligned}
        A &= \overline{A}, \; B = \overline{D}, \; a\omega^2 A - a B + D = \overline{a} \overline{A} + \overline{B} - \overline{a} \omega \overline{D}.
    \end{aligned}
\end{align}
Under the conditions  $A = \overline{A}, \; B = \overline{D}$ the last equation is equivalent to   \eqref{a = omega^2 s}: $a = \omega^2 s^\R$ . We will write $A$ as $A^{\R}$ to highlight that it is a real number.  Equations \eqref{identity 2 at parametrization of E_1}, \eqref{identity 3 at parametrization of E_1} become 
\begin{align}
   & s^{\R}(A^{\R})^2 + |B|^2 + A^{\R}(\omega B + \omega^2 \bar{B}) = 0,
    \label{identity 4 at parametrization of E_1}\\
    &(1 - (s^{\R})^2)(A^{\R})^2 - 2 s^{\R} |B|^2 + (\omega^2 B^2 + \omega \bar{B}^2 ) = \frac{1}{9}.
    \label{identity 5 at parametrization of E_1}
\end{align}
From \eqref{identity 4 at parametrization of E_1} and \eqref{identity 5 at parametrization of E_1}, it follows that
\begin{align*}
    [(1 + s^{\R}) A^{\R} + \omega B + \omega^2 \overline{B}]^2 &= (1 + (s^{\R})^2 + 2 s^{\R}) (A^{\R})^2 + \omega^2 B^2 + \omega \overline{B}^2\\ 
    & \qquad \qquad + 2[(1 + s^{\R})\omega AB + |B|^2 + (1 + s^{\R})\omega^2 AB]\\
    &= (1 + (s^{\R})^2) (A^{\R})^2 + \omega^2 B^2 + \omega \overline{B}^2 + 2 [s^{\R}\omega AB + s^{\R}\omega^2 A \overline{B}]\\
    &= (1 - (s^{\R})^2) (A^{\R})^2 + \omega^2 B^2 + \omega \overline{B}^2 - 2 s^{\R} |B|^2\\
    &= \frac{1}{9},
\end{align*}
or 
\begin{align*}
    (1 + s^{\R}) A^{\R} + \omega B + \omega^2 \overline{B}  = \pm\frac{1}{3}.
\end{align*}
One can show that only
\begin{align}
    (1 + s^{\R})A^{\R} + \omega B + \omega^2 \overline{B} = \frac{1}{3} \label{identity 6 at parametrization of E_1 ap}
\end{align}
is consistent with  $\det E_1 = - \frac{1}{27}$.
Moreover, by \eqref{identity 4 at parametrization of E_1} and \eqref{identity 6 at parametrization of E_1 ap}, it follows that
\begin{align}
    (A^{\R})^2 - \frac{1}{3} A^{\R} = |B|^2. \label{identity 7 at parametrization of E_1 ap}
\end{align}


\section{Proof of Proposition \ref{ABS TH}} \label{Proof of ABS}

Let $\omega B = x^{\R} + iy^{\R}$ where $x^{\R}, y^{\R} \in \R$, \eqref{identity 6 at parametrization of E_1} becomes
\begin{align}
    (1 + s^{\R}) A^{\R} + 2 x^{\R} = \frac{1}{3} \Leftrightarrow x^{\R} = \frac{1 - 3(1 + s^{\R}) A^{\R}}{6}, \label{expression for x real}
\end{align}
and \eqref{identity 7 at parametrization of E_1} becomes
\begin{align}
    (A^{\R})^2 - \frac{1}{3} A^{\R} - (x^{\R})^2 = (y^{\R})^2. \label{identity 8 at parametrization of E_1}
\end{align}
Substituting \eqref{expression for x real} into \eqref{identity 8 at parametrization of E_1}, we have
\begin{align}
    (y^{\R})^2 = \frac{(3 + s^{\R})(1 - s^{\R})}{4}(A^{\R})^2 - \frac{1 - s^{\R}}{6} A^{\R} - \frac{1}{36}  \label{identity 9 at parametrization of E_1}
\end{align}
Since $A^\R \geq \frac13$ , the cases $s^\R=1$ and $s^\R=-3$ contradict $(y^\R)^2>0$. Solving \eqref{identity 9 at parametrization of E_1} for $A^\R$ gives
\begin{align}\label{A pm}
    A^{\R} = \frac{1}{3(3 + s^{\R})} \pm \dfrac{2 \sqrt{ \frac{(1-s^\R)^2}{36} +  (3 + s^{\R})(1 - s^{\R}) \left( \frac{1}{36} + (y^{\R})^2 \right)}}{(3 + s^{\R})(1-s^\R)}.
\end{align}
When  $s^\R>1$ or $s^\R<-3$, formula \eqref{A pm} gives either   complex     or  negative   $A^\R$, thus we should have  $-3<s^\R<1$. Moreover, for $-3<s^\R<1$ only 
\begin{align*}
    A^{\R} = \frac{1}{3(3 + s^{\R})} + \frac{2}{3 + s^{\R}} \sqrt{ \frac{1}{36} + \frac{3 + s^{\R}}{1 - s^{\R}} \left( \frac{1}{36} + (y^{\R})^2 \right)}.
\end{align*}
gives positive $A^\R$.

Finally, by \eqref{expression for x real}, one can also write $B$ in terms of $s^{\R}$ and $y^{\R}$ only.
\begin{align*}
    B = \omega^2 (x^{\R} + i y^{\R}) = \omega^2 \left( \frac{1 - 3(1 + s^{\R}) A^{\R}}{6} + i y^{\R} \right)
\end{align*}
 

\section{Proof of Proposition \ref{Prop E decomp}} \label{Appendix E decomp}
 
Consider  the decomposition $\check{E}_1^{-1} =\check{L}_1 D_1 \check{R}_1$ , where by \eqref{L1R1} matrices $\check{L}_1$ and  $  \check{R}_1$ have the following structure 
\begin{align}
   \check{L}_1= \begin{pmatrix}
        1 & * e^{-x \varphi_1 } & * e^{-x\varphi_2}\\
        0  & 1 & 0\\
        0  &  * e^{x \varphi_3} & 1
    \end{pmatrix},\quad 
    \check{R}_1=
    \begin{pmatrix}
        1 & 0 & 0\\
        * e^{x \varphi_1 } & 1 & * e^{-x \varphi_3}\\
        * e^{x \varphi_2} &  0 & 1
    \end{pmatrix}.
\end{align} 

Observe that the desired decomposition can be obtained from the decomposition 
\begin{align}
    &\tilde{E}_1^{-1} = \frac{1}{3} E_1^{-1} C = L_1 D_1 R_1,
\label{R1L1}\\
 &{L}_1= \begin{pmatrix}
        1 & *   & *  \\
        0  & 1 & 0\\
        0  &  *   & 1
    \end{pmatrix},\quad 
    {R}_1=
    \begin{pmatrix}
        1 & 0 & 0\\
        *   & 1 & * \\
        *   &  0 & 1
    \end{pmatrix},
\end{align} 
since by \eqref{check E} and \eqref{LLcheck}
\begin{align}
    \check{E_1}= e^{x \theta(\zeta)} \tilde{E}_1 e^{-x \theta(\zeta)}=e^{x \theta(\zeta)} L_1 D_1 R_1 e^{-x \theta(\zeta)}= \check{L}_1 D_1 \check{R}_1.
\end{align}
Taking the inverse of \eqref{R1L1}
\begin{align}
    3C E_1 = R_1^{-1} D_1^{-1} L_1^{-1} \label{RDL for3CE_1}
\end{align}
we introduce the following parameterization  
\begin{align}\label{L1D1R1}
    \quad
    R_1^{-1} &= \begin{pmatrix}
        1 & 0 & 0\\
        a_1 & 1 & b_1\\
        c_1 & 0 & 1
    \end{pmatrix}, \;
    D_1^{-1} = \begin{pmatrix}
        d_1 & 0 & 0\\
        0 & e_1 & 0\\
        0 & 0 & f_1
    \end{pmatrix}, \;
    L_1^{-1} = \begin{pmatrix}
        1 & g_1 & h_1\\
        0 & 1 & 0\\
        0 & i_1 & 1
    \end{pmatrix}.
\end{align}
Then the  RHS of \eqref{RDL for3CE_1} is
\begin{align}\label{C.1}
    R_1^{-1} D_1^{-1} L_1^{-1} = \begin{pmatrix}
        d_1 & d_1 g_1 & d_1 h_1\\
        a_1 d_1 & a_1 d_1 g_1 + e_1 + b_1 f_1 i_1 & a_1 d_1 h_1 + b_1 f_1\\
        c_1 d_1 & c_1 d_1 g_1 + f_1 i_1 & c_1 d_1 h_1 + f_1
    \end{pmatrix}.
\end{align}
On the other hand, the LHS of \eqref{RDL for3CE_1} is  
\begin{align}\label{C.3}
\begin{aligned}
    \qquad \; \; 3CE_1 &= \begin{pmatrix}
        3A^{\R} & 3B & 3\overline{B}\\
        3\overline{B} & 3A^{\R} & 3\omega^2 s^{\R} A^{\R} + 3B - 3\omega s^{\R} \overline{B}\\
        3B & 3\omega s^{\R} A^{\R} - 3\omega^2 s^{\R} B + 3\overline{B} & 3A^{\R}
    \end{pmatrix},
\end{aligned}
\end{align}
where $A^{\R}, \, B$ were introduced in Appendix \ref{proof of E}.
Using identities \eqref{identity 6 at parametrization of E_1} and \eqref{identity 7 at parametrization of E_1} between $A^{\R}$ and $B$ we get 
\begin{align*}
    &(1 + s^{\R})A^{\R} + \omega B + \omega^2 \overline{B} = \frac{1}{3}, \quad (A^{\R})^2 - \frac{1}{3} A^{\R} = |B|^2.
\end{align*}
Comparing  \eqref{C.1} and \eqref{C.3}  we have
\begin{align} 
\begin{aligned} \quad
    &d_1 = 3A^{\R}, \; a_1 = \overline{g_1} = \frac{\overline{B}}{A^{\R}}, \; c_1 = \overline{h_1} = \frac{B}{A^{\R}}, \; f_1 = \frac{3((A^{\R})^2 - |B|^2)}{A^{\R}} = 1,\\
    & b_1 = \overline{i_1} = \frac{\omega^2 s^{\R} (A^{\R})^2 - \overline{B}^2 + A^{\R} B - \omega s^{\R} A^{\R} \overline{B}}{(A^{\R})^2 - |B|^2} = -\omega \frac{\overline{B}}{A^{\R}},\\
    &e_1 = \frac{3((A^{\R})^2 - |B|^2)}{A^{\R}} (1 - |b_1|^2) = 1 - \frac{|B|^2}{A^{\R}},
\end{aligned}\label{parametrization of L_1 D_1 R_1 ver.1}
\end{align}
since 
\begin{align}
\begin{aligned}
    \omega^2s (A^{\R})^2 - \overline{B}^2 + A^{\R} B - \omega s   A^{\R} \overline{B} &= - \omega^2 |B|^2 - \omega A^{\R} \overline{B} - \overline{B}^2 -   \omega s A^{\R} \overline{B}\\
    &= -\omega A^{\R} \overline{B} - \frac{1}{3} \omega \overline{B} + \omega A^{\R} \overline{B}\\
    &= - \frac{1}{3} \omega \overline{B}.
\end{aligned} \label{identity 13 at parametrization of E_1}
\end{align}

Finally,  substituting \eqref{parametrization of L_1 D_1 R_1 ver.1} into  \eqref{L1D1R1}, \eqref{RDL for3CE_1} and taking inverse, we obtain that the decomposition   $\Tilde{E}_1^{-1}=  L_1 D_1 R_1$ such that 
\begin{align*}
\begin{aligned}
    L_1 &= \begin{pmatrix}
     1 & - \frac{B}{A^{\R}} - \omega^2 \frac{|B|^2}{(A^{\R})^2} & -\frac{\overline{B}}{A^{\R}}\\
     0 & 1 & 0\\
     0 & \omega^2 \frac{B}{A^{\R}} & 1
    \end{pmatrix}, \;
    D_1 = \begin{pmatrix}
        \frac{1}{3A^{\R}} & & \\
        & 3 A^{\R} & \\
        & & 1
    \end{pmatrix},\\
    R_1 &= \begin{pmatrix}
        1 & 0 & 0\\
        - \frac{\overline{B}}{A^{\R}} - \omega \frac{|B|^2}{(A^{\R})^2} & 1 & \omega\frac{\overline{B}}{A^{\R}}\\
        - \frac{B}{A^{\R}} & 0 & 1
    \end{pmatrix}.
\end{aligned} 
\end{align*}
All the other decompositions can be obtained similarly. 


\section{Proof of Proposition \ref{Residue prop}}\label{proof of res}

First, we consider the jump matrix $G_R(\zeta)$ on $\partial U_1$. By \eqref{matching up cond for parametrix near 1} and \eqref{asymptotics of Y^D}, it holds that
\begin{align*}
    &G_R(\zeta) = \check{Y}^D \left( P^{(1)}(\zeta) \right)^{-1}\\
    &= \check{Y}^D(\zeta)
    \left(I - \frac{1}{z} 
    \left\{\!\begin{aligned}
        &I \\[1ex]
        &D_1^{-1} \\[1ex]
        &D_6^{-1}
    \end{aligned}\right\}
    \begin{pNiceMatrix}
        1 & 0 & 0 \\
        0 & \Block{2-2}{z^{-\nu \sigma_3}} \\
        0 \\
    \end{pNiceMatrix}
    \begin{pNiceMatrix}
        0 & 0 & 0 \\
        0 & \Block{2-2}{m} \\
        0 \\
    \end{pNiceMatrix}
    \begin{pNiceMatrix}
        1 & 0 & 0 \\
        0 & \Block{2-2}{z^{\nu \sigma_3}} \\
        0 \\
    \end{pNiceMatrix}
    \left\{\!\begin{aligned}
        &I \\[1ex]
        &D_1 \\[1ex]
        &D_6
    \end{aligned}\right\}
    + \cdots \right) \check{Y}^{D-1}\\
    &= \Theta  (\zeta) \begin{pmatrix}
        1 & & \\
        & z^{-\nu}(\zeta - 1)^{\nu} & \\
        & & z^{\nu}(\zeta - 1)^{-\nu}
    \end{pmatrix}\\
    &\qquad \times \left(I - \frac{1}{z} 
    \begin{pNiceMatrix}
        0 & 0 & 0 \\
        0 & \Block{2-2}{m} \\
        0 \\
    \end{pNiceMatrix}
    + \cdots \right )
    \begin{pmatrix}
        1 & & \\
        & z^{\nu}(\zeta - 1)^{-\nu} & \\
        & & z^{-\nu}(\zeta - 1)^{\nu}
    \end{pmatrix}
    \Theta^{-1}(\zeta)\\
    &= I - \frac{1}{z(\zeta)} \widetilde{\Theta }(\zeta) \begin{pmatrix}
        1 & & \\
        & \kappa^{-1} & \\
        & & \kappa
    \end{pmatrix}
    \begin{pNiceMatrix}
        0 & 0 & 0 \\
        0 & \Block{2-2}{m} \\
        0 \\
    \end{pNiceMatrix}
    \begin{pmatrix}
        1 & & \\
        & \kappa & \\
        & & \kappa^{-1}
    \end{pmatrix}
   \widetilde{\Theta }(\zeta)^{-1} + \O\left( \frac{1}{x} \right),
\end{align*}
where we set
\begin{align*}
    &z^{\nu} (\zeta - 1)^{- \nu} = \kappa \left( 1 + \O(\zeta - 1) \right), \\ 
    &\kappa = e^{\frac{3 \pi i}{4} \nu} (2 x)^{\nu/2} 3^{\nu/4},
\end{align*}
and $\widetilde{\Theta}(\zeta)=\left(I+\sum_{l=1}^\infty \widetilde{\Theta}_l(\zeta-1)^l\right) F$ is some Taylor series coming from 
\begin{align}
    \widetilde{\Theta }(\zeta) \begin{pmatrix}
        1 & & \\
        & \kappa^{-1} & \\
        & & \kappa\end{pmatrix} = \Theta  (\zeta) \begin{pmatrix}
        1 & & \\
        & z^{-\nu}(\zeta - 1)^{\nu} & \\
        & & z^{\nu}(\zeta - 1)^{-\nu}
    \end{pmatrix}.
\end{align}

Since
\begin{align*}
    F &=
    \begin{pNiceMatrix}[margin]
         e^{\pi i \nu} & & 0  & & & 0  & \\
         0 & & & \Block{2-2}{e^{\frac{\pi i \nu}{2}} \left( 2\sqrt{3}\right)^{-\nu \sigma_3} } \\
         0 \\
    \end{pNiceMatrix},
\end{align*}
we can compute the residue at $\zeta = 1$ as
\begin{align}
    \res_{\zeta = 1} \frac{G_R(\zeta) - I}{1} = \begin{pNiceMatrix}
        0 & 0 & 0\\
        0 & \Block{2-2}{\widehat{m}}\\
        0 \\ 
    \end{pNiceMatrix} \label{residue at 1 ver 1}
\end{align}
where
\begin{align*}
    \widehat{m} = - \frac{1}{\sqrt{2x}} 3^{-1/4} e^{-\frac{3 \pi i}{4}} (2 \sqrt{3})^{- \nu \sigma_3} \kappa^{-\sigma_3}
    \begin{pmatrix}
        0 &  -\alpha\\
        \nu/\alpha & 0\\
    \end{pmatrix}
    \kappa^{\sigma_3}(2 \sqrt{3})^{\nu \sigma_3}.
\end{align*}
It  follows then from \eqref{residue at 1 ver 1} that 
\begin{align}
   \res_{\zeta = 1} \frac{G_R(\zeta) - I}{1} = \begin{pmatrix}
        0 & 0 & 0 \\
        0 & 0 & \widehat{\alpha}\\
        0 & \widehat{\beta} & 0
    \end{pmatrix} \label{residue at 1 ver 2}
\end{align}
where 
\begin{align*}
    &\widehat{\alpha} = \frac{\alpha}{\sqrt{2x}} (24 \sqrt{3} x)^{- \nu} 3^{-1/4} e^{-\frac{3 \pi i}{4} - \frac{3 \pi i}{2} \nu}\\
    &\widehat{\beta} = -\frac{1}{\sqrt{2x}} \frac{\nu}{\alpha} (24 \sqrt{3} x)^{\nu} 3^{-1/4} e^{-\frac{3 \pi i}{4} + \frac{3 \pi i}{2} \nu}
\end{align*}
with
\begin{align*}
    \alpha &= - \frac{i}{s_1 e^{i 2\sqrt{3} x}} \frac{\sqrt{2 \pi} e^{2\pi i \nu}}{\Gamma(-\nu)}, \quad \nu = \frac{1}{2 \pi i } \ln (1 - |s_1|^2), \quad s_1 = \omega^2 \frac{B}{A^{\R}}.
\end{align*}

Using symmetry relations between local parametrices, \eqref{local parametrix near -1} -- \eqref{local parametrix near minus omega}, one can find all   other residues. For example, by \eqref{local parametrix near -1}  we have
\begin{align*}
    G_R(\zeta) = d_3^{T-1} G_R^{T-1}(- \zeta) d_3
\end{align*}
and thus
\begin{align*}
    \frac{1}{2 \pi i} \int_{\partial U_{-1}} \frac{G_R(\zeta) - I}{\zeta} d\zeta &= \frac{1}{2 \pi i} \int_{\partial U_{-1}} \frac{ d_3^{T-1} (G_R^{T-1}(- \zeta) - I)d_3}{\zeta} d\zeta\\
    &= d_3^{-1} \left[ \frac{1}{2 \pi i} \int_{\partial U_{1}} \frac{ G_R(\zeta) - I}{\zeta} d\zeta \right]^{T-1} d_3.
\end{align*}
This implies that
\begin{align}
\begin{aligned}
    \res_{\zeta = -1} \frac{G_R(\zeta) - I}{\zeta} &= d_3^{-1} \left[ -\res_{\zeta = 1} \frac{G_R(\zeta) - I}{\zeta} \right]^{T} d_3\\
    &= \begin{pmatrix}
        0 & 0 & 0 \\
        0 & 0 & -\widehat{\beta} \omega\\
        0 & -\widehat{\alpha}\omega^2 & 0
    \end{pmatrix}. \label{residue at -1}
\end{aligned}
\end{align}
Similarly, the relation \eqref{local parametrix near omega} implies
\begin{align}
\begin{aligned}
    \res_{\zeta = \omega} \frac{G_R(\zeta) - I}{\zeta} &= \Pi \left[ \res_{\zeta = 1} \frac{G_R(\zeta) - I}{\zeta} \right] \Pi^{-1} = \begin{pmatrix}
        0 & \widehat{\alpha} & 0 \\
        \widehat{\beta}  & 0 & 0\\
        0 & 0 & 0
    \end{pmatrix}. \label{residue at omega}
\end{aligned}
\end{align}
The relation \eqref{local parametrix near - omega bar} implies
\begin{align}
\begin{aligned}
    \res_{\zeta = -\overline{\omega}} \frac{G_R(\zeta) - I}{\zeta} &= \Pi^{-1} \left[ \res_{\zeta = -1} \frac{G_R(\zeta) - I}{\zeta} \right] \Pi \\
    &= \begin{pmatrix}
        0 & 0 & -\widehat{\alpha} \omega^2 \\
        0 & 0 & 0\\
        -\widehat{\beta}\omega & 0 & 0
    \end{pmatrix}. \label{residue at minus omega bar}
\end{aligned}
\end{align}
The relation \eqref{local parametrix near omega bar} implies
\begin{align}
\begin{aligned}
    \res_{\zeta = \overline{\omega}} \frac{G_R(\zeta) - I}{\zeta} &= d_3^{-1} \left[ -\res_{\zeta = -1} \frac{G_R(\zeta) - I}{\zeta} \right]^{T} d_3 = \begin{pmatrix}
        0 & 0 & \widehat{\beta} \\
        0 & 0 & 0\\
        \widehat{\alpha} & 0 & 0
    \end{pmatrix}. \label{residue at omega bar}
\end{aligned}
\end{align}
The relation \eqref{local parametrix near minus omega} implies
\begin{align}
\begin{aligned}
    \res_{\zeta = -\omega} \frac{G_R(\zeta) - I}{\zeta} &= d_3^{-1} \left[ \res_{\zeta = \omega} \frac{G_R(\zeta) - I}{\zeta} \right]^{T} d_3\\
    &= \begin{pmatrix}
        0 & -\widehat{\beta} \omega & 0 \\
        -\widehat{\alpha} \omega^2  & 0 & 0\\
        0 & 0 & 0
    \end{pmatrix}. \label{residue at minus omega}
\end{aligned}
\end{align}
Recall that
\begin{align*}
    R(0) &= I + \sum_{k = 1}^{6} \frac{1}{2 \pi i} \int_{\gamma_k} \frac{G_R (\zeta') - I}{\zeta'} d \zeta' + \O \left( \frac{1}{x} \right). 
\end{align*}
Therefore, combining the residue calculations \eqref{residue at 1 ver 2} -- \eqref{residue at minus omega}, we obtain 
\begin{align}
    R(0) &= \begin{pmatrix}
        1 & \widehat{\alpha} - \omega \widehat{\beta} & \widehat{\beta} - \omega^2 \widehat{\alpha}\\
        \widehat{\beta} - \omega^2 \widehat{\alpha} & 1 & \widehat{\alpha} - \omega \widehat{\beta}\\
        \widehat{\alpha} - \omega \widehat{\beta} & \widehat{\beta} - \omega^2 \widehat{\alpha} & 1
    \end{pmatrix} + \O\left( \frac{1}{x} \right). \label{appendix}
\end{align}


\section{Proof of Theorem \ref{result 1} }\label{proof of main}

We have seen that 
\begin{align*}
    w_0 (x) = \frac{\sqrt{3}}{2} e^{-\frac{5 \pi i}{6}} (\widehat{\alpha} - \omega \widehat{\beta})  + \O \left( \frac{1}{x} \right),
\end{align*}
where
\begin{align*}
    &\widehat{\alpha} = \frac{\alpha}{\sqrt{2x}} (24 \sqrt{3} x)^{- \nu} 3^{-1/4} e^{-\frac{3 \pi i}{4} - \frac{3 \pi i}{2} \nu}\\
    &\widehat{\beta} = -\frac{1}{\sqrt{2x}} \frac{\nu}{\alpha} (24 \sqrt{3} x)^{\nu} 3^{-1/4} e^{-\frac{3 \pi i}{4} + \frac{3 \pi i}{2} \nu}\\
    &\alpha = - \frac{i}{s_1 e^{i 2\sqrt{3} x}} \frac{\sqrt{2 \pi} e^{2\pi i \nu}}{\Gamma(-\nu)},\quad  \nu  = \frac{1}{2 \pi i } \ln (1 - |s_1|^2).
\end{align*}
Observe that
\begin{align*}
    \frac{\nu}{\alpha} &= i e^{2i \sqrt{3} x} \frac{s_1 \Gamma(-\nu)}{\sqrt{2 \pi}} e^{-2 \pi i \nu} \nu = - i e^{ 2 i \sqrt{3} x} s_1 \sqrt{\frac{\pi}{2}}\frac{1}{\Gamma(\nu)} \frac{e^{-2 \pi i \nu}}{\sin \pi \nu}\\
    &= - i e^{2i \sqrt{3} x} s_1 \sqrt{\frac{\pi}{2}}\frac{1}{\Gamma(\nu)} \frac{e^{-3 \pi i \nu}}{\sin \pi \nu e^{- \pi i \nu}} = - e^{2 i \sqrt{3}x} \frac{\sqrt{2 \pi}}{\overline{s_1} \, \Gamma(\nu)} e^{- \pi i \nu},
\end{align*}
where we used
\begin{equation}
\begin{aligned}
    &\Gamma(-\nu) \cdot \nu = - \frac{\pi}{\sin \pi \nu \cdot \Gamma(\nu)},\\
    &\sin \pi \nu \cdot e^{- \pi i \nu} = -\frac{e^{- 2 \pi i \nu}}{2 i} |s_1|^2 \label{useful relations of gamma function}.
\end{aligned}
\end{equation}
Thus,
\begin{align*}
    \widehat{\alpha} - \omega \widehat{\beta} &= - \frac{i}{s_1 e^{2i \sqrt{3} x}} \frac{\sqrt{2 \pi} e^{2\pi i \nu}}{\Gamma(-\nu)} \frac{1}{\sqrt{2 x}} (24 \sqrt{3} x)^{- \nu} 3^{-1/4} e^{-\frac{3 \pi i}{4} - \frac{3 \pi i}{2} \nu}\\
    & \qquad \qquad + \frac{\omega}{\sqrt{2x}} \left( - e^{-2 i \sqrt{3}x} \frac{\sqrt{2 \pi}}{\overline{s_1} \, \Gamma(\nu)} e^{- \pi i \nu} \right) (24 \sqrt{3} x)^{\nu} 3^{-1/4} e^{-\frac{3 \pi i}{4} + \frac{3 \pi i}{2} \nu}\\
    &= - i e^{-\frac{3\pi i}{4}} \frac{e^{- 2i \sqrt{3} x}}{s_1 \Gamma(-\nu)} \sqrt{\frac{\pi}{x}} (24 \sqrt{3} x)^{- \nu} 3^{-1/4} e^{\frac{\pi i}{2} \nu}\\
    & \qquad \qquad \qquad - \omega e^{- \frac{3 \pi i}{4}} \frac{e^{2i \sqrt{3}x}}{\overline{s_1} \, \Gamma(\nu)} \sqrt{\frac{\pi}{x}} (24 \sqrt{3} x)^{\nu} 3^{-1/4} e^{\frac{\pi i}{2} \nu}.
\end{align*}
Note that we have $e^{-\frac{5\pi i}{6}} (- i) e^{- \frac{3\pi i}{4}} = e^{- \frac{\pi i}{12}}$ and $e^{-\frac{5\pi i}{6}} (- \omega ) e^{- \frac{3\pi i}{4}} = e^{\frac{\pi i}{12}}$. Hence, 
\begin{align}
\begin{aligned}
    w_0(x) &= 3^{1/4} \Re \left( e^{-\frac{\pi i}{12}} \frac{e^{-2 i \sqrt{3}x}}{s_1 \Gamma (- \nu)} \sqrt{\frac{\pi}{x}} \, (24 \sqrt{3} x)^{- \nu} e^{\frac{\pi i }{2} \nu} \right) + \O \left( \frac{1}{x} \right)\\
    &= 3^{1/4} \frac{1}{|s_1| |\Gamma(-\nu)|} \sqrt{\frac{\pi}{x}} e^{\frac{\pi i \nu}{2}}\\
    &\qquad \times \cos \left( 2\sqrt{3}x + \frac{\pi}{12} + \arg s_1 - \arg \Gamma(\nu) - i \nu \ln (24 \sqrt{3}) - i \nu \ln x \right)  + \O \left( \frac{1}{x} \right). 
\end{aligned} \label{w_0(x) expression 2}
\end{align}
If we set
\begin{align*}
    \psi &= - i \nu \ln (24\sqrt{3}) + \frac{\pi}{12} + \arg s_1 - \arg \Gamma(\nu),\\
    \sigma& = 3^{1/4} \frac{\sqrt{\pi}}{|s_1| |\Gamma(- \nu)|} e^{\frac{\pi i \nu}{2}} > 0,
\end{align*}
then \eqref{w_0(x) expression 2} becomes
\begin{align*}
    w_0(x) = \frac{\sigma}{\sqrt{x}} \cos \left( 2 \sqrt{3} x - i \nu \ln x + \psi \right) + \O \left( \frac{1}{x} \right).
\end{align*}
Moreover, using \eqref{useful relations of gamma function}, we have 
\begin{align*}
    \sigma^2 = - \frac{\sqrt{3}}{4 \pi} \ln (1 - |s_1|^2).
\end{align*}


\section{Comparison with Kitaev's Result}\label{Kitaev}
 
In Kitaev's paper \cite{Kitaev}, the following change of variables was introduced:
\begin{align*}
 &\tilde{w}=s^{\frac13}e^u\\
 &\tau=\frac{9}{16}s^{\frac43}=x^2,
 \end{align*}
 so \eqref{Painleve III} is equivalent to
 \begin{align*}
  \epsilon (\tau u_{\tau})_\tau=e^u-e^{-2u}
\end{align*}
with $\epsilon=-1$.

Comparing with our formulae, we observe that 
\begin{align*}
    u(\tau)=u(x^2)=-2w_0(x) \; \text{ and } \; \partial_\tau=\frac{1}{2x}\partial_x.
\end{align*}

In \cite{Kitaev}, the following result was obtained:
\begin{align*}
    &e^{u(\tau)}-1\sim -a\sqrt{6}(3\tau)^{-\frac14} \cos\left(2\sqrt{3\tau}+a^2\ln \sqrt{3\tau}+\phi-\frac{\pi}{4}\right), \quad \tau\rightarrow \infty
    \end{align*}
    or in our terms,
    \begin{align*}
        e^{-2w_0(x)}-1\sim -a\sqrt{6}(3x^2)^{-\frac14} \cos\left(2\sqrt{3 }x+a^2\ln \sqrt{3 }x+\phi-\frac{\pi}{4}\right),\quad x\rightarrow \infty,
    \end{align*}
    where 
    \begin{align*}
    &a=\sqrt{\dfrac{|\ln g_3|}{2\pi}}\operatorname{exp}\Big\{\frac12 \arg \ln g_3\Big\}\\
    &\phi=a^2\ln 24-\frac{i}{2} \ln\left|\dfrac{\Gamma(-ia^2)}{\Gamma(ia^2)}\dfrac{g_2+sg_3}{g_1}\right|+\frac12 \arg \dfrac{\Gamma(-ia^2)}{\Gamma(ia^2)}+\frac12\arg\frac{g_2+sg_3}{g_1}
\end{align*}
and the complex parameters  $g_1,g_2,g_3$ satisfy 
\begin{align}\label{g eq}
    g_1+g_2+g_3=1, \qquad g_1g_2+g_2g_3+g_1g_3(1+s)=0.
\end{align}

It seems not clear how to make a correct correspondence between Kitaev's parameters $(g_1, g_2, g_3)$ and our parameters $(A^\R, B, s^\R)$. But, one can make the following two observations:
\begin{enumerate}
\item If we take
\begin{align}
\begin{aligned}
    &g_1=3\omega^2 \bar{B},\\
    &g_2=1-3A^\R-3\omega^2 \bar{B} =3\omega B+3As^\R,\\
    &g_3=3A^\R,\\
    &s    =  - s^\R,
\end{aligned} \label{idea 1}
\end{align}
then \eqref{g eq} becomes equivalent to \eqref{identity 7 at parametrization of E_1} and \eqref{identity 6 at parametrization of E_1}, and it holds that
\begin{align*}
    w_0(x) = -\frac{\sigma}{\sqrt{x}} \cos \left( 2 \sqrt{3} x - i \nu \ln x + \psi \right) + \O \left( \frac{1}{x} \right)
\end{align*}
which implies our result differs by a minus sign. Although the correspondence \eqref{idea 1} seems natural, one might give an alternative change of parameters as follows:
\item If we take
\begin{align}
\begin{aligned}
    &g_1=-3\omega^2 \bar{B}, \\
    &g_2=1-3\omega^2\bar{B}-3A^\R,\\
    &g_3=3A^\R,\\
    &s =  2+ s^\R-\frac{2}{3A^\R},
\end{aligned} \label{idea 2}
\end{align}
then \eqref{g eq} is satisfied by \eqref{identity 7 at parametrization of E_1} and \eqref{identity 6 at parametrization of E_1}, and we have 
\begin{align*}
    w_0(x) = \frac{\sigma}{\sqrt{x}} \cos \left( 2 \sqrt{3} x - i \nu \ln x + \psi \right) + \O \left( \frac{1}{x} \right).
\end{align*}
This asymptotics is exactly what we have.
\end{enumerate}


\bibliographystyle{alpha}
\bibliography{reference.bib}

\end{document}